\newif\ifsubmission
\newif\ifanon
\newif\ifnotes
\definecolor{DarkBlue}{RGB}{0,0,150}
\definecolor{DarkRed}{RGB}{150,0,0}
\definecolor{DarkGreen}{RGB}{0,150,0}
  \crefname{step}{Step}{Steps}
\newcommand{\authnote}[3]{\textcolor{#3}{[{\footnotesize {\bf #1:} { {#2}}}]}}
\newcommand{\yael}[1]{\ifnotes \authnote{Yael}{#1}{Plum} \fi}
\newtheorem{theorem}{Theorem}[section]
\newtheorem{proposition}[theorem]{Proposition}
\newtheorem{lemma}[theorem]{Lemma}
\newtheorem{corollary}[theorem]{Corollary}
\newtheorem{definition}[theorem]{Definition}
\newtheorem{remark}[theorem]{Remark}
\newtheorem{fact}[theorem]{Fact}
\newtheorem{construction}[theorem]{Construction}
\Crefname{importedtheorem}{Imported Theorem}{Imported Theorems}
\Crefname{theorem}{Theorem}{Theorems}
\Crefname{proposition}{Proposition}{Propositions}
\Crefname{claim}{Claim}{Claims}
\Crefname{lemma}{Lemma}{Lemmas}
\Crefname{conjecture}{Conjecture}{Conjectures}
\Crefname{corollary}{Corollary}{Corollaries}
\Crefname{construction}{Construction}{Constructions}
\Crefname{property}{Property}{Properties}
\theoremstyle{definition}
\Crefname{definition}{Definition}{Definitions}
\Crefname{assumption}{Assumption}{Assumptions}
\Crefname{notation}{Notation}{Notations}
\theoremstyle{remark}
\Crefname{question}{Question}{Questions}
\Crefname{remark}{Remark}{Remarks}
\Crefname{comment}{Comment}{Comments}
\Crefname{fact}{Fact}{Facts}
\newcommand{\probcond}[2]{\Pr \left[
\begin{array}{l}#1\end{array} \; : \;
\begin{array}{l}#2\end{array} \right]
}
\newcommand{\Gen}{\mathsf{Gen}}
\newcommand{\GenM}{\mathsf{Gen_{\text{\em W}}}}
\newcommand{\RealW}{\mathsf{RealW}}
\newcommand{\CommitM}{\mathsf{Commit_{\text{\em W}}}}
\newcommand{\Commit}{\mathsf{Commit}}
 \newcommand{\OpenM}{\mathsf{Open_{\text{\em W}}}}
\newcommand{\Open}{\mathsf{Open}}
\newcommand{\TestM}{\mathsf{Test_{\text{\em W}}}}
\newcommand{\Test}{\mathsf{Test}}
 \newcommand{\OutM}{\mathsf{Out_{\text{\em W}}}}
\newcommand{\Out}{\mathsf{Out}}
\def\Eval{\mathsf{Eval}}
\def\Supp{\mathsf{Supp}}
\newcommand{\Invert}{\mathsf{Invert}}
\newcommand{\tcf}{\mathsf{TCF}}
\newcommand{\TCF}{\mathsf{NTCF}}
\def\CNOT{\mathsf{CNOT}}
\def\Check{\mathsf{Check}}
\def\vecx{\mathbf{x}}
\def\vecX{\mathbf{X}}
\def\vecy{\mathbf{y}}
\def\vecd{\mathbf{d}}
\def\vecz{\mathbf{z}}
\def\vecZ{\mathbf{Z}}
\def\vecm{\mathbf{m}}
\def\vech{\mathbf{h}}
\def\vecs{\mathbf{s}}
\def\veca{\mathbf{a}}
\def\vecb{\mathbf{b}}
\def\vecw{\mathbf{w}}
\def\veco{\mathbf{o}}
\def\vecr{\mathbf{r}}
\def\vecA{\mathbf{A}}
\def\vecB{\mathbf{B}}
\def\vecC{\mathbf{C}}
\def\vecN{\mathbf{N}}
\def\vect{\mathbf{t}}
\def\vece{\mathbf{e}}
\def\vecv{\mathbf{v}}
\def\vecu{\mathbf{u}}
\def\ss{\mathbf{ss}}
\newcommand{\C}{C} 
\newcommand{\aux}{\mathsf{aux}}
\newcommand{\out}{\mathsf{out}}
\newcommand{\Good}{\mathsf{Good}}
\newcommand{\BPP}{\mathsf{BPP}}
\newcommand{\BQP}{\mathsf{QPT}}
\newcommand{\ppt}{\mathsf{PPT}}
\newcommand{\PPT}{\mathsf{PPT}}
\newcommand{\QPT}{\mathsf{QPT}}
\newcommand{\ct}{\mathsf{ct}}
\newcommand{\LWE}{\mathsf{LWE}}
\newcommand{\PCP}{\mathsf{PCP}}
\newcommand{\Ext}{\mathsf{Ext}}
\newcommand{\MExt}{\mathsf{WExt}}
\newcommand{\Sim}{\mathsf{Sim}}
\newcommand{\Real}{\mathsf{Real}}
\newcommand{\Ideal}{\mathsf{Ideal}}
\def\cA{{\cal A}}
\def\cB{{\cal B}}
\def\C{{\mathbb{C}}}
\def\cC{{\mathsf{C}}}
\def\cE{{\cal E}}
\def\cH{{\cal H}}
\def\cL{{\cal L}}
\def\cO{{\cal O}}
\def\cR{\mathcal{R}}
\def\cS{{\cal S}}
\def\cV{{\mathsf{V}}}
\def\cX{{\cal X}}
\def\cZ{{\cal Z}}
\def\Y{{\mathcal Y}}
\def\Z{{\mathcal Z}}
\def\W{{\mathcal W}}
\def\P{{\mathcal P}}
\def\D{\mathsf D}
\def\coin{\mathsf{Coin}}
\newcommand{\OPEN}{\mathsf{OPEN}}
\newcommand{\COMMIT}{\mathsf{COMMIT}}
\newcommand{\RegH}{\mathcal{H}}
\newcommand{\QMA}{\mathsf{QMA}}
\newcommand{\NP}{\mathsf{NP}}
\newcommand{\secp}{\lambda}
\newcommand{\poly}{\mathsf{poly}}
\newcommand{\negl}{\mathsf{negl}}
\newcommand{\hk}{\mathsf{hk}}
\newcommand{\pk}{\mathsf{pk}}
\newcommand{\sk}{\mathsf{sk}}
\newcommand{\A}{\mathsf{A}}
\newcommand{\B}{\mathsf{B}}
\newcommand{\rt}{\mathsf{rt}}
\newcommand{\MP}{\mathsf{MP}} 
\newcommand{\brho}{\bm{\rho}}
\newcommand{\bpsi}{\bm{\psi}}
\newcommand{\bsigma}{\bm{\sigma}}
\newcommand{\btau}{\bm{\tau}}
\newcommand{\bpi}{\bm{\pi}}
\newcommand{\tensor}{\otimes}
\DeclareMathOperator*{\E}{\mathbb{E}}
\newcommand{\SimGen}{\mathsf{SimGen}}
\newcommand{\acc}{\mathsf{acc}}
\newcommand{\rej}{\mathsf{rej}}
\newcommand{\Ver}{\mathsf{Ver}}
\def\QMA{\textbf{QMA}}
\title{Classical Commitments to Quantum States}
\author{}
\author{
Sam Gunn \inst{1}\and 
Yael Tauman Kalai\inst{2}   \and 
Anand Natarajan \inst{2}   \and
\'{A}gi Vill\'{a}nyi \inst{2} 
}
\institute{UC Berkeley \and MIT}
\author[1]{Sam Gunn\thanks{\texttt{gunn@berkeley.edu}}}
\author[2]{Yael Tauman Kalai\thanks{\texttt{tauman@mit.edu} }}
\author[2]{Anand Natarajan\thanks{\texttt{anandn@mit.edu}}}
\author[2]{\'{A}gi Vill\'{a}nyi\thanks{\texttt{agivilla@mit.edu}}}
\affil[1]{UC Berkeley}
\affil[2]{MIT}
\date{\today}
\author{Anonymous}
\author[1]{Sam Gunn\thanks{\texttt{gunn@berkeley.edu}}}
\author[2]{Yael Tauman Kalai\thanks{\texttt{tauman@mit.edu} }}
\author[2]{Anand Natarajan\thanks{\texttt{anandn@mit.edu}}}
\author[2]{\'{A}gi Vill\'{a}nyi\thanks{\texttt{agivilla@mit.edu}}}
\affil[1]{UC Berkeley}
\affil[2]{MIT}
\date{\today}
\begin{document}

\maketitle
    \begin{abstract}
    We define the notion of a classical commitment scheme to quantum states, which allows a quantum prover to compute a classical commitment to a quantum state, and later open each qubit of the state in either the standard or the Hadamard basis.  Our notion is a strengthening of the measurement protocol from  Mahadev (STOC 2018).  We construct such a commitment scheme from the post-quantum Learning With Errors ($\LWE$) assumption, and more generally from any noisy trapdoor claw-free function family that has the distributional strong adaptive hardcore bit property (a property that we define in this work). 
    
    Our scheme is {\em succinct} in the sense that the running time of the verifier in the commitment phase depends only on the security parameter (independent of the size of the committed state), and its running time in the opening phase grows only with the number of qubits that are being opened (and the security parameter).  As a corollary we obtain a classical succinct argument system for $\QMA$ under the post-quantum $\LWE$ assumption.  Previously, this was only known assuming post-quantum secure indistinguishability obfuscation.  As an additional corollary we obtain a generic way of converting any $X$/$Z$ quantum $\PCP$ into a succinct argument system under the quantum hardness of $\LWE$.
   \end{abstract} 

\newpage
\setcounter{tocdepth}{2}
\tableofcontents
\clearpage

\section{Introduction}
A commitment scheme is one of the most basic primitives in classical cryptography, with far reaching applications ranging from zero-knowledge proofs \cite{FOCS:GolMicWig86,BrassardCC88}, identification schemes and signature schemes \cite{C:FiaSha86}, secure multi-party computation protocols \cite{STOC:GolMicWig87,C:ChaDamVan87}, and succinct arguments \cite{FOCS:Micali94}. There is a long history of studying commitments to \emph{classical} information, both in the classical and post-quantum worlds, and recently, Gunn et al.~\cite{GJMZ22} systematically explored commitments to \emph{quantum} states, using quantum messages. In this work, we initiate the formal study of commitments to quantum states using \emph{classical} messages.
Specifically, we study the existence of \emph{classical} commitments to quantum states, where all messages (the commitment and the opening) are classical, and the receiver is a classical machine. 
\yael{I commented out the previous sentenced and replaced it with the following:} This setting models a likely future  where classical devices will have access to powerful (possibly untrusted) quantum devices.  The goal of this work is to provide the foundations needed for these classical devices to use the (untrusted) quantum devices effectively. 

Our major contributions are a definition of a classical commitment to quantum states, including a sensible notion of a classical opening of a committed quantum state; a construction based on the post-quantum Learning With Errors ($\LWE$) assumption; and a construction of a \emph{succinct} commitment to quantum states (analogous to Merkle hashing in the classical setting~\cite{merkle1987digital}), also under post-quantum $\LWE$.\footnote{More generally, our constructions are based on the existence of a (noisy) claw-free trapdoor function
family with a distributional strong adaptive hard-core bit property, which in particular can be instantiated under the $\LWE$ assumption.} As an immediate application, we obtain a succinct classical argument system for $\QMA$ based only on post-quantum hardness of $\LWE$, improving on previous work which required indistinguishability obfuscation~\cite{Bartusek22}. To our knowledge, our work constitutes the first work to define a notion of a binding classical commitment to quantum states, and to give a construction that achieves it.

Our construction builds directly on the seminal \emph{measurement protocol} of Mahadev~\cite{Mah18a}, which was used by her to construct the first classical argument system for $\mathsf{QMA}$. Loosely speaking, a measurement protocol is a way for a classical verifier to request a quantum prover to measure each qubit of a quantum state (of the prover's choice) in the $X$ or $Z$ basis, with the guarantee that the prover's opening must be ``consistent with a quantum state.'' This motivates our definition of a classical \emph{opening} of a quantum state: the receiver should be able to request the sender to open each qubit of the committed state in either the $X$ or $Z$ basis. (One could imagine asking for openings in more general bases, but these two seem to be a desirable minimum.) However, a measurement protocol does not automatically give rise to a commitment, for several reasons.  First, there is a major structural difference: in a measurement protocol, all phases of the protocol---even the keys chosen in the initial setup---may depend on the choice of opening basis! (Indeed, in Mahadev's protocol, the keys consist of either  ``2-to-1'' or ``injective'' claw-free functions depending on the basis to be measured.) This is far from what we would like in a commitment: the initial ``commitment'' phase should be \emph{completely} independent of the basis in which the receiver ultimately chooses to request an opening.

Thus, the first step to building our construction is to convert Mahadev's measurement protocol into something having the syntax of a commitment\footnote{Technically, we do this by always using the ``2-to-1'' mode of the claw-free function. Moreover, we do not even rely on the existence of a dual-mode (as was done by Mahadev~\cite{Mah18a}), and simply use a ``2-to-1'' claw-free family.}, and henceforth we refer to this modified protocol as Mahadev's ``weak'' commitment\footnote{We refer to it as a weak commitment since (as we elaborate on below) it does not have the desired binding property.  }. In the most basic version of this protocol, a quantum sender holding a qubit in state $\ket{\psi}$ interacts with a classical receiver, sending a classical message that commits to $\ket{\psi}$. Later, the sender is requested by the receiver to ``open'' the committed qubit in either the standard or the Hadamard basis. To open, the sender performs an appropriate measurement and returns the outcome, which can be \emph{decoded} by the receiver (using a cryptographic trapdoor), to obtain an outcome from measuring $\ket{\psi}$ in the appropriate basis. 

A commitment scheme must be \emph{binding}, meaning that the sender cannot change their mind about the committed state once the commitment has been sent. It turns out that the modified Mahadev scheme is a ``weak'' commitment because it partially satisfies the binding property: it is binding in the standard basis, but \emph{not at all} binding in the Hadamard basis. In fact, the sender, after committing to $\ket{+}$, can always freely change the committed state to $\ket{-}$ without ever being detected! Relatedly, in the modified Mahadev scheme, the receiver performs a test on the opening in the standard basis case, and only accepts the opening if it is valid, but performs no test in the Hadamard case.

Motivated by this observation, we show that a simple twist on Mahadev's weak commitment is truly binding (in a rigorous sense which we define) in both bases. We elaborate on our binding definition in \Cref{sec:intro:def,sec:overview}, and on our construction in \Cref{sec:intro:construction,sec:overview}, and below only give a teaser.  In our construction, the sender first commits to $\ket{\psi}$ under Mahadev's weak commitment, generating a commitment string $y_0$ and a (multi-qubit) post-commitment state $\ket{\psi_1}$. It then coherently \emph{opens} this state in the Hadamard basis---that is, it executes a unitary version of the opening algorithm, but does not perform the final measurement, instead producing a quantum state $\ket{\psi_1}$. Finally, the sender applies Mahadev's weak commitment \emph{again} to the state $\ket{\psi_1}$, qubit-by-qubit, obtaining a vector of commitment strings $\vec{y}$ and a post-commitment state $\ket{\psi_2}$. The strings $(y_0,\vec{y})$  now constitute a classical commitment to the state $\ket{\psi}$. To open this commitment in the Hadamard basis, the sender simply applies the standard basis opening procedure for the second Mahadev commitment, yielding a string $z$ which the receiver will test and decode using the commitment vector $\vec{y}$. By the standard-basis binding of Mahadev's commitment, we are guaranteed that the decoded outcome from $z$---assuming the test passes---yields the same result as measuring $\ket{\psi_1}$ in the standard basis, and by construction, this gives a Hadamard-basis opening of $\ket{\psi}$, which it can then decode using the commitment string~$y_0$. But how do we open the commitment in the standard basis? It is far from obvious that this is even possible!  For this we exploit specific features of the Mahadev scheme---in particular, the fact that the opening procedure is ``native'':  Opening  in the standard basis constitutes measuring the registers in the standard basis, and opening in the Hadamard basis constitutes measuring the registers in the Hadamard basis.  This fact is useful both to argue that the opening is correct and to prove that the binding property is achieved. 
We note that in our new scheme the verifier tests the validity of both the standard basis opening and the Hadamard basis opening, and decodes both openings using the cryptographic trapdoor.\footnote{We mention that in Mahdadev's scheme, the verifier only tests the validity of the standard basis opening, and this test, as well as the decoding, is done publicly (without the trapdoor).  The verifier uses the trapdoor only to decode the Hadamard basis opening, which it did not test.}

Our basic  construction for a single qubit can be extended to states with any number of qubits to get a \emph{non-succinct} commitment to a quantum state. We next ask whether our commitment scheme can be made \emph{succinct}: can the sender commit to an $\ell$ qubit state, and open to a small number of these qubits, by exchanging much fewer than $\ell$ bits with the receiver? Here, already in the case of ``weak'' commitments, there is a significant technical obstacle with just the \emph{very first message} from the receiver to the sender: openings in Mahadev's scheme can leak information about the secret key, so each committed qubit must use a fresh secret key to maintain any security at all. This means that, already in the initial key-exchange phase, the receiver must send the sender $\geq \ell$ bits. We show that, surprisingly, the ``strong'' binding property of our commitment, together with specific properties of the underlying (noisy) trapdoor claw-free family, allows us to overcome this barrier. Namely, we show that strong binding, together with specific properties of the underlying (noisy) trapdoor claw-free family, implies that the openings do not leak information about the key in our scheme, allowing us to use the same key for all committed qubits. We emphasize that, even to obtain a succinct ``weak'' commitment, or a succinct measurement protocol, the only route we know of using standard (post-quantum) cryptographic assumptions is through our strongly binding commitments! We view this as an interesting indication of the possible usefulness of our strong binding property in further applications.

As a teaser for how exactly the leakage occurs, and how we avoid it, for now we remark that in the Mahadev weak commitment, the adversary can cause the receiver to generate outputs of the form $d' \cdot s$, for known vectors $d'$ of its choice, where $s$ is the secret. This means that the output for sufficiently many qubits may leak the secret $s$. For an honest sender, this would not be an issue because the vectors $d'$ would be obtained by a quantum measurement with unpredictable answers, and thus have high min-entropy. We show that in our scheme, even \emph{dishonest} senders are forced to produce $d'$ with (sufficient) min-entropy, because of the additional tests done in our opening procedure. This is what prevents the outcomes from leaking information about $s$.

Reusing the key directly only gives us a short first message, which yields a ``semi-succinct'' commitment, in which messages from the receiver are short, but messages from the sender are long. 
In fact, this already yields an application of our results: a \emph{fully-succinct} classical argument system for $\QMA$ which is secure assuming post quantum security of $\LWE$. We obtain this by following the template of Bartusek et al.~\cite{Bartusek22}, but replacing their use of Mahadev's measurement protocol with our succinct commitment.
\begin{theorem}[Informal]\label{thm:informal:app1}
 There exists a (classical) succinct interactive argument for $\QMA$ under the post-quantum Learning With Errors ($\LWE$) assumption.\footnote{More generally, assuming the existence of a (noisy) trapdoor claw free function family with a distributional strong adaptive hard-core bit property, which we elaborate on later on.} 
 \end{theorem}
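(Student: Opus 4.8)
The plan is to plug our succinct commitment into the compiler of Bartusek et al.~\cite{Bartusek22}, in place of Mahadev's measurement protocol, and then check that completeness, soundness, and succinctness still go through. I would start from the standard reduction: by the circuit‑to‑Hamiltonian construction in its $X$/$Z$ (Biamonte--Love) form, deciding a $\QMA$ language reduces to distinguishing, for a local Hamiltonian $H=\sum_{i=1}^m H_i$ on $\ell$ qubits (with $m=\poly(\ell)$ and each $H_i$ a scaled, shifted tensor product of $X$'s, $Z$'s and identities), the case where some witness state $\ket{\psi}$ satisfies $\bra{\psi}H\ket{\psi}\le a$ from the case where every state has energy above $b$, where $b-a\ge 1/\poly$. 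In the YES case the honest prover holds such a $\ket{\psi}$.

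The protocol I would use is: the prover sends a succinct classical commitment to $\ket{\psi}$ using our scheme; then, for a uniformly random term index $i$, the verifier requests an opening of the $O(1)$ qubits in the support of $H_i$, each in the $X$ or $Z$ basis dictated by the corresponding Pauli, runs the opening test, decodes the outcomes with the trapdoor, and computes the empirical value of $H_i$. Exactly as in~\cite{Bartusek22}, to amplify the $1/\poly$ gap to constant soundness while keeping communication and verifier time succinct, this single‑term test is not executed directly but delegated: the parties run a classical, post‑quantum‑sound succinct (batch) argument---instantiable from $\LWE$---by which the prover certifies that it would pass the opening test with the appropriate average probability over $i$. The only role of $\iO$ in~\cite{Bartusek22} was to make the measurement‑protocol layer itself succinct; our commitment is already succinct, so it is not needed.

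Completeness follows from correctness of our commitment and the fact that its opening is ``native'': a standard‑basis opening decodes to a standard‑basis measurement of the committed registers and a Hadamard‑basis opening to a Hadamard‑basis measurement, so for an honest prover the decoded outcomes for each $H_i$ are distributed exactly as the corresponding Pauli measurement of $\ket{\psi}$; hence the empirical energy concentrates below $a$, the verifier accepts, and the outer delegation layer is complete by construction. Soundness is the crux: for a cheating $\QPT$ prover on a NO instance, the \emph{strong} binding property of our commitment yields a well‑defined committed state such that, conditioned on the opening tests passing, the decoded $X$/$Z$ outcomes are consistent with genuine Pauli measurements of that state; since every state has energy above $b$, the empirical energy cannot concentrate below $a$, so the (delegated) test rejects with good probability, and post‑quantum soundness of the outer classical argument prevents cheating in the delegation layer.

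I expect the main obstacle to be exactly this soundness argument, and specifically the tension between succinctness and binding: succinctness requires reusing a single trapdoor key across all $\ell$ committed qubits, so one must invoke the earlier result---that strong binding together with the structural properties of the (noisy) claw‑free family implies the openings leak no information about the key---in order to conclude that the binding guarantee still holds jointly over all $\ell$ qubits and over the adaptively chosen opening basis. One must also verify that composing the commitment's (parallel‑repeated, extractor‑based) binding statement with the post‑quantum batch argument preserves soundness against quantum adversaries; this is routine but not entirely automatic, since the binding statement is itself quantum. Given all of this, succinctness is immediate: the commitment phase costs the verifier $\poly(\secp)$ independent of $\ell$, each opening touches $O(1)$ qubits and costs $\poly(\secp)$, and the outer argument adds only $\poly(\secp,\log m)$.
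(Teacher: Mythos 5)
Your proposal takes a genuinely different route from the paper's, and it is worth flagging both where you diverge and where the divergence leaves gaps.

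The paper's approach does not go through the local-Hamiltonian / Kitaev reduction directly or open $O(1)$ qubits per term. Instead it first invokes the Fitzsimons--Hajdu\v{s}ek--Morimae (FHM) verification protocol (in its instance-independent form from~\cite{TCC:ACGH20}), which after standard $\QMA$ amplification produces an $\ell$-qubit state $\ket{\pi}$ that can be checked by measuring \emph{all} $\ell$ qubits in a uniformly random $X/Z$ basis, with completeness and soundness error $\negl(\secp)$. The prover commits to $\ket{\pi}$ using the \emph{semi-succinct} commitment (single reusable key, long commitment string), the verifier sends a PRG seed, the prover opens \emph{all} $\ell$ qubits in the PRG-expanded basis (or in $0^\ell$/$1^\ell$ for a test round), and the verifier decodes and runs $V_{\mathsf{FHM}}$. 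This is a \emph{semi-succinct} argument; the black-box transformation of~\cite{Bartusek22} (Merkle-hash the long prover message, then prove knowledge via a state-preserving succinct argument of knowledge) is then applied to obtain full succinctness. You instead want to open only the $O(1)$ qubits in the support of a random term $H_i$, and delegate the amplification over $i$ to an outer batch argument. What the FHM route buys is that the single shot already has negligible soundness gap, so the protocol never has to reason about amplifying a $1/\poly$ Hamiltonian gap inside a classical delegation layer; that amplification is confined to standard $\QMA$ amplification, performed on the quantum side before commitment.

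Two concrete gaps in your version. First, opening only the qubits touched by one random term is exactly the situation the paper's \emph{succinct binding} definition (\Cref{def:binding:succ}) flags as problematic: one cannot extract an $\ell$-qubit state $\btau$ from an opening algorithm that only ever touches a constant-size subset. The paper's fix is the interactive $\Test$ phase of the succinct commitment, executed with probability $1/2$, in which the prover must demonstrate (via Merkle hash and argument-of-knowledge, with the secret key eventually revealed) that it could have opened \emph{all} $\ell$ qubits in $0^\ell$ and $1^\ell$; the extractor of the binding theorem is built from the $\Test$-phase prover $P^*_\Test$, not from the small-opening algorithm. Your soundness argument appeals to ``the strong binding property yields a well-defined committed state'' but never invokes $\Test$, which is the only mechanism that makes this extraction possible in the succinct regime. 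Second, your batch-argument step asks the prover to certify that it ``would pass the opening test with the appropriate average probability over $i$,'' but the decoded outcomes depend on the verifier's trapdoor, which the prover does not hold at that point; the paper's $\Test$ protocol deals with this by having the verifier reveal $\sk_1$ mid-protocol, after the prover has already hashed its openings and proven knowledge of them, so the statement being proved in the final argument-of-knowledge is classical and both sides can evaluate the relation. Without these two mechanisms the soundness step does not close; with them, your route essentially re-derives the paper's $\Test$ protocol. The remaining part of your reasoning (key-reuse is safe by the distributional strong adaptive hardcore bit property, FHM/Hamiltonian acceptance of the extracted state, post-quantum soundness of the outer argument) matches the paper.
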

 
This improves on the result of~\cite{Bartusek22} in terms of cryptographic assumptions: they required the assumption of post-quantum indistinguishability obfuscation (iO) to succinctly generate $\ell$ keys for Mahadev's protocol, whereas our protocol only requires the  post-quantum security of $\LWE$. It is currently not known how to deduce post-quantum iO from \emph{any} standard cryptographic assumptions, whereas $\LWE$ is the ``paradigmatic'' post-quantum cryptographic assumption. 

To construct a succinct argument system for $\QMA$, the approach we and \cite{Bartusek22} both follow is to construct a semi-succinct argument system, and then make it fully succinct by composing with (state-preserving) post-quantum interactive arguments of knowledge~\cite{CMSZ,LombardiMS22}. It turns out that the same tools let us construct outright a fully succinct commitment scheme: for this to be meaningful, we imagine that the sender only opens to a small number of qubits chosen by the receiver, rather than to all of the qubits. In classical cryptography, succinct commitments are natural partners of PCPs, as they enable a verifier to delegate the task of checking a PCP to the prover. While quantum PCPs do not currently exist, we hope that our succinct commitment can be paired with a suitable future PCP to design interesting protocols.

\subsection{The Definition}\label{sec:intro:def}

\paragraph{Defining a non-succinct commitment scheme}  Our definition of a (non-succinct) commitment scheme is a natural extension of the classical counterpart. It consists of a key generation algorithm $\Gen$ that takes as input the security parameter $1^\secp$ and a length parameter $1^\ell$ and  outputs a pair of public and secret keys $(\pk,\sk)$; a commit algorithm $\Commit$ that takes as input a public key $\pk$ and an $\ell$-qubit quantum state $\bsigma$ and outputs a classical string $\vecy$ and a post-commitment state $\brho$, where $\vecy$ is the commitment to the quantum state $\bsigma$;\footnote{We note that both the length of $\pk$ and the length of the commitment string $\vecy$ may grow polynomially with the length $\ell$ of the committed state $\bsigma$.} an open algorithm $\Open$ that takes as input the post-commitment state~$\brho$ and a basis choice 
$\vecb=(b_1,\ldots,b_\ell)\in\{0,1\}^\ell$, where $b_i=0$ corresponds to opening the $i$'th qubit in the standard basis and $b_i=1$ corresponds to opening the $i$'th qubit in the Hadamard basis, and outputs an opening $\vecz\in\{0,1\}^{\ell\cdot\poly(\secp)}$; and the final algorithm $\Out$ that takes as input a secret key $\sk$, a commitment string $\vecy$, a basis choice $\vecb\in\{0,1\}^\ell$ and an opening $\vecz$, and 
outputs the measurement result $\vecm\in\{0,1\}^\ell$ or $\bot$ if the opening is rejected.\footnote{We note that in the actual definition we partition this algorithm into two parts:  $\Ver$ and $\Out$ where the former only outputs a bit indicating if the opening is valid or not and the latter outputs the actual opening if valid.  This partition is only for convenience.} 

We mention that the above syntax yields a commitment scheme that is {\em privately verifiable} in the sense that $\sk$ is needed to decode the measurement value $\vecm$ from the opening value $\vecz$. While it would be desirable to construct a commitment scheme that is publicly verifiable, where $\Gen$ only generates a public key $\pk$, and this public key is used by the opening algorithm to generate the output $\vecm$ along with an opening $\vecz$ which can be verified given $\pk$, we believe that this public key variant is impossible to achieve. This impossibility was formalized on the quantum setting (i.e., where the commitment is a quantum state) by \cite{GJMZ22}, and we leave it as an open problem to prove the impossibility in the classical setting.

We require two properties from our commitment scheme:  completeness and binding. We note that for commitments to classical strings it is common to require a {\em hiding} property.  We do not require it since one can easily obtain hiding by committing to the commitment string $\vecy$ using a classical commitment scheme (that is binding and hiding).
\begin{itemize}
    \item {\bf Correctness.}  The correctness property asserts that if an honest committer commits to an $\ell$-qubit state $\bsigma$ then for any basis choice $\vecb\in\{0,1\}^\ell$, the algorithm $\Out$, applied to the opening string $\vecz$ generated by $\Open$, yields an output $\vecm$ whose distribution is statistically close to the distribution obtained by simply measuring $\bsigma$ in the basis $\vecb$. 
    \item {\bf Binding.}  Loosely speaking, the binding property asserts that
   for any (possibly malicious) $\BQP$ algorithm $\Commit^*$ that commits to an $\ell$-qubit quantum state, there is a {\em single} extracted quantum state $\btau$ such that for {\em any} $\BQP$ algorithm $\Open^*$ and {\em any} basis $(b_1,\ldots,b_\ell)$, where $b_i=0$ corresponds to measuring the $i$'th qubit in the standard basis and $b_i=1$ corresponds to measuring it in the Hadamard basis, the output obtained by $\Open^*(b_1,\ldots,b_\ell)$ 
is computationally indistinguishable from measuring $\btau$ in basis $(b_1,\ldots,b_\ell)$, assuming $\Open^*$ is always accepted. We relax the requirement that $\Open^*$ is always accepted, and allow $\Open^*$ to be rejected with probability~$\delta$ at the price of the two distributions being $O(\sqrt{\delta})$-computationally indistinguishable. We elaborate on the binding property in \Cref{sec:overview}. We note that our definition of binding is nontrivial only for senders that are accepted with a high success probability. By repeating the protocol sequentially $O(1/\delta\cdot \log(1/\delta))$ times we can ensure that if all the openings are accepted with probability $\geq \delta$ then a random one of these openings is accepted with probability $1-\delta$.
While this is weaker than classical notions of binding commitments (which apply to any sender that is accepted with non-negligible probability), it is sufficient for constructing a succinct argument system for $\QMA$.
    
\end{itemize}

\paragraph{Comparison with Mahadev's measurement protocol.}
Our commitment scheme is stronger than that of a {\em measurement protocol}, originally considered in \cite{Mah18a} and formally defined in \cite{Bartusek22}. Beyond the syntactic difference, where in a measurement protocol the opening basis must be determined during the key generation phase (and the key generation algorithm takes as input the basis $\vecb\in\{0,1\}^\ell$), our binding property is significantly stronger. 
A measurement protocol guarantees that any (possibly malicious) $\BQP$ algorithm $\Open^*$ must be consistent with an $\ell$-qubit state, but different opening algorithms can be consistent with different quantum states.

\paragraph{Defining a succinct commitment scheme.}  
The syntax for a succinct commitment differs quite substantially from the syntax of a non-succinct commitment described above.
First, $\Gen$ only takes as input the security parameter $1^\secp$ (and does not take as input the length parameter~$1^\ell$); in addition, $\Commit$ is required to output a succinct commitment of size $\poly(\secp)$.  However, there is a more substantial difference which stems from the fact that,  similarly to the non-succinct variant, we require a succinct commitment to have a binding property that asserts that one can extract an $\ell$-qubit quantum state $\btau$ such that the output distribution of any successful opening is indistinguishable from measuring $\btau$. Since in this setting we consider opening algorithms that only open a few of the qubits,  there is no way we can extract an $\ell$-qubit state from such algorithms.  As a remedy, we add an {\em interactive test phase}. This test phase is executed with probability $1/2$, and if executed then at the end of it the verifier outputs $0$ or $1$, indicating accept or reject, and the protocol terminates without further executing the opening phase, since the test protocol destroys the state. We note that Mahadev's measurement protocol has a non-interactive test phase which is executed with probability $1/2$.  In our setting this test phase is {\em interactive}. It is this interactive nature that allows us to extract a large state from a succinct protocol.

\subsection{The Construction}\label{sec:intro:construction}
\paragraph{Our construction: the single qubit case.}
We construct the commitment scheme in stages.  We first construct a {\em single-qubit} commitment scheme; this scheme is inspired by the construction from Mahadev~\cite{Mah18a}. We elaborate on it in \Cref{sec:overview}, but give a very high-level description here. First, let us recall Mahadev's weak commitment for a single qubit. In this scheme, the sender receives a public key that enable it to evaluate a \emph{two-to-one trapdoor claw-free ($\tcf$) function}  $f: \{0,1\} \times \mathcal{X} \to \mathcal{Y}$.\footnote{We mention that under the $\LWE$ assumption we only have a ``noisy'' $\tcf$ function family, which was constructed in \cite{BCMVV18}.  We do not go into this technicality in the introduction and overview sections.} For every image $y \in \mathcal{Y}$, there are exactly two preimages, which have the form $(0, x_0)$ and $(1, x_1)$, where $x_0,x_1\in\{0,1\}^n$, but any such pair (called a ``claw'') is cryptographically hard to find. In Mahadev's scheme, to commit to a qubit in state $\ket{\psi} = \sum_{b \in \{0,1\}} \alpha_b \ket{b}$, the sender first prepares
\[ \sum_{b\in\{0,1\}}  \sum_{x \in \mathcal{X}} \alpha_b\ket{b} \ket{x} \ket{f(b,x)}, \]
and then measures the last register to obtain a random outcome $y$. The resulting state is the $(n+1)$-qubit state
\[ \sum_{b\in\{0,1\}} \alpha_b\ket{b} \ket{x_b}. \]
To open this in the standard basis, the honest sender measures in the standard basis and returns $(b, x_b)$; the receiver checks that $f(b,x_b) = y$, and if so, records a measurement outcome of $b$.
Intuitively, this constitutes a ``binding'' commitment in the standard basis because it is impossible for the sender to know both $x_0$ and $x_1$, and thus impossible to flip between them. To open in the Hadamard basis, the honest sender measures in the \emph{Hadamard} basis; a short calculation shows that the outcome is a random string $d\in\{0,1\}^{n+1}$, where the probability that $d \cdot (1,x_0 \oplus x_1) \equiv 0 \pmod{2}$ is exactly equal to $|\alpha_0 + \alpha_1|^2/2$, the probability that a Hadamard basis measurement on the \emph{original} state $\ket{\psi}$ would have yielded $+$.  The receiver uses the cryptographic trapdoor to compute $d \cdot (1, x_0 \oplus x_1) \pmod{2}$ as the measurement outcome of the opening, and performs \emph{no} test. This is not at all a binding commitment: indeed, the ``commitments'' to a Hadamard basis states $\ket{\pm}$ look like
\[ \ket{\pm} \mapsto \frac{1}{\sqrt{2}} ( \ket{0} \ket{x_0} \pm \ket{1} \ket{x_1}), \]
and one can easily map from one state to the other by applying a Pauli $Z$ operator to the first qubit.

We now describe our modification to convert this weak commitment (denoted $\mathrm{commit}_W$) into a binding commitment: simply apply a Hadamard transform to the post-commitment state, and then weakly commit again to the resulting $n$-qubit state, applying the Mahadev scheme qubit by qubit, with a new $\tcf$ function $f_i$ for each qubit.
\begin{align*}
  \sum_b \alpha_b \ket{b} &\mapsto^{\mathrm{commit}_W \to y_0}   \sum_b \alpha_b \ket{b, x_b}  \\
                          &\mapsto^{H^{\otimes (n+1)}} \sum_{d \in \{0,1\}^{n+1} } \beta_d \ket{d} \\
                          &\mapsto^{\mathrm{commit}_W \to y_1, \dots, y_{n+1}}  \sum_{d} \beta_d \ket{d_1, x'_{1,d_1}}  \dots \ket{d_{n+1}, x'_{n+1, d_{n+1}}}.
\end{align*}
Here, $d_j$ denotes the $j$th bit of $d$, and $x'_{j, b}$ denotes the corresponding preimage of $y_j$ under the $\tcf$ function $f_j$ (so $f_j(b, x'_{j,b}) = y_j$). 

Let us see how to open this commitment. It will be easier to start with the Hadamard basis: to open in this basis, the sender measures their state in the \emph{standard} basis, and returns the string $(d_1, x_1, \dots, d_{n+1}, x_{n+1})$. The receiver checks that each $(d_i, x_i)$ is a preimage of the corresponding $y_i$, and then records the measurement outcome as $(d_1, \dots, d_{n+1}) \cdot (1, x_0 \oplus x_1)$. To open in the standard basis, the sender measures their state in the \emph{Hadamard} basis, obtaining a (long) string $z$, and the receiver converts this into a measurement outcome by applying the Mahadev procedure for the \emph{Hadamard} basis. Specifically, it first splits $z$ into equal blocks of size $n+1$, and applies the Mahadev Hadamard procedure on each block, to get $n+1$ bits $m_1, \dots, m_{n+1}$.  
\begin{align*}
     z&= (z_1, \dots, z_{n+1}) \\
     &\mapsto \;  (m_1 = z_1 \cdot (1, x'_{1,0} \oplus x'_{1,1}), \dots, m_{n+1} = z_{n+1}\cdot (1, x'_{n+1,0} \oplus x'_{n+1,1})) \\ 
\end{align*}
Now, this corresponds to the outcome of opening the weak commitment of $\sum_{d} \beta_d \ket{d}$ in the Hadamard basis. But this state in turn was equal to the Hadamard transform of $\sum_b \alpha_b \ket{b, x_b}$. Thus, the outcomes $m_1, \dots, m_{n+1}$ should look like the outcome of measuring $\sum_b \alpha_b \ket{b, x_b}$ in the standard basis: that is, like a preimage of $y_0$ under the $\tcf$ function $f$! Thus, the receiver tests the outcomes by checking that
\[ f(m_1, \dots, m_{n+1}) = y_0,\]
and if this passes, it records $m_1$ as the measurement outcome. 

At an intuitive level, what makes this commitment scheme binding is that the receiver performs a test in \emph{both} bases. More formally, we show binding in two parts: (1) there exists a qubit state consistent with the openings reported by the sender, and (2) for any two opening algorithms, the openings they generate are statistically indistinguishable. The proof of (1) uses standard techniques from the analysis of Mahadev's protocol---in particular, the ``swap isometry'' as presented in~\cite{Vid20-course}, but the proof of (2) is new to our work. Our arguments are based on the \emph{collapsing} property of the $\tcf$ functions used to generate $y_1, \dots, y_n$ (in the Hadamard basis case), and $y_0$ (in the standard basis case).  (Jumping ahead, we note that in the succinct setting the situation is reversed. 
 We can obtain (2) basically ``for free'' from the non-succinct setting, whereas the proof of (1) incurs most of the technical burden in this work.)

\paragraph{Our construction: multiple qubits, and succinctness.}
From the single-qubit scheme described above, we construct a {\em non-succinct multi-qubit} commitment scheme, by committing qubit-by-qubit, and thus repeating the single-qubit construction $\ell$-times, where $\ell$ is the number of qubits we wish to commit to. This transformation is generic and can be used to convert {\em any} single-qubit commitment scheme into a {\em non-succinct} multi-qubit one.  We emphasize that in the resulting $\ell$-qubit scheme, both the public key and the commitment string grow with $\ell$, since the former consists of $\ell$ public-keys and the latter consists of $\ell$ commitment strings, where each corresponds to the underlying single-qubit scheme.  We then convert this scheme into a succinct commitment scheme.  This is done in two stages: 
\begin{enumerate}
    \item {\bf Stage 1:}  Reuse the same public key, as opposed to choosing $\ell$ independent ones.  Namely, the public key consists of a single public key $\pk$ corresponding the underlying single-qubit commitment scheme.  To commit to an $\ell$-qubit state, commit qubit-by-qubit while using the same public key $\pk$. We refer to such a commitment scheme as {\em semi-succinct} since the public key is succinct but the commitment is not.

    We note that while this construction is generic, the analysis is not.  In general, reusing the same public-key may break the binding property. We prove that if we start with our specific single-qubit commitment scheme then the resulting semi-succinct multi-qubit scheme remains sound.    
    We recall, that as mentioned above, if we start with Mahadev's single qubit weak commitment protocol and convert it into a multi-qubit weak commitment while reusing the same public key, then the resulting measurement protocol becomes insecure. The reason is that a malicious sender may generate openings $d$ in the Hadamard basis that cause the receiver's ``decoding'' outcomes $d \cdot (1 , x_0 \oplus x_1)$ to leak bits of $\sk$---recall that the receiver must use the secret key to decode, as $x_0$ and $x_1$ cannot be computed efficiently without it.  Indeed, the $\tcf$ function family that we (and Mahadev) use is the $\LWE$-based construction due to \cite{BCMVV18}, which has the property that $d \cdot (1 , x_0 \oplus x_1)=d'\cdot s$, where $s$ is a secret key\footnote{In their construction the public key is an $\LWE$ tuple $(A,As+e)$.  The secret key is actually a trapdoor of the matrix~$A$ but revealing the secret~$s$ is sufficient to break security.}  and $d'$ can be efficiently computed from $d$ and $x_0$.   Once enough information about the secret $s$ has been revealed, the scheme is no longer a secure measurement protocol, let alone a secure commitment: with knowledge of $s$, it becomes easy to distinguish the outcomes of the commitment from outcomes of measuring a true quantum state! Thus, to argue the security of our semi-succinct scheme, we must exploit specific properties of our single-qubit scheme.  Indeed, we crucially use the \emph{binding} property of our scheme to show that the openings $z$ reported by a successful sender must always have high min-entropy, which in our construction implies that $d'$ has min-entropy.  We then use a specific property of the underlying $\tcf$ function family from  \cite{BCMVV18}, which we call the ``distributional strong adaptive hardcore bit'' property. Roughly, this property ensures that if the opening $d$ has min-entropy then $d \cdot (1 , x_0 \oplus x_1)$ (which in their construction is equal to $d'\cdot s$) does not reveal information about $\sk$.

    \item {\bf Stage 2:} Convert any semi-succinct commitment scheme into a succinct one.  This part is generic and shows how to convert {\em any} semi-succinct commitment scheme into a succinct one. Our transformation is almost identical to that from \cite{Bartusek22}, who showed how to convert any semi-succinct interactive argument (which is one where only the verifier's communication is succinct, and where the prover's communication can be long) into a fully succinct one.  We elaborate on the  high-level idea behind this transformation in \Cref{sec:overview}. 
\end{enumerate}

\subsection{Applications}

We show how to use our succinct commitment scheme to construct succinct interactive argument for $\QMA$. As a simpler bonus, we also use it show how to  compile a hypothetical quantum PCP in ``$X/Z$  form'' into a succinct interactive argument. For the $X/Z$ PCP compiler the idea is simple:  In the succinct interactive argument the prover first succinctly commits to the $X/Z$ PCP, then the verifier sends its $X/Z$ queries and finally the prover opens the relevant qubits in the desired basis. 
The succinct interactive argument for $\QMA$ is more complicated, and follows the blueprint from \cite{Mah18a,Bartusek22}.  We elaborate on this in \Cref{sec:overview:Mahadev}.

\subsection{Related Works}\label{sec:related}

Our work is inspired by the measurement protocol of Mahadev \cite{Mah18a}, which has the same correctness guarantee as our commitment scheme.  
However, a measurement protocol (as was formally defined in \cite{Bartusek22}) does not require binding to hold; rather it only requires that an opening is consistent with a qubit.  This qubit may be different for different opening algorithms.   Indeed, the measurement protocol of Mahadev, as well as the ones from followup works, are not binding in the Hadamard basis. Mahadev uses this measurement protocol to construct classical interactive arguments for $\QMA$.  Mahadev's measurement protocol, which was proven to be secure under the post-quantum $\LWE$ assumption, is a key ingredient in our construction.

Mahadev's measurement protocol is not succinct.  In a followup work, Bartusek et~al.\ \cite{Bartusek22} constructed a succinct measurement protocol, by using Mahadev's measurement protocol as a key ingredient, and thus obtaining a succinct classical interactive arguments for $\QMA$. However the security of their protocol, and thus the soundness of the resulting $\QMA$ argument, relies on the existence of a post-quantum secure indistinguishable obfuscation scheme (in addition the post-quantum $\LWE$ assumption). We mention that  Chia, Chung and Yamakawa~\cite{TCC:ChiChuYam20} also construct a succinct 
measurement protocol, which they use to obtain a succinct 2-message argument for $\QMA$. However, in their scheme the prover and verifier share a polynomial-sized structured reference string (which requires a trusted setup to instantiate), and their security is heuristic.\footnote{More specifically, their scheme uses a hash function~$h$, and it is proved to be secure when $h$ is modeled as a random oracle, but the \emph{protocol description itself} explicitly requires the code of $h$ (i.e. uses $h$ in a non-black-box way).}

We improve upon these works by constructing a succinct classical commitment scheme for quantum states that guarantees binding (which is a stronger security condition than the one offered by a measurement protocol), based only on the post-quantum $\LWE$ assumption.  As a result, we obtain a succinct classical interactive arguments for $\QMA$, under the post-quantum $\LWE$ assumption.
Our analysis makes use of techniques developed in \cite{Mah18a,Vid20-course,Bartusek22}, in addition to several new ideas that are needed to obtain our results.  

We mention that our work, as well as all prior works mentioned above, require the receiver (a.k.a\ the verifier) to hold a secret key~$\sk$ which is needed to decode the prover's message and obtain the measurement output.  We mention that the recent work of Bartusek et~al.~\cite{Obfuscation_Bartusek} considers the public-verifiable setting, where decoding can be done publicly.  They construct a publicly verifiable measurement protocol in an oracle model, which is used as a building block in their obfuscation of pseudo-deterministic quantum circuits.

So far we only focused on prior work where the verifier (and hence the communication) is classical.  We mention that recently Gunn et~al.~\cite{GJMZ22} defined and constructed a {\em quantum} commitment scheme to quantum states, where \emph{both} parties are quantum.
In their setting, the quantum committer sends a quantum commitment to the receiver, and later opens by sending a quantum opening.
The receiver then applies some unitary operation to recover the committed quantum state. This is in contrast to the classical setting where the receiver is classical and cannot hope to recover the committed quantum state, and instead only obtains an opening in a particular basis (standard or Hadamard).  We mention that the quantum commitment scheme from \cite{GJMZ22} relies on very weak cryptographic assumptions, and in particular, ones that are implied by the existence of one-way functions.

Finally, simultaneously and using different techniques from this work, a succinct argument system for $\QMA$ based on the assumption of quantum Fully Homomorphic Encryption (qFHE) was achieved by~\cite{MNZ24}. While both papers use common techniques from \cite{Bartusek22} to go from semi-succinctness to full succinctness, the core techniques are essentially disjoint. In particular, \cite{MNZ24} does not use commitments to quantum states, but instead directly analyzes the soundness of the KLVY~\cite{kalai2022quantum}  compilation of a particular semi-succinct two-prover interactive proof for $\QMA$. We leave it as an interesting open question for future work whether their result can yield an alternate construction of our primitive of quantum commitments.


\paragraph{Roadmap} We refer the reader to \Cref{sec:overview} for the high-level overview of our techniques, to \Cref{sec:prelim} for all the necessary preliminaries, to \Cref{sec:XZ-commitments} for the formal definition of a succinct and non-succinct commitment scheme, to \Cref{sec:constructions} for the constructions, to \Cref{sec:analysis} for the analysis, and to \Cref{sec:applications} for the applications.

\section{Technical Overview}\label{sec:overview}

In this section we describe the ideas behind our commitment schemes and their applications in more depth yet still informally. 
Our first contribution is defining the notion of a classical commitment scheme to quantum states.
Let us start with the non-succinct version, and in particular the single-qubit case.  As mentioned in the introduction,  such a commitment scheme consists of algorithms
\[
(\Gen,\Commit,\Open,\Out)
\]
where $\Gen$ is a $\PPT$ algorithm that takes as input the security parameter $1^\secp$ and outputs a pair of keys $(\pk,\sk)$; $\Commit$ is a $\BQP$ algorithm that takes as input a public key $\pk$ and a single-qubit quantum state $\bsigma$ and outputs a classical commitment string $\vecy$ and a post-commitment state $\brho$; $\Open$ is a $\BQP$ algorithm  that takes as input the post-commitment state $\brho$ and a bit $b\in\{0,1\}$, where $b=0$ corresponds to a standard basis opening and $b=1$ corresponds to a Hadamard basis opening, and outputs a classical opening $\vecz$; and $\Out$ is a polynomial-time algorithm that takes as input the secret key $\sk$, a commitment string $\vecy$, a basis $b\in\{0,1\}$ and an opening $\vecz$ and it outputs an element in $\{0,1,\bot\}$. 

We require the scheme to satisfy a correctness and a binding property. 
The correctness property is straightforward and was formalized in prior work~\cite{Bartusek22}.  It is the binding property that is tricky to formulate and achieve.

\paragraph{Defining Binding:  the single qubit setting}  In the classical setting, the binding condition asserts that for any poly-size algorithm $\Commit^*$ that generates a commitment $\vecy$ (to some classical string), and for any two poly-size algorithms $\Open^*_1$ and $\Open^*_2$, the probability that they successfully open to different strings is negligible.  In the quantum setting the analogous property is the following:  For any $\BQP$ algorithm $\Commit^*$ that generates a commitment $\vecy$ (to a quantum state), and for $\BQP$ algorithms $\Open^*_1$ and $\Open^*_2$  (that are accepted with probability~$1$) and every basis choice $b\in\{0,1\}$, the output distributions of $\Open^*_1$ and $\Open^*_2$ are statistically close or computationally indistinguishable.  This is indeed one of the properties we require.\footnote{Jumping ahead, we note that our non-succinct commitment scheme achieves statistical closeness and our succinct commitment scheme achieves computational indistinguishability. We mention that Mahadev's scheme \cite{Mah18a}, as well as its successors \cite{Bartusek22}, do not satisfy this property since these schemes offer no binding on the Hadamard basis.} But this property on its own is not enough. We also need to ensure that the opening is consistent with some qubit.
Namely, we require that there exists a $\BQP$ extractor $\Ext$ such that for every $\BQP$ algorithm $\Open^*$ (that is accepted with probability~$1$),  $\Ext$ given black-box access to $\Open^*$ can extract from $\Open^*$ a quantum state~$\btau$ such that for every basis $b\in\{0,1\}$ the output of $\Open^*$ is computationally indistinguishable from measuring $\btau$ in basis~$b$. We mention that this latter condition was formalized in \cite{Bartusek22} as a security property from a measurement protocol.

We construct a commitment scheme that achieves the above two properties. However, to make this definition meaningful we must consider opening algorithms that are accepted with probability smaller than~$1$.  Indeed, we consider opening algorithms that are accepted with probability $1-\delta$ and obtain  $O(\sqrt{\delta})$-indistinguishability in both the requirements above.  We note that we can assume that $\Open^*$ is accepted with probability $1-\delta$ by repeating the commitment protocol $\Omega(1/\delta)$ times (assuming the committer has many copies of the state they wish to commit to).

\paragraph{The multi-qubit setting.}  So far we focused on the single-qubit setting.  When generalizing the definitions to the multi-qubit setting we distinguish between the non-succinct setting and the succinct setting, starting with the former. The syntax can be generalized to the non-succinct multi-qubit setting in a straightforward way by committing and opening qubit-by-qubit. 
Generalizing the binding definition to the multi-qubit setting is a bit tricky.  In particular, recall that we assumed that $\Open^*$ is accepted with high probability when opening in both bases. As mentioned, this is a reasonable assumption since we can require the committer to commit to its state many ($\Omega(1/\delta)$) times, then open half of the commitments in the standard basis and half of them in the Hadamard basis.  If any of them are rejected then output $\bot$ and otherwise, choose a random one that was opened in the desired basis~$b$ and use that as the opening.  Generalizing this to the $\ell$-qubit setting must be done with care to avoid an exponential blowup in~$\ell$.  Clearly, we do not want to assume that for every basis choice $(b_1,\ldots,b_\ell)\in\{0,1\}^\ell$, $\Open^*$ successfully opens in this basis with high probability, since we cannot enforce this without incurring an exponential blowup. Yet, in order for our extractor to be successful, we need to ensure that $\Open^*$ succeeds in opening each qubit in each basis with high probability. To achieve this, without incurring an exponential blowup, we require that $\Open^*$ succeeds with high probability to open all the qubits in the standard basis (i.e., succeeds with $(b_1,\ldots,b_\ell)=(0,\ldots,0)$) and  succeeds with high probability to open all the qubits in the Hadamard basis (i.e., succeeds $(b_1,\ldots,b_\ell)=(1,\ldots,1)$). This can achieved via repetitions, as in the single qubit setting. Specifically, in this setting we ask $1/3$ of the repetitions to be opened in the $0^\ell$ basis, $1/3$ to be opened in the $1^\ell$ basis, and the remaining $1/3$ to be opened in the desired $(b_1,\ldots,b_\ell)$ basis. Jumping ahead, we note that the extractor $\Ext$ uses $\Open^*$ with basis $(b,\ldots,b)$ to extract the state $\btau$. We refer the reader to \Cref{def:binding} for the formal definition. 


\ifsubmission
\else
\paragraph{Our construction for the single qubit case.}

We start by describing our commitment scheme in the single-qubit case.  
Our starting point is Mahadev's \cite{Mah18a} measurement protocol. Her protocol is binding in the standard basis but offers no binding guarantees, and in fact fails to provide any form of binding, when opening in the Hadamard basis. Moreover, in her protocol the opening basis must be determined ahead of time and the public key $\pk$ used to compute the  commitment string depends on this basis.  Specifically, her protocol uses a family of (noisy) trapdoor claw-free functions, where functions can be  generated either in an {\em injective} mode or in a {\em two-to-one} mode. The public key of the commitment scheme consists of a public key corresponding to an injective function if the verifier wishes to open in the standard basis, and corresponds to a two-to-one function if the verifier wishes to open in the Hadamard basis. 

We first notice that it is not necessary to determine the opening basis in the key generation phase. In fact, we show that one can always use the two-to-one mode, irrespective of the basis we wish to open in.  Moreover, we show that this ``dual mode'' property is not needed altogether.  This observation is quite straightforward and was implicitly used in the analysis in prior work \cite{Vid20-course,Bartusek22}.

Our first instrumental idea is that we can obtain binding in both bases if we compose Mahadev's weak commitment twice!
Namely, to commit to a state $\bsigma$, we first apply Mahadev's measurement protocol, denoted by $\CommitM$, to obtain 
\[
(\vecy,\brho)\gets\CommitM(\pk,\bsigma).
\]
As mentioned, this already guarantees binding when opening in the standard basis, but fails to provide binding when opening in the Hadamard basis.
To fix this we make use of the fact that Mahadev's measurement protocol has the property that the $\Open$ algorithm always measures the post-commitment state in either the standard basis or the Hadamard basis.  We apply to the post-commitment state~$\brho$ the unitary that computes Hadamard opening $\Open(\cdot,1)$, which is simply the Hadamard unitary $H^{\tensor{(n+1)}}$, where $n+1$ is the number of qubits in $\brho$ ($n$ being the security parameter associated with the underlying $\TCF$ family), and we commit to the resulting state.
Namely, we compute
\[
    \brho'\gets H^{\tensor{(n+1)}}[\brho]~~\mbox{ and }~~(\vecy',\brho'')\gets \CommitM(\pk',\brho'),
\]
where $\pk$ and $\pk'$ are independent keys,\footnote{Using different and independent public keys $\pk$ and $\pk'$  is important in our analysis.} and where throughout our paper we use the shorthand
\[
U[\brho]=U\brho U^\dagger
\]
to denote the application of a unitary $U$ to a mixed state $\brho$.

To open the commitment in the Hadamard basis, we just need to measure $\brho'$ in the {\em standard} basis.  Binding in the Hadamard basis follows from the fact that $\brho'$ was committed to via the classical string~$\vecy'$, and from the fact that  Mahadev's measurement protocol provides binding in the standard basis. However, it is no longer clear how to open in the standard basis, since the original post-commitment state $\brho$ is no longer available, and has been replaced with $\brho''$.  Here we use the desired property mentioned above, specifically, that algorithm $\Open$  generates a standard basis opening by measuring the state in the standard basis, and generates a Hadamard basis opening by measuring the state in the Hadamard basis. This implies that measuring $\brho$ in the standard basis is equivalent to measuring $\brho'$ in the Hadamard basis.  

The reader may be concerned that we may have lost the binding in the standard basis, since opening in the Hadamard basis is not protected.  But this is not the case, since it is the commitment string $\vecy$ that binds the standard basis measurement, and the commitment string $\vecy'$ that binds the Hadamard basis measurement.

\paragraph{Multi-qubit commitments} One can use this single qubit commitment scheme to commit to an $\ell$-qubit state, by committing qubit-by-qubit.  This results with a long commitment string of size $\ell\cdot\poly(\secp)$ and with a long public key, since the public key consists of $\ell$ public keys $(\pk_1,\ldots,\pk_\ell)$, where each $\pk_i$ is generated according to the single qubit scheme.  As mentioned in the introduction, our main goal is to construct a succinct commitment scheme.  Following the blueprint of \cite{Bartusek22}, we do this in two steps.  We first construct a {\em semi-succinct} commitment scheme where the commitment string is long, but the public-key is succinct.  We then show how to convert the semi-succinct scheme into a fully succinct one.

\paragraph{Semi-succinct commitments}  In our semi-succinct commitment scheme we generate a single key pair $(\pk,\sk)\gets \Gen(1^\secp)$ corresponding to the single-qubit scheme, and simply use $\pk$ to commit to each and every one of the qubits. The question is whether this is sound.
Let us first describe the main issue that comes up when trying to prove soundness, and then we will show how we overcome it. The issue is that our commitment scheme is privately verifiable, and thus a $\BQP$ algorithm $\Open^*$, which produces an opening~$\vecz$, does not know the corresponding output bit $m=\Out(\sk,\vecy,b,\vecz)$ since $\sk$ is needed to compute~$m$. Therefore, perhaps a malicious $\BQP$ algorithm $\Open^*$ can generate $\vecz$ in a way such that $m$ leaks information about $\sk$. In particular, perhaps $\Open^*$ can generate $\ell$ openings $\vecz_1,\ldots,\vecz_\ell$ such that their corresponding outputs  $m_1,\ldots,m_\ell$ completely leak $\sk$.  

Recall that our binding property consists of two parts: The first asserts that for any $\BQP$ algorithm $\Commit^*$ that commits to an $\ell$-qubit state via a classical commitment string $\vecy$, it holds that for any two $\BQP$ opening algorithms $\Open^*_1$ and $\Open^*_2$ and any basis choice $(b_1,\ldots.b_\ell)$, the output distributions produced by these two opening algorithms are (computationally or statistically) close.  In our construction we get {\em statistical closeness}, and hence the closeness holds even if $\sk$ is leaked.  Indeed, the proof of this property in the semi-succinct setting is the same as the proof in the non-succinct setting.  The issue is with the second part: Given $\sk$, the distributions generated by $\Ext^{\Open^*}$ and $\Open^*$ are no longer computationally indistinguishable. Diving deeper into our scheme and its analysis, we note that the standard basis outputs produced by $\Ext^{\Open^*}$  and $\Open^*$ are actually statistically close, and it is the Hadamard basis outputs that are only computationally indistinguishable. 

We next examine the leakage that the decoded messages $m_1,\ldots,m_\ell$ may contain about the secret key, and argue that even given this leakage, the Hadamard basis outputs produced by $\Ext^{\Open^*}$  and $\Open^*$ remain computationally indistinguishable.  To this end, we will need to use additional properties about Mahadev's measurement protocol, and thus recall it in \cref{sec:overview:Mahadev} below. Jumping ahead, we mention that one property that we rely on is the fact that in Mahadev's protocol, $\Out$ does not use the secret key when generating standard basis outputs (and the secret key is only used to generate Hadamard basis outputs). 

Recall that in our commitment scheme, the secret key consists of two parts, $(\sk,\sk')$, since we apply Mahadev's protocol twice (where $\sk$ is for a single qubit state and $\sk'$ is for an $(n+1)$-qubit state). We mention that when opening in the standard basis, the output $m$ can only leak information about $\sk'$.  This is the case since to open in the standard basis, we first use $\sk'$ to generate a standard basis opening $\vecz$ for Mahadev's protocol, and then use Mahadev's $\Out$ algorithm to decode $\vecz$, which as mentioned above, can be done publicly without the secret key $\sk$ (since it is a standard basis opening).  Importantly, we show that the computational indistinguishability of the Hadamard basis opening only relies on the fact that $\sk$ is secret, and does not rely on the secrecy of $\sk'$.  Thus, the remaining problem, which is at the heart of the technical complication, is the leakage of the Hadamard basis openings on $\sk$. We note that in Mahadev's protocol, the Hadamard basis openings may leak the entire $\sk$. What saves us in our setting is the fact that we tie the hands of the  adversary when opening in the Hadamard basis.  
To explain this in more detail we need to recall Mahadev's measurement protocol. 

\fi


 \subsection{Mahadev's measurement protocol}\label{sec:overview:Mahadev}
As mentioned, Mahadev's measurement protocol \cite{Mah18a} uses a noisy $\tcf$ family.\footnote{As mentioned above, her work, as well as  followup works, use a dual-mode $\tcf$ family; we avoid this technicality.} In this overview, for the sake of simplicity, we describe her scheme assuming we have a noiseless $\tcf$ family, which is a function family associated with algorithms 
\[
(\Gen_\tcf, \Eval_\tcf,\Invert_\tcf)
\]
where $\Gen_\tcf$ is a $\PPT$ algorithm that takes as input the security parameter~$1^\secp$ and outputs a key pair $(\pk,\sk)$; $\Eval$ is a poly-time deterministic algorithm that takes as input the public key~$\pk$, and a pair $(b,\vecx)$ where $b\in\{0,1\}$ is a bit and $\vecx\in\{0,1\}^n$ (where $n=\poly(\secp)$), and outputs a value~$\vecy$, and $\Eval(\pk,\cdot)$ is a two-to-one function where every $\vecy$ in the image has exactly two preimages of the form $(0,\vecx_0)$ and $(1,\vecx_1)$; $\Invert_\tcf$ takes as input the secret key $\sk$ and an element $\vecy$ in the image and it outputs the two preimages $((0,\vecx_0),(1,\vecx_1))$.

In what follows we show how Mahadev uses a $\tcf$ family to construct a measurement protocol.  The following protocol slightly differs from Mahadev's scheme, and in particular the basis choice is not determined during the key generation algorithm.
The measurement protocol consists of algorithms
$(\Gen,\Commit,\Open,\Out)$
defined as follows:
\begin{itemize}
    \item $\Gen$ is identical to $\Gen_\tcf$; it takes as input the security parameter~$1^\secp$ and outputs a key pair $(\pk,\sk)$.
    \item $\Commit$ takes as input $\pk$ and a single-qubit pure state $\ket{\psi}=\alpha_0 \ket{0} + \alpha_1 \ket{1}$ and generates
    \[
    \ket{\psi'}=\alpha_0 \ket{0,x_0} + \alpha_1 \ket{1,x_1}
    \]
    such that $\Eval(\pk,(0,\vecx_0))=\Eval(\pk,(1,\vecx_1))=\vecy$, and outputs $\vecy$ as the commitment string.
    \item $\Open$ takes as input the post-committed state $\ket{\psi'}$ and a basis $b\in\{0,1\}$; if $b=0$ it returns the outcome $\vecz$ of measuring $\ket{\psi'}$ in the standard basis, which is of the form $(b,x_b)$, and if $b=1$ it returns the outcome $\vecz$ of measuring $\ket{\psi'}$ in the Hadamard basis.
    \item $\Out$ takes as input $(\sk,\vecy,b,\vecz)$, and if $b=0$ it checks that $\Eval(\pk,\vecz)=\vecy$ and if this is the case it outputs the first bit of $\vecz$, and otherwise it outputs $\bot$.  If $b=1$ if outputs $\vecz\cdot(1,\vecx_0\oplus\vecx_1)$ where 
    $((0,\vecx_0),(1,\vecx_1))=\Invert_\tcf(\vecy)$.
\end{itemize}
Recall that, as explained in the introduction, Mahadev's measurement protocol is not fully binding.
The issue is that a cheating prover can produce any opening in the Hadamard basis, and will never be rejected. For instance, a cheating prover could commit to $\ket{+}$ honestly, apply a $Z$ to the first qubit of the post-commitment state, and then open to $\ket{-}$.

\subsection{Our Single-Qubit Commitment Scheme}

We convert Mahadev's protocol into a binding commitment scheme by adding another step to the commitment algorithm, as described in the beginning of \Cref{sec:overview}. 
More specifically, our commitment scheme consists of algorithms $(\Gen,\Commit,\Open,\Out)$ defined as follows: 
\begin{itemize}
    \item $\Gen(1^\secp)$ generates $n+2$ $\tcf$ keys $(\pk_i,\sk_i)_{i\in \{0,1,\ldots,n+1\}}$, where each $(\pk_i,\sk_i)\gets\Gen_\tcf(1^\secp)$, and outputs 
$\pk=(\pk_0,\pk_1,\ldots,\pk_{n+1})$ and $\sk=(\sk_0,\pk_1,\ldots,\sk_{n+1})$.
\item $\Commit(\pk,\ket{\psi})$ operates as follows:
\begin{enumerate}
\item Parse $\pk=(\pk_0,\pk_1,\ldots,\pk_{n+1})$.
    \item Apply Mahadev's measurement protocol to commit to $\ket{\psi}=\alpha_0 \ket{0} + \alpha_1 \ket{1}$ w.r.t.\ $\pk_0$; i.e., generate
\[
    \ket{\psi'}=\alpha_0 \ket{0,x_0} + \alpha_1 \ket{1,x_1}
    \]
such that  $\Eval(\pk_0,(0,\vecx_0))=\Eval(\pk_0,(1,\vecx_1))=\vecy_0$.
\item Compute 
\[
H^{\tensor(n+1)}\ket{\psi'}=\sum_{\vecd\in\{0,1\}^{n+1}} \beta_{\vecd}\ket{\vecd},
\]
\item Use Mahadev's measurement protocol to commit qubit-by-qubit to the above $(n+1)$-qubit state, w.r.t.\ public keys $\pk_1,\ldots,\pk_{n+1}$ to obtain the state
\[
\sum_{\vecd\in\{0,1\}^{n+1}}\beta_{\vecd} \ket{\vecd}\ket{\vecx'_{1,\vecd_1}}\ldots\ket{\vecx'_{n+1,\vecd_{n+1}}}
\]
and strings $\vecy_1,\ldots,\vecy_{n+1}$ such that for every $i\in[n+1]$,
\[
\Eval(\pk_i,(0,\vecx'_{i,0}))=\Eval(\pk_i,(1,\vecx'_{i,1}))=\vecy_i.
\]
\item Output $(\vecy_0,\vecy_1,\ldots,\vecy_{n+1})$, and (for simplicity) rearrange the post-commitment state to be 
\[\sum_{\vecd\in\{0,1\}^{n+1}}\beta_{\vecd} \ket{\vecd_1,\vecx'_{1,\vecd_1}}\ldots\ket{\vecd_{n+1},\vecx'_{{n+1},\vecd_{n+1}}}
\]
\end{enumerate}
\item $\Open$ takes as input the post-commitment state $\brho$ and a basis $b\in\{0,1\}$.  If $b=1$ (corresponding to opening in the Hadarmard basis) then it outputs the measurement of the state $\brho$ in the standard.  If $b=0$  (corresponding to opening in the stanadard basis) then it outputs the measurement of the state $\brho$ in the Hadamard basis. 
\item $\Out$ takes as input the secret key $\sk=(\sk_0,\sk_1,\ldots,\sk_{n+1})$, a commitment string $\vecy=(\vecy_0,\vecy_1,\ldots,\vecy_{n+1})$, a basis $b\in\{0,1\}$ and an opening string $\vecz\in\{0,1\}^{(n+1)^2}$ and does the following:
\begin{enumerate}
    \item If $b=1$ then parse 
    \[\vecz=(\vecd_1,\vecx'_{1,\vecd_1},\ldots,\vecd_{n+1},\vecx'_{1,\vecd_1})
    \]
    and check that for every $i\in[n+1]$ it holds that 
    \[\vecy_i=\Eval(\pk_i,(\vecd_i,\vecx'_{i,\vecd_i})).
    \]
    If all these checks pass then output $\vecd\cdot (1,\vecx_0\oplus \vecx_1)$, where $((0,\vecx),(1,\vecx_1))=\Invert(\sk_0\vecy_0)$.  Otherwise, output~$\bot$.
    \item If $b=0$ then parse $\vecz=(\vecz_1,\ldots,\vecz_{n+1})$, and for every $i\in[n+1]$ compute 
    \[
    ((0,\vecx'_{i,0}),(1,\vecx'_{i,1}))=\Invert(\sk_i,\vecy_i)~~\mbox{ and }~~m_i=\vecz_i\cdot(1,\vecx'_{i,0}\oplus \vecx'_{i,1})
    \]
    If $\Eval(\pk_0,(m_1,\ldots,m_{i+1}))=\vecy_0$ then output~$m_1$, and otherwise output~$\bot$.
\end{enumerate}

\end{itemize}

\paragraph{Analyzing the leakage.}  We next analyze the leakage that a cheating $\BQP$ algorithm $\Commit^*$ and a cheating $\BQP$ algorithm $\Open^*$ obtain by, given $\pk=(\pk_0,\pk_1,\ldots,\pk_{n+1})$, generating a commitment string $\vecy=(\vecy_1,\ldots,\vecy_\ell)$, where each $\vecy_i=(\vecy_{i,0},\vecy_{i,1},\ldots,\vecy_{i,n+1})$, a basis $(b_1,\ldots,b_\ell)$ and an opening $\vecz=(\vecz_1,\ldots,\vecz_\ell)$, and obtaining outputs $m_i=\Out(\sk,\vecy_i,b_i,\vecz_i)$ for every $i\in[\ell]$.
Denote by 
\[I=\{i:~b_i=0\}~~\mbox{ and }~~J=\{i:~b_i=1\}.
\]
We distinguish between the leakage obtained from $\{m_i\}_{i\in I}$ and that obtained from $\{m_i\}_{i\in J}$.  As mentioned above, $\{m_i\}_{i\in I}$ only leaks information about $\sk_1,\ldots,\sk_{n+1}$, since $\sk_0$ is not used when computing  $\{m_i\}_{i:~b_i=0}$.
For $i\in J$, it holds that
\[m_i=\vecd_i\cdot (1,\vecx_{i,0}\oplus\vecx_{i,1})~~\mbox{ where }~~((0,\vecx_{i,0}),(1,\vecx_{i,1}))=\Invert_\tcf(\sk_0,\vecy_{i,0}),
\]
where $\vecz_i=(\vecd_{i,1},\vecx'_{i,1,\vecd_1},\ldots,\vecd_{i,n+1},\vecx'_{i,n+1,\vecd_{n+1}})$.
This may leak information about $\sk_0$.
In particular, if we use the underlying (noisy) $\tcf$ family from \cite{BCMVV18}, along with an adversarially chosen $\vecd=(\vecd_1,\ldots,\vecd_{\ell})$ then $\{m_i\}_{i\in J}$ may leak part of the secret key which breaks the indistinguishability between the output produced by $\Open^*$ and $\Ext^{\Open^*}$.  

We get around this problem by arguing that in our scheme if $\vecz$ is accepted then it must be the case that the ``important'' bits of $\vecd$ have min-entropy $\omega(\log \secp)$.\footnote{We emphasize that this is not the case for Mahadev's scheme, since in her scheme every Hadamard opening $\vecd$ is accepted.} 
For this we rely on the fact that the underlying $\tcf$ family has the adaptive hardcore bit property, which the  (noisy) $\tcf$ family from \cite{BCMVV18} was proven to have under the $\LWE$ assumption. 
We actually need the stronger condition that the ``importants'' bits of $\vecd$ have min-entropy $\omega(\log \secp)$ even given some auxiliary input (which comes into play due to the fact that we are opening many qubits). We prove this for the specific $\TCF$ family from  \cite{BCMVV18}.  Specifically, we prove that under the $\LWE$ assumption, the $\TCF$ family from  \cite{BCMVV18} has a property which we refer to as the {\em distributional strong adaptive hardcore bit property}. We argue that this property, together with the min-entropy property of $\vecd$, implies that the leakage obtained from $\vecd_i\cdot(\vecx_{i,0}\oplus\vecx_{i,1})$ is benign and does not break the indistinguishability between the output produced by $\Open^*$ and $\Ext^{\Open^*}$.  

In more detail, for Mahadev's measurement protocol, the proof that the Hadamard outputs of $\Ext^{\Open^*}$ and $\Open^*$ are computationally indistinguishable relies on the adaptive hardcore bit property, which states that for every $\BQP$ adversary $\A$,
\[
\Pr[\A(\pk)=(b,\vecx_b,\vecd,\vecd\cdot(1,\vecx_0\oplus\vecx_1))]\leq \frac12+\negl(\secp)
\]
where $((0,\vecx_0),(1,\vecx_1))=\Invert(\sk,\Eval(\pk,b,\vecx_b))$.
We need to argue that this holds even if $\A$ gets as auxiliary input a bunch of elements of the form \[(b_i,\vecx_{i,b_i},\vecd_i\cdot(1,\vecx_{i,0}\oplus\vecx_{i,1})).
\]
While in general this is not true, we prove that it is true for the (noisy) $\tcf$ family from \cite{C:BBCM92}, if each $\vecd_i$ has $\omega(\log \secp)$ min-entropy (even conditioned on $(\vecd_1,\ldots,\vecd_{i-1})$), under the $\LWE$ assumption.



\subsection{Succinct commitments}  

As mentioned in the introduction, our main result is a {\em succinct} commitment scheme, where $\Commit$ commits to an $\ell$-qubit state by generating a {\em succinct} classical commitment string that consists of only $\poly(\secp)$ many bits,  and $\Open$ generates an opening to any qubit $i\in[\ell]$ in any basis $b\in\{0,1\}$, where the opening consists of only $\poly(\secp)$ many bits.  Importantly, the guarantee we provide is that even if $\Open^*$ only opens to a few qubits, we should still be able to extract the entire $\ell$-qubit quantum state from $\Open^*$.\footnote{ This guarantee is important for our applications, as we will see in \Cref{sec:overview:app}.} This seems impossible to do, since how can we extract information about qubits that were never opened?  Indeed, to achieve this we need to change the syntax.  

We add to the syntax an {\em interactive test phase}.  Similar to the test round in Mahadev's protocol, our test phase is executed with probability $1/2$, and if it is executed then after the test phase the protocol is terminated and the opening phase is never run.  This is the case since the test phase destroys the quantum state. 
Importantly, we allow the test phase to be {\em interactive}.  It is this interaction that allows us to extract a long $\ell$-qubit state from $\Open^*$. Loosely speaking, in this test phase, we choose at random $b\gets\{0,1\}$ and ask the prover to provide an opening to all the $\ell$-qubits in basis $b^\ell$.
To ensure that the protocol remains succinct, we ask for the openings to be sent in a succinct manner, using a Merkle hash. Then the prover and verifier engage in a succinct interactive argument where the prover proves knowledge of the committed openings.  For this we use Kilian's protocol and the fact that it is a proof-of-knowledge even in the post-quantum setting \cite{CMSZ,VidickZ21}. 
Then the verifier sends the prover the secret key $\sk$ and the prover and verifier engage in a succinct interactive argument where the prover proves that the committed openings are accepted (w.r.t.\ $\sk$). This is also done using the Kilian protocol.  

In addition, we allow the commit phase to be interactive.  This allows $\Commit$ to first generate a non-succinct commitment $\vecy$, and send its Merkle hash, denoted by $\rt$.  Then the committer can run a succinct proof-of-knowledge interactive argument, to prove knowledge of a preimage of $\vecy$.  Importantly, the proof-of-knowledge must be {\em state-preserving}, which means that we can extract $\vecy$ without destroying the state. Such a state-preserving proof-of-knowledge protocol was recently constructed in \cite{LombardiMS22}.
This interactive commitment phase allows us to reduce the binding of the succinct commitment scheme to that of the semi-succinct one.  This part of the analysis is similar to \cite{Bartusek22}.



\subsection{Applications} \label{sec:overview:app}
 We construct a succinct interactive argument for $\QMA$ and a compiler that converts any $X/Z$ PCP into a succinct interactive argument, both under the $\LWE$ assumption. For simplicity we do not use our succinct commitment scheme to construct these succinct interactive arguments.  Rather we use our {\em semi-succinct} commitment scheme to construct a {\em semi-succinct} interactive argument.
 We then rely on a black-box transformation from \cite{Bartusek22} which shows a generic transformation for converting any semi-succinct interactive argument for $\QMA$ into a fully succinct one.\footnote{We mention that this transformation was used (in a non-black-box way to convert our semi-succinct commitment scheme into a succinct one. }

An important point to note is that the argument systems we construct have negligible soundness, even though the binding guarantee of our commitment only holds for provers with a probability close to $1$ of being accepted. We do this by applying sequential repetition to our semi-succinct protocol to drive down the soundness error, before applying the black-box transformation of \cite{Bartusek22}. A drawback of this approach (shared with all known succinct argument systems for $\QMA$) is that the honest prover must hold polynomially many copies of the $\QMA$ witness state, or the X/Z PCP state.

The formal statements of these results are contained in \Cref{thm:succinct-qma} and \Cref{thm:succinct-QPCP}.
 
\paragraph{Compiling an $X/Z$ PCP into a semi-succinct interactive argument} 
Our compiler uses a succinct commitment in a straightforward way. 
The succinct interactive argument proceeds as follows:
\begin{enumerate}
    \item The verifier generates a key pair $(\pk,\sk)\gets \Gen(1^\secp)$ corresponding to the underlying semi-succinct commitment scheme.
    \item The prover commits to the $X/Z$ PCP~$\ket{\pi}$ by generating a classical commitment string $\vecy\gets \Commit(\pk,\ket{\pi})$.  It sends $\vecy$ to the verifier.  
   \item With probability $1/2$ the verifier behaves as the PCP verifier and chooses small set of indices $(i_1,\ldots,i_k)$ along with basis choices $(b_1,\ldots,b_k)$; with probability $1/2$ the verifier chooses a random $b\gets\{0,1\}$ and sends $b$ to the prover.
   \item If the prover receives a bit $b$ then it opens the entire PCP in the standard basis if $b=0$ and in the Hadamard basis if $b=1$.  Otherwise, if the prover receives a set of indices $(i_1,\ldots,i_k)$ along with basis choices $(b_1,\ldots,b_k)$ then the prover opens these locations in the desired basis.
\end{enumerate}
Completeness follows immediately from the completeness of the underlying semi-succinct commitment scheme.   
To argue soundness, fix a cheating prover~$P^*$ that is accepted with high probability. We  rely on the soundness property of the underlying commitment scheme to argue that there exists a $\BQP$ extractor that extracts a state $\ket{\pi^*}$ from $P^*$, such that on a random challenge produced by the PCP verifier (for which $P^*$ succeeds in opening with high probability), the output of $P^*$ is close to the the outcome obtained by measuring $\ket{\pi^*}$ directly, which implies that $\ket{\pi^*}$ is an $X/Z$ PCP that is accepted with high probability, implying that the soundness property holds.
 
\paragraph{Semi-succinct interactive argument for $\QMA$} 
 
To obtain a semi-succinct argument, we follow the blueprint of Mahadev~\cite{Mah18a}.  Namely, we first convert the $\QMA$ witness into one that can be verified by measuring only in the $X/Z$ basis. For this we rely on a result  due to Fitzsimons, Hajdu\v{s}ek, and Morimae~\cite{FHM18} which shows how to convert multiple copies of the $\QMA$ witness into an $\ell$-qubit state $\ket{\pi}$ that can be verified by measuring it only in the $X/Z$ basis. Importantly, this state can be verified by measuring it in a random basis $(b_1,\ldots,b_\ell)\gets\{0,1\}^\ell$.  Armed with this tool, the semi-succinct interactive argument proceeds as follows:
\begin{enumerate}
\item The verifier generates a key pair $(\pk,\sk)$ corresponding to the underlying semi-succinct commitment scheme.
\item The prover converts its (multiple copies) of the $\QMA$ witness into a state $\ket{\pi}$ by relying on the \cite{FHM18} result, and computes $\vecy\gets \Commit(\pk,\ket{\pi})$.
\item With probability $1/2$ the verifier chooses at random a seed $s\in\{0,1\}^\secp$ and sends $s$ to the prover, and with probability $1/2$ the verifier chooses a random $b\gets\{0,1\}$ and sends $b$ to the prover.
\item If the prover receives a bit $b$ then it sends the opening of the commitment in the basis $b^\ell$.  If it receives a seed~$s$ then it uses a pseudorandom generator to deterministically expand $s$ to a pseudorandom string $(b_1,\ldots,b_\ell)$ and sends an opening of the commitment in basis $(b_1,\ldots,b_\ell)$.
\item The verifier uses its secret key to compute the output corresponding to this opening. If any of the openings are rejected it rejects.  Otherwise, in the case that it sent a seed, it accepts if the verifier from  \cite{FHM18}  would have accepted. 
\end{enumerate}
To argue soundness, fix a cheating prover~$P^*$ that is accepted with high probability. We first rely on the soundness property of the underlying commitment scheme to argue that the $\BQP$ extractor extracts a state $\ket{\pi^*}$ from $P^*$, such that for any choice of basis $\vecb=(b_1,\ldots,b_\ell)$ for which $P^*$ succeeds in opening with high probability, the output corresponding to these openings are computationally indistinguishable from measuring $\ket{\pi^*}$ in basis $\vecb$. By the soundness of the underlying scheme~\cite{FHM18} we note that for a random basis $(b_1,\ldots,b_\ell)$, the state would be rejected with high probability.  Hence it must also be the case if the basis is pseudorandom, as otherwise one can distinguish a pseudorandom string from a truly random one.   

\section{Preliminaries}\label{sec:prelim}

\paragraph{Notations.} For any random variables $A$ and $B$ (classical variables or quantum states), we use the notation $A\equiv B$ to denote that $A$ and $B$ are identically distributed, and use $A\stackrel{\epsilon}\equiv B$ to denote that $A$ and $B$ are $\epsilon$-close, where closeness is measured with respect to total variation distance for classical variables,  trace distance for mixed quantum states, and $\|\cdot\|_2$ distance for pure quantum states.  For every two ensemble of distributions $A=\{A_\lambda\}_{\lambda\in\mathbb{N}}$ and $B=\{B_{\lambda}\}_{\lambda\in\mathbb{N}}$ we use the notation $A\approx B$ to denote that $A$ and $B$ are computationally indistinguishable, i.e., for every polynomial size distinguisher $\D$ there exists an negligible function $\mu=\mu(\secp)$ such that for every $\secp\in\mathbb{N}$,
\[
|\Pr[\D(a)=1]-\Pr[\D(b)=1]|\leq \mu(\secp)
\]
where the probabilities are over $a\gets A_\secp$ and $b\gets B_\secp$.  For every $\epsilon=\epsilon(\secp)\in[0,1)$, we use the notation $A\stackrel{\epsilon}\approx B$ to denote that for every polynomial size distinguisher $\D$ and for every $\secp\in\mathbb{N}$,
\[
|\Pr[\D(a)=1]-\Pr[\D(b)=1]|\leq \epsilon(\secp)
\]
where the probabilities are over $a\gets A_\secp$ and $b\gets B_\secp$.

For any random variable $\A$, we denote by $\Supp(\A)$ the support of $\A$; i.e., \[\Supp(\A)=\{a: \Pr[\A=a]>0\}.
\]
We denote strings in $\{0,1\}^*$ by bold lower case letters, such as $\vecx$.
We let $\PPT$ denote probabilistic polynomial time, $\BQP$ denote probabilistic quantum polynomial time, and $\QPT$ denote quantum polynomial time. 

Let $\cH$ be a complex Hilbert space of finite dimension~$2^n$.  Thus, $\cH \simeq  \C^{2^n}$ where $\C$ denotes the complex numbers.  
A pure $n$-qubit quantum state is a unit vector $\ket{\Psi}\in \cH$.
Namely, it can be written as 
$$\ket{\Psi}=\sum _{b_1,\ldots,b_n\in\{0,1\}} \alpha_{b_1,\ldots,b_n}\ket{b_1,\ldots,b_n}
$$
where $\{\ket{b_1,\ldots,b_n}\}_{b_1,\ldots,b_n\in\{0,1\}}$ forms an orthonormal basis of $\cH$, and where $\alpha_{b_1,\ldots,b_n}\in \C$ satisfy
$$
\sum _{b_1,\ldots,b_n\in\{0,1\}}|\alpha_{b_1,\ldots,b_n}|^2=1.
$$
We refer to $n$ as the number of qubits in $\ket{\Psi}$.  
We sometimes divide the registers of $\ket{\Psi}$ into named registers.  We often denote these registers by calligraphic upper-case letters, such as $\cA$ and $\cB$, in which case we also divide the Hilbert space into $\cH = \cH_{\cA} \otimes \cH_\cB$, so that each quantum state $\ket{\Psi}$ is a linear combination of quantum states $\ket{\Psi_\cA}\otimes\ket{\Psi_\cB} \in \cH_\cA\otimes\cH_\cB$.\footnote{We sometimes give registers names that correspond to their purpose, such as a $\mathsf{coin}$ register or an $\mathsf{open}$ register.}  We denote by $$\ket{\Psi}_\cA=\Tr_\cB(\ket{\Psi}\bra{\Psi})\in\cH_\cA,$$ 
where $\Tr_\cB$ is the linear operator defined by $$\Tr_\cB(\ket{\Psi_\cA}\bra{\Psi_\cA}\otimes \ket{\Psi_\cB}\bra{\Psi_\cB})=\ket{\Psi_\cA}\bra{\Psi_\cA}\cdot \Tr(\ket{\Psi_\cB}\bra{\Psi_\cB}),$$
where $\Tr$ is the trace operator.

Let $\mathrm{D}(\RegH)$ denote the set of all positive semidefinite operators on $\RegH$ with trace~$1$. A mixed state is an operator $\bm{\rho} \in \mathrm{D}(\RegH)$, and is often called a \emph{density matrix}.
We denote by
$$U[\bsigma]=U\bsigma U^\dagger.
$$
For any binary observable $O$ and bit $b\in\{0,1\}$ we let $\Pi_{O,b}[\bsigma]$ denote the \emph{unnormalized} projection of $\bsigma$ to the state that has value~$b$ when measured in the $O$-basis.
Namely, 
$$\Pi_{O,b}[\bsigma]=\Pi_{O,b}\bsigma \Pi_{O,b}^\dagger.
$$

We let $X$ and $Z$ denote the Pauli matrices:
\[ X = \begin{pmatrix} 0 & 1 \\ 1 & 0 \end{pmatrix}, \quad Z = \begin{pmatrix} 1 & 0 \\ 0 & -1\end{pmatrix}. \]
For any single qubit register $\cA$, we denote by $X_\cA$ (respectively, $Z_\cA$) the unitary that applies the Pauli $X$ (respectively, $Z$) unitary to the $\cA$ register of a given quantum state and applies the identity unitary to all other registers.

\subsection{Quantum information facts}



    


We use the following infant version of the gentle measurement lemma.
\begin{lemma}\label{lem:gentle-infant}
Let $\ket{\psi}$ be a pure state and $\Pi$ be a projector such that $\bra{\psi} \Pi \ket{\psi} = 1- \varepsilon$. Then $\| \Pi \ket{\psi} - \ket{\psi} \|_2 = \sqrt{\varepsilon}$.
\end{lemma}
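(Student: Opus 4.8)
The statement is the ``infant'' gentle measurement lemma: for a pure state $\ket{\psi}$ and a projector $\Pi$ with $\bra{\psi}\Pi\ket{\psi}=1-\varepsilon$, we have $\|\Pi\ket{\psi}-\ket{\psi}\|_2=\sqrt{\varepsilon}$. The plan is a direct computation. First I would write $\|\Pi\ket{\psi}-\ket{\psi}\|_2^2 = (\bra{\psi}\Pi-\bra{\psi})(\Pi\ket{\psi}-\ket{\psi})$ and expand the four terms. Using that $\Pi$ is a projector, so $\Pi^\dagger = \Pi$ and $\Pi^2 = \Pi$, the cross terms $\bra{\psi}\Pi\Pi\ket{\psi}$, $\bra{\psi}\Pi\ket{\psi}$, and $\bra{\psi}\Pi^\dagger\ket{\psi}$ all collapse to $\bra{\psi}\Pi\ket{\psi} = 1-\varepsilon$, while $\bra{\psi}\psi\rangle = 1$ since $\ket{\psi}$ is a unit vector.

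Carrying out the expansion: $\|\Pi\ket{\psi}-\ket{\psi}\|_2^2 = \bra{\psi}\Pi\ket{\psi} - \bra{\psi}\Pi\ket{\psi} - \bra{\psi}\Pi\ket{\psi} + \bra{\psi}\psi\rangle = (1-\varepsilon) - (1-\varepsilon) - (1-\varepsilon) + 1 = \varepsilon$. Taking square roots gives $\|\Pi\ket{\psi}-\ket{\psi}\|_2 = \sqrt{\varepsilon}$, as claimed. That is the entire argument.

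There is essentially no obstacle here; the only things to be careful about are the Hermitian-idempotent properties of $\Pi$ (so that $\Pi^\dagger\Pi = \Pi$ and the adjoint of $\Pi\ket{\psi}-\ket{\psi}$ is $\bra{\psi}\Pi - \bra{\psi}$) and the normalization of $\ket{\psi}$. One could equivalently phrase it via the Pythagorean decomposition $\ket{\psi} = \Pi\ket{\psi} + (\mathbb{I}-\Pi)\ket{\psi}$ with the two pieces orthogonal, so $\|(\mathbb{I}-\Pi)\ket{\psi}\|_2^2 = \|\ket{\psi}\|_2^2 - \|\Pi\ket{\psi}\|_2^2 = 1 - (1-\varepsilon) = \varepsilon$, and note $\ket{\psi} - \Pi\ket{\psi} = (\mathbb{I}-\Pi)\ket{\psi}$; this is the same computation organized slightly differently, and either presentation is a couple of lines.
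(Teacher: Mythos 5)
Your proof is correct and follows the same approach as the paper: expand $\| \Pi \ket{\psi} - \ket{\psi} \|_2^2$ using $\Pi^\dagger = \Pi$, $\Pi^2 = \Pi$, and $\braket{\psi}{\psi}=1$ to get $1 - \bra{\psi}\Pi\ket{\psi} = \varepsilon$. The paper's proof is a one-line version of exactly this computation, and your Pythagorean rephrasing is the same calculation organized differently.
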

\begin{proof}
    Calculate:
\[ \| \Pi \ket{\psi} - \ket{\psi} \|_2^2 = 1 - \bra{\psi} \Pi \ket{\psi} = \varepsilon.\]
\end{proof}
We also use the following version for mixed states.
\begin{lemma}
\label{lem:gentle-mixed}
    Let $\rho$ be a mixed state and $\Pi$ be a projector such that $\Tr[\Pi \rho] = 1-\varepsilon$. Then $\frac{1}{2} \|\rho - \Pi[\rho]\|_1 \leq \sqrt{\varepsilon}$.
\end{lemma}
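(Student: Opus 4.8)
The plan is to reduce the mixed-state statement to the pure-state case already established in \Cref{lem:gentle-infant}, via a spectral (or any) decomposition of $\rho$ together with convexity of the trace norm and concavity of the square root. First I would write $\rho = \sum_i p_i \ketbra{\psi_i}{\psi_i}$ as a convex combination of pure states (e.g.\ its eigendecomposition), so that $\Pi[\rho] = \sum_i p_i\, \Pi[\ketbra{\psi_i}{\psi_i}]$. By the triangle inequality for $\|\cdot\|_1$,
\[
\tfrac12 \|\rho - \Pi[\rho]\|_1 \;\le\; \sum_i p_i \cdot \tfrac12\big\| \ketbra{\psi_i}{\psi_i} - \Pi[\ketbra{\psi_i}{\psi_i}] \big\|_1 .
\]

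Next I would handle a single pure-state term. Set $\varepsilon_i := 1 - \bra{\psi_i}\Pi\ket{\psi_i}$, and let $\ket{a} = \ket{\psi_i}$, $\ket{b} = \Pi\ket{\psi_i}$ (so $\|\ket{b}\|_2 = \sqrt{1-\varepsilon_i} \le 1$). \Cref{lem:gentle-infant} gives $\|\ket{a} - \ket{b}\|_2 = \sqrt{\varepsilon_i}$. Writing $\ketbra{a}{a} - \ketbra{b}{b} = \ket{a}(\bra{a}-\bra{b}) + (\ket{a}-\ket{b})\bra{b}$ and applying the triangle inequality together with the bound $\| \ket{u}\bra{v} \|_1 = \|\ket{u}\|_2\|\ket{v}\|_2$ for rank-one operators, I get
\[
\big\| \ketbra{\psi_i}{\psi_i} - \Pi[\ketbra{\psi_i}{\psi_i}] \big\|_1 \;\le\; \|\ket{a}\|_2\,\|\ket{a}-\ket{b}\|_2 + \|\ket{a}-\ket{b}\|_2\,\|\ket{b}\|_2 \;\le\; 2\sqrt{\varepsilon_i},
\]
hence $\tfrac12\| \ketbra{\psi_i}{\psi_i} - \Pi[\ketbra{\psi_i}{\psi_i}] \|_1 \le \sqrt{\varepsilon_i}$.

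Finally I would combine the two estimates and use concavity of $t \mapsto \sqrt t$ (Jensen's inequality): $\sum_i p_i \sqrt{\varepsilon_i} \le \sqrt{\sum_i p_i \varepsilon_i}$. Since $\sum_i p_i \varepsilon_i = 1 - \sum_i p_i \bra{\psi_i}\Pi\ket{\psi_i} = 1 - \Tr[\Pi\rho] = \varepsilon$, this yields $\tfrac12\|\rho - \Pi[\rho]\|_1 \le \sqrt\varepsilon$, as desired. The only mildly delicate points are the rank-one trace-norm identity (routine) and ensuring the bound is independent of the choice of decomposition of $\rho$ — which it is, since the final quantity $\sum_i p_i \varepsilon_i$ collapses to $\Tr[\Pi\rho]$ regardless of the decomposition. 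I do not anticipate a real obstacle here; the proof is a short assembly of standard facts.
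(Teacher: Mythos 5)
Your proof is correct. For the record, the paper does not actually give a proof of this lemma: it simply cites it as Lemma~9.4.2 of Wilde's textbook~\cite{Wilde11}. So there is no ``paper's own argument'' to match against; the natural comparison is with Wilde's textbook proof, which proceeds via a purification of $\rho$, Uhlmann's theorem, and the Fuchs--van~de~Graaf inequality to convert a fidelity bound into a trace-distance bound. Your route is more elementary and self-contained: you eigendecompose $\rho$, reduce to the pure-state statement \Cref{lem:gentle-infant} via convexity of $\|\cdot\|_1$, bound each rank-one term by splitting $\ketbra{a}{a}-\ketbra{b}{b} = \ket{a}(\bra{a}-\bra{b}) + (\ket{a}-\ket{b})\bra{b}$ and using $\|\ket{u}\bra{v}\|_1 = \|u\|_2\|v\|_2$, and then recombine with Jensen's inequality applied to $\sqrt{\cdot}$, exploiting the decomposition-independence of $\sum_i p_i\varepsilon_i = 1-\Tr[\Pi\rho]$. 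Every step is correct; what it buys you is a derivation relying only on the pure-state lemma already established in the paper, avoiding fidelity machinery entirely, at the mild cost of the ad hoc rank-one splitting in the middle.
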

\begin{proof}
    This is Lemma 9.4.2 of~\cite{Wilde11}. 
\end{proof}
\begin{lemma}\label{lem:control-distance-z}
    Suppose $\ket{\psi_1}_{ABC}$ and $\ket{\psi_2}_{ABC}$ are pure states with $A$ being a single-qubit register, and $U$ is a unitary acting on register $B$. Let $CU_{AB}$ denote the controlled version of $U$, with $A$ being the control system and $B$ the target. Then
    \[ \| CU_{AB} \otimes I_C (\ket{\psi_1} - \ket{\psi_2})\|_2 = \| Z_A \otimes I_{BC} \ (\ket{\psi_1} - \ket{\psi_2}) \|_2 ,\]
    where $Z$ is the Pauli $Z$ operator.
\end{lemma}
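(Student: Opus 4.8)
The plan is to reduce everything to a simple orthogonal decomposition of the Hilbert space according to the value of the control qubit $A$. Write $\ket{\psi_1} - \ket{\psi_2} = \ket{0}_A \ket{\phi_0}_{BC} + \ket{1}_A \ket{\phi_1}_{BC}$ for (unnormalized) vectors $\ket{\phi_0}, \ket{\phi_1}$ on $BC$; this is just expanding in the standard basis of the single-qubit register $A$. Note that $\ket{\phi_0}$ and $\ket{\phi_1}$ are themselves differences of the corresponding components of $\ket{\psi_1}$ and $\ket{\psi_2}$, but that structure is irrelevant — all we use is that the two summands are orthogonal (they live in $\ket{0}_A \cH_{BC}$ and $\ket{1}_A \cH_{BC}$ respectively).

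The key computation is then on each side of the claimed identity. Applying $CU_{AB} \otimes I_C$ to the decomposition gives $\ket{0}_A \ket{\phi_0}_{BC} + \ket{1}_A (U_B \otimes I_C)\ket{\phi_1}_{BC}$, and by orthogonality of the two branches together with the fact that $U_B \otimes I_C$ is an isometry (so $\|(U_B\otimes I_C)\ket{\phi_1}\|_2 = \|\ket{\phi_1}\|_2$), we get
\[
\| CU_{AB} \otimes I_C (\ket{\psi_1} - \ket{\psi_2})\|_2^2 = \|\ket{\phi_0}\|_2^2 + \|\ket{\phi_1}\|_2^2.
\]
Similarly, applying $Z_A \otimes I_{BC}$ gives $\ket{0}_A\ket{\phi_0}_{BC} - \ket{1}_A\ket{\phi_1}_{BC}$, whose squared norm is again $\|\ket{\phi_0}\|_2^2 + \|\ket{\phi_1}\|_2^2$ by the same orthogonality. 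Hence the two norms are equal, and taking square roots finishes the proof.

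I do not expect any real obstacle here — the lemma is essentially a bookkeeping identity, and the only thing to be slightly careful about is to state explicitly that $\ket{\phi_0},\ket{\phi_1}$ need not be normalized and that the cross terms vanish because $\bra{0}0\rangle$-type inner products on $A$ force orthogonality of the $\ket{i}_A \ket{\phi_i}$ summands; then both $CU_{AB}$ and $Z_A$ act as phases/isometries within each branch and preserve the branchwise norms. I would write it as three displayed lines: the decomposition, the computation of the left-hand norm squared, and the computation of the right-hand norm squared, then conclude. One alternative phrasing, which avoids even introducing $\ket{\phi_i}$, is to observe that $(Z_A \otimes I_{BC}) (CU_{AB}\otimes I_C)^\dagger = (CU_{AB}^\dagger \otimes I_C)(Z_A \otimes I_{BC})$ up to a global phase on each branch — but the direct branch-decomposition argument is cleaner, so that is the route I would take.
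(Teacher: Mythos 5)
Your proof is correct and uses essentially the same argument as the paper: decompose along the control qubit $A$ in the standard basis, note the two branches are orthogonal, and use unitarity of $U$ to drop it within the $\ket{1}_A$ branch, yielding $\|\ket{\phi_0}\|_2^2 + \|\ket{\phi_1}\|_2^2$ on both sides. The only cosmetic difference is that you name the branch vectors $\ket{\phi_0},\ket{\phi_1}$ explicitly, whereas the paper works directly with the projectors $\ket{0}\bra{0}_A$, $\ket{1}\bra{1}_A$ and recombines at the end using $Z_A = \ket{0}\bra{0}_A - \ket{1}\bra{1}_A$.
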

\begin{proof}
In the following calculation we omit factors of identity that are clear from context.
    \begin{align*}
        \|CU_{AB} (\ket{\psi_1} - \ket{\psi_2}) \|_2^2 &= \| \ket{0}\bra{0}_A (\ket{\psi_1} - \ket{\psi_2}) + \ket{1}\bra{1}_A \otimes U_B (\ket{\psi_1} - \ket{\psi_2}) \|_2^2 \\
        &= \| \ket{0}\bra{0}_A (\ket{\psi_1} - \ket{\psi_2}) \|_2^2 + \| \ket{1}\bra{1}_A \otimes U_B (\ket{\psi_1} - \ket{\psi_2}) \|_2^2 \\
        &= \| \ket{0}\bra{0}_A (\ket{\psi_1} - \ket{\psi_2}) \|_2^2 + \| \ket{1}\bra{1}_A \otimes U_B (\ket{\psi_1} - \ket{\psi_2}) \|_2^2 \\
        &= \| \ket{0}\bra{0}_A (\ket{\psi_1} - \ket{\psi_2}) \|_2^2 + \| \ket{1}\bra{1}_A  (\ket{\psi_1} - \ket{\psi_2}) \|_2^2 \\
        &= \| \ket{0}\bra{0}_A (\ket{\psi_1} - \ket{\psi_2}) - \ket{1}\bra{1}_A  (\ket{\psi_1} - \ket{\psi_2}) \|_2^2 \\
        &= \| Z_A (\ket{\psi_1} - \ket{\psi_2}) \|_2^2.
    \end{align*}
\end{proof}

\subsection{Hash Family with Local Opening}\label{sec:HT}

A hash family with local opening
consists of the following algorithms:
\begin{description}
\item[$\Gen(1^\secp)\to \hk$.]  This $\PPT$  algorithm takes as input a security parameter $\secp$ (in unary) and outputs a hash key $\hk$. 
\item[$\Eval(\hk,\vecx)\to \rt$.] This deterministic poly-time algorithm takes as input a hash key $\hk$ and a string $\vecx\in\{0,1\}^N$ and outputs a hash value (often referred to as hash root) $\rt \in \{0,1\}^{\poly(\secp, \log N)}$.
  \item[$\Open(\hk,\vecx,i)\to(b,\veco)$.]  This deterministic poly-time algorithm takes as input a hash key $\hk$, a string $\vecx\in\{0,1\}^N$ and an index $i\in[N]$. It outputs a bit $b\in\{0,1\}$ and an opening $\veco\in\{0,1\}^{\poly(\secp, \log N)}$.
  \item[$\Ver(\hk,\rt,i,b,\veco)\rightarrow 0/1$.]   This deterministic poly-time algorithm takes as input a hash key $\hk$, a hash root $\rt$, an index $i\in[N]$, a bit $b\in\{0,1\}$ and an opening $\veco\in\{0,1\}^{\poly(\secp, \log N)}$. It outputs a bit indicating whether or not the opening is valid.
\end{description}

\begin{definition}\label{def:HT} A hash family with local opening $(\Gen,\Eval,\Open,\Ver)$ is required to satisfy the following properties.
 \begin{description}
   \item[~~Opening completeness.] For any $\secp\in \mathbb{N}$, any $N = N(\secp)  \leq 2^\secp$, any $\vecx=(x_1,\ldots,x_n)\in\{0,1\}^N$, and any index $i \in [N]$, 
    \[ \probcond{b=x_i \\
    \wedge~\Ver(\hk, \rt, i, b, \veco) = 1}
    {
    \hk \gets \Gen(1^\secp), \\
    \rt = \Eval(\hk, \vecx), \\
    (b,\veco) = \Open(\hk, \vecx, i)} 
    =1 -\negl(\secp). 
    \]
      
    \item[~~Computational binding w.r.t.\ opening.]  For any poly-size adversary $\A$, there exists a negligible function $\negl(\cdot)$ such that for every $\secp \in \mathbb{N}$, 
    \[ \probcond{\Ver(\hk,\rt,i,0,\veco_0) =1\\
    \wedge~ \Ver(\hk,\rt,i,1,\veco_1) =1}{
        \hk \gets \Gen(\secp), \\
        (1^N,\rt,i,\veco_0,\veco_1)\gets \A(\hk)} = \negl(\secp).\]   
  \end{description}
\end{definition}

\begin{theorem}[\cite{C:Merkle87}]
Assuming the existence of a collision resistant hash family there exists a hash family with local opening (according to \Cref{def:HT}).
\end{theorem}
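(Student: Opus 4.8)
The plan is to instantiate the hash family with local opening by the classical Merkle-tree construction. Fix a collision-resistant hash family $\{h_{\hk}\}$ with $h_{\hk}\colon\{0,1\}^{2\secp}\to\{0,1\}^{\secp}$, and let $\Gen(1^\secp)$ simply sample such a key $\hk$. Given an input $\vecx\in\{0,1\}^N$, I would first pad $N$ canonically up to the next power of two (padding with zeros) and group the bits into $\secp$-bit leaf blocks, then build the complete binary tree of depth $O(\log N)$ whose leaves are these blocks and each internal node stores $h_{\hk}$ applied to the concatenation of its two children. The algorithm $\Eval(\hk,\vecx)$ outputs the value at the root; since the tree has depth $O(\log N)$ and each node holds $\secp$ bits, $\rt$ has length $\poly(\secp,\log N)$ as required.

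For $\Open(\hk,\vecx,i)$, I would output the bit $x_i$ together with the authentication path for the leaf block containing position $i$: the contents of that leaf block, the position of $i$ within it, and the sequence of sibling-node values along the root-to-leaf path. This is $O(\log N)$ node values, so $\veco$ has length $\poly(\secp,\log N)$. The verifier $\Ver(\hk,\rt,i,b,\veco)$ first checks that $\veco$ is a syntactically well-formed path of the correct length (rejecting otherwise), checks that the $i$-th coordinate of the claimed leaf block equals $b$, then recomputes the hash values bottom-up—starting from the claimed leaf block and repeatedly hashing with the provided sibling values—and accepts iff the recomputed root equals $\rt$. Opening completeness is immediate by construction: it holds with probability exactly $1$, since the honest $\Open$ outputs precisely the intermediate values that $\Ver$ recomputes.

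The substance is computational binding. Suppose an adversary $\A(\hk)$ outputs $(1^N,\rt,i,\veco_0,\veco_1)$ with $\Ver(\hk,\rt,i,0,\veco_0)=\Ver(\hk,\rt,i,1,\veco_1)=1$. Because $\Ver$ rejects malformed inputs, both $\veco_0$ and $\veco_1$ parse as valid authentication paths for the leaf block of index $i$, and the $i$-th bit of the claimed leaf block is $0$ in $\veco_0$ and $1$ in $\veco_1$, so the two executions disagree at the leaf level; yet both recompute the same value $\rt$ at the root. Walking from the root down toward the leaf of index $i$, let $v$ be the deepest node along this path at which the two recomputed values still agree while the pair of child values fed into the hash at $v$ (determined by the leaf data and sibling values in $\veco_0$ versus $\veco_1$) disagree; such a $v$ exists since the values agree at the root and disagree at the leaf. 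At $v$ the two executions feed two distinct $2\secp$-bit strings into $h_{\hk}$ and obtain the same $\secp$-bit output, i.e.\ a collision. A reduction $\B$ then runs $\A$, parses both paths, locates $v$, and outputs the two colliding preimages; collision-resistance of $\{h_{\hk}\}$ forces $\A$'s success probability to be $\negl(\secp)$.

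The main obstacle here is bookkeeping rather than mathematics: one must pad $N$ to a power of two in a fixed canonical way so that honest and adversarial parties agree on the tree shape given only $1^N$, and one must set up $\Ver$ to reject malformed $\veco$ so that in the binding experiment both openings are guaranteed to be genuine paths of the correct length—this is what lets one cleanly speak of "the deepest node where the recomputed value agrees but the children disagree." With these conventions in place, the remainder is the textbook Merkle argument and carries through with no further subtlety.
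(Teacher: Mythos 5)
Your proposal is the standard Merkle-tree construction and binding argument, which is exactly what the cited reference~\cite{C:Merkle87} provides; the paper itself gives no proof but simply cites Merkle, so there is nothing to compare beyond noting that your write-up is the textbook argument the citation stands for. The proof is correct.
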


\begin{definition}\label{def:collapsing:local} A hash family with local opening  $(\Gen,\Eval,\Open,\Ver)$ is said to be collapsing if any $\BQP$ adversary $\A$ wins in the following game with probability $\frac{1}{2} + \negl(\lambda)$:
  \begin{enumerate}
        \item The challenger generates $\hk\gets \Gen(1^\secp)$ and sends $\pk$ to $\A$.
            \item $\A(\hk)$  generates a classical value $(\rt,j)$ and a quantum state $\bsigma$.
            
            $\A$ sends $(\rt,j,\bsigma)$ to the challenger.

            \item The challenger does the following:
            \begin{enumerate}
                \item Apply in superposition the algorithm $\Ver(\hk,
                \rt,j,\cdot,\cdot)$ to $\bsigma$, and measure the output.  If the output is~$0$ then send $\bot$ to $\A$. Otherwise, denote the resulting state by $\bsigma'$
                \item   Choose a random bit $b\leftarrow\{0,1\}$.
                \item If $b=0$ then send $\bsigma'$ to $\A$.
                \item If $b=1$ then measure $\bsigma'$ in the standard basis and send the resulting state to $\A$.

            \end{enumerate}

            \item Upon receiving the quantum state (or the symbol $\bot$), $\A$ outputs a bit $b'$.
            \item $\A$ wins if $b'=b$
        \end{enumerate}

  \end{definition}

\begin{theorem}[\cite{AC:Unruh16,CMSZ}]
    There exists a hash family with local opening that is collapsing assuming the post-quantum hardness of $\LWE$.
\end{theorem}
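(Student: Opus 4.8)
The plan is to instantiate the hash family with local opening as a \emph{Merkle tree} built on top of a collapsing compressing hash function, and to derive the collapsing property of \Cref{def:collapsing:local} from the collapsing property of the compressing hash by a hybrid argument that peels the tree one level at a time; here~\cite{AC:Unruh16} supplies the base compressing hash from $\LWE$ while~\cite{CMSZ} supplies the template for the tree-lifting argument. Concretely, I would first recall that post-quantum $\LWE$ yields a collapsing length-halving hash family $h_\hk\colon\{0,1\}^{2\secp}\to\{0,1\}^{\secp}$ (by~\cite{AC:Unruh16}, lossy function families are collapsing, and $\LWE$ gives such a family; equivalently one invokes that the $\LWE$-based compressing hash used in~\cite{CMSZ} is collapsing). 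Then I would set $\Gen(1^\secp)$ to output a key $\hk$ for $h$; define $\Eval(\hk,\vecx)$, on input $\vecx\in\{0,1\}^N$ with $N\le 2^\secp$, to pad $\vecx$ to a power-of-two length, build the binary hash tree whose internal nodes store $h_\hk$ of the concatenation of their two children (leaves storing input bits), and output the $\secp$-bit root $\rt$; define $\Open(\hk,\vecx,i)$ to output $b=x_i$ together with the authentication path $\veco=(s_1,\dots,s_d)$, $d=\lceil\log N\rceil$, of sibling values along the root-to-leaf path of index $i$; and define $\Ver(\hk,\rt,i,b,\veco)$ to recompute the root from $(b,\veco)$ and accept iff it equals $\rt$. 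Opening completeness is immediate and the output lengths are $\poly(\secp,\log N)$, as required by \Cref{def:HT}. Computational binding w.r.t.\ opening reduces to collision resistance of $h_\hk$ (which collapsing implies): two accepting openings of one index to $b=0$ and to $b=1$ recompute to the same root but disagree at the leaf, so at the lowest node where the two recomputations still agree their inputs to $h_\hk$ disagree, yielding an explicit collision.

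\textbf{The collapsing proof (main step).} Fix a $\BQP$ adversary $\A$ for the game of \Cref{def:collapsing:local}; it outputs $(\rt,j)$ and a state $\bsigma$ on the register holding an opening $(b,\veco)=(b,s_1,\dots,s_d)$. After the challenger applies $\Ver(\hk,\rt,j,\cdot,\cdot)$ in superposition and conditions on acceptance, the post-measurement state $\bsigma'$ is supported on openings whose recomputed path values $v_0=\rt,v_1,\dots,v_d$ (with $v_d$ the leaf value determining $b$) satisfy $v_{k-1}=h_\hk(\mathsf{blk}_k)$ for every $k\in[d]$, where $\mathsf{blk}_k\in\{0,1\}^{2\secp}$ is the correctly-ordered concatenation of the path value $v_k$ and the sibling $s_k$; crucially $\mathsf{blk}_k$ is an efficiently computable function of $(b,\veco)$, and on the support of $\bsigma'$ the map $(b,\veco)\mapsto(\mathsf{blk}_1,\dots,\mathsf{blk}_d)$ is injective. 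I would then show that measuring $(b,\veco)$ in the standard basis changes $\A$'s view only negligibly, via hybrids $H_0,\dots,H_d$, where in $H_k$ the challenger additionally measures $\mathsf{blk}_1,\dots,\mathsf{blk}_k$ (implemented by computing each into an ancilla, measuring, and uncomputing). Then $H_0$ is the $b=0$ experiment, and $H_d$ --- having measured a quantity equivalent to all of $(b,\veco)$ --- is the $b=1$ experiment. For the step $H_{k-1}\to H_k$: after $H_{k-1}$ the value $v_{k-1}$ is classical (for $k=1$ it is $\rt$, classical from the outset), the state is supported on $h_\hk$-preimages of the fixed value $v_{k-1}$, and $\mathsf{blk}_k$ is precisely such a preimage; the collapsing property of $h_\hk$ (with earlier-measured values treated as classical side information) then gives $H_{k-1}\stackrel{\negl(\secp)}{\approx} H_k$. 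Chaining the $d\le\secp$ steps, $\A$ distinguishes $b=0$ from $b=1$ with advantage at most $\negl(\secp)$, so it wins with probability $\tfrac12+\negl(\secp)$.

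\textbf{Main obstacle.} The delicate part is the hybrid argument: one must check that the $\Ver$-projection is exactly the conjunction of the constraints $v_{k-1}=h_\hk(\mathsf{blk}_k)$; that ``measuring the derived block $\mathsf{blk}_k$'' can legitimately be analyzed through the collapsing property even though the remaining opening registers stay entangled with $\A$'s private workspace and the previously measured blocks act as auxiliary classical information (this requires the standard ``collapsing with auxiliary input / under efficient post-processing'' lemmas); and that the depth $d$ can be as large as $\secp$ (since $N\le 2^\secp$), so one needs the accumulated error $\secp\cdot\negl(\secp)=\negl(\secp)$ to remain negligible. The completeness and binding parts, by contrast, are routine Merkle-tree arguments.
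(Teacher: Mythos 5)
The paper does not prove this theorem itself; it is imported by citation from~\cite{AC:Unruh16,CMSZ} and used as a black box. Your blind reconstruction is a faithful outline of the argument behind that citation: build a Merkle tree over a length-compressing hash that is collapsing under $\LWE$ (the lossy-function route of Unruh), derive computational binding from collision resistance (which collapsing implies), and establish the local-opening collapsing property by a depth-indexed hybrid argument that measures one level of the authentication path per step, invoking the collapsing of the compression function with the already-measured path values treated as classical side information and the residual entanglement with the adversary's workspace absorbed into the quantum auxiliary input; the $d\le\secp$ hybrid steps accumulate only $\secp\cdot\negl(\secp)=\negl(\secp)$ error. The one place to be slightly more careful than your sketch is the bookkeeping around leaf granularity: $\Open$ as declared in \Cref{def:HT} returns a single bit $b=x_i$ plus an opening $\veco$, so with $\secp$-bit leaf blocks $\veco$ must also carry the leaf block containing $x_i$, and the injection you use from accepting openings to block sequences should be stated with that block, rather than only the siblings, included; but this is an implementation detail and does not affect the structure or validity of the argument.
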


\subsection{Noisy Trapdoor Claw-Free Functions}\label{sec:TCF}

In what follows we define the notion of a  {\em noisy trapdoor claw-free function family}. This notion is simpler than the notion of a {\em dual-mode noisy trapdoor claw-free function family} which was used for certifiable randomness generation in \cite{BCMVV18} and by Mahadev~\cite{Mah18a} in her classical verification protocol for $\QMA$. This simpler notion suffices for our work.\footnote{Our formulation is from~\cite{Bartusek22} (without the dual-mode requirement).}

\begin{definition}\label{def:TCF}
    A noisy trapdoor claw-free function ($\TCF$) family is described by $\PPT$ algorithms: $$(\Gen,\Eval,\Invert, \Check, \Good)$$ with the following syntax:
 \begin{description}
      \item [$\Gen(1^\lambda)\rightarrow (\pk,\sk)$.] This $\PPT$ key generation algorithm takes as input a security parameter $\secp$ (in unary) and outputs a public key $\pk$ and a secret key $\sk$. 
      
      We denote by $\mathsf{D}_{\pk}$ the domain of the (randomized) function defined by $\pk$, and assume for simplicity that  $\mathsf{D}_{\pk}$ is an efficiently verifiable and samplable subset of $\{0,1\}^{n(\secp)}$, where $n(\secp)=\poly(\secp)$. We denote by $\mathsf{R}_{\pk}$ the range of this (randomized) function.
     \item [$\Eval(\pk,b,\vecx)\rightarrow \vecy$.] This $\PPT$ algorithm takes as input a public key $\pk$, a bit $b\in\{0,1\}$ and an element $\vecx \in \mathsf{D}_{\pk}$, and outputs a string $\vecy$ distributed according to some distribution~$\chi=\chi_{\pk,b,\vecx}$.  
      \item [$\Invert(\sk,\vecy)\rightarrow ((0,\vecx_0),(1,\vecx_1))$.] This deterministic polynomial time algorithm takes as input a secret key $\sk$, and an element $\vecy$ in the range $\mathsf{R}_\pk$ and outputs two pairs $(0,\vecx_0)$ and $(1,\vecx_1)$ with $\vecx_0,\vecx_1\in \mathsf{D}_\pk$, or $\bot$. 
      \item [$\Check(\pk, b, \vecx, \vecy)\rightarrow 0/1$.] This deterministic poly-time algorithm takes as input a public key $\pk$, a bit $b\in \{0,1\}$, an element $\vecx \in \mathsf{D}_\pk$ and an element $\vecy \in \mathsf{R}_\pk$ and outputs a bit.  
      \item [$\Good(\vecx_0, \vecx_1, \vecd)\rightarrow 0/1$.] This deterministic poly-time algorithm takes as input two domain elements $\vecx_0, \vecx_1 \in \mathsf{D}_\pk$ and a string $\vecd \in \{0,1\}^{n + 1}$. It outputs a bit that characterizes membership in the set:
      \begin{equation}
        \Good_{\vecx_0, \vecx_1} := \{\vecd : \Good(\vecx_0, \vecx_1, \vecd) = 1\}
      \end{equation}
        We specify that $\Good(\vecx_0, \vecx_1, \vecd)$ ignores the first bit of $\vecd$.    
      
      For the purpose of this work, we allow $\Good$ to output a vector (as opposed to a single bit).\footnote{We use this extension to analyze our succinct commitment scheme.} Specifically, $\vecd'=\Good(\vecx_0,\vecx_1,\vecd)$ is a vector in $\{0,1\}^k$ (for some $k=\poly(\secp)$).  It is computed by partitioning the vector $\vecd$ into $k$ blocks, denoted by $\vecd[1],\ldots,\vecd[k]$  and outputting  $\vecd'=(\langle{\veca[1],\vecd[1]\rangle},\ldots,\langle{\veca[k],\vecd[k]\rangle}) $, where $\veca[1],\ldots,\veca[k]$ are non-zero vectors determined by $\vecx_0$ and $\vecx_1$. We define  \[\Good_{\vecx_0, \vecx_1} := \{\vecd : \Good(\vecx_0, \vecx_1, \vecd) \neq 0^k\}.\]

\end{description}

We require that the following properties are satisfied. 
  \begin{enumerate}
      \item \textbf{\em Completeness:} 
      \begin{enumerate}
            \item\label{def:TCF-a} For all $(\pk, \sk) \in \Supp(\Gen(1^\lambda))$, every $b\in\{0,1\}$,  every $\vecx\in\mathsf{D}_\pk$, and $\vecy \in \Supp(\Eval(\pk,b,\vecx))$, 
      $$ \Invert(\sk,\vecy) = ((0,\vecx_0),(1,\vecx_1))$$ such that $\vecx_b=\vecx$ and  $\vecy\in\Supp(\Eval(\pk,\beta,\vecx_\beta))$ for every $\beta\in\{0,1\}$. 
            \item\label{def:TCF-b} For all $(\pk, \sk) \in \Supp(\Gen(1^\lambda))$, there exists a perfect matching $\mathsf{M}_\pk \subseteq \mathsf{D}_\pk \times \mathsf{D}_\pk$ such that for all $(\vecx_0, \vecx_1)$, it holds that $\Eval(\pk, 0, \vecx_0)  \equiv \Eval(\pk, 1, \vecx_1)$ if and only if $(\vecx_0, \vecx_1) \in \mathsf{M}_\pk$.
            \item\label{def:TCF-c} For all $(\pk, \sk) \in \Supp(\Gen(1^\lambda))$, every $b \in \{0,1\}$ and every $\vecx \in \mathsf{D}_\pk$, 
                  \begin{equation}
                        \Pr[\Check(\pk, b, \vecx, \vecy) = 1] = 1
                  \end{equation}
            if and only if $\vecy \in \Supp(\Eval(\pk, b, \vecx))$.

            \item\label{def:TCF-d} For all $(\pk, \sk) \in \Supp(\Gen(1^\lambda))$ and every pair of distinct domain elements $\vecx_0, \vecx_1$, the density of $\Good_{\vecx_0, \vecx_1}$ is $1 - \negl(\lambda)$.

      \end{enumerate}
       \item \textbf{\em Efficient Range Superposition:} 
       For every $(\pk, \sk) \in \Supp(\Gen(1^\lambda))$ and every $b \in \{0,1\}$, there exists an efficient $\QPT$ algorithm to prepare a state $\ket{\varphi_b}$ such that: 
       \begin{equation}
           \ket{\varphi_b} \stackrel{\mu}\equiv \frac{1}{\sqrt{\mathsf{D}_\pk}} \sum_{\substack{\vecx \in \mathsf{D}_\pk \\ \vecy \in \mathsf{R}_\pk}} \sqrt{p_{\pk}(b, \vecx, \vecy)} \ket{\vecx} \ket{\vecy}
       \end{equation}

       for some negligible function $\mu(\cdot)$. Here, $p_{\pk}(b, \vecx, \vecy)$ denotes the probability density of $\vecy$ in the distribution $\Eval(\pk, b, \vecx)$.
        
    \item\label{item:HCB} \textbf{\em Adaptive Hardcore Bit:}  For every $\BQP$ adversary $\A$ there exists a negligible function~$\mu$ such that for every $\secp\in\mathbb{N}$, 
      \begin{align*}
      & \Pr[\A(\pk)=(\vecy,b,\vecx,\vecd,v): \Check({\pk, b, \vecx, \vecy}) = 1  ~\wedge~ \vecd \in \Good_{\vecx_0, \vecx_1}~\wedge~ \vecd \cdot (1, \vecx_0\oplus \vecx_1)=v] \leq \frac12+\mu(\secp),
      \end{align*}
      where the probability is over $(\pk,\sk)\leftarrow \Gen(1^\secp)$, and where $((0,\vecx_0),(1,\vecx_1))=\Invert(\sk,\vecy)$.
  \end{enumerate}
\end{definition}

\begin{proposition}~\cite{BCMVV18}
There exists a $\TCF$ family assuming the post-quantum hardness of $\LWE$. 
\end{proposition}

In this work we rely on the fact that every $\TCF$ family is {\em collapsing}, as defined below.\footnote{This definition is similar to \Cref{def:collapsing:local} adapted to a $\TCF$ family.}

\begin{definition}\label{def:collapsing} A $\TCF$ family  $(\Gen, \Eval,\Invert,\Check,\Good)$ is said to be collapsing if any $\BQP$ adversary $\A$ wins in the following game with probability $\frac{1}{2} + \negl(\lambda)$:
  \begin{enumerate}
        \item The challenger generates $(\pk,\sk)\gets \Gen(1^\secp)$ and sends $\pk$ to $\A$.
            \item $\A(\pk)$  generates a classical value $\vecy\in{\mathsf{R}_\pk}$ and an $(n(\secp)+1)$-qubit quantum state $\bsigma=\bsigma_{\cS,\cZ}$, where the $\cS$ register contains a single qubit and the $\cZ$ register contains $n(\secp)$ many qubits.  
            
            $\A$ sends $(\vecy,\bsigma)$ to the challenger.

            \item The challenger does the following:
            \begin{enumerate}
                \item Apply in superposition the algorithm $\Check$ to $\bsigma$, w.r.t.\ public key $\pk$ and the image string $\vecy$, and measure the bit indicating whether the output of $\Check$ is~$1$. If the output does not equal $1$, send $\bot$ to $\A$. Otherwise, denote the resulting state by $\bsigma'$
                \item   Choose a random bit $b\leftarrow\{0,1\}$.
                \item If $b=0$ then it send $\bsigma'$ to $\A$.
                \item\label{item:alt-def} If $b=1$ then measure the $\cS$ register of $\bsigma'$ in the standard basis and send the resulting state to $\A$.

            \end{enumerate}

            \item Upon receiving the quantum state (or the symbol $\bot$), $\A$ outputs a bit $b'$.
            \item $\A$ wins if $b'=b$
        \end{enumerate}

  \end{definition}

  \begin{remark}\label{remark:collapsing}
            An equivalent definition of collapsing is obtained by replacing \Cref{item:alt-def} with the following:
            If $b=1$ then send to $\A$ the state $Z_\cS[\bsigma']$.  
            In this work we use both of these formulations, since it is sometimes easier to work with one and other times with the other.
        \end{remark}

\begin{proposition}\label{claim:collaps-binding} \cite{zhandry2022new} Every $\TCF$ family is collapsing.  
\end{proposition}

In what follows we define an extension of the collapsing property and argue that any $\TCF$ family satisfies it.  This extension may appear to be unnatural, but we make use of it when proving the binding property of our commitment schemes.

\begin{proposition}\label{claim:ext-collapsing}
For every polynomial $\ell=\ell(\secp)$, every $\TCF$ family $(\Gen, \Eval,\Invert,\Check,\Good)$ is $\ell$-{extended collapsing}, where the $\ell$-extended collapsing definition asserts that every $\BQP$ adversary $\A$ wins in the following {\em extended collapsing} game with probability  $\frac{1}{2} + \negl(\lambda)$:
  \begin{enumerate}
        \item The challenger generates $\ell$ independent public keys $\pk_1,\ldots\pk_{\ell}\gets \Gen(1^\secp)$ and sends $(\pk_1,\ldots,\pk_{\ell})$ to $\A$.
            \item $\A(\pk_1,\ldots,\pk_{\ell})$  generates a subset $J\subseteq [\ell]$, classical values $\{\vecy_j\}_{j\in J}$ where each $\vecy_j\in\mathsf{R}_{\pk_j}$, and a $|J|\cdot (n(\secp)+1)$-qubit quantum state $\bsigma=\bsigma_{\{\cS_j,\cZ_j\}_{j\in J}}$, where each register $\cS_j$ consists of a single qubit and each register $\cZ_j$ consists of $n(\secp)$-qubits.

            $\A$ sends $(J, \{\vecy_j\}_{j\in J},\bsigma)$ to the challenger.
            \item The challenger does the following:
            \begin{enumerate}
                \item For every $j\in J$ apply in superposition the algorithm $\Check$ to the $(\cS_j,\cZ_j)$ registers of $\bsigma$ w.r.t.\ $\pk_j$ and $\vecy_j$, and check that the output is~$1$.  If this is not the case  send $\bot$ to~$\A$.
                \item Otherwise, choose a random bit $b\leftarrow\{0,1\}$ 
                and measure the registers $\{\cS_j\}_{j\in J}$ in the standard basis if and only if $b=1$.
                \item Send the resulting state to $\A$.
            \end{enumerate}

            \item Upon receiving a quantum state (or the symbol $\bot$), $\A$ outputs a bit $b'$.
            \item $\A$ wins if $b'=b$
        \end{enumerate}

\end{proposition}
 \Cref{claim:ext-collapsing}  follows from \Cref{claim:collaps-binding} together with a straightforward hybrid argument.

\section{The Distributional Strong Adaptive Hardcore Bit Property}

The binding property of our succinct commitment scheme relies on a variant of the adaptive hardcore bit property, which we define next. 

\begin{definition}\label{def:dist-stat-HCB}
 A $\TCF$ family $(\Gen,\Eval,\Invert, \Check, \Good)$ is said to have the  {\bf distributional strong adaptive hardcore bit} property if there exists $\QPT$ algorithms $\A$ and $C$ such that the following holds:  $\A$ takes as input~$\pk$ and a quantum state $\bsigma$, and outputs a tuple $(\vecy, b,\vecx, \brho)$ such that $\Check(\pk,b,\vecx,\vecy)=1$ and  $\brho$ is a state containing at least $n+1$ qubits.  Denote by $\cO_1$ the registers containing the first $n+1$ qubits, and denote by $\cO_2$ all other registers.  $C$ takes as input the state $\brho_{\cO_2}$ and outputs $\aux\gets C(\brho_{\cO_2})$.  Denote by $\brho_\aux$ the post measurement state, and assume that 
  $\brho$ satisfies that with overwhelming probability over $\aux$, 
 for every $\vecd'\in\{0,1\}^{k}$ the probability of measuring registers $\cO_1$ of $\brho_\aux$ in the standard basis and obtaining $\vecd\in\{0,1\}^{n+1}$ such that $\Good(\vecd,\vecx_{0},\vecx_{1})=\vecd'$ is negligible in $\secp$.
Then
\begin{equation}\label{eqn:dist-hcb}
(\pk,\vecy,\vecx_{0}\oplus\vecx_{1},\vecd\cdot(1,\vecx_{0}\oplus\vecx_{1}),\aux)\approx (\pk,\vecy,\vecx_{0}\oplus\vecx_{1}, U, \aux)
\end{equation}
where  $(\pk,\sk)\gets\Gen(1^\secp)$, $(\vecy, b,\vecx,\brho)\gets\A(\pk,\bsigma)$, $((0,\vecx_{0}),(1,\vecx_{1}))=\Invert(\sk,\vecy)$, $(\aux,\brho_\aux)\gets C(\brho_{\cO_2})$, $\vecd$ is obtained by measuring registers $\cO_1$ of $\brho_\aux$ in the standard basis, and $U$ is uniformly distributed in $\{0,1\}$.

Moreover, there exists $\sk_\mathsf{pre}$, which is efficiently computable from $\sk$, such that given $(\pk,\vecy,b,\vecx_b)$ and $\sk_\mathsf{pre}$ one can efficiently compute $\vecx_{1-b}$ such that if $\Check(\pk,b,\vecx_b,\vecy)=1$ then $\Check(\pk,1-b,\vecx_{1-b},\vecy)=1$, and \Cref{eqn:dist-hcb} holds even if $C$ takes as input $\sk_\mathsf{pre}$ in addition to the state $\brho_{\cO_2}$.
\end{definition}


In \Cref{app:zvik}, we show that the construction of an $\TCF$ from \cite{BCMVV18} satisfies the distributional strong adaptive hardcore bit property.

\section{Classical Commitments to Quantum States } \label{sec:XZ-commitments}

In this section we define the notion of a classical commitment to quantum states. 
Our definition is stronger than the notion of a {\em measurement protocol}, originally considered in \cite{Mah18a} and formally defined in \cite{Bartusek22},\footnote{We  refer to this weaker notion as a ``weak classical commitment,'' and recall its  definition in  \Cref{def:WCQ-section} (for completeness).} in several ways.  First, the opening basis is not determined during the key generation phase.  Namely, the key generation algorithm, $\Gen$, takes as input only the security parameter (in unary), as opposed to taking both the security parameter and the opening basis.  In particular, the opening basis can be determined after the commitment phase, and can  be chosen adaptively based on any information that the parties have access to. 
Importantly, our binding property is significantly stronger. It guarantees that for any $\BQP$ cheating committer~$\cC^*.\Commit$ that commits to an $\ell$-qubit quantum state, there is a {\em single} extracted quantum state $\btau$ such that for {\em any} $\BQP$ algorithm $\cC^*.\Open$ and {\em any} basis opening $(b_1,\ldots,b_\ell)$, where $b_i=0$ corresponds to measuring the $i$'th qubit in the standard basis and $b_i=1$ corresponds to measuring it in the Hadamard basis, the opening obtained by $\cC^*.\Open(b_1,\ldots,b_\ell)$ 
is computationally indistinguishable from measuring $\btau$ in basis $(b_1,\ldots,b_\ell)$, assuming the opening of $\cC^*.\Open$ is accepted.
In contrast, in the soundness guarantee of a weak commitment scheme, the extracted state $\btau$ may depend on $\cC^*.\Open$, which can be chosen adaptively after the commitment phase and hence is not truly binding.\footnote{Indeed, the weak classical commitment scheme from \cite{Mah18a} is only binding in the standard basis, and offers no binding guarantees when opening in the Hadamard basis.}

\subsection{Syntax}  \label{sec:syntax}
In what follows we define the syntax of a commitment scheme. We present two definitions.  The first is the syntax for a 
 {\em non-succinct} commitment scheme, where the length of the commitment string that commits to an $\ell$-qubit quantum state, grows polynomially with $\ell$. More specifically, in this definition the length $\ell$ is determined in the key generation algorithm, and the run-time of all the algorithms grow polynomially with $\ell$.  In \Cref{sec:syntax:succinct} we define the syntax for a {\em succinct} commitment scheme, where the verifier's run-time grows poly-logarithmically with $\ell$.
\begin{definition}\label{def:SCQ-syntax}
A (non-succinct) classical commitment scheme for quantum states is associated with algorithms $(\Gen,\Commit,\Open,\Ver,\Out)$ and has the following syntax: 
\begin{enumerate}
    \item $\Gen$ is a $\ppt$ algorithm that takes as input a security parameter $\secp$ and a length parameter $\ell$  (both in unary), and outputs a pair $(\pk,\sk) \gets  \Gen(1^\secp,1^\ell)$, where $\pk$ is referred to as the {\em public key} and $\sk$ is referred to as the {\em secret key}.
    \item $\Commit$ is a $\BQP$ algorithm that takes as input a public key $\pk$ and an $\ell$-qubit quantum state~$\bsigma$ and outputs a pair $(\vecy, \brho) \gets \Commit(\pk,\bsigma)$, where $\vecy$ is a classical string referred to as the {\em commitment string} and $\brho$ is a quantum state. 

    \item $\Open$ is a $\BQP$ algorithm that takes as input a quantum state $\brho$ and a basis $(b_1,\ldots,b_\ell)\in\{0,1\}^\ell$ (where $b_j=0$ corresponds to opening the $j$'th bit in the standard basis and $b_j=1$ corresponds to opening it in the Hadamard basis). It outputs a pair $(\vecz,\brho') \gets \Open(\brho,(b_1,\ldots,b_\ell))$, where $\vecz$ is a classical string, referred to as the  {\em opening string}, and $\brho'$ is the residual state (which is sometimes omitted). 
    \item $\Ver$ is a polynomial time algorithm that takes a tuple $(\sk, \vecy, (b_1,\ldots,b_\ell), \vecz)$, where $\sk$ is a secret key, $\vecy$ is a commitment string (to the quantum state), $(b_1,\ldots,b_\ell)\in\{0,1\}^\ell$ is a string specifying the opening basis, and $\vecz$ is an opening string. It outputs $0$ (if $\vecz$ is not a valid opening) and outputs~$1$ otherwise.
    \item $\Out$ is a polynomial time algorithm that takes a tuple $(\sk, \vecy, (b_1,\ldots,b_\ell), \vecz)$ (as above), and outputs an $\ell$-bit string $\vecm\leftarrow \Out(\sk, \vecy, (b_1,\ldots,b_\ell), \vecz)$.
    
\end{enumerate}

The protocol associated with the tuple $(\Gen,\Commit,\Open,\Ver,\Out)$ is a two party protocol between a $\BQP$ committer $\cC$ and a $\PPT$ verifier $\cV$ and consists of two phases, $\COMMIT$ and $\OPEN$. During the $\COMMIT$ phase, $\cV$ takes as input security parameter $\lambda$ and a length parameter $\ell$ and $\cC$ takes in an arbitrary quantum state $\bsigma$. During the $\OPEN$ phase, $\cV$ takes as input a basis bit $(b_1,\ldots,b_\ell)\in\{0,1\}^\ell$. The protocol proceeds as follows:
\begin{itemize}
    \item $\mathsf{COMMIT}$ phase:
\begin{enumerate}
    \item $[\cC \leftarrow \cV]$: $\cV$ samples $(\pk,\sk) \gets  \Gen(1^\secp, 1^\ell)$ and sends the public key $\pk$ to $\cC$.
    \item $[\cC \rightarrow \cV]$: $\cC$ computes $(\vecy,\brho) \gets \Commit(\pk,\bsigma)$ and sends the commitment string~$\vecy$ to $\cV$.
\end{enumerate}
    \item $\mathsf{OPEN}$ phase:
\begin{enumerate}  
    \item  $[\cC \leftarrow \cV]$: $\cV$ sends an opening basis $(b_1,\ldots,b_\ell)$ to $\cC$. 
    \item $[\cC \rightarrow \cV]$: $\cC$ computes $(\vecz,\brho') \gets \Open(\brho,(b_1,\ldots,b_\ell))$ and sends $\vecz$ to $\cV$.
    \item $[\cV]$: $\cV$  checks that $\Ver(\sk, \vecy, (b_1,\ldots,b_\ell), \vecz)=1$, and if so it outputs $\vecm \gets \Out(\sk, \vecy, (b_1,\ldots,b_\ell), \vecz)$ as the decommitment. Otherwise, it outputs $\bot$.
 \end{enumerate}
\end{itemize}
\end{definition}

\begin{remark}\label{remark:bit-by-bit}
One could define $\Open,\Ver,\Out$ to operate on one qubit at a time.  Namely, one could define $\Open$ to take as input a quantum state $\brho$ an index $j\in[\ell]$ and a basis $b\in\{0,1\}$, and output a pair $(\vecz,\brho')\gets \Open(\brho,(j,b))$, and define $\Ver$ and $\Out$ to take as input $(\sk,\vecy,(j,b),\vecz)$ and output a bit (indicating accept/reject for $\Ver$ and indicating an output bit for $\Out$).  Indeed, in our  definition of a {\em succinct} classical commitment to quantum state, stated in \Cref{sec:syntax:succinct} below, $\Open$, $\Ver$ and $\Out$ operate on one qubit at a time.  In addition,  our constructions in \Cref{sec:constructions} are defined where $\Open$, $\Ver$ and $\Out$ operate on one qubit at a time.  
\end{remark}

Note that in the syntax above the length of the public key~$\pk$ as well as the length of the commitment~$\vecy$ grows with the number of qubits in the committed state (denoted by $\ell$).  In this work we also construct {\em succinct} commitments where the length of $\pk$ and the commitment $\vecy$ grow only with the security parameter (and grow only poly-logarithmically with $\ell$). 
In what follows we define the syntax of a {\em succinct} classical commitment scheme for multi-qubit quantum states.  

\subsubsection{Syntax for Succinct Commitments}\label{sec:syntax:succinct}

The syntax of a succinct commitment is similar to that of a non-succinct commitment scheme (defined above), with the following main differences:
\begin{enumerate}
    \item The key generation algorithm ($\Gen$) takes as input only the security parameter $\secp$ and does not depend on the size $\ell$ of the committed quantum state.  
    
    This change ensures that the runtime of $\Gen$ does not grow with $\ell$.\footnote{One could give $\Gen$ the parameter $\ell$ in binary, but our scheme does not require it.}
    
    \item The opening algorithm ($\Open$) opens one qubit at a time. 
    Namely, it takes as input the post-commitment quantum state $\brho$, a single index $j\in[\ell]$ and a basis $b\in\{0,1\}$, and it outputs an opening to the $j$'th qubit. 
    
    The reason for this change is that in some of our applications we commit to a long quantum state but open only a small portion of it.  For example, this is the case in our compilation of a $X/Z$ quantum PCP into a succinct argument (see \Cref{sec:applications}).  

    \item The succinct commitment has two additional components. The first is an  interactive protocol that verifies that the prover ``knows'' a non-succinct commitment string corresponding to this succinct commitment. This protocol is referred to as $\Ver.\Commit$, and is a protocol between a poly-time (classical) prover $P$ and a $\PPT$ verifier $V$.\footnote{We note that since $P$ is a classical algorithm the quantum state remains unchanged. } The second is a $\Test$ protocol that tests that the committer can open all the qubits in a valid manner.  We note that $\Test$  is executed with probability $1/2$, and if it is executed then $\Open$ is not executed (since $\Test$  destroys the quantum state needed for the $\Open$ algorithm).\footnote{This is also the case in the  $\Test$ phase in Mahadev's measurement protocol~\cite{Mah18a}.}
\end{enumerate}

Formally a succinct commitment scheme for quantum states consists of  
\[(\Gen,\Commit,\Ver.\Commit, \Test, \Open,\Ver,\Out)\] such that 
\begin{enumerate}
    \item $\Gen$ is a $\ppt$ algorithm that takes as the security parameter $\secp$ (in unary)  and outputs a pair $(\pk,\sk) \gets  \Gen(1^\secp)$.
    \item $\Commit$ is a $\BQP$ algorithm that takes as input a public key $\pk$ and an $\ell$-qubit quantum state~$\bsigma$ and outputs a tuple $(\rt, \vecy,\brho) \gets \Commit(\pk,\bsigma)$, where $\rt$ is a succinct classical commitment to $\bsigma$ (of size $\poly(\secp,\log \ell)$), $\vecy$ is its non-succinct counterpart, and $\brho$ is the residual quantum state. 
\item $\Ver.\Commit$ is an interactive protocol between a poly-time prover $P$ with input $(\pk,\rt, \vecy)$ and a $\PPT$ verifier $V$ with input $(\sk,\rt)$. At the end of the protocol, $V$ outputs a verdict bit in $\{0,1\}$, corresponding to accept or reject ($1$ corresponding to accept and $0$ corresponding to reject). The communication complexity is $\poly(\secp,\log \ell)$.
\item $\Test$ is an interactive protocol between a  $\BQP$ prover $P_\Test$ with input $(\pk,\rt, \vecy,\brho)$ and a $\BPP$ verifier $V_\Test$ with input $(\sk,\rt)$.  At the end of the protocol, $V$ outputs a verdict bit in $\{0,1\}$, corresponding to accept or reject. 
    \item $\Open$ is a $\BQP$ algorithm that takes as input a quantum state $\brho$, an index $j\in[\ell]$, and a basis $b_j\in\{0,1\}$ (where $b_j=0$ corresponds to measuring the $j$'th qubit in the standard basis and $b_j=1$ corresponds to measuring it in the Hadamard basis). It outputs a pair  $(\vecz,\brho') \gets \Open(\brho,(j,b_j))$, where $\vecz$ is a classical string of length $\poly(\secp,\log \ell)$, referred to as the  {\em opening string}, and $\brho'$ is the residual state (which is sometimes omitted). 
    \item $\Ver$ is a polynomial time algorithm that takes a tuple $(\sk, \rt, (j,b_j), \vecz)$, where $\sk$ is a secret key, $\rt$ is a succinct classical commitment string to an $\ell$-qubit quantum state, $j\in[\ell]$, $b_j\in\{0,1\}$ is a bit specifying the opening basis, and $\vecz$ is an opening string. It outputs $0$ (if $\vecz$ is not a valid opening) and outputs~$1$ otherwise.
    \item $\Out$ is a polynomial time algorithm that takes a tuple $(\sk, \rt, (j,b_j), \vecz)$, and outputs a bit $m$.
    
\end{enumerate}
\begin{remark}
    We extend $\Ver$ and $\Out$ to take as input  $(\sk, \rt, (J,\vecb_J), \vecz)$ instead of $(\sk, \rt, (j,b_j), \vecz)$, where $J\subseteq[\ell]$ and $\vecb_J\in\{0,1\}^{|J|}$, in which case the algorithms run with input $(\sk, \rt, (j,b_j), \vecz)$ for every $j\in J$. We extend $\Open$ in a similar manner.
\end{remark}
The succinct commitment protocol associated with the tuple \[(\Gen,\Commit,\Ver.\Commit,\Test,\Open,\Ver,\Out)\] is a two party protocol that consists of three  phases, $\COMMIT$, $\mathsf{CHECK}$ and $\OPEN$, as follows:
\begin{itemize}
    \item $\mathsf{COMMIT}$ phase:
\begin{enumerate}
    \item $[\cC \leftarrow \cV]$: $\cV$ samples $(\pk,\sk) \gets  \Gen(1^\secp)$ and sends the public key $\pk$ to $\cC$.
    \item $[\cC \rightarrow \cV]$: $\cC$ computes $(\rt,\vecy,\brho) \gets \Commit(\pk,\bsigma)$ and sends the succinct commitment string~$\rt$ to~$\cV$.
\end{enumerate}
\item $\mathsf{CHECK}$ phase:
\begin{enumerate}
    \item Run $\Ver.\Commit$ protocol between the prover $P$ with input $(\pk,\rt,\vecy)$ and the verifier $V$ with input $(\sk,\rt)$. If $V$ rejects then the commitment $\rt$ is rejected and the protocol ends. 
    \item Otherwise, choose at random $c\gets\{0,1\}$.
    \item If $c=0$ then go to the $\mathsf{OPEN}$ phase.
    \item If $c=1$ then run the $\Test$ protocol, where the prover $P_\Test$ takes as input $(\pk,\rt,\vecy,\brho)$ and the verifier $V_\Test$ takes as input $(\sk,\rt)$. If  $V_\Test$ rejects then the commitment $\rt$ is rejected and otherwise it is accepted.  At the end of the $\Test$ protocol the commitment protocol ends (the $\Open$ phase is not executed).

\end{enumerate}
    \item $\mathsf{OPEN}$ phase:
\begin{enumerate}  
    \item  $[\cC \leftarrow \cV]$: $\cV$ sends a subset $J\subseteq [\ell]$ and an opening basis $\vecb_J\in \{0,1\}^{|J|}$ to $\cC$. 
    \item $[\cC \rightarrow \cV]$: $\cC$ computes $(\vecz,\brho') \gets \Open(\brho,(J,\vecb_J))$ and sends $\vecz$ to $\cV$.
    \item $[\cV]$: $\cV$  checks that $\Ver(\sk, \rt, (J,\vecb_J), \vecz)=1$, and if so outputs $\vecm \gets \Out(\sk, \rt,  (J,\vecb_J), \vecz)$ as the decommitment bit. Otherwise, it outputs $\bot$.
 \end{enumerate}
\end{itemize}

\subsection{Properties}
We require that a commitment scheme satisfies two properties, {\em correctness} and {\em binding}, defined below. 

\subsubsection{Correctness}
We define the correctness guarantee separately for the non-succinct and the succinct setting, starting with the former. 
\begin{definition}[Correctness]\label{def:SCQ-correctness}
A (non-succinct) classical commitment scheme is {\em correct} if for any $\ell$-qubit quantum state $\bsigma$, and any basis $\vecb=(b_1, \dots, b_\ell) \in\{0,1\}^\ell$,
\begin{equation}
    \Real(1^\lambda, \bsigma, \vecb) \equiv \bsigma(\vecb),
\end{equation}
where $\bsigma(\vecb)$ is the distribution obtained by measuring each qubit $j$ of $\bsigma$ in the basis specified by $b_j$ (standard if $b_j = 0$, Hadamard if $b_j = 1$), and $\Real(1^\lambda, \bsigma, \vecb)$ is the distribution resulting from the following experiment:  

\begin{enumerate}
        \item Generate $(\pk, \sk) \leftarrow \Gen(1^\lambda,1^\ell)$.
        \item Generate $(\vecy, \brho) \leftarrow \Commit(\pk, \bsigma)$.
        \item Compute $(\vecz,\brho')\leftarrow \Open(\brho,\vecb)$.  
        \item If $\Ver(\sk, \vecy,\vecb, \vecz)=0$ then output $\bot$.
        \item Otherwise, output $\Out(\sk, \vecy, \vecb, \vecz)$.
\end{enumerate}
\end{definition}

\begin{definition}[Succinct Correctness]\label{def:SCQ-succinct-correctness}
    A succinct classical commitment scheme is correct if for any $\ell$-qubit quantum state $\bsigma$, any basis $\vecb = (b_1, \dots, b_\ell) \in \{0,1\}^\ell,$ and any subset $J \subseteq [\ell]$, 
the following two conditions holds:
    \begin{equation}\label{succinct-correctness-equivalence}
    \Real_{c=0}(1^\lambda, \bsigma, J, \vecb_J) \equiv \bsigma(J, \vecb_J)~~\mbox{and}~~\Real_{c=1}(1^\lambda, \bsigma, J, \vecb_J)\equiv 1
    \end{equation}
    where:
    \begin{itemize}
        \item $\bsigma(J, \vecb_J)$ is the distribution obtained by measuring each qubit $j \in J$ of $\bsigma$ in the basis specified by $b_j$ (standard if $b_j = 0$ and Hadamard if $b_j = 1$). 
    \item     $\Real_{c=0}(1^\lambda, \bsigma, J, \vecb_J$) is the distribution resulting from the following experiment: 

    \begin{enumerate}
        \item Generate $(\pk, \sk) \leftarrow \Gen(1^\lambda)$.
        \item Generate $(\rt, \vecy, \brho) \leftarrow \Commit(\pk, \bsigma)$.
        \item Run the protocol $\Ver.\Commit$ between $P$ with input $(\pk, \rt, \vecy)$ and $V$ with input $(\sk,\rt)$.  If $V$ rejects then then output $\bot$.
        \item Otherwise, compute $(\vecz,\brho')\leftarrow \Open(\brho,(J, \vecb_J))$. 
        \item If $\Ver(\sk, \rt,(J, \vecb_J), \vecz)=0$ then output $\bot$. 
        \item Otherwise, output $\Out(\sk, \rt, (J, \vecb_J), \vecz)$.
        \end{enumerate}
   
        \item $\Real_{c=1}(1^\lambda, \bsigma, J, \vecb_J$) is the distribution resulting from the following experiment: 
     
        \begin{enumerate}
        \item Generate $(\pk, \sk) \leftarrow \Gen(1^\lambda)$.
        \item Generate $(\rt, \vecy, \brho) \leftarrow \Commit(\pk, \bsigma)$.
        \item Run the protocol $\Ver.\Commit$ between $P$ with input $(\pk, \rt, \vecy)$ and $V$ with input $(\sk,\rt)$.  If $V$ rejects then then output $\bot$.
        \item Execute $\Test$ where the prover $P_\Test$ takes as input $(\pk, \rt, \vecy,\brho)$ and the verifier $V_\Test$ takes as input $(\sk, \rt)$. If  $V_\Test$ rejects then output~$\bot$ and if it accepts then output~$1$.
        \end{enumerate}
\end{itemize}
\end{definition}

\subsubsection{Binding}
In what follows we define the binding condition.  Intuitively, our binding guarantee is that a cheating committer cannot change the way they open based on \emph{any} information they learn after the commitment phase, and that the opening distribution is consistent with the distribution of a qubit.

For simplicity, we consider only cheating algorithms that are accepted with high probability.  This can be ensured by repetition.
Namely, for every $\epsilon, \delta>0$ by repeating the commitment and opening protocol $O\left(\frac{\log(1/\epsilon)}{\delta}\right)$ times, if a cheating $\cC^*$ is accepted in all of executions with probability at least $\epsilon$ then a random execution is accepted with probability at least $1-\delta$.


We first define the binding property for the (non-succinct) commitment and then define it for the succinct commitment.

\begin{definition}[Binding]\label{def:binding}
A classical (non-succinct) commitment scheme to a multi-qubit quantum state is said to be {\em computationally binding} if there exists a $\QPT$ oracle machine $\Ext$ such that for any $\BQP$ algorithm $\cC^*.\Commit$, any $\poly(\secp)$-size quantum state $\bsigma$, any polynomial $\ell=\ell(\secp)$, any basis $\vecb=(b_1,\ldots,b_\ell)$, and any $\BQP$ algorithms $\cC^*_1.\Open$ and $\cC^*_2.\Open$, for every $i\in\{1,2\}$
\begin{equation}\label{eqn:binding1} 
\Real^{\cC^*.\Commit,\cC^{*}_i.\Open}(\lambda,\vecb,\bsigma)\stackrel{\eta}\approx\Ideal^{\Ext,\cC^*.\Commit,\cC^*_i.\Open}(\lambda,\vecb,\bsigma)
\end{equation}
and 
\begin{equation}\label{eqn:binding2}
 \Real^{\cC^*.\Commit,\cC^{*}_1.\Open}(\lambda,\vecb,\bsigma)
 \stackrel{\eta}\approx\Real^{\cC^*.\Commit,\cC^{*}_2.\Open}(\lambda,\vecb,\bsigma)
\end{equation}
where $\eta=O\left(\sqrt{\delta}\right)$ and
     \begin{equation}\label{deltaj}
    \delta= \E_{\substack{(\pk,\sk) \leftarrow \Gen(1^\lambda, 1^\ell) \\ (\vecy, \brho) \leftarrow \cC^*.\Commit(\pk, \bsigma)}}\max_{\substack{i\in\{1,2\},\\ \vecb'\in\{\vecb,{\bf 0},{\bf 1\}}}}\Pr[\Ver(\sk,\vecy,\vecb',\cC_i^*.\Open(\brho,\vecb'))=0].
    \end{equation}
    and where $\Real^{\cC^*.\Commit,\cC^*.\Open}(\lambda,\vecb,\bsigma)$ is defined as follows:
\begin{itemize}
    \item $(\pk,\sk)\gets\Gen(1^\secp,1^\ell)$. 
    \item $(\vecy,\brho)\gets \cC^*.\Commit(\pk,\bsigma)$.

\end{itemize}

         \begin{enumerate}
        \item Compute $(\vecz,\brho')\leftarrow \cC^*.\Open(\brho,\vecb)$.
        \item \label{item:real:ver} If $\Ver(\sk,\vecy,\vecb,\vecz)=0$ then output $\bot$. 
        \item Otherwise, let $\vecm=\Out(\sk,\vecy, \vecb,\vecz)$.
        \item Output $(\pk,\vecy,\vecb,\vecm)$.
        \end{enumerate}
$\Ideal^{\Ext,\cC^*.\Commit,{\cC}^*.\Open}(\secp,\vecb,\bsigma)$ is defined as follows:
         \begin{enumerate}
          \item $(\pk,\sk)\gets\Gen(1^\secp,1^\ell)$. 
    \item $(\vecy,\brho)\gets \cC^*.\Commit(\pk,\bsigma)$.
            \item Let $\btau_{\cA,\cB}=\Ext^{\cC^*.\Open}(\sk, \vecy,\brho)$. 
             \item Measure $\btau_\cA$ in the basis $\vecb=(b_1,\ldots,b_\ell)$ to obtain $\vecm\in\{0,1\}^\ell$.
        \item Output $(\pk,\vecy,\vecb,\vecm)$.            
           \end{enumerate}
\end{definition}


\begin{remark}
Throughout this write-up to avoid cluttering of notation we omit the superscript $\cC^*.\Commit$ from 
\[\Real^{\cC^*.\Commit,\cC^{*}.\Open}(\lambda,\vecb,\bsigma)~~\mbox{ and }~~ \Ideal^{\Ext,\cC^*.\Commit,{\cC}^*.\Open}(\secp,\vecb,\bsigma),
\]and denote these by 
\[\Real^{\cC^{*}.\Open}(\lambda,\vecb,\bsigma)~~\mbox{ and }~~\Ideal^{\Ext,{\cC}^*.\Open}(\secp,\vecb,\bsigma),
\]respectively.  
    
\end{remark}
\begin{remark}  We prove that our commitment scheme is sound with $\eta\leq 10\sqrt{\delta}$.
    Note that $\delta$ is a bound on the probability that the openings of $\cC^*_i.\Open$ are rejected not only on basis $\vecb$, but also on basis ${\bf 0}$ and ${\bf 1}$. We note that for \Cref{eqn:binding2} we do not need to bound the probability that $\cC^*_i.\Open$ is rejected on basis ${\bf 0}$ and basis ${\bf 1}$, and indeed we do not bound these probabilities in the proof (see \Cref{lemma:soundness-single1}).  The reason we need to bound these probabilities to prove \Cref{eqn:binding1} is that our extractor uses the openings of $\cC^* _i.\Open$ on basis ${\bf 0}$ and ${\bf 1}$ to extract the quantum state.
    
\end{remark}

\begin{remark}\label{stronger-binding-remark}
We mention that we prove a stronger condition than the one given in \Cref{eqn:binding2}.  This is done in \Cref{lemma:soundness-single1}.  The strengthening is due to two reasons. 
 First, we prove \Cref{eqn:binding2} by induction on $\ell$, and for the induction step to go through we need to strengthen the induction hypothesis, and as a result we prove a stronger guarantee.  Second, we allow the cheating algorithm $\cC^*.\Open$ to depend on a part of the secret key $\sk$.  This is needed to obtain our succinct interactive argument for $\QMA$ in \Cref{sec:succint-QMA} and is needed for our applications in \Cref{sec:applications}. 
\end{remark}

\Cref{def:binding} assumes that $\cC^*.\Open$ opens all the qubits of the committed state.  Indeed, we use  $\cC^*.\Open$ to extract an $\ell$-qubit quantum state. In what follows we define the notion of binding for a succinct commitment, where $\cC^*.\Open$ may only open to a subset $J\subseteq [\ell]$ of the qubits, and hence cannot be used to extract the entire state (as was done in \Cref{def:binding}). While we can extract a state consisting of $|I|$ qubits from $\cC^*.\Open$, for our applications we will need to extract the entire $\ell$-qubit state, even if $\cC^*.\Open$ only opens to the qubits in $J$  (without blowing up the communication).  This is precisely the purpose of  the 
 $\Ver.\Commit$ protocol; instead of extracting from $\cC^*.\Open$ we extract from the (cheating) prover $P^*$ of the $\Ver.\Commit$ protocol.  We note that even though $\Ver.\Commit$ is a succinct protocol, its interactive nature will allow us to extract the (non-succinct) $\ell$-qubit state from~$P^*$.

\begin{definition}[Succinct Binding]\label{def:binding:succ}
A succinct classical commitment scheme to a multi-qubit quantum state is said to be {\em computationally binding} if there exists a $\QPT$ oracle machine $\Ext$ such that for any $\BQP$ algorithm $\cC^*.\Commit$, any $\poly(\secp)$-size quantum state $\bsigma$, any polynomial $\ell=\ell(\secp)$, any $\QPT$ prover $P^*$ for the $\Ver.\Commit$ protocol, any $\QPT$ prover $P^*_\Test$ for the $\Test$ protocol, any $J\subseteq[\ell]$ and $\vecb_J=(b_j)_{j\in J}\in\{0,1\}^{|J|}$, any $\epsilon>0$, and any $\QPT$ algorithms $\cC^*_1.\Open$ and $\cC^*_2.\Open$, for every $i\in\{1,2\}$
\begin{equation}\label{eqn:binding1:succinct} 
\Real^{\cC^*.\Commit,P^*,\cC^{*}_i.\Open}(\secp, (J,\vecb_J),\bsigma)\stackrel{\zeta}\approx\Ideal^{\Ext,\cC^*.\Commit,P^*,P^*_\Test}(\secp, (J,\vecb_J),\bsigma,\epsilon)
\end{equation}
and 
\begin{equation}\label{eqn:binding2:succinct}
 \Real^{\cC^*.\Commit,P^*,\cC^{*}_1.\Open}(\secp, (J,\vecb_J),\bsigma)
 \stackrel{\eta}\approx\Real^{\cC^*.\Commit,P^*,\cC^{*}_2.\Open}(\secp, (J,\vecb_J),\bsigma)
\end{equation}
where $\Real^{\cC^*.\Commit,P^*,\cC^*.\Open}(\secp, (J,\vecb_J),\bsigma)$ is defined as follows:
\begin{enumerate}
    \item Generate $(\pk,\sk)\gets\Gen(1^\secp)$. 
    \item Compute $(\rt,\brho)\gets \cC^*.\Commit(\pk,\bsigma)$.\footnote{Note that a malicious $\cC^*.\Commit$ may choose $\rt$ maliciously without a corresponding non-succinct commitment string $\vecy$. We assume without loss of generality that all the auxiliary information it has about $\rt$ is encoded in $\brho$.}
      \item\label{ver-commit-reject} Compute the $\Ver.\Commit$ protocol between $P^*(\pk,\rt,\brho)$ and $V(\sk,\rt)$.  If $V$ rejects then output~$\bot$.
         Denote the resulting quantum state of $P^*$ at the end of this protocol by $\brho_\mathsf{post}$.\footnote{If $P^*$ was honest then it would have been classical and hence $\brho$ would have remained unchanged.  But since we are considering a malicious $P^*$ it may alter its quantum state during the $\Ver.\Commit$ protocol.}
        \item Compute $(\vecz_J,\brho')\leftarrow \cC^*.\Open(\brho_\mathsf{post},(J,\vecb_J))$.  
        \item If $\Ver(\sk,\rt,(J,\vecb_J),\vecz_J)=0$ then output $\bot$.
        \item Otherwise, let $\vecm_J=\Out(\sk,\rt, (J,\vecb_J),\vecz)$.
        \item Output $(\pk,\rt,(J,\vecb_J),\vecm_J)$.
        \end{enumerate}
Let $\delta_0$ be the probability that at the end of the protocol $\Ver.\Commit$ the verifier rejects.  Namely, $\delta_0$ is the probability that \Cref{ver-commit-reject} above outputs $\bot$. Denote by $\brho_\mathsf{post}$ the state of $P^*$ after the $\Ver.\Commit$ protocol. Let $\delta'_{0}$ be the probability that the verifier $V_\Test(\sk,\rt)$ outputs $\bot$ in the $\Test$ protocol when interacting with $P^*_\Test(\pk,\rt,\brho_\mathsf{post})$. Let
   \begin{equation}\label{eqn:delta:succinct}
    \delta= \max_{i\in\{1,2\}}\Pr[\Ver(\sk,\rt,\vecb_J,\vecz_J)=0]~~\mbox{ and }~\vecz=\cC^*_i.\Open(\brho_\mathsf{post},\vecb_J),
   \end{equation}
   and let
    \begin{equation}\label{eqn:delta0}
    \eta=O\left(\sqrt{\delta_0+\delta} \right)~~\mbox{ and }~~\zeta=O\left(\sqrt{\delta_0+\delta'_0+\delta}\right)+\epsilon.
    \end{equation}

        $\Ideal^{\Ext,\cC^*.\Commit,P^*,P^*_\Test}(\secp, (J,\vecb_J),\bsigma,\epsilon)$ is defined as follows:
         \begin{enumerate}
         \item Generate $(\pk,\sk)\gets\Gen(1^\secp)$. 
    \item Compute $(\rt,\brho)\gets \cC^*.\Commit(\pk,\bsigma)$.
            \item Let $\btau_{\cA,\cB}=\Ext^{P^*, P^*_\Test}(\sk, \rt,\brho,1^{\lceil1/\epsilon\rceil})$. 
             \item Measure the $J$ qubits of $\btau_\cA$ in the basis $\vecb_J$ to obtain $\vecm_J\in\{0,1\}^{|J|}$.
        \item Output $(\pk,\rt,(J,\vecb_J),\vecm_J)$.            
           \end{enumerate}
    
\begin{remark}
   In the succinct soundness definition above we assume that $\ell=\ell(\secp)$ is polynomial in the security parameter.  We could also consider $\ell$ that is super-polynomial in $\secp$, in which case we will obtain binding assuming the post-quantum $\ell$-security of the $\LWE$ assumption; i.e., assuming that a $\poly(\ell)$ size quantum circuit cannot break the $\LWE$ assumption. The proof for a general (super-polynomial) $\ell$ is exactly the same as the one where $\ell=\poly(\secp)$, the only difference is that now we consider adversaries that run in $\poly(\ell)$ time.  
\end{remark}


\end{definition}

\newpage
\appendix
\section{Constructions}\label{sec:constructions}
In this section we present our constructions.  We first construct a classical commitment scheme for committing to a {\em single} qubit state. This can be found in \Cref{sec:construction:single}.
Then, we show a generic transformation that converts any single-qubit commitment scheme into a multi-qubit commitment scheme.  This can be found in \Cref{def:SCQ-construction-multi}. In this scheme the size of the public key and the size of the commitments grow with the length of the quantum state committed to.  Finally, in \Cref{sec:succinct-multi-qubit-com} we show how to construct a succinct multi-qubit commitment scheme, where the size of the public key as well as the size of the commitment grows only with the security parameter (and poly-logarithmically with the length of the quantum state committed to).  We analyze these schemes in \Cref{sec:analysis}.

\subsection{Construction for Single Qubit States}

\label{sec:construction:single}
In this subsection, we describe our commitment scheme for a quantum state that consists of a single qubit, denoted by $\alpha_0\ket{0}+\alpha_1\ket{1}$. We use as a building block the commitment algorithm $\CommitM$ from \cite{Mah18a} for the {\em multi-qubit} case.  This algorithm makes use of a $\TCF$ family 
\[
(\Gen_\TCF, \Eval_\TCF, \Invert_\TCF, \Check_\TCF, \Good_\TCF).\] 
The public key $\pk$ used by $\CommitM$ to commit to an $\ell$-qubit state is of the form $\pk=(\pk_1,\ldots,\pk_\ell)$ where each $\pk_j$ is a public key generated by $\Gen_\TCF(1^\secp)$.\footnote{We mention that \cite{Mah18a} used a dual mode $\TCF$ family, where each $\pk_i$ is generated either in an injective mode in a two-to-one mode, depending on the opening basis which is assumed to fixed ahead of time.}  The $\BQP$ algorithm $$\CommitM\left((\pk_1,\ldots,\pk_\ell), \sum_{\vecs \in \{0,1\}^\ell} \alpha_{\vecs} \ket{\vecs}_\cS\right)$$ outputs the following:
    \begin{enumerate}
        \item A measurement outcome $\vecy = (\vecy_1, \dots, \vecy_\ell)$, where each $\vecy_j \in \mathsf{R_{\pk_j}}$.
        \item A state $\ket{\varphi}$ such that 
    \begin{equation}\label{mahadev-post-commit-state}
         \ket{\varphi}  \equiv \sum_{\vecs \in \{0,1\}^\ell} \alpha_{\vecs} \ket{\vecs}_\cS \ket{\vecx_\vecs}_\cZ,
    \end{equation} 
    where $\vecx_{\vecs} = (\vecx_{s_1} \dots, \vecx_{s_\ell})$ where each $\vecx_{s_j} \in \mathsf{D_{\pk_j}}$ and is such that  
    \[\vecy_{j} \in \Supp(\Eval(\pk_j, s_j, \vecx_{s_j})).
    \]
    \end{enumerate}

\begin{construction}[Commitment Scheme]\label{construction:single}

Our construction uses a noisy trapdoor claw-free ($\TCF$) function family $(\Gen_\TCF, \Eval_\TCF, \Invert_\TCF, \Check_\TCF, \Good_\TCF)$ and the algorithm
$\CommitM$ defined above.
Our algorithms are defined as follows: 
\begin{itemize}
    \item $\Gen(1^\lambda): $ 
    \begin{enumerate}
        \item For every $i\in\{0,1,\ldots,n+1\}$ sample $(\pk_i, \sk_i) \leftarrow \Gen_{\TCF}(1^\lambda)$, where $n=n(\secp)$ is such that the domain of each trapdoor claw-free function is a subset of $\{0,1\}^n$.
        \item Let $\pk=(\pk_0,\pk_1,\ldots,\pk_{n+1})$  and $\sk=(\sk_0,\sk_1,\ldots,\sk_{n+1})$.
        \item Output $(\pk,\sk)$.
    \end{enumerate} 
    \item $\Commit(\pk, \alpha_0\ket{0}+\alpha_1\ket{1}):$ 
    \begin{enumerate}
    \item Parse $\pk=(\pk_0,\pk_1,\ldots,\pk_{n+1})$ 
        \item Compute $(\vecy_0,\ket{\varphi_0})\leftarrow\CommitM(\pk_0, \alpha_0\ket{0}+\alpha_1\ket{1})$, where 
    \begin{equation*}\label{committed-state-1}
        \ket{\varphi_0}_{\cS, \cZ} \equiv   \sum_{s\in\{0,1\}}\alpha_s\ket{s}_{\cS} \ket{\vecx_s}_{\cZ}
    \end{equation*}
    Here, $\vecx_s \in \{0,1\}^n$ and $\vecy_0 \in \Supp(\Eval_\TCF(\pk_0, s, \vecx_s))$ for every $s \in \{0,1\}$.
    Note that register $\cS$ consists of $1$ qubit and $\cZ$ consists of $n$ qubits. 
 \item Apply the Hadamard unitary $H^{\tensor(n+1)}$ to $\ket{\varphi_0}$ to obtain
 \begin{align*}
\ket{\varphi_1}_{\cS, \cZ} &= H^{\tensor(n+1)}\ket{\varphi_0} \\
&\equiv \frac{1}{\sqrt{2^{n+1}}}\sum_{\vecd\in\{0,1\}^{n+1}}\underbrace{(-1)^{\vecd\cdot(0,\vecx_0)}(\alpha_0+(-1)^{\vecd\cdot (1,\vecx_0\oplus \vecx_1)}\alpha_1)}_{\beta_{\vecd}} \ket{\vecd}_{\cS, \cZ}
 \end{align*} 
 
\item Apply the algorithm $\CommitM$ with $\pk_1$ to register $\cS$ of  $\ket{\varphi_1}_{\cS, \cZ}$, and for every $i\in\{2,\ldots,n+1\}$ apply $\CommitM$ with $\pk_i$ to register $\cZ_i$ of  $\ket{\varphi_1}_{\cS, \cZ}$. Denote the output by  $(\vecy_1,\ldots,\vecy_{n+1})$ and the resulting state by 
\begin{equation*}
\ket{\varphi_2}_{\cS, \cZ, \cZ'} \equiv \frac{1}{\sqrt{2^{n+1}}}\sum_{\vecd \in\{0,1\}^{n+1}}\beta_\vecd \ket{\vecd}_{\cS, \cZ} \ket{\vecx'_{1,d_1},\vecx'_{2, d_2},\ldots,\vecx'_{n+1, d_{n+1}}}_{\cZ'}
\end{equation*}
where 
for every $i\in\{1,\ldots,n+1\}$ and every $d_i\in\{0,1\}$,
$$\vecy_i \in \Supp(\Eval_\TCF(\pk_i, d_i, \vecx'_{i,d_i})).$$
Note that the $\cZ'$ register consists of $n\cdot(n+1)$ qubits, and we partition these qubits to registers $\cZ'_1,\ldots,\cZ'_{n+1}$, each consisting of $n$ qubits. 

\item Rename the register $\cS$ to $\cZ_1$ and split the register $\cZ$ into registers $\cZ_2, \dots, \cZ_{n+1}$ of $1$ qubit each.
Permute the registers to obtain a state $\ket{\varphi_3}$ such that
\begin{equation*}
    \ket{\varphi_3} \equiv \frac{1}{\sqrt{2^{n+1}}} \sum_{\vecd \in \{0,1\}^{n+1}} \beta_{\vecd} \ket{d_1}_{\cZ_1} \ket{\vecx'_{1,d_1}}_{\cZ'_1}  \dots \ket{d_{n+1}}_{\cZ_{n+1}} \ket{\vecx'_{n+1, d_{n+1}}}_{\cZ'_{n+1}}
\end{equation*}

\item Output $(\vecy_0,\vecy_1,\ldots,\vecy_{n+1})$ and $\ket{\varphi_3}$.

\end{enumerate}

\item $\Open(\ket{\varphi},b)$:

    If $b=1$ (corresponding to an opening in the Hadamard basis) then output the measurement of  $\ket{\varphi}$ in the standard basis, and if $b=0$ (corresponding an opening in the standard basis) then output the measurement of $\ket{\varphi}$ in the Hadamard basis.  

\item $\Ver(\sk,\vecy,b,\vecz)$: 
\begin{enumerate}
\item Parse $\sk=(\sk_0,\sk_1,\ldots,\sk_{n+1})$.
\item Parse $\vecy=(\vecy_0,\vecy_1,\ldots,\vecy_{n+1})$.

\item If $b=1$ then do the following: 
\begin{enumerate}
    \item Parse $\vecz=(d_1, \vecx'_1, \dots, d_{n+1}, \vecx'_{n+1})$  and let $\vecd=(d_1,\ldots,d_{n+1})$.
    \item 
    Compute  $((0,\vecx_{0}),(1,\vecx_{1}))=\Invert_{\TCF}(\sk_0,\vecy_0)$.
    \item \label{item:Ver-abort-1}  Verify that 
    \begin{itemize}
        \item  $\Check_\TCF({\pk_i, d_i, \vecx'_i, \vecy_i}) = 1$ for every $i\in\{1,\ldots,n+1\}$. 
        \item  $\vecd \in \Good_{\vecx_{0}, \vecx_{1}}$.
    \end{itemize}
    If any of these checks does not hold output~$0$ and otherwise output~$1$.
\end{enumerate}
\item If $b=0$ then do the following:
\begin{enumerate}
    \item Parse $\vecz=(\vecz_1,\ldots,\vecz_{n+1})$ where each $\vecz_i \in \{0,1\}^{n+1}$.
    \item For every $i\in\{1,\ldots,n+1\}$ compute $((0,\vecx'_{i,0}),(1,\vecx'_{i,1}))=\Invert_{\TCF}(\sk_i,\vecy_i)$.
    \item If there exists $i\in[n+1]$ such that $\vecz_i \notin \Good_{\vecx'_{i, 0}, \vecx'_{i, 1}}$ then output~$0$.
    \item \label{item:comp:m} Else, for every $i\in[n+1]$ let $m_i=\vecz_i\cdot (1,\vecx'_{i,0}\oplus \vecx'_{i,1})$.   

    \item \label{item:Ver-abort-0} 
    
    If $\Check_\TCF(\pk_0, m_{1}, (m_2, \dots, m_{n+1}), \vecy_0) \neq 1$, output~$0$. Else, output~$1$
\end{enumerate}
\end{enumerate}

\item $\Out(\sk, \vecy,b,\vecz)$: 
\begin{enumerate}
\item Parse $\sk=(\sk_0,\sk_1,\ldots,\sk_{n+1})$.
\item Parse $\vecy=(\vecy_0,\vecy_1,\ldots,\vecy_{n+1})$.

    \item If $b = 1$:
 \begin{enumerate}   
       \item Compute  $((0,\vecx_{0}),(1,\vecx_{1}))=\Invert_{\TCF}(\sk_0,\vecy_0)$.
        \item Parse $\vecz=(d_1, \vecx'_1, \dots, d_{n+1}, \vecx'_{n+1})$ and let $\vecd=(d_1,\ldots,d_{n+1})$.
        \item Output $m=\vecd\cdot (1,\vecx_{0}\oplus \vecx_{1})$.
    \end{enumerate}
    \item If $b = 0$:
    \begin{enumerate}
    \item Parse $\vecz=(\vecz_1,\ldots,\vecz_{n+1})$.
    \item Compute $((0,\vecx'_{1,0}),(1,\vecx'_{1,1}))=\Invert_{\TCF}(\sk_1,\vecy_1)$.
    \item Output $m_1=\vecz_1\cdot (1,\vecx'_{1,0}\oplus \vecx'_{1,1})$.
    \end{enumerate}
\end{enumerate}
\end{itemize}
\end{construction}

\subsection{Construction of Commmitments for Multi-Qubit States}
\label{def:SCQ-construction-multi}
There are two ways one can extend our single-qubit commitment scheme to the multi-qubit setting. The first is to commit to an $\ell$-qubit state qubit-by-qubit by generating $\ell$ key pairs and using the $i$'th key pair to commit and open to the $i$'th qubit.  
This construction results with key size and commitment size that grow linearly with $\ell$, and is presented below.
The second approach is to extend our single qubit commitment scheme to a {\em succinct} multi-qubit commitment scheme. This is done in two steps.  First, we construct a {\em semi-succinct} multi-qubit commitment scheme, which is the same as the non-succinct one, except that we generate a {\em single} key pair $(\pk,\sk)$ and commit to each of the $\ell$ qubits using the same public key $\pk$. This results with a commitment string $(\vecy_1,\ldots,\vecy_\ell)$.  Then we show how to convert the semi-succinct commitment scheme  into a succinct one.  We elaborate on this approach in \Cref{sec:succinct-multi-qubit-com}.

\begin{construction}[Scheme for Multi-Qubit States] \label{construction:multi}

Given any single-qubit commitment scheme $$(\Gen_1, \Commit_1, \Open_1, \Ver_1, \Out_1)$$
we construct a multi-qubit commitment scheme 
consisting of algorithms 
\[
(\Gen, \Commit, \Open, \Ver, \Out)
\]
defined as follows, where we define $(\Open, \Ver, \Out)$ to operate one qubit at a time (see \Cref{remark:bit-by-bit}): 
\begin{itemize}
    \item $\Gen(1^\lambda, 1^\ell)$: 
    \begin{enumerate}
        \item For every $i\in [\ell]$ sample $(\pk_i, \sk_i) \leftarrow \Gen_1(1^\lambda)$.
        \item Let $\pk=(\pk_1, \dots, \pk_\ell)$  and $\sk=(\sk_1, \dots ,\sk_\ell)$.
        \item Output $(\pk,\sk)$.
    \end{enumerate} 
    \item $\Commit(\pk, \bsigma)$: 
    \begin{enumerate}
    \item Parse $\pk=(\pk_1, \dots, \pk_\ell)$.
\item We assume that $\bsigma$ is an $\ell$-qubit state, and we denote the $\ell$ registers of $\bsigma$ by $\cS_1,\ldots,\cS_\ell$.
\item  Execute the following steps: 
        \begin{enumerate}
                \item Let $\brho_0 =\bsigma$.
        \item For every $j \in \{1,\ldots ,\ell\}$, apply $\Commit_1$ with key $\pk_j$ to register $\cS_j$ of the state $\brho_{j-1}$, obtaining an outcome $\vecy_j$ and a post-measurement state $(\brho_{j})_{\cS_1, \dots, \cS_{\ell}, \cZ_{1} \dots, \cZ_{j}}$. 
        \end{enumerate}
        
    \item Output $(\vecy, (\brho_\ell)_{\cS_1\dots \cS_\ell, \cZ_1, \dots \cZ_\ell})$, where $\vecy = (\vecy_1, \dots, \vecy_\ell)$.
\end{enumerate}

\item $\Open(\brho_{\cS_1\dots \cS_\ell, \cZ_1, \dots \cZ_\ell},(j,b_j))$: 
\begin{enumerate}
    
    \item Apply $\Open_1$ with basis $b_j$ to registers $\{\cS_j, \cZ_j\}$ of $\brho_{\cS_1\dots \cS_\ell, \cZ_1, \dots \cZ_\ell}$, obtaining an outcome $\vecz_j$ and post-measurement state $\brho'_{j}$ .
    \item Output $(\vecz_j,\brho'_j)$. 
\end{enumerate}

\item $\Ver(\sk,\vecy, (j,b_j), \vecz_j)$: 
\begin{enumerate}
    \item Parse $\sk = (\sk_1, \dots, \sk_\ell)$ and $\vecy = (\vecy_1, \dots, \vecy_\ell)$.
    \item Output $\Ver_1(\sk_j, \vecy_j, b_j, \vecz_j)$. 
\end{enumerate}

\item $\Out(\sk, \vecy, (j,b_j), \vecz_j)$: 

\begin{enumerate}
    \item Parse $\sk = (\sk_1, \dots, \sk_\ell)$, $\vecy = (\vecy_1, \dots, \vecy_\ell)$.
    \item Output $m_j \leftarrow \Out_1(\sk_j, \vecy_j, b_j, \vecz_j)$.
\end{enumerate}
\end{itemize}
\end{construction}


We consider a {\em semi-succinct} version of the commitment scheme described above, which  is used as a stepping stone for proving soundness of the fully succinct commitment scheme constructed in \Cref{sec:succinct-multi-qubit-com}, below.

\begin{definition}\label{remark:semi-succinct}
    A semi-succinct classical commitment scheme to quantum states is a commitment scheme obtained from a single qubit commitment scheme by applying the algorithms qubit-by-qubit, as in \Cref{construction:multi},  but with a  single public key $\pk$. Namely, it is similar to \Cref{construction:multi} but where $\Gen$ generates a single key pair $(\pk,\sk)\gets \Gen_1(1^\secp)$ (as opposed to $\ell$ such pairs), and sets $(\pk_i,\sk_i)=(\pk,\sk)$ for every $i\in[\ell]$. 
\end{definition}

\begin{remark}\label{remark:ss-with-repetition}
    In \Cref{sec:binding:single}, when we prove that our (non-succinct and semi-succinct) multi-qubit commitment scheme satisfies the binding property, we  assume that $\cC^*.\Open$ successfully opens all the $\ell$ qubits in the standard basis with high probability, and successfully  opens all the $\ell$ qubits in the Hadamard basis with high probability $1-\delta$. Namely, we assume that for every $b\in\{0,1\}$, $\cC^*.\Open\left(\brho,b^\ell\right)$ generates an accepting opening with probability~$1-\delta$. To ensure that this assumption holds, we repeat the commitment phase  $O(\lfloor{1/\delta\rfloor})$ times.\footnote{We assume that $\cC.\Commit$ can generate $\bsigma^{\tensor \lfloor{1/\delta}\rfloor}$.} For each of these commitments, we  ask $\cC^*.\Open$ with probability $1/3$ to open all the qubits in the standard basis, with probability $1/3$ to open all the qubits in the Hadamard basis, and with probability $1/3$ to open in desired basis $\vecb\in\{0,1\}^\ell$.  
    
    We emphasize that even if $\cC^*.\Open$ opens only a small subset of the qubits, we still require that $\cC^*.\Open\left(\brho,b^\ell\right)$ generates a valid opening for every $b\in\{0,1\}$.  The reason is that our binding property states that there exists an extractor $\cE$  that uses any $\cC^*.\Open$ to extract a state $\btau$.  We want the guarantee that even if $\cC^*.\Open$ only opens a small subset, still the extractor can extract an $\ell$-qubit state $\btau$.  This is important in some applications, such as compiling any quantum $X/Z$ $\mathsf{PCP}$ into a succinct interactive argument (see \Cref{sec:app:QPCP}).
    Importantly, we need to do this without increasing the  communication complexity in the opening phase, and in particular it should not grow with~$\ell$.
    In what follows we  show how this can be done succinctly.

\end{remark}
\subsection{Construction of Succinct Multi-Qubit Commitments}\label{sec:succinct-multi-qubit-com}

Before we present our construction of a succinct multi-qubit commitment scheme, we define one of the main building blocks used in our construction.

\paragraph{State-Preserving Succinct Arguments of Knowledge}
\label{sec:state-preserving}

Our scheme uses a state-preserving succinct argument of knowledge system, defined and constructed in \cite{LombardiMS22}.

\begin{definition}\label{def:state-preserving-aok}\cite{LombardiMS22}
  A publicly verifiable argument system $\Pi$ for an $\mathsf{NP}$ language $\cL$ (with witness relation $\cR$) is an $\epsilon$-state-preserving succinct argument-of-knowledge if it satisfies the following properties.
  
  \begin{itemize}
      \item {\bf Succinctness}: when invoked on a security parameter $\secp$, instance size $n$, and a relation~$\cR$ decidable in time $T$, the communication complexity of the protocol is $\poly(\secp, \log T)$. The verifier computational complexity is $\poly(\secp, \log T) + \tilde O(n)$. 
      \item {\bf State-Preserving Extraction}. There exists an extractor $\cE$, with oracle access to a cheating prover $P^*$ and a corresponding quantum state $\brho$, with the following properties:
      \begin{itemize}
          \item {\bf Efficiency}: $\cE$ on input $(\vecx, 1^\secp,\epsilon)$ runs in time $\poly(|\vecx|, \secp, 1/\epsilon)$, and outputs a classical transcript $\mathbb{T}_\Sim$, a classical string $\vecw$, and a residual state $\brho_\Sim$.
          \item {\bf State-preserving}: The following two games are $\varepsilon$-indistinguishable to any $\BQP$ distinguisher:
          \begin{itemize}
              \item{\bf Game 0} (Real): Generate a transcript $\mathbb{T}$ by running $P^*(\brho,\vecx)$ with the honest verifier $V$. Output $\mathbb{T}$  along with the residual state $\brho'$.
              \item {\bf Game 1} (Simulated): Generate $((\mathbb{T}_\Sim,\vecw),\brho_\Sim) \gets \cE^{P^*,\brho}(\vecx,1^\secp,\epsilon)$. Output $(\mathbb{T}_\Sim,\brho_\Sim)$.
          \end{itemize}
          
          \item {\bf Extraction correctness}: for any $P^*$ as above, the probability that $\mathbb{T}_\Sim$ is an accepting transcript but $\vecw$ is \emph{not} in $\cR_\vecx$ is at most $\epsilon + \negl(\secp)$. 
          
      \end{itemize}
  \end{itemize}
\end{definition}
The following is an immediate corollary.
\begin{corollary}\label{cor:LMS}
    An $\epsilon$-state-preserving succinct argument-of-knowledge protocol for an $\mathsf{NP}$ language $\cL$ (with witness relation $\cR$) and extractor $\cE$ satisfies that for every cheating prover $P^*$  and a corresponding quantum state $\brho$, and every $\vecx$, if $P^*(\brho,\vecx)$ convinces the verifier $V$ to accept with probability $1-\delta$ then for
    \[
    ((\mathbb{T}_\Sim,\vecw),\brho_\Sim) \gets \cE^{P^*,\brho}(\vecx,1^\secp,\epsilon),
    \]
    $\Pr[(\vecx,\vecw)\in{\cal R}]\geq 1-\delta-2\epsilon-\negl(\secp)$.
\end{corollary}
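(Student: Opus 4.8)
The statement is an immediate consequence of the two guarantees bundled into \Cref{def:state-preserving-aok}, and the plan is simply to chain them with a union bound. First I would fix $P^*$, $\brho$, $\vecx$ as in the hypothesis and consider Game $0$ of the state-preserving definition: its transcript $\mathbb{T}$ is precisely that of an honest interaction between the verifier $V$ and $P^*(\brho,\vecx)$, so by assumption $\mathbb{T}$ is an accepting transcript with probability $1-\delta$. Because $\Pi$ is \emph{publicly} verifiable, the predicate ``$\mathbb{T}$ is accepting'' is computable in polynomial time from $(\vecx,\mathbb{T})$ alone (it depends on no secret state), so the algorithm that outputs this predicate --- ignoring the residual-state component it is handed --- is a valid $\BQP$ distinguisher between Game $0$ and Game $1$.

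Applying $\epsilon$-state-preservation to this distinguisher shows that in Game $1$ the simulated transcript $\mathbb{T}_\Sim$ output by $\cE^{P^*,\brho}(\vecx,1^\secp,\epsilon)$ is accepting with probability at least $1-\delta-\epsilon$. Next I would invoke extraction correctness, which bounds $\Pr[\mathbb{T}_\Sim\text{ accepting}\wedge \vecw\notin\cR_\vecx]\le\epsilon+\negl(\secp)$. A union bound then gives
\begin{align*}
\Pr[(\vecx,\vecw)\in\cR] &\ge \Pr[\mathbb{T}_\Sim\text{ accepting}] - \Pr[\mathbb{T}_\Sim\text{ accepting}\wedge \vecw\notin\cR_\vecx]\\
&\ge (1-\delta-\epsilon) - (\epsilon+\negl(\secp)) \;=\; 1-\delta-2\epsilon-\negl(\secp),
\end{align*}
which is exactly the claimed bound.

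There is no real obstacle here; the only point that warrants a sentence of care is the observation that ``does the transcript accept?'' can be decided efficiently from the transcript without any trapdoor or secret, which is what lets us use it as the distinguisher and is precisely why the corollary is phrased for publicly verifiable argument systems. Everything else is the definitional content of \Cref{def:state-preserving-aok} together with a two-term union bound, so I would expect the written proof to be only a few lines long.
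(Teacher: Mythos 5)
Your proof is correct and is exactly the argument the paper has in mind; the paper simply declares the corollary immediate from \Cref{def:state-preserving-aok} without writing it out, and your chain of public verifiability $\Rightarrow$ efficient ``transcript accepts'' distinguisher $\Rightarrow$ $\epsilon$-state-preservation $\Rightarrow$ extraction correctness $\Rightarrow$ union bound is the standard instantiation of that remark.
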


\begin{theorem}[\cite{LombardiMS22}]\label{thm:kilian}
Assuming the post-quantum $\poly(\secp, 1/\epsilon)$ hardness of learning with errors, there exists a (4-message, public coin) $\epsilon$-state preserving succinct argument of knowledge for $\NP$. 
\end{theorem}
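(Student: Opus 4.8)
The statement is taken essentially verbatim from \cite{LombardiMS22}; accordingly, the cleanest route is to invoke that construction and its analysis, and below I sketch the plan one would follow for a self-contained argument.

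First I would instantiate Kilian's four-message succinct argument for $\NP$ with a PCP (via the PCP theorem) and, crucially, with a \emph{collapsing} Merkle hash --- for instance the collapsing hash family with local opening of \Cref{sec:HT}, whose existence follows from the post-quantum hardness of $\LWE$. The resulting protocol has exactly the desired shape: the verifier sends a hash key; the prover replies with a Merkle root $\rt$ committing to a PCP string; the verifier sends a (public-coin) PCP query set; and the prover replies with the queried symbols together with their authentication paths. Succinctness is then immediate from the Merkle tree --- the communication is $\poly(\secp,\log T)$ and the verifier runs in time $\poly(\secp,\log T)+\widetilde{O}(n)$ (reading the instance and running the PCP verdict computation), matching \Cref{def:state-preserving-aok}.

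The substantive part is state-preserving extraction. The plan is to use the CMSZ extraction machinery \cite{CMSZ}: the collapsing property of the hash guarantees that measuring the symbol a quantum prover opens at any fixed PCP coordinate is essentially undetectable, so one can read off the opened value at a coordinate without appreciably disturbing the prover's acceptance behavior; iterating this over a query-covering collection of coordinates, interleaved with a state-repair step after each measurement, yields an approximately well-defined global PCP string $\pi^\ast$. The quantitative engine is an alternating-projectors / Marriott--Watrous-style repair lemma: each destructive measurement loses only a small amount of acceptance amplitude, and running $\poly(1/\epsilon)$ repair rounds drives the cumulative disturbance --- hence the distinguishing advantage between Game~0 (real interaction) and Game~1 (run the extractor) --- below $\epsilon$. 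This is precisely why the theorem asks for $\poly(\secp,1/\epsilon)$-hardness of $\LWE$: the collapsing reduction must defeat a distinguisher running in time $\poly(1/\epsilon)$. Extraction correctness --- that an accepting simulated transcript yields $\vecw\in\cR_\vecx$ except with probability $\epsilon+\negl(\secp)$ --- then follows from soundness of the PCP applied to $\pi^\ast$, together with binding of the Merkle hash, which forces the extracted symbols to be consistent with $\pi^\ast$.

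The main obstacle is exactly this last technical core: arranging the extractor to be simultaneously efficient (running in time $\poly(|\vecx|,\secp,1/\epsilon)$), $\epsilon$-state-preserving, and correct, which requires a delicate interplay between the collapsing property and the repair/rewinding apparatus of \cite{CMSZ}. I would not reproduce that argument here, and would refer the reader to \cite{LombardiMS22} for the full proof.
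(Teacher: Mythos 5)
The paper treats this as an imported result and gives no proof of its own---it simply cites \cite{LombardiMS22}. Your proposal correctly recognizes this and your high-level sketch (Kilian's four-message protocol instantiated with a collapsing Merkle hash, with state-preserving extraction built on the CMSZ measure-and-repair machinery, and the $\poly(\secp,1/\epsilon)$ hardness requirement arising because the collapsing reduction must beat a $\poly(1/\epsilon)$-time distinguisher) faithfully reflects the construction and analysis in the cited work, so deferring the technical core to \cite{LombardiMS22} is exactly what the paper does.
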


\subsubsection{Construction}
We are now ready to present our construction of a succinct classical commitment for multi-qubit quantum states. Our construction uses the following ingredients:

\begin{itemize}
    \item Collapsing hash family with local opening  $(\Gen_{\mathsf{H}},\Eval_{\mathsf{H}},\Open_\mathsf{H},\Ver_\mathsf{H})$, as defined in \Cref{def:HT,def:collapsing:local}.

    \item A semi-succinct commitment scheme $(\Gen_\ss, \Commit_\ss, \Open_\ss, \Ver_\ss, \Out_\ss)$, as defined in \Cref{remark:semi-succinct}, corresponding to an underlying single-qubit commitment scheme 
    \[(\Gen_1, \Commit_1, \Open_1, \Ver_1, \Out_1)
    \]
   \item An $\epsilon$-state-preserving succinct argument of knowledge protocol $(P,V)$, as defined in \Cref{def:state-preserving-aok}, for the $\NP$ languages $\cL^*$ and $\cL^{**}$ with a corresponding $\NP$ relations $\cR_{\cL^*}$ and $\cR_{\cL^{**}}$, respectively, defined as follows:
    \begin{equation}\label{eqn:def:L*}
    ((\hk,\rt),\vecy)\in \cR_{\cL^*}~\mbox{ if and only if }~\Eval_{\mathsf{H}}(\hk,\vecy)=\rt
    \end{equation} 
    and 
    \[((\sk_1,\hk,\rt,\rt',b),(\vecy,\vecz))\in \cR_{\cL^{**}}\] 
        if and only if 
     \[
   \Eval_{\mathsf{H}}(\hk,\vecy)=\rt~\wedge~ 
\Eval_{\mathsf{H}}(\hk,\vecz)=\rt'~\wedge~ \Ver_\ss\left(\sk_1,\vecy,b^\ell,\vecz\right)=1. 
    \] 
   \end{itemize}

\begin{construction}[Succinct Commitment to Multi-Qubit Quantum States]\label{def:SCQ-succinct-construction-multi}

In what follows we use the ingredients above to construct a succinct commitment scheme to  multi-qubit quantum states.

\begin{itemize}
    \item $\Gen(1^\lambda)$: 
    \begin{enumerate}
        \item Sample $(\pk_1, \sk_1) \leftarrow \Gen_1(1^\lambda)$.
        \item Sample $\hk\gets\Gen_{\mathsf{H}}(1^\secp)$.
        \item Let $\pk=(\pk_1,\hk)$ and $\sk=(\sk_1,\hk)$.
        \item Output $(\pk,\sk)$.
    \end{enumerate} 
    \item $\Commit(\pk, \bsigma)$: 
    \begin{enumerate}
       \item Parse $\pk=(\pk_1,\hk)$.
        \item Compute $(\vecy,\brho)\gets \Commit_\ss(\pk_1,\bsigma)$.
        \item Let $\rt=\Eval_{\mathsf{H}}(\hk,\vecy)$.

    \item Output $(\rt,\vecy, \brho)$.

\end{enumerate}
\item \label{item:ver.commit1} $\Ver.\Commit$ runs the $\epsilon$-state-preserving succinct argument of knowledge protocol for the $\NP$ language $\cL^*$, where $P$ and $V$ take as input the instance $(\hk,\rt)$ and $P$ takes an additional input the witness~$\vecy$.\footnote{Note that in this protocol both the prover and the verifier are classical.} If $V$ rejects then this commitment string~$\rt$ is declared invalid, and the protocol aborts. 

\item $\Test$ is an interactive protocol between $P_\Test$ with input  $(\pk,\rt,\vecy,\brho)$ and $V_\Test$ with input $(\sk,\rt)$ that proceeds as follows:
    \begin{enumerate}
    \item $V_\Test$ samples a uniformly random bit $b\gets\{0,1\}$, and sends $b$ to $P_\Test$.
     \item  $P_\Test$ generates $\vecz\gets \Open_\ss\left(\brho,b^\ell\right)$ and sends  $\rt'=\Eval_\mathsf{H}(\hk,\vecz)$ to $V_\Test$.
       \item Run the $\epsilon$-state-preserving succinct argument of knowledge protocol for the $\NP$ language $\cL^{*}$, where $P$ and $V$ take as input the instance $(\hk,\rt')$ and $P$ takes the additional input $\vecz$. If $V$ rejects then this commitment is declared invalid and the protocol aborts. 
       \item $V_\Test$ sends $\sk_1$ to the prover.
        \item Run the $\epsilon$-preserving succinct argument of knowledge protocol $(P,V)$ for the $\NP$ language $\cL^{**}$, where $P$ and $V$ take as input the instance  $(\sk_1,\hk,\rt,\rt',b)$ and $P$ takes as additional input the witness $(\vecy,\vecz)$. 
       If $V$ rejects then this commitment is declared invalid. 
       \item If $V$ accepts in both steps 2 and 4 above then $V_\Test$ outputs~$1$, and otherwise it outputs~$0$.
         
    \end{enumerate}
\begin{remark} \yael{check}
Note that the $\Test$ protocols runs the $\epsilon$-state-preserving succinct argument of knowledge {\em twice}: once to prove knowledge of~$\vecz$ and once to prove knowledge of $(\vecy,\vecz)$.  It may seem that it suffices to run this protocol once, since there is no need to prove knowledge of $\vecz$ twice.  However, in the first protocol the cheating prover does not know $\sk_1$ and hence the  $\vecz$ that is extracted, using the extractor from \Cref{def:state-preserving-aok},  can be efficiently computed without knowing $\sk_1$. Then we use the security guarantee of the underlying collision resistant hash family to argue that the vector $\vecz$ extracted from the second protocol is identical to that extracted from the first protocol, and hence can be computed efficiently without knowing $\sk_1$.  This fact is crucial for the soundness proof to go through.
\end{remark}

\item $\Open((\brho,\hk,\vecy),(j,b_j))$: 
\begin{enumerate}
\item Parse $\vecy=(\vecy_1,\ldots,\vecy_\ell)$.
\item Compute $\veco_j=\Open_\mathsf{H}(\hk,\vecy,j)$.\footnote{$\Open_\mathsf{H}(\hk,\vecy,j)$ denotes an opening to the $j$'th chunk of the preimage $\vecy$, consisting of $\vecy_j$ which is the commitment to the $j$'th qubit.}
\item Compute $(\vecz_j,\brho')\gets \Open_\ss(\brho,(j,b_j))$.

    \item Output $((\vecy_j,\veco_j,\vecz_j),\brho')$.
\end{enumerate}

\item $\Ver(\sk,\rt, (j,b_j),(\vecy_j,\veco_j, \vecz_j))$: 
\begin{enumerate}
    \item Parse $\sk = (\sk_1, \hk)$.
    \item Let $v_0=\Ver_{\mathsf{H}}(\hk,\rt,j,\vecy_j,\veco_j)$.\footnote{ $\Ver_{\mathsf{H}}(\hk,\rt,j,\vecy_j,\veco_j)$ denotes the verification of the opening for $\vecy_j$, which is the $j$'th chunk of the hashed preimage.}
    \item Let $v_1=\Ver_1(\sk_1, \vecy_j, b_j, \vecz_j)$.
    \item Output $v_0~\wedge~v_1$.
\end{enumerate}

\item $\Out(\sk,\rt, (j,b_j),(\vecy_j,\veco_j, \vecz_j))$: 

\begin{enumerate}
    \item Parse $\sk = (\sk_1, \hk)$.
    \item Output $m_j =\Out_1(\sk_1, \vecy_j, b_j, \vecz_j)$.
    
\end{enumerate}
\end{itemize}
\end{construction}

\section{Analysis of the Multi-Qubit Commitment Schemes from \Cref{sec:constructions}}
\label{sec:analysis}

\subsection{Correctness}\label{correctness-proof}

In this section we prove the correctness of \Cref{construction:multi} and \Cref{def:SCQ-succinct-construction-multi}. 

\begin{theorem}\label{def:SCQ-correctness-theorem}
   The multi-qubit commitment scheme described in \Cref{construction:multi} satisfies the correctness property given in \Cref{def:SCQ-correctness}.  
\end{theorem}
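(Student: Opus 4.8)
The plan is to derive the $\ell$-qubit correctness from a single-qubit statement and then lift it, exploiting the fact that \Cref{construction:multi} commits and opens qubit-by-qubit on disjoint registers. Concretely, I would first record the property of the underlying single-qubit scheme that the lift needs; for our instantiation this is \Cref{construction:single}, and the statement I would prove is a \emph{register-local} version of correctness: for any single-qubit pure state $\alpha_0\ket 0+\alpha_1\ket 1$ (possibly the reduced state of a larger entangled system) and any $b\in\{0,1\}$, the composite operation ``$\Commit_1$, then $\Open_1(\cdot,b)$, then $\Out_1$'' acts on the committed-qubit register---with all ancilla registers and classical tapes traced out---as the projective measurement of that register in the standard basis if $b=0$ and the Hadamard basis if $b=1$, followed by a fixed efficient decoding of the outcome; moreover $\Ver_1$ accepts with probability $1-\negl(\secp)$. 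It is this register-local formulation (rather than a bare statement about the distribution of $\Out_1$'s output) that is stable under entanglement and hence survives the lift.

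For the single-qubit statement I would trace through \Cref{construction:single} in the two cases. For $b=1$ (Hadamard opening), $\Open_1$ measures $\ket{\varphi_3}\equiv 2^{-(n+1)/2}\sum_\vecd\beta_\vecd\,\ket{d_1,\vecx'_{1,d_1}}\cdots\ket{d_{n+1},\vecx'_{n+1,d_{n+1}}}$ in the standard basis, where $\beta_\vecd=(-1)^{\vecd\cdot(0,\vecx_0)}\bigl(\alpha_0+(-1)^{\vecd\cdot(1,\vecx_0\oplus\vecx_1)}\alpha_1\bigr)$. By completeness of the $\TCF$ family (\Cref{def:TCF-a}, \Cref{def:TCF-c}) every $\Check_\TCF(\pk_i,d_i,\vecx'_i,\vecy_i)=1$ test in $\Ver_1$ passes with probability $1$; by \Cref{def:TCF-d} the test $\vecd\in\Good_{\vecx_0,\vecx_1}$ passes with overwhelming probability---the honestly sampled $\vecd$ is only within a constant factor of uniform (each $|\beta_\vecd|^2\le 2$), so the density bound still applies---and a direct computation gives $\Pr[\vecd\cdot(1,\vecx_0\oplus\vecx_1)=0]=|\alpha_0+\alpha_1|^2/2$, exactly the probability that a Hadamard measurement of $\alpha_0\ket0+\alpha_1\ket1$ yields ``$+$''. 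Since $\Out_1$ returns $\vecd\cdot(1,\vecx_0\oplus\vecx_1)$, the $b=1$ case is done. For $b=0$ (standard opening), $\Open_1$ measures $\ket{\varphi_3}$ in the Hadamard basis and $\Ver_1,\Out_1$ apply the per-block Mahadev decoding $m_i=\vecz_i\cdot(1,\vecx'_{i,0}\oplus\vecx'_{i,1})$. The crux is the chain of equivalences, as in the technical overview: (i) applying $\CommitM$ qubit-by-qubit to an $(n+1)$-qubit state and then Hadamard-measuring-and-decoding each single-qubit commitment is equivalent to directly Hadamard-measuring that state (the ``native'' Hadamard opening of $\CommitM$), so the decoded $(m_1,\dots,m_{n+1})$ is distributed as a Hadamard measurement of $\ket{\varphi_1}$; (ii) $\ket{\varphi_1}=H^{\otimes(n+1)}\ket{\varphi_0}$, so a Hadamard measurement of $\ket{\varphi_1}$ equals a standard-basis measurement of $\ket{\varphi_0}=\alpha_0\ket{0,\vecx_0}+\alpha_1\ket{1,\vecx_1}$. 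Hence $(m_1,(m_2,\dots,m_{n+1}))$ is always one of the two genuine preimages $(0,\vecx_0),(1,\vecx_1)$ of $\vecy_0$, so $\Check_\TCF(\pk_0,m_1,(m_2,\dots,m_{n+1}),\vecy_0)=1$ holds with probability $1$, the per-block $\Good$-membership tests again pass with overwhelming probability by a density argument (with a $\poly(n)$ union-bound loss), and $m_1$---which $\Out_1$ returns---is exactly a standard-basis measurement of $\alpha_0\ket0+\alpha_1\ket1$. Along the way one discharges the technicalities coming from the \emph{noisy} $\TCF$: the identities above hold up to the negligible tails in $\Good$ and in the error distributions, so one obtains the single-qubit guarantee up to $\negl(\secp)$ statistical distance.

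Finally I would lift to $\ell$ qubits. In \Cref{construction:multi}, $\Commit$ applies $\Commit_1$ with key $\pk_j$ to register $\cS_j$ of $\bsigma$ for $j=1,\dots,\ell$, and each such application acts as the identity on every $\cS_i$ with $i\ne j$ (touching only $\cS_j$ and a fresh ancilla block $\cZ_j$); the $\ell$ commitment maps therefore pairwise commute, and their composition produces a post-commitment state whose restriction to the pairs $\{(\cS_j,\cZ_j)\}_j$ coincides with applying, in parallel, the single-qubit commitment to each qubit of $\bsigma$. Likewise $\Open(\cdot,(j,b_j))$ and the decoding in $\Ver,\Out$ touch only the $j$-th pair and $\vecy_j$. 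Applying the register-local single-qubit property to each $j$ and using that measurements on disjoint registers commute, the joint distribution of $(\vecm_1,\dots,\vecm_\ell)$ in $\Real(1^\secp,\bsigma,\vecb)$ equals the joint distribution of measuring each qubit $j$ of $\bsigma$ in basis $b_j$, i.e.\ $\bsigma(\vecb)$, up to a union bound over $\ell$ negligible single-qubit errors, which matches \Cref{def:SCQ-correctness} once the negligible slack is absorbed. I expect the main obstacle to be the single-qubit analysis, and within it the $b=0$ case: making the equivalence chain ``commit-again then Hadamard-decode $=$ Hadamard-measure $=$ standard-measure the pre-Hadamard state'' fully rigorous (this is where one really uses that \emph{both} Mahadev rounds open ``natively''), together with cleanly bounding the noisy-$\TCF$ and $\Good$-density slack; the $\ell$-qubit bookkeeping is then routine.
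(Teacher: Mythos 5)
Your proposal is essentially the paper's proof: reduce to the single-qubit case (the paper invokes ``extends to entangled states by linearity'' where you spell out the register-local/channel formulation), and for each basis use the chain ``inner $\CommitM$ opened natively $=$ measuring $\ket{\varphi_1}$ in the conjugate basis $=$ measuring $\ket{\varphi_0}$ in the original basis,'' then decode via the outer $\CommitM$. The only cosmetic difference is in the $b=1$ case, where you re-derive the Hadamard-decode statistics ($|\beta_\vecd|^2 \le 2$, $\Pr[\vecd\cdot(1,\vecx_0\oplus\vecx_1)=0]=|\alpha_0+\alpha_1|^2/2$) directly rather than invoking the black-box Mahadev correctness lemma (\Cref{lem:mahadev-correctness}) a second time as the paper does; both are fine since that lemma is precisely this calculation packaged for noisy $\TCF$s.
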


 \Cref{def:SCQ-construction-multi} commits to each qubit of a multi-qubit state independently by using the single-qubit protocol given in construction \Cref{construction:single} as a black-box. Therefore, to prove \Cref{def:SCQ-correctness-theorem} it suffices to prove the following theorem.
 
\begin{theorem}\label{def:SCQ-correctness-theorem-single}
   The single-qubit commitment scheme described in \Cref{construction:single} satisfies the correctness property given in \Cref{def:SCQ-correctness}.  
\end{theorem}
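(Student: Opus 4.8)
The plan is to trace an honest execution of $\Real(1^\lambda,\bsigma,b)$ through the algorithms $\Commit,\Open,\Ver,\Out$ of \Cref{construction:single} for each $b\in\{0,1\}$. Since $\Commit$ and $\Open$ are quantum channels while $\Real(1^\lambda,\cdot,b)$ and $\bsigma(b)$ both depend affinely on the input density matrix, it suffices to treat pure inputs $\bsigma=\ket{\psi}\!\bra{\psi}$ with $\ket{\psi}=\alpha_0\ket{0}+\alpha_1\ket{1}$ and then take convex combinations. Recall that $\Commit(\pk,\ket{\psi})$ outputs the commitment string $(\vecy_0,\ldots,\vecy_{n+1})$ and the post-commitment state $\ket{\varphi_3}$, which after the final re-indexing is $\ket{\varphi_2}\equiv\frac{1}{\sqrt{2^{n+1}}}\sum_{\vecd}\beta_{\vecd}\ket{\vecd}\ket{\vecx'_{1,d_1},\ldots,\vecx'_{n+1,d_{n+1}}}$ with $\beta_{\vecd}=(-1)^{\vecd\cdot(0,\vecx_0)}(\alpha_0+(-1)^{\vecd\cdot(1,\vecx_0\oplus\vecx_1)}\alpha_1)$, where $\vecy_0\in\Supp(\Eval_\TCF(\pk_0,s,\vecx_s))$ for both $s\in\{0,1\}$ and $\vecy_i\in\Supp(\Eval_\TCF(\pk_i,c,\vecx'_{i,c}))$ for both $c\in\{0,1\}$ and all $i\in[n+1]$; these are exactly the guarantees of $\CommitM$, which follow from the efficient range superposition and completeness of the $\TCF$ family.

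For $b=1$ (Hadamard opening), $\Open$ measures $\ket{\varphi_3}$ in the standard basis, producing $\vecz=(d_1,\vecx'_{1,d_1},\ldots,d_{n+1},\vecx'_{n+1,d_{n+1}})$ with probability $|\beta_{\vecd}|^2/2^{n+1}$. In $\Ver$, all checks $\Check_\TCF(\pk_i,d_i,\vecx'_i,\vecy_i)=1$ hold by completeness property (c) of \Cref{def:TCF}, and the check $\vecd\in\Good_{\vecx_0,\vecx_1}$ fails with only negligible probability: by property (d) the complement of $\Good_{\vecx_0,\vecx_1}$ has density $\negl(\secp)$ in $\{0,1\}^{n+1}$, while $|\beta_{\vecd}|^2\le(|\alpha_0|+|\alpha_1|)^2\le 2$, so the measured distribution puts at most twice the uniform mass on any subset. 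Conditioned on acceptance, $\Out$ returns $m=\vecd\cdot(1,\vecx_0\oplus\vecx_1)$; since the vector $(1,\vecx_0\oplus\vecx_1)$ is nonzero, exactly $2^n$ strings $\vecd$ give inner product $v$ for each $v\in\{0,1\}$, so a one-line amplitude computation gives $\Pr[m=v]=\frac{2^n}{2^{n+1}}|\alpha_0+(-1)^v\alpha_1|^2=\frac12|\alpha_0+(-1)^v\alpha_1|^2$, which is exactly the probability of outcome $v$ in a Hadamard-basis measurement of $\ket{\psi}$ (identifying $0$ with $+$ and $1$ with $-$). Hence $\Real(1^\lambda,\bsigma,1)\stackrel{\negl(\secp)}{\equiv}\bsigma(1)$.

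The case $b=0$ (standard opening) is more delicate and is where I expect most of the work to lie. Here $\Open$ measures $\ket{\varphi_3}$ (equivalently $\ket{\varphi_2}$) in the Hadamard basis, and $\Ver,\Out$ decode the outcome block-by-block through the Mahadev Hadamard rule $m_i=\vecz_i\cdot(1,\vecx'_{i,0}\oplus\vecx'_{i,1})$. The key lemma to prove is: applying $\CommitM$ with key $\pk_i$ to a qubit, then measuring the resulting $n+1$ qubits in the Hadamard basis, then computing $m_i$, reproduces \emph{exactly} the distribution — and the residual state on all other registers — of a direct Hadamard-basis measurement of that qubit. This is the Hadamard-basis correctness of Mahadev's weak commitment, proved by the same amplitude bookkeeping as above (factor $(-1)^{\vecz_i\cdot(0,\vecx'_{i,0})}$ out of the amplitude of $\vecz_i$ and use that exactly $2^n$ strings $\vecz_i$ yield any given $m_i$, since $(1,\vecx'_{i,0}\oplus\vecx'_{i,1})$ is nonzero). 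Because the $n+1$ invocations of $\CommitM$ act on disjoint single-qubit registers with independent keys, iterating the lemma shows $(m_1,\ldots,m_{n+1})$ is distributed exactly as a Hadamard-basis measurement of $\ket{\varphi_1}=H^{\otimes(n+1)}\ket{\varphi_0}$, hence as a standard-basis measurement of $\ket{\varphi_0}=\alpha_0\ket{0,\vecx_0}+\alpha_1\ket{1,\vecx_1}$, i.e.\ $(m_1,\ldots,m_{n+1})=(s,\vecx_s)$ with probability $|\alpha_s|^2$. Then in $\Ver$ each test $\vecz_i\in\Good_{\vecx'_{i,0},\vecx'_{i,1}}$ fails with negligible probability (again by property (d) and the coset-flatness of the measured distribution), and $\Check_\TCF(\pk_0,m_1,(m_2,\ldots,m_{n+1}),\vecy_0)=\Check_\TCF(\pk_0,s,\vecx_s,\vecy_0)=1$ by property (c) since $\vecy_0\in\Supp(\Eval_\TCF(\pk_0,s,\vecx_s))$; and $\Out$ outputs $m_1=s$ with probability $|\alpha_s|^2$, which is exactly $\bsigma(0)$. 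Thus $\Real(1^\lambda,\bsigma,0)\stackrel{\negl(\secp)}{\equiv}\bsigma(0)$.

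The only deviation from exact equality is the $\negl(\secp)$ probability that some $\Good$-membership check fails on an honestly generated opening; matching the ``$\equiv$'' of \Cref{def:SCQ-correctness} literally requires reading it up to this negligible statistical error, which is intrinsic to the noisy $\TCF$ family of \cite{BCMVV18}. The principal obstacle is the $b=0$ lemma: one must verify not just that each decoded bit $m_i$ has the correct marginal, but that the \emph{joint} distribution of $(m_1,\ldots,m_{n+1})$ across all $n+1$ blocks — together with the fact that the consistency test $\Check_\TCF(\pk_0,\cdot)$ is precisely the test an honest standard-basis measurement of $\ket{\varphi_0}$ always passes — lines up. This is exactly where the ``native'' structure of Mahadev's opening (standard/Hadamard-basis openings genuinely are standard/Hadamard-basis measurements of the registers) is used.
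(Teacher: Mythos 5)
Your proof is correct and follows the same overall route as the paper's: for each basis choice $b$, re-express the honest execution as an application of Mahadev's weak commitment $\CommitM$ to $\ket{\psi}$, followed by a Hadamard transform, followed by a second round of $\CommitM$, and then observe that the outer opening (in the transposed basis) recovers a measurement of the inner post-commitment state, which in turn recovers a measurement of $\ket{\psi}$. The one structural difference is that the paper packages the entire argument as two black-box invocations of a stated lemma (\Cref{lem:mahadev-correctness}, which is a direct import of Lemma~5.3 from Mahadev's paper): the $\vecb = 0^\ell$ direction of that lemma gives the standard-basis correctness of $\CommitM$ and the $\vecb = 1^\ell$ direction gives the Hadamard-basis correctness, and \Cref{def:SCQ-correctness-hadamard} and \Cref{def:SCQ-correctness-standard} each cite it twice, once per composition layer. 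You instead unroll these citations into explicit amplitude bookkeeping: the $|\beta_\vecd|^2$ computation and preimage-counting argument for $b=1$, and the ``key lemma'' (Hadamard-basis correctness of a single $\CommitM$ as a channel on a potentially entangled qubit) iterated block by block for $b=0$. Both routes arrive at the same place; yours is more self-contained but does more work, since the channel-level key lemma you invoke for $b=0$ is precisely the content of Mahadev's Lemma 5.3 that the paper simply cites. Your density argument for why the $\Good$-membership tests pass with overwhelming probability (measured distribution is at most twice uniform, $\Good$ has density $1 - \negl$) is a valid way to make explicit a step the paper absorbs into its black-box citation, and your caveat that the ``$\equiv$'' in \Cref{def:SCQ-correctness} should be read up to a negligible error from the noisy $\TCF$ family matches what the paper's lemmas actually prove.
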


We make use of the following lemma about $\CommitM$ throughout the proof.

\begin{lemma}[Correctness of $\CommitM$]\label{lem:mahadev-correctness} For any $\ell$-qubit quantum state $\ket{\varphi} = \sum_{\vecs \in \{0,1\}^\ell} \alpha_\vecs \ket{\vecs}_\cS$ and any basis $\vecb = (b_1, \dots, b_\ell) \in \{0\}^\ell \cup \{1\}^\ell$,

\begin{equation}
    \Real_W(1^\lambda, \ket{\varphi}, \vecb) \stackrel{\negl(\lambda)}{\equiv} \bsigma(\vecb)
\end{equation}
where $\bsigma(\vecb)$ is the distribution obtained by measuring each qubit $j$ of $\ket{\varphi}$ in the basis specified by $b_j$ (standard if $b_j = 0$, Hadamard if $b_j = 1$), and $\Real_W(1^\lambda, \ket{\varphi}, \vecb)$ is the distribution resulting from honestly opening the commitment. Specifically, $\Real_W(1^\lambda, \ket{\varphi}, \vecb)$ is defined by:

\begin{enumerate}
    \item For every $i \in \{0,1,\dots, \ell\}$, sample $(\pk_i, \sk_i) \leftarrow \Gen_\TCF(1^\lambda)$. 
    \item Compute $(\vecy = (\vecy_1, \dots, \vecy_\ell), \ket{\varphi'}) \leftarrow \CommitM((\pk_1, \dots, \pk_\ell), \ket{\varphi})$, where $\ket{\varphi'}$ is of the same form as \Cref{mahadev-post-commit-state}.
    \item If $\vecb = \{0\}^\ell$:
    \begin{enumerate}
        \item Measure $\ket{\varphi'}$ in the standard basis to get $\vecz = (\vecz_1, \dots, \vecz_\ell)$. Parse each $\vecz_i = (s_i, \vecx_{i, s_i})$. 
        
        If $\Check_\TCF(\pk_i, s_i, \vecx_{i, s_i}, \vecy_i) \neq 1$ for some $i \in [\ell]$, output $\bot$. Otherwise, output $\vecs = (s_1, s_2, \dots, s_\ell)$.
    \end{enumerate}
    \item If $\vecb = \{1\}^\ell$:
    \begin{enumerate}
        \item Measure $\ket{\varphi'}$ in the Hadamard basis to get $\vecd = (\vecd_1, \dots, \vecd_\ell) \in \{0,1\}^{\ell(n+1)}$. For each $j \in [\ell]$, compute $((0, \vecx_{j, 0}), (1, \vecx_{j, 1})) = \Invert_\TCF(\sk_j, \vecy_j)$. If $\vecd_j \notin \Good_{\vecx_{j,0}, \vecx_{j,1}}$ for some $j \in [\ell]$, output $\bot$. Otherwise, output $\vecm = (m_1, \dots, m_\ell)$, where each $m_j = \vecd_j \cdot (1, \vecx_{j,0} \oplus \vecx_{j,1})$.
    \end{enumerate}
\end{enumerate}
\end{lemma}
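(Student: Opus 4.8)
The plan is to take the output of $\CommitM$ essentially at face value. By the efficient range superposition property, after measuring the image register $\CommitM$ leaves, up to $\negl(\secp)$ error, a pure state $\ket{\varphi'}\equiv\sum_{\vecs\in\{0,1\}^\ell}\alpha_{\vecs}\ket{\vecs}_{\cS}\ket{\vecx_{\vecs}}_{\cZ}$ of the form \Cref{mahadev-post-commit-state}, where for each $j$ the preimage $\vecx_{s_j}$ is the unique domain element with leading bit $s_j$ returned by $\Invert_{\TCF}(\sk_j,\vecy_j)$, and where the two branches carry \emph{equal} noise weight because $\Eval_{\TCF}(\pk_j,0,\vecx_{j,0})\equiv\Eval_{\TCF}(\pk_j,1,\vecx_{j,1})$ for the matched pair (completeness, \Cref{def:TCF-b}), so that after renormalization the branch amplitudes are exactly $\alpha_{s_j}$ rather than $\alpha_{s_j}\sqrt{p_{\pk_j}(\cdot)}$. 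This introduces only an additive $\negl(\secp)$ in every probability below, so I argue with the idealized $\ket{\varphi'}$. For $\vecb=0^\ell$, measuring $\ket{\varphi'}$ in the standard basis returns $(\vecs,\vecx_{\vecs})$ with probability $|\alpha_{\vecs}|^2$ (the $\cZ$-part is a deterministic function of $\vecs$ and $\vecy$), and since $\vecy_j\in\Supp(\Eval_{\TCF}(\pk_j,s_j,\vecx_{s_j}))$, completeness (\Cref{def:TCF-c}) forces $\Check_{\TCF}(\pk_j,s_j,\vecx_{s_j},\vecy_j)=1$ with certainty; hence no abort occurs and the output is $\vecs$ with probability $|\alpha_{\vecs}|^2$, i.e.\ exactly $\bsigma(0^\ell)$.

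The substantive case is $\vecb=1^\ell$. Apply $H^{\otimes\ell(n+1)}$ to $\ket{\varphi'}$, viewing it as $\ell$ blocks of $n+1$ qubits, block $j$ being the qubit $\cS_j$ together with the register holding $\vecx_{s_j}$. Using the $\mathbb{F}_2$ identity $(s_j,\vecx_{s_j})=(0,\vecx_{j,0})\oplus s_j\cdot(1,\vecx_{j,0}\oplus\vecx_{j,1})$, the amplitude of a standard-basis string $\vecd=(\vecd_1,\dots,\vecd_\ell)$ is
\[
\frac{1}{\sqrt{2^{\ell(n+1)}}}\sum_{\vecs}\alpha_{\vecs}(-1)^{\sum_j\vecd_j\cdot(s_j,\vecx_{s_j})}=\frac{(-1)^{c(\vecd)}}{\sqrt{2^{n\ell}}}\cdot\frac{1}{\sqrt{2^\ell}}\sum_{\vecs}(-1)^{\vecs\cdot\vecm(\vecd)}\alpha_{\vecs},
\]
where $c(\vecd):=\sum_j\vecd_j\cdot(0,\vecx_{j,0})$ is a global sign and $m_j(\vecd):=\vecd_j\cdot(1,\vecx_{j,0}\oplus\vecx_{j,1})$ is precisely the bit that $\CommitM$'s Hadamard opening records. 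Recognizing $\frac{1}{\sqrt{2^\ell}}\sum_{\vecs}(-1)^{\vecs\cdot\vecm}\alpha_{\vecs}=\bra{\vecm}H^{\otimes\ell}\ket{\varphi}$, and noting that for each $j$ the map $\vecd_j\mapsto m_j(\vecd)$ is a surjective $\mathbb{F}_2$-linear functional (its defining vector has leading bit $1$) with fibers of size $2^n$, we get $\Pr[\vecm(\vecd)=\vecm]=2^{n\ell}\cdot 2^{-n\ell}\,|\bra{\vecm}H^{\otimes\ell}\ket{\varphi}|^2=|\bra{\vecm}H^{\otimes\ell}\ket{\varphi}|^2$, which is exactly the probability that a Hadamard measurement of $\ket{\varphi}$ yields $\vecm$. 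The same fiber count shows each block marginal obeys the pointwise bound $\Pr[\vecd_j=\delta]\le 2^{-n}$ (since $\sum_{\vecm:\,m_j=\delta\cdot(1,\vecx_{j,0}\oplus\vecx_{j,1})}|\bra{\vecm}H^{\otimes\ell}\ket{\varphi}|^2\le 1$), i.e.\ this marginal is at most twice uniform on $\{0,1\}^{n+1}$; hence by completeness (\Cref{def:TCF-d}), $\Pr[\vecd_j\notin\Good_{\vecx_{j,0},\vecx_{j,1}}]\le 2\negl(\secp)$, and a union bound over $j\in[\ell]$ with $\ell=\poly(\secp)$ bounds the total abort probability by $\negl(\secp)$. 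Conditioning on no abort perturbs the output by $O(\negl(\secp))$ in statistical distance, so $\Real_W(1^\lambda,\ket{\varphi},1^\ell)\stackrel{\negl(\lambda)}{\equiv}\bsigma(1^\ell)$, and together with the standard-basis case this proves the lemma.

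I expect no conceptual difficulty, but the care lies in the Hadamard computation: keeping the mod-$2$ exponent arithmetic separate from integer arithmetic, tracking and then discarding the global phase $c(\vecd)$, and --- if one does not simply invoke \Cref{mahadev-post-commit-state} --- justifying that the post-measurement branch weights are exactly $\alpha_0,\alpha_1$, which is where \Cref{def:TCF-b} is used. The abort bound also relies on the slightly non-obvious observation that the block marginals of $\vecd$ are dominated by the uniform distribution on $\{0,1\}^{n+1}$, so that the negligible \emph{density} of the complement of $\Good_{\vecx_{j,0},\vecx_{j,1}}$ translates into a negligible \emph{probability} of abort.
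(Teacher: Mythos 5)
Your proof is correct, and it is more self-contained than the paper's: the paper disposes of this lemma with a one-line citation to Lemma~5.3 of Mahadev~\cite{Mah18a}, identifying $\vecb=0^\ell$ with the Test round and $\vecb=1^\ell$ with the Hadamard round, whereas you re-derive the statement from scratch. The computational content is the same as what Mahadev's proof contains, so this is not a conceptually different route so much as a filled-in one. Two points in your argument are worth highlighting. First, the Hadamard-basis computation is done correctly: writing $(s_j,\vecx_{s_j})=(0,\vecx_{j,0})\oplus s_j\cdot(1,\vecx_{j,0}\oplus\vecx_{j,1})$, factoring out the global sign $c(\vecd)$, and recognizing the inner sum as $\bra{\vecm}H^{\otimes\ell}\ket{\varphi}$ is exactly the standard decomposition, and the fiber count $2^{n\ell}$ for $\vecd\mapsto\vecm(\vecd)$ correctly converts the amplitude formula into $\Pr[\vecm]=|\bra{\vecm}H^{\otimes\ell}\ket{\varphi}|^2$. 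Second, your treatment of the abort event is a clean touch: deriving the pointwise bound $\Pr[\vecd_j=\delta]\le 2^{-n}$ from the fiber count and then concluding that the marginal of $\vecd_j$ is dominated by twice the uniform density on $\{0,1\}^{n+1}$ is exactly what one needs to upgrade the \emph{density} guarantee of \Cref{def:TCF-d} to a \emph{probability} guarantee without any additional assumptions on the state $\ket{\varphi}$. One small remark: the post-commitment state having the idealized form of \Cref{mahadev-post-commit-state} is already asserted in the specification of $\CommitM$ in \Cref{sec:construction:single}, so the preliminary discussion of equal branch weights via \Cref{def:TCF-b} can be cited rather than re-justified, though your account of it is accurate.
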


\begin{proof}
    This follows directly from the proof of Lemma 5.3 in \cite{Mah18a}, where the $\vecb = \{0\}^\ell$ case corresponds to the \emph{Test} round and the $\vecb = \{1\}^\ell$ case corresponds to the \emph{Hadamard} round.
\end{proof}

Recall that in \Cref{construction:single}, the  final state $\ket{\phi_3}$
is the result of applying $\CommitM$ to the state
\[ \ket{\phi_2} = \frac{1}{\sqrt{2^{n+1}}} \sum_{\vecd \in \{0,1\}^{n+1}} \beta_\vecd \ket{\vecd}. \]
in the commitment procedure (pre-measurement)


We now show that the outcome of opening $\ket{\varphi_3}$ in a basis $b \in \{0,1\}$ is statistically indistinguishable from measuring the initial state, $\ket{\psi}$, in the basis $b$. We proceed with the proof for pure states, which extends to the case of mixed and entangled states by linearity, and show correctness for each basis separately.  We treat the correctness of $\CommitM$ as a black-box. Namely, we make use of \Cref{lem:mahadev-correctness} throughout the proof.

\begin{lemma}[Opening in the Hadamard basis, $b = 1$]\label{def:SCQ-correctness-hadamard}
    For any pure single-qubit quantum state $\ket{\psi} = \alpha_0 \ket{0} + \alpha_1 \ket{1}$ and any $\TCF$ family, the distribution over the outcomes of the following two experiments are statistically indistinguishable under \Cref{construction:single}:
    \begin{itemize}
        \item \textbf{Experiment 1.} Measure $\ket{\psi}$ in the Hadamard basis and report the outcome.
        \item \textbf{Experiment 2.} Execute $\Real(1^\lambda, \ket{\psi}, b_1 = 1)$, as described in \Cref{def:SCQ-correctness}.
    \end{itemize}
\end{lemma}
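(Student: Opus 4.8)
The plan is to unfold the experiment $\Real(1^\lambda,\ket\psi,b_1=1)$ of \Cref{def:SCQ-correctness} one step at a time and compare the reported outcome with a Hadamard-basis measurement of $\ket\psi$. Spelled out, Experiment~2 samples $(\pk,\sk)\gets\Gen(1^\lambda)$ with $\pk=(\pk_0,\pk_1,\dots,\pk_{n+1})$; runs $\Commit(\pk,\ket\psi)$, producing $\vecy=(\vecy_0,\vecy_1,\dots,\vecy_{n+1})$ and the post-commitment state $\ket{\varphi_3}$; opens in the Hadamard basis, which by definition of $\Open$ means measuring $\ket{\varphi_3}$ in the \emph{standard} basis to get $\vecz=(d_1,\vecx'_1,\dots,d_{n+1},\vecx'_{n+1})$, writing $\vecd=(d_1,\dots,d_{n+1})$; runs $\Ver(\sk,\vecy,1,\vecz)$, which computes $((0,\vecx_0),(1,\vecx_1))=\Invert_\TCF(\sk_0,\vecy_0)$ and accepts iff $\Check_\TCF(\pk_i,d_i,\vecx'_i,\vecy_i)=1$ for all $i\in\{1,\dots,n+1\}$ and $\vecd\in\Good_{\vecx_0,\vecx_1}$; and, conditioned on acceptance, outputs $m=\vecd\cdot(1,\vecx_0\oplus\vecx_1)$. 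It suffices to treat pure $\ket\psi$ and pass to mixed/entangled inputs by linearity. The first step is to observe that, modulo the bookkeeping register permutation taking $\ket{\varphi_2}$ to $\ket{\varphi_3}$, the standard-basis measurement of $\ket{\varphi_3}$ together with the $\Check_\TCF$ tests is precisely the honest standard-basis opening (the $\vecb=0^{n+1}$ branch) of the weak commitment $\CommitM(\pk_1,\dots,\pk_{n+1})$ applied to $\ket{\varphi_1}=\tfrac{1}{\sqrt{2^{n+1}}}\sum_\vecd\beta_\vecd\ket{\vecd}$, plus one extra test ($\vecd\in\Good_{\vecx_0,\vecx_1}$). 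Invoking \Cref{lem:mahadev-correctness} with basis $0^{n+1}$ and input $\ket{\varphi_1}$ then yields two facts: the $\Check_\TCF$ tests all pass except with probability $\negl(\lambda)$, and the string $\vecd$ is $\negl(\lambda)$-close in total variation to the outcome of measuring $\ket{\varphi_1}$ in the standard basis, i.e.\ $\Pr[\vecd]=|\beta_\vecd|^2/2^{n+1}$.

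Next I would compute, from the explicit form of $\beta_\vecd$ in \Cref{construction:single}, that $|\beta_\vecd|^2=\big|\alpha_0+(-1)^{\vecd\cdot(1,\vecx_0\oplus\vecx_1)}\alpha_1\big|^2$, so $|\beta_\vecd|^2$ depends on $\vecd$ only through the parity $p(\vecd):=\vecd\cdot(1,\vecx_0\oplus\vecx_1)\bmod 2$, equalling $|\alpha_0+\alpha_1|^2$ when $p(\vecd)=0$ and $|\alpha_0-\alpha_1|^2$ when $p(\vecd)=1$. Since the first coordinate of $(1,\vecx_0\oplus\vecx_1)$ is $1$, this vector is nonzero, so exactly $2^n$ strings lie in each parity class; hence, up to $\negl(\lambda)$, $\vecd$ is uniform within its parity class and $\Pr[p(\vecd)=0]=|\alpha_0+\alpha_1|^2/2$.

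Then I would argue the extra test $\vecd\in\Good_{\vecx_0,\vecx_1}$ is passed except with negligible probability: $\vecx_0\neq\vecx_1$ (the perfect matching $\mathsf{M}_{\pk_0}$ pairs distinct domain elements), so by the completeness property that $\Good_{\vecx_0,\vecx_1}$ has density $1-\negl(\lambda)$ in $\{0,1\}^{n+1}$, the bad set $\overline{\Good_{\vecx_0,\vecx_1}}$ has size $\negl(\lambda)\cdot 2^{n+1}$ and therefore occupies at most a $2\cdot\negl(\lambda)$ fraction of any parity class of size $2^n$; combined with the $\negl(\lambda)$ error from the previous step this gives $\Pr[\vecd\notin\Good_{\vecx_0,\vecx_1}]=\negl(\lambda)$. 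Thus $\Ver$ accepts except with probability $\negl(\lambda)$, and conditioning on acceptance changes the law of $p(\vecd)$ by at most $\negl(\lambda)$. Since $\Out$ outputs exactly $m=p(\vecd)$, we get $\Pr[m=0]=|\alpha_0+\alpha_1|^2/2\pm\negl(\lambda)$. On the other side, $\bsigma(b_1=1)$ is a Hadamard measurement of $\ket\psi$, which outputs $0$ with probability $|\langle+|\psi\rangle|^2=|\alpha_0+\alpha_1|^2/2$; and since Experiment~2 outputs $\bot$ with probability at most $\negl(\lambda)$ while Experiment~1 never does, the two output distributions are $\negl(\lambda)$-close in total variation, which is the claim.

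I expect the only real friction to be the noisy-$\TCF$ bookkeeping hidden in the first step: justifying that measuring $\ket{\varphi_3}$ in the standard basis, followed by the $\Check_\TCF$ and $\Invert_\TCF$ calls inside $\Ver$ and $\Out$, behaves exactly as in the $\vecb=0^{n+1}$ branch of \Cref{lem:mahadev-correctness} (in particular that the preimage registers $\cZ'_i$ are consistent with the $d_i$'s up to $\negl(\lambda)$, and that $\Invert_\TCF(\sk_0,\vecy_0)$ returns the pair $(\vecx_0,\vecx_1)$ actually used by $\CommitM(\pk_0,\cdot)$), and that the two nested conditionings — first on the $\Check$'s passing, then on $\vecd\in\Good_{\vecx_0,\vecx_1}$ — each cost only $\negl(\lambda)$ in total variation. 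All of this is routine once \Cref{lem:mahadev-correctness} is taken as a black box; the genuinely new observations are simply that $|\beta_\vecd|^2$ is a function of the single parity $p(\vecd)$, and the density bound for $\Good_{\vecx_0,\vecx_1}$ restricted to a parity class.
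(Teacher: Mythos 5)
Your proof is correct and begins exactly as the paper's does: recast Experiment~2 as ``apply $\CommitM(\pk_0,\cdot)$ to $\ket{\psi}$, apply $H^{\otimes(n+1)}$, then run the $\vecb=0^{n+1}$ branch of $\Real_W$,'' and invoke \Cref{lem:mahadev-correctness} once for standard-basis openings to replace the second weak commitment by a genuine standard-basis measurement of $\ket{\varphi_1}$. The divergence is in the second step. The paper simply invokes \Cref{lem:mahadev-correctness} a second time, this time for Hadamard-basis openings, to pass directly from ``measure $\ket{\varphi_0}$ in the Hadamard basis, check $\vecd\in\Good_{\vecx_0,\vecx_1}$, output $\vecd\cdot(1,\vecx_0\oplus\vecx_1)$'' to ``measure $\ket{\psi}$ in the Hadamard basis.'' You instead reprove this special case of Mahadev's Hadamard correctness from scratch: you compute $|\beta_\vecd|^2$ explicitly, observe that it depends only on the parity $p(\vecd)=\vecd\cdot(1,\vecx_0\oplus\vecx_1)$, count that each parity class has exactly $2^n$ elements (since $(1,\vecx_0\oplus\vecx_1)\neq 0$), deduce $\Pr[p(\vecd)=0]=|\alpha_0+\alpha_1|^2/2$, and use the density condition on $\Good_{\vecx_0,\vecx_1}$ together with uniformity within parity classes to bound the rejection probability of the $\Good$ test. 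Both routes are valid and yield the same bound; the paper's is more modular (treating $\CommitM$ purely as a black box, which makes the role of the $\Good$ check invisible in the equivalent procedure it writes down), while yours is more self-contained and makes explicit where the $\Good$ check is discharged, at the cost of re-deriving a fragment of \Cref{lem:mahadev-correctness}. The noisy-$\TCF$ bookkeeping you flag at the end is real but is already absorbed by the first invocation of the lemma, so it is no more of an issue for your route than for the paper's.
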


\begin{proof}
 By inspection, it can be seen that the distribution of outcomes obtained from $\Real(1^\lambda, \ket{\psi}, b_1 = 1)$ here is the same as the outcome obtained from the following procedure:
 \begin{enumerate}
     \item Generate keys $(\sk_0, \pk_0)$.
     \item Apply the weak commitment once to get $\vecy_0, \ket{\phi_1} \leftarrow \Commit_W(\pk_0, \ket{\psi})$.
     \item Apply a Hadamard transform to the state to get
     $\ket{\phi_2} = H^{\otimes (n+1)} \ket{\phi_1}$.
     \item Execute $\Real_W(1^\lambda, \ket{\phi_2}, 0^{n+1})$ to obtain an outcome $\vecd = (d_0, \dots, d_{n+1})$.
     \item Report an outcome $\vecd \cdot (1, \vecx_0 \oplus \vecx_1)$, where $\{(b,\vecx_b)\}_{b = 0,1} = \Invert_\TCF(\sk_0, \vecy_0)$.
 \end{enumerate}
 By \Cref{lem:mahadev-correctness}  for standard basis openings, the distribution over $\vecd$ statistically close to the distribution obtained by measuring $\ket{\phi_2}$ in the standard basis. This, in turn, by construction is equal to the distribution obtained by measuring $\ket{\phi_1}$ in the Hadamard basis. Finally, by applying \Cref{lem:mahadev-correctness} again, this time for Hadamard basis openings, this implies that the distribution of $\vecd \cdot (1, \vecx_0 \oplus \vecx_1)$ is statistically close to the distribution obtained by measuring $\ket{\psi}$ in the Hadamard basis.

\end{proof}

\begin{lemma}[Opening in the standard basis, $b = 0$]\label{def:SCQ-correctness-standard}
For any pure single-qubit quantum state $\ket{\psi} = \alpha_0 \ket{0} + \alpha_1 \ket{1}$ and any $\TCF$ family, the distribution over the outcomes of the following two experiments are statistically indistinguishable under \Cref{construction:single}:
    \begin{itemize}
        \item \textbf{Experiment 1.} Measure $\ket{\psi}$ in the standard basis and report the outcome.
        \item \textbf{Experiment 2.} Execute $\Real(1^\lambda, \ket{\psi}, b_1 = 0)$, as described in \Cref{def:SCQ-correctness}.
    \end{itemize}
\end{lemma}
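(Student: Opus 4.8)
The plan is to mirror the proof of \Cref{def:SCQ-correctness-hadamard}, exploiting the fact that in \Cref{construction:single} both openings are \emph{native} measurements, so that composing the Mahadev weak commitment $\CommitM$ with a Hadamard transform simply interchanges the two bases. Unwinding the definitions, the distribution $\Real(1^\lambda,\ket{\psi},b_1=0)$ from \Cref{def:SCQ-correctness} is identical to the outcome of the following experiment: (i)~sample $(\pk_i,\sk_i)\leftarrow\Gen_\TCF(1^\lambda)$ for $i\in\{0,1,\dots,n+1\}$; (ii)~compute $(\vecy_0,\ket{\phi_1})\leftarrow\CommitM(\pk_0,\ket{\psi})$, so that $\ket{\phi_1}=\alpha_0\ket{0,\vecx_0}+\alpha_1\ket{1,\vecx_1}$ with $\vecy_0\in\Supp(\Eval_\TCF(\pk_0,s,\vecx_s))$ for $s\in\{0,1\}$; (iii)~set $\ket{\phi_2}=H^{\tensor(n+1)}\ket{\phi_1}$; (iv)~apply $\CommitM$ qubit-by-qubit to $\ket{\phi_2}$ with keys $\pk_1,\dots,\pk_{n+1}$, obtaining the (rearranged) post-commitment state $\ket{\phi_3}$ and strings $\vecy_1,\dots,\vecy_{n+1}$; (v)~measure $\ket{\phi_3}$ in the Hadamard basis to get $\vecz=(\vecz_1,\dots,\vecz_{n+1})$ with each $\vecz_i\in\{0,1\}^{n+1}$; (vi)~for each $i$ set $((0,\vecx'_{i,0}),(1,\vecx'_{i,1}))=\Invert_\TCF(\sk_i,\vecy_i)$, output $\bot$ if some $\vecz_i\notin\Good_{\vecx'_{i,0},\vecx'_{i,1}}$, and otherwise put $m_i=\vecz_i\cdot(1,\vecx'_{i,0}\oplus\vecx'_{i,1})$; (vii)~output $\bot$ if $\Check_\TCF(\pk_0,m_1,(m_2,\dots,m_{n+1}),\vecy_0)\neq 1$, and $m_1$ otherwise.

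The key step will be to observe that steps (iv)--(vi) are exactly the Hadamard round of $\CommitM$ applied to the $(n+1)$-qubit state $\ket{\phi_2}$; that is, they realize the experiment $\Real_W(1^\lambda,\ket{\phi_2},1^{n+1})$ of \Cref{lem:mahadev-correctness} (with $\vecz_i$ in the role of $\vecd_j$). Conditioning on the first-layer outcome $\vecy_0$ --- so that $\ket{\phi_2}$ is a fixed pure state independent of the fresh keys $\pk_1,\dots,\pk_{n+1}$ --- and invoking \Cref{lem:mahadev-correctness} in the Hadamard case, the vector $(m_1,\dots,m_{n+1})$ produced in step (vi) (or $\bot$) is within $\negl(\lambda)$ total-variation distance of the distribution obtained by measuring $\ket{\phi_2}$ in the Hadamard basis. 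Since $\ket{\phi_2}=H^{\tensor(n+1)}\ket{\phi_1}$ and $H^2=I$, this equals the distribution of measuring $\ket{\phi_1}$ in the standard basis; and $\ket{\phi_1}=\alpha_0\ket{0,\vecx_0}+\alpha_1\ket{1,\vecx_1}$ is supported entirely on computational-basis strings $(s,\vecx_s)$ that are valid preimages of $\vecy_0$ under $\Eval_\TCF(\pk_0,s,\cdot)$, a standard-basis measurement returning $(s,\vecx_s)$ with probability $|\alpha_s|^2$.

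It then remains to handle the extra test in step (vii), which is the only point where the argument departs from the Hadamard case and the step I expect to need the most care. By the completeness property \Cref{def:TCF-c} of the $\TCF$ family, $\Check_\TCF(\pk_0,s,\vecx_s,\vecy_0)=1$ whenever $\vecy_0\in\Supp(\Eval_\TCF(\pk_0,s,\vecx_s))$; combined with the previous paragraph, the test in (vii) therefore accepts except with probability $\negl(\lambda)$, and conditioned on its acceptance the experiment outputs $m_1=s$. Putting the pieces together will give that $\Real(1^\lambda,\ket{\psi},0)$ outputs a bit within $\negl(\lambda)$ of the distribution $(|\alpha_0|^2,|\alpha_1|^2)$, i.e.\ of measuring $\ket{\psi}=\alpha_0\ket{0}+\alpha_1\ket{1}$ in the standard basis; the extension to mixed and entangled $\ket{\psi}$ follows by linearity, as noted before the lemma. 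En route I would check two routine bookkeeping points: that the register relabelling in the last step of $\Commit$ is consistent with how $\Ver$ and $\Out$ parse $\vecz$ into the blocks $\vecz_1,\dots,\vecz_{n+1}$, so that a Hadamard measurement of $\ket{\phi_3}$ genuinely implements the Hadamard round of $\CommitM$ on $\ket{\phi_2}$; and that $\pk_1,\dots,\pk_{n+1}$ are sampled independently with the distribution demanded by \Cref{lem:mahadev-correctness}, so that the lemma applies verbatim once $\pk_0,\vecy_0,\ket{\psi}$ are fixed.
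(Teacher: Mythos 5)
Your proposal is correct and takes essentially the same route as the paper: unwind the real experiment, apply \Cref{lem:mahadev-correctness} in the Hadamard case, use $H^2=I$ to convert to a standard-basis measurement of $\ket{\varphi_1}$, and then verify the final $\Check_\TCF$ test. The only cosmetic difference is that you handle the last step by direct inspection of $\ket{\varphi_1}$ together with \Cref{def:TCF-c}, while the paper invokes \Cref{lem:mahadev-correctness} a second time in the standard-basis case; the two arguments are equivalent.
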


\begin{proof}
By inspection, it can be seen that the distribution of outcomes obtained from $\Real(1^\lambda, \ket{\psi}, b_1 = 1)$ here is the same as the outcome obtained from the following procedure:
 \begin{enumerate}
     \item Generate keys $(\sk_0, \pk_0)$.
     \item Apply the weak commitment once to get $\vecy_0, \ket{\phi_1} \leftarrow \Commit_W(\pk_0, \ket{\psi})$.
     \item Apply a Hadamard transform to the state to get
     $\ket{\phi_2} = H^{\otimes (n+1)} \ket{\phi_1}$.
     \item Execute $\Real_W(1^\lambda, \ket{\phi_2}, 1^{n+1})$ to obtain an outcome $\vecm = (m_0, \dots, m_{n})$.
     \item If $\Check_\TCF(\pk_0, m_0, (m_1, \dots, m_{n}), \vecy_0) =1$, output $m_0$; else, output $\bot$.
 \end{enumerate}
By \Cref{lem:mahadev-correctness}, applied in the Hadamard basis case, the outcome $\vecm$ has a distribution that is statistically close to the outcome of a Hadamard basis measurement of $\ket{\phi_2}$. By construction, this is equal to the distribution of the outcome of a standard basis measurement of $\ket{\phi_1}$. Finally, by \Cref{lem:mahadev-correctness}, applied in the standard basis case, the distribution of a standard outcome of $\ket{\phi_1}$ will pass the check $\Check_\TCF(\pk_0, m_0, (m_1, \dots, m_{n}), \vecy_0)$ with probability negligibly close to $1$, and the bit $m_0$ will be distributed close to the distribution obtained by measuring $\ket{\psi}$ in the standard basis.

\end{proof}
\begin{proof}[Proof of \Cref{def:SCQ-correctness-theorem-single}]
The theorem follows immediately from \Cref{def:SCQ-correctness-hadamard} and \Cref{def:SCQ-correctness-standard}.
\end{proof}
We now proceed with the proof of correctness for the succinct commitment scheme.

\begin{theorem}\label{def:SCQ-correctness-theorem-succinct}
   The succinct multi-qubit commitment scheme described in \Cref{def:SCQ-succinct-construction-multi} satisfies the correctness property given in \Cref{def:SCQ-succinct-correctness}.  
\end{theorem}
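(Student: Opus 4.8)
# Proof Proposal for \texorpdfstring{\Cref{def:SCQ-correctness-theorem-succinct}}{Theorem (Succinct Correctness)}

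The plan is to reduce succinct correctness to the correctness of the underlying semi-succinct scheme, which in turn was reduced (via \Cref{def:SCQ-correctness-theorem} and \Cref{def:SCQ-correctness-theorem-single}) to the single-qubit case. There are two conditions to verify, matching the two distributions in \Cref{succinct-correctness-equivalence}: the $c=0$ case (an honest $\Open$ after an honest $\Ver.\Commit$ decodes to the correct measurement distribution) and the $c=1$ case (an honest prover always passes $\Test$).

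First I would handle the $c=0$ case. Here the only additional machinery beyond the semi-succinct scheme is the collapsing hash family with local opening and the state-preserving argument of knowledge for $\cL^*$. Since in $\Ver.\Commit$ both $P$ and $V$ are \emph{classical} (the witness $\vecy$ is a classical string), the post-commitment quantum state $\brho$ is untouched, and by the completeness of the state-preserving succinct argument of knowledge (\Cref{thm:kilian}), $V$ accepts with probability $1-\negl(\secp)$. Conditioned on acceptance, $\Open$ produces $(\vecy_j, \veco_j, \vecz_j)$ where $\veco_j = \Open_\mathsf{H}(\hk,\vecy,j)$ and $(\vecz_j,\brho') \gets \Open_\ss(\brho,(j,b_j))$. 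By opening completeness of the hash family (\Cref{def:HT}), $\Ver_\mathsf{H}(\hk,\rt,j,\vecy_j,\veco_j)=1$ with probability $1-\negl(\secp)$ and the recovered chunk is exactly the $\vecy_j$ from $\Commit_\ss$. Therefore $\Ver$ and $\Out$ in \Cref{def:SCQ-succinct-construction-multi} reduce (up to negligible error) to $\Ver_\ss(\sk_1,\vecy_j,b_j,\vecz_j)$ and $\Out_\ss(\sk_1,\vecy_j,b_j,\vecz_j)$ — i.e., exactly running the semi-succinct open/verify/output on qubit $j$. Applying the correctness of the semi-succinct scheme (which follows from \Cref{def:SCQ-correctness-theorem-single} qubit-by-qubit, as in \Cref{construction:multi}) over all $j \in J$ yields $\Real_{c=0}(1^\lambda,\bsigma,J,\vecb_J) \stackrel{\negl(\secp)}{\equiv} \bsigma(J,\vecb_J)$. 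The statement as phrased asks for exact equivalence; I would either note that the single-qubit correctness is itself only statistical (so the theorem should be read up to $\negl$) or absorb the negligible terms as the paper does elsewhere.

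Next the $c=1$ case: an honest $P_\Test$ must make $V_\Test$ output $1$ with probability $1$ (or $1-\negl$). Walking through $\Test$: $V_\Test$ sends $b$; $P_\Test$ honestly computes $\vecz \gets \Open_\ss(\brho,b^\ell)$ and sends $\rt' = \Eval_\mathsf{H}(\hk,\vecz)$; the argument of knowledge for $\cL^*$ on instance $(\hk,\rt')$ with witness $\vecz$ is accepted by completeness since indeed $\Eval_\mathsf{H}(\hk,\vecz)=\rt'$; then $V_\Test$ sends $\sk_1$, and the argument of knowledge for $\cL^{**}$ on instance $(\sk_1,\hk,\rt,\rt',b)$ with witness $(\vecy,\vecz)$ is accepted, provided $(\sk_1,\hk,\rt,\rt',b),(\vecy,\vecz)) \in \cR_{\cL^{**}}$. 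The first two clauses $\Eval_\mathsf{H}(\hk,\vecy)=\rt$ and $\Eval_\mathsf{H}(\hk,\vecz)=\rt'$ hold by construction of $\Commit$ and step 2 of $\Test$. The third clause, $\Ver_\ss(\sk_1,\vecy,b^\ell,\vecz)=1$, is exactly the statement that the honest semi-succinct opening of all $\ell$ qubits in the uniform basis $b^\ell$ is accepted — which follows from the correctness of the semi-succinct scheme (again via \Cref{def:SCQ-correctness-theorem-single} applied to each qubit, noting $b^\ell \in \{0\}^\ell \cup \{1\}^\ell$ and the $\CommitM$-correctness \Cref{lem:mahadev-correctness}). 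Hence $V_\Test$ accepts both arguments and outputs $1$, giving $\Real_{c=1}(1^\lambda,\bsigma,J,\vecb_J) \stackrel{\negl(\secp)}{\equiv} 1$.

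The main obstacle, such as it is, is bookkeeping rather than a genuine mathematical difficulty: one must carefully track that the classical-ness of $\Ver.\Commit$ (and of the $P$ in the AoK sub-protocols when invoked on classical witnesses) means the quantum state $\brho$ carried into $\Open$ or into the $\Open_\ss$ inside $\Test$ is exactly the honest post-commitment state, so that the semi-succinct correctness applies verbatim. One subtlety worth flagging explicitly: in $\Test$, $P_\Test$ must perform the destructive measurement $\Open_\ss(\brho,b^\ell)$ \emph{before} running the AoK for $\cL^{**}$, and the AoK then only proves a statement about the already-extracted classical $\vecz$; since that AoK is run with a classical witness it does not further disturb anything, so completeness goes through. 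I would also remark that the $\Test$ branch and the $\Open$ branch are mutually exclusive (selected by the coin $c$ in the $\mathsf{CHECK}$ phase), so there is no interference between the two conditions in \Cref{succinct-correctness-equivalence}. Assembling these observations gives the theorem.
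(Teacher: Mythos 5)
Your proposal is correct and takes essentially the same approach as the paper's (much terser) proof: reduce the $c=0$ case to the single-qubit correctness theorem via the semi-succinct scheme, using completeness of the state-preserving argument of knowledge and hash-opening completeness to handle the extra machinery, and handle the $c=1$ case by walking through $\Test$ and invoking completeness of the AoK together with semi-succinct $\Ver_\ss$ correctness. Your extra attention to the classical-ness of the AoK sub-protocols (so $\brho$ is undisturbed) and to the $\negl$ slack in the definitional ``$\equiv$'' are reasonable clarifications the paper leaves implicit.
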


\begin{proof}
    By correctness of the state-preserving succinct argument of knowledge protocol \cite{LombardiMS22}, $\Ver.\Commit$ in \Cref{def:SCQ-succinct-construction-multi} accepts with probability $1-\negl(\lambda)$. The $\Commit$ algorithm in that construction uses $\Commit_\ss$ as a black box, which consists of applying the $\Commit_1$ procedure to each qubit of $\bsigma$ under the same public key $\pk_1$. Therefore, for the $c=0$ case in \Cref{def:SCQ-succinct-correctness}, correctness holds by \Cref{def:SCQ-correctness-theorem-single}. The $c = 1$ case follows from the correctness of the state-preserving argument of knowledge \cite{LombardiMS22}.
\end{proof}

\subsection{Binding} \label{sec:binding:single}
In this section we prove the following two theorems.

\begin{theorem}\label{thm:main:sound-non-succinct}
       The non-succinct commitment scheme described in \Cref{def:SCQ-construction-multi} satisfies the binding property given in \Cref{def:binding} (assuming the existence of a $\TCF$ family). 
\end{theorem}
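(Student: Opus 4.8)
The plan is to reduce the binding property of the non-succinct multi-qubit scheme (\Cref{construction:multi}) to the binding property of the single-qubit scheme (\Cref{construction:single}), and then prove the single-qubit case directly. Since \Cref{construction:multi} commits to each qubit independently, with a fresh key pair per qubit and an $\Open/\Ver/\Out$ that act qubit-by-qubit, I expect most of this theorem to follow from a hybrid argument over the $\ell$ qubits, \emph{provided} the single-qubit scheme satisfies binding. So the bulk of the work is: (i) state and prove single-qubit binding for \Cref{construction:single}, and (ii) lift it to $\ell$ qubits without an exponential blowup, exploiting that the extractor only needs $\cC^*.\Open$ to succeed on the all-standard basis $\mathbf{0}$ and all-Hadamard basis $\mathbf{1}$ (as set up in the definition of $\delta$ in \Cref{deltaj} and discussed in \Cref{remark:ss-with-repetition}).

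\textbf{Single-qubit binding: the two parts.} Following the decomposition announced in \Cref{sec:intro:construction,sec:overview}, I would prove binding for \Cref{construction:single} in two pieces. \emph{Part (1), state extraction.} Given $\cC^*.\Commit$ producing $(\vecy_0,\ldots,\vecy_{n+1})$ and post-commitment state $\brho$, I would define $\Ext^{\cC^*.\Open}$ using the ``swap isometry'' technique from \cite{Vid20-course}: run $\cC^*.\Open$ coherently in the Hadamard basis (i.e.\ the unitary dilation of the standard-basis measurement applied in our scheme for $b=0$), purify, and apply the swap isometry built from $\Eval_\TCF$ and $\Invert_\TCF$ for $\pk_0$ to pull out a qubit register $\cA$. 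Correctness of this extraction relies on the $\Check$ tests that our $\Ver$ performs in \emph{both} bases (the check $\Check_\TCF(\pk_0, m_1, (m_2,\ldots,m_{n+1}), \vecy_0)=1$ in the $b=0$ branch, and the $\Good$/\,$\Check$ tests in the $b=1$ branch), together with the gentle-measurement lemmas \Cref{lem:gentle-infant,lem:gentle-mixed} to handle the $\delta$-fraction of rejecting openings, yielding the $O(\sqrt{\delta})$ loss. \emph{Part (2), opening-independence.} For any two $\cC^*_1.\Open, \cC^*_2.\Open$, I would argue the output distributions coincide up to $O(\sqrt\delta)$. Here the key tool is the collapsing property of the $\TCF$ family (\Cref{claim:collaps-binding}, and its $\ell$-extended form \Cref{claim:ext-collapsing}): conditioned on an accepting opening, measuring the relevant registers in the standard basis is indistinguishable from not measuring them, so the post-commitment state is ``pinned down'' and any valid opening algorithm must produce the same output distribution. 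Concretely, in the Hadamard-basis opening ($b=1$) one invokes collapsing on the functions generating $\vecy_1,\ldots,\vecy_{n+1}$; in the standard-basis opening ($b=0$) one invokes collapsing on the function generating $\vecy_0$ (via the observation that $\Out$ computes $\vecd\cdot(1,\vecx_0\oplus\vecx_1)$ which is a standard-basis readout after the swap isometry). Using \Cref{lem:control-distance-z} lets me convert between the ``controlled-$U$'' and ``$Z$'' formulations (cf.\ \Cref{remark:collapsing}) when relating the two opening procedures through the Hadamard layer $H^{\otimes(n+1)}$ inserted by $\Commit$.

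\textbf{Lifting to $\ell$ qubits.} With single-qubit binding in hand, I would run a hybrid argument: start from $\Real^{\cC^*_i.\Open}$ and, qubit by qubit, replace the actual opening of qubit $j$ with a measurement of the $j$-th extracted register $\btau^{(j)}_\cA$. Because each qubit uses an independent key pair $(\pk_j,\sk_j)$, the single-qubit guarantee applies to each hybrid step; the distinguishing advantage telescopes to $\ell$ times a per-step error, but since all errors are statistical (our non-succinct scheme achieves \emph{statistical} closeness, as emphasized in \Cref{sec:overview}) and we only invoke single-qubit binding on the bases $b\in\{0,1\}$ for which $\cC^*_i.\Open$ succeeds with probability $1-\delta$ — which is exactly what $\delta$ in \Cref{deltaj} controls via its max over $\vecb'\in\{\vecb,\mathbf 0,\mathbf 1\}$ — the aggregate error stays $O(\sqrt\delta)$ without picking up the $\ell$ factor in a damaging way (one is careful to use a union bound on the rejection events, not on indistinguishability gaps that each already scale with the per-qubit rejection probability). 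The extractor $\Ext$ for the multi-qubit scheme is simply the tensor of the single-qubit extractors applied to each coordinate of $\cC^*.\Open(\cdot,\mathbf b)$ — here using that $\cC^*.\Open$ run on $\mathbf 0$ and $\mathbf 1$ gives, coordinatewise, valid single-qubit openings in both bases.

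\textbf{Main obstacle.} I expect the hard part to be Part (1) — showing a \emph{single} extracted state is consistent with \emph{both} bases simultaneously. The difficulty is that our scheme's standard-basis opening is not a native measurement of the original committed qubit: it goes through the second Mahadev layer and the Hadamard transform, so proving that the swap isometry extracts a state whose standard-basis statistics match what $\Out$ reports requires carefully chaining the correctness analysis of $\CommitM$ (\Cref{lem:mahadev-correctness}) with the collapsing-based argument, while tracking unnormalized projectors $\Pi_{O,b}$ and the $O(\sqrt\delta)$ gentle-measurement errors through several transformations. Getting the error bookkeeping to land at exactly $\eta\le 10\sqrt\delta$ (as promised in the remark after \Cref{def:binding}) is where the technical care concentrates; the $\ell$-qubit lift and Part (2) are comparatively routine given the single-qubit analysis and the (extended) collapsing claim.
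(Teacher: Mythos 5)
Your high-level decomposition into "state extraction" (Eq.~\eqref{eqn:binding1}) and "opening-independence" (Eq.~\eqref{eqn:binding2}) matches the paper's structure, and your identification of collapsing as the engine for opening-independence is correct. But there are two genuine gaps.

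First, your plan for Part~(1) omits the adaptive hardcore bit property (\Cref{item:HCB} of \Cref{def:TCF}). You propose to prove extraction correctness from the $\Check$ tests, gentle measurement, correctness of $\CommitM$, and collapsing. That combination can show that distinct opening algorithms agree (Part~2), but it cannot show that the reported Hadamard-basis output is consistent with measuring a \emph{qubit} rather than some deterministic or otherwise ill-behaved function. The paper's \Cref{lemma:non-succinct-real-to-ideal} explicitly relies on the adaptive hardcore bit, invoking it on the key $\pk_{j,0}$ to argue that the Hadamard output $\vecd\cdot(1,\vecx_0\oplus\vecx_1)$ is indistinguishable from a uniform bit conditioned on the adversary's view (cf.~Eq.~\eqref{eqn:m*}); the remark at the end of \Cref{sec:binding:single} even shows that without some form of this property the scheme's outcomes are detectably non-qubit-like. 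Collapsing gives you binding but not soundness of the claim "it looks like a qubit."

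Second, your $\ell$-qubit lift does not go through as described. You propose taking "the tensor of the single-qubit extractors applied to each coordinate of $\cC^*.\Open(\cdot,\vecb)$." But $\cC^*.\Open$ is a single $\BQP$ algorithm that may entangle its openings across coordinates, so there is no well-defined per-coordinate "single-qubit opener" against which to run a single-qubit extractor independently. The paper instead defines, for the joint algorithm, $2\ell$ commuting operational observables $\{P_{X_i},P_{Z_i}\}_{i\in[\ell]}$ derived from the \emph{same} unitaries $U_0,U_1$, and builds a single $\ell$-qubit extracted state via a teleportation-style gadget acting simultaneously on all coordinates (see the definition of $\Ext$ after \Cref{def:obs-multi} and the calculations around Eq.~\eqref{eq:prai} and Eq.~\eqref{eq:praj}). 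Relatedly, the paper's handling of the error accumulation for Part~(2) is not a union-bound hybrid over qubits but an induction with a deliberately strengthened hypothesis (the auxiliary unitaries $V_1,V_2$ in \Cref{lemma:soundness-single1}), precisely so that the per-step error $2\sqrt{\delta_j}$ telescopes correctly to the total $\eta$ without a multiplicative $\ell$ factor.
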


\begin{theorem}\label{thm:main:sound-semi-succinct}
The semi-succinct multi-qubit commitment scheme described in \Cref{def:SCQ-construction-multi} (\Cref{remark:semi-succinct}) satisfies the binding property given in \Cref{def:binding} if the underlying single qubit commitment scheme is the one from \Cref{sec:construction:single} and the underlying $\TCF$ family is the one from \cite{BCMVV18} and assuming it satisfies the {\em distributional strong adaptive hardcore bit property} (see \Cref{def:dist-stat-HCB}).\footnote{We recall that \cite{BCMVV18} satisfies the distributional strong adaptive hardcore bit property under $\LWE$ (see \Cref{claim:Bra-is-stat}). }  
\end{theorem}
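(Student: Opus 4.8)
The plan is to \textbf{bootstrap the binding of the semi-succinct scheme from that of the non-succinct scheme} (\Cref{thm:main:sound-non-succinct}), isolating the single new phenomenon — reusing one key across all $\ell$ qubits makes the \emph{decoded} bits leak the secret key — and neutralizing it with the distributional strong adaptive hardcore bit property. We use the same extractor $\Ext$ as in the non-succinct analysis (which, per \Cref{remark:ss-with-repetition}, invokes $\cC^*.\Open$ on the all-standard basis $0^\ell$ and the all-Hadamard basis $1^\ell$ on fresh copies to synthesize $\btau$), and, as anticipated in \Cref{stronger-binding-remark}, we will actually prove the stronger statement of \Cref{lemma:soundness-single1}: by induction on $\ell$, and permitting $\cC^*.\Open$ to depend on the publicly-derivable part of $\sk$, so that the hypothesis composes when one qubit is peeled off. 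The easy half is \Cref{eqn:binding2} (the two opening algorithms agree): the non-succinct proof is purely information-theoretic once one invokes collapsing of the $\TCF$ family, and the many-register collapsing of \Cref{claim:ext-collapsing} is indifferent to whether the per-qubit keys are independent or shared — its reduction merely hands the adversary $\pk$. Since our non-succinct scheme attains \emph{statistical} closeness here, the bound survives any leakage of $\sk$, and \Cref{eqn:binding2} holds with $\eta=O(\sqrt{\delta})$.

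The content is in \Cref{eqn:binding1}, $\Real^{\cC^*.\Open}(\secp,\vecb,\bsigma)\approx\Ideal^{\Ext,\cC^*.\Open}(\secp,\vecb,\bsigma)$. Write $I=\{i:b_i=0\}$, $J=\{i:b_i=1\}$, and split the decoded output $\vecm=(\vecm_I,\vecm_J)$. Reading off \Cref{construction:single}: for $i\in I$, both $\Ver$ and $\Out$ touch only the inner trapdoors $\sk'=(\sk_1,\dots,\sk_{n+1})$ (and the public key $\pk_0$), and the non-succinct argument shows that $\vecm_I$ is \emph{statistically} close between $\Real$ and $\Ideal$ — an argument that, being statistical, is unharmed by conditioning on $\sk'$. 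For $i\in J$, $\Out$ returns $m_i=\vecd_i\cdot(1,\vecx_{i,0}\oplus\vecx_{i,1})$ with $((0,\vecx_{i,0}),(1,\vecx_{i,1}))=\Invert(\sk_0,\vecy_{i,0})$, so $\vecm_J$ can leak the outer trapdoor $\sk_0$ — and the secrecy of $\sk_0$ is exactly what the non-succinct argument for $\vecm_J$ uses (through collapsing of the keys producing $\vecy_{i,0}$, via the swap isometry of \cite{Vid20-course}). So the only discrepancy between the two proofs is that the $\ell$ bits $\{m_i\}_{i\in J}$ might jointly reveal $\sk_0$.

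We close this in two steps. First, a \textbf{min-entropy lemma}: for each $i\in J$, conditioned on $\cC^*.\Open$ being accepted (on all three of $\vecb$, $0^\ell$, $1^\ell$), the $\Good$-projection $\vecd'_i:=\Good(\vecx_{i,0},\vecx_{i,1},\vecd_i)$ has min-entropy $\omega(\log\secp)$ even given the other qubits' openings and the trapdoors $\sk'$, except with negligible probability. This is an uncertainty-type argument, and is where our extra opening check buys us something that Mahadev's weak commitment does not: standard-basis binding of the inner commitment $(\vecy_{i,1},\dots,\vecy_{i,n+1})$ identifies $\vecd_i$ with a standard-basis measurement of the extracted inner state, while the fact that our \emph{standard}-basis opening of qubit $i$ is itself a Hadamard-basis measurement (which must pass the $\Check(\pk_0,\cdot,\vecy_{i,0})$ test) forces that same extracted state to be spread out in the standard basis; the adaptive hardcore bit property of the $\TCF$ is then used to carry this through in the presence of the auxiliary information from the other qubits. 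Second, we feed this into the distributional strong adaptive hardcore bit property (\Cref{def:dist-stat-HCB}, which \cite{BCMVV18} satisfies under $\LWE$ by \Cref{claim:Bra-is-stat}): the non-succinct reduction of $\vecm_J$'s indistinguishability to the adaptive hardcore bit property of $\pk_0$ is upgraded, in the semi-succinct setting, to a reduction to the \emph{distributional strong} version, by taking its algorithm $\A$ to be $\cC^*.\Commit$ followed by the $\vecy_{i,0}$-preimage extraction for qubit $i$, and its auxiliary-input algorithm $C$ to be the procedure that produces $\{m_{i'}\}_{i'\ne i}$ and $\vecm_I$ from $\sk'$, $\sk_{\mathsf{pre}}$, and fresh copies opened by $\cC^*.\Open$ — the min-entropy lemma being precisely the hypothesis $C$ must satisfy. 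A hybrid over $i\in J$ then transports the non-succinct bound to the semi-succinct scheme, with total error $O(\sqrt{\delta})+\ell\cdot\negl(\secp)=O(\sqrt{\delta})$.

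The main obstacle is the min-entropy lemma — specifically the \emph{conditioning}: all $\ell$ qubits use the same $\TCF$ keys, so the auxiliary data from the other qubits is correlated with qubit $i$'s opening, and ruling out a low-entropy $\vecd'_i$ even given that data is what forces both the inductive formulation and the "$\cC^*.\Open$ depends on part of $\sk$" strengthening of \Cref{stronger-binding-remark}. A secondary point to check is that the auxiliary-input algorithm $C$ of \Cref{def:dist-stat-HCB} is genuinely realizable \emph{without} $\sk_0$: producing each $m_{i'}$ needs one preimage of $\vecy_{i',0}$, which $C$ recovers from the standard-basis opening of a fresh copy (decoded with $\sk'$) and then completes to a claw using $\sk_{\mathsf{pre}}$ — exactly the affordance \Cref{def:dist-stat-HCB} grants.
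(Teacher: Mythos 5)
Your proposal tracks the paper's proof closely in overall architecture: the same extractor built from operational observables, the same split of $\Real$ vs.\ $\Ideal$ by basis, the same observation that the standard-basis outcomes $\vecm_I$ are handled statistically (via the non-succinct argument, unharmed by leakage of $\sk'=(\sk_1,\dots,\sk_{n+1})$), and the same plan of handling $\vecm_J$ by (a) a min-entropy lemma for $\vecd'_i=\Good(\vecx_{i,0},\vecx_{i,1},\vecd_i)$ and (b) a hybrid invocation of the distributional strong adaptive hardcore bit property with $C$ producing the auxiliary openings from $\sk'$ and $\sk_{\mathsf{pre}}$ only. This is the paper's decomposition.

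However, the min-entropy lemma is exactly where the paper does all its technical work, and your account of its proof is too vague and, where it is concrete, points at the wrong mechanism. You describe it as ``an uncertainty-type argument'' driven by the $\Check(\pk_0,\cdot,\vecy_{i,0})$ test on the outer key. In the paper (\Cref{claim:min-entropy}), the proof is a direct cryptographic reduction to the \emph{inner} keys' adaptive hardcore bit property: if some $\Good$-outcome $\vecd'$ were predictable with probability $\geq\epsilon$, one can find a subset $S\subseteq[n]$ of size $m=n^{0.1}$ of coordinates of $\vecd^*$ that are each predictable with probability $\geq 1-1/\sqrt{n}$; measuring those coordinates (plus the paired $\vecx^*_i$ registers) disturbs the state by at most $n^{0.1}\cdot n^{-1/4}=o(1)$, so one can still run $\cC^*.\Open$ in the standard basis afterward and decode; the standard-basis check then hands the reduction, for every $i\in S$, both a preimage $(d^*_i,\vecx^*_i)$ of $\vecy^*_{i+1}$ under $\pk_{i+1}$ and a hardcore-bit equation $\vecz_{i+1}\cdot(1,\vecx'_{i,0}\oplus\vecx'_{i,1})=\vecx_{\gamma,i}$, which contradicts the adaptive hardcore bit for $\pk_{i+1}$. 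The specific choice $|S|=n^{0.1}$ against the predictability threshold $1-1/\sqrt n$ is what makes the disturbance bound and the density argument ($\Pr[\Good(\cdot)=\vecd']\leq (1-1/\sqrt n)^{n-m}$) simultaneously go through; this is not something an ``uncertainty'' framing will produce on its own. Secondarily, your hybrid ``over $i\in J$'' is coarser than the paper's: the paper's hybrids index the number of randomized coordinates of $\vecs_J$ (\Cref{eqn:induction}), with the accumulated operator $\prod_{i<\alpha} P^{s_i}_{Z_{j_i}}$ applied to the state, which is the structure that makes \Cref{claim:blah}'s state $\brho^*_{\vecs_{[\alpha-1]}}$ constructible by $\A$ without $\sk_0$ (\Cref{claim:rho*}); also, $C$ in the paper acts on the $\cO_2$ registers of the same committed state, not on ``fresh copies.'' None of this changes the top-level plan, but the min-entropy step as written would not survive being filled in without importing the paper's reduction.
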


To prove the above two theorems we need to prove that both the non-succinct and the semi-succinct commitment schemes described in \Cref{def:SCQ-construction-multi} satisfy \Cref{eqn:binding1} and \Cref{eqn:binding2} of the binding property (\Cref{def:binding}).

\begin{remark}
We note that \Cref{eqn:binding2} only relies on the fact that the underlying $\TCF$ family is collapsing (as defined in \Cref{claim:collaps-binding}), whereas \Cref{eqn:binding1} relies on the adaptive hardcore bit property for the non-succinct scheme and on the specific properties of the $\TCF$ family from \cite{BCMVV18} (specifically, the distributional strong adaptive hardcore bit property) for the semi-succinct scheme.
\end{remark}

We start by proving that both the non-succinct and the semi-succinct commitment schemes satisfy \Cref{eqn:binding2}. 
We actually prove a stronger version of \Cref{eqn:binding2}, stated below.

\begin{lemma}\label{lemma:soundness-single1}[Stronger version of \Cref{eqn:binding2}]
For any $\BQP$ algorithm $\cC^*.\Commit$ and quantum state $\bsigma$, any purification $\ket{\varphi}$ of $\bsigma$, any $\BQP$ algorithms $\cC^*_1.\Open$ and $\cC^*_2.\Open$, any $\vecb\in\{0,1\}^\ell$, and any efficient unitaries $V_1$ and $V_2$ there exists a negligible function $\mu=\mu(\secp)$ such that 
    \begin{align*}
&\E_{\substack{(\pk, \sk) \leftarrow \Gen(1^\lambda)\\
(\vecy, \ket{\psi}) \leftarrow \cC^*.\Commit(\pk, \ket{\varphi})}}
\| U_1^\dagger U_\Out^{\dagger} \CNOT_{\mathsf{copy},\out} U_\Out U_1 V_{\mathsf{ext},1}\ket{\psi_\mathsf{ext}}
 - U_{2}^\dagger U_\Out^{\dagger} \CNOT_{\mathsf{copy},\out} U_\Out U_{2} V_{\mathsf{ext},2}\ket{\psi_\mathsf{ext}}\|_2\leq\\
 &~~~~~~~~~~~~~~~~~~~~~~~~~~~~~{\eta+\epsilon+\mu}
\end{align*}
where 
\begin{itemize}
\item $\epsilon= \E\limits_{\substack{(\pk, \sk) \leftarrow \Gen(1^\lambda)\\
(\vecy, \ket{\psi}) \leftarrow \cC^*.\Commit(\pk, \ket{\varphi})}} \|V_1\ket{\psi}-V_2\ket{\psi}\|_2$.
\item $\eta=\sum_{j=1}^{\ell}2\sqrt{\delta_j}$ for
   \begin{equation*}
   \delta_j\triangleq \E_{\substack{(\pk, \sk) \leftarrow \Gen(1^\lambda)\\
(\vecy, \ket{\psi}) \leftarrow \cC^*.\Commit(\pk, \ket{\varphi})\\(\vecz_i,\brho'_i)\gets \cC^*_i.\Open(\brho,\vecb)}}\max_{i\in\{1,2\}}\Pr[\Ver(\sk, \vecy, (j,b_j), \vecz_{i,j})=0~|~\Ver(\sk, \vecy, (k,b_k), \vecz_{i,k})=1~\forall k\in[j-1]]
   \end{equation*}
where $\vecz_i=(\vecz_{i,j})_{j=1}^\ell$.
\item $\ket{\psi_\mathsf{ext}}=\ket{0^\ell}_\mathsf{copy}\tensor\ket{0^\ell}_\mathsf{out}\tensor\ket{\vecb}_\mathsf{basis}\tensor\ket{\psi}$.
\item For every $i\in\{1,2\}$, $V_{\mathsf{ext},i}=I_\mathsf{copy}\tensor I_\mathsf{out}\tensor I_\mathsf{basis}\tensor V_i$.
\item For every $i\in\{1,2\}$, $U_i$ is the unitary  defined by applying $\cC^*_i.\Open$ to the registers $\mathsf{open}$ and $\mathsf{basis}$.
\item $U_\Out$ is the unitary defined by first applying the unitary corresponding to $\Ver(\sk,\vecy,\cdot,\cdot)$ to registers $\mathsf{open}$ and $\mathsf{basis}$, and controlled on $\Ver$ accepting, applying the unitary corresponding to $\Out(\sk,\vecy,\cdot,\cdot)$ to registers $\mathsf{open}$ and $\mathsf{basis}$, and writing the output on the register $\out$.
    \item $\CNOT_{\mathsf{copy},\out}$ applies a $\CNOT$ to registers $\mathsf{copy}$ and $\out$ (i.e., it copies register $\out$ to register $\mathsf{copy}$).  
\end{itemize}
Moreover, $\cC^*_1.\Open$ and $\cC^*_2.\Open$ can be $\BQP$ given $\sk_1,\ldots,\sk_{n+1}$ when opening in the standard basis and $\QPT$ when opening in the Hadamard basis.\footnote{This generalization is needed to obtain \Cref{corr:binding-real}.} Alternatively, they can be $\BQP$ given $\sk_0$ when opening in the Hadamard basis and  $\QPT$ when opening in the standard basis.
\end{lemma}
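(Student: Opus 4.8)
The plan is to prove \Cref{lemma:soundness-single1} by induction on the number of qubits $\ell$, using the single-qubit scheme of \Cref{construction:single} as the base case. The statement is really a ``two-opening consistency'' claim dressed up in purified form: once both opening algorithms are run coherently (via the unitaries $U_1,U_2$), followed by the coherent $\Ver$/$\Out$ unitary and a $\CNOT$ that copies the decoded output, the two resulting states should be $2$-norm close. The quantity $\epsilon$ accounts for the fact that the pre-processing unitaries $V_1,V_2$ need not be equal, and the $\eta=\sum_j 2\sqrt{\delta_j}$ term accounts for rejection probability qubit-by-qubit (a telescoping of gentle-measurement errors). So the first step is to reduce to the single-qubit case: since $\Commit,\Open,\Ver,\Out$ in \Cref{construction:multi} (and its semi-succinct variant) operate qubit-by-qubit, the multi-qubit $\CNOT_{\mathsf{copy},\out}U_\Out$ factors as a product over $j\in[\ell]$ of single-qubit gadgets acting on disjoint output registers, and one peels off one qubit at a time. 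The inductive step introduces one more $\sqrt{\delta_j}$ via \Cref{lem:gentle-mixed}/\Cref{lem:gentle-infant}, conditioning on the event that the first $j-1$ qubits verified.

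The heart of the matter is therefore the single-qubit statement, and this is where the collapsing property of the $\TCF$ family (\Cref{claim:collaps-binding}, \Cref{remark:collapsing}) enters. I would argue separately for the two bases. For a Hadamard opening ($b=1$): $\Out$ decodes $\vecd\cdot(1,\vecx_0\oplus\vecx_1)$ after $\Ver$ checks that each $(d_i,\vecx'_i)$ is a valid preimage of $\vecy_i$ and that $\vecd\in\Good$. The claim that two opening algorithms produce indistinguishable (indeed statistically close) outputs here is essentially that the reported output is ``already measured'' — i.e. committing to $\ket{\varphi_1}$ via $\vecy_1,\dots,\vecy_{n+1}$ pins down the standard-basis content of those registers. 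This is exactly $\ell$-extended collapsing (\Cref{claim:ext-collapsing}) applied to the $n+1$ commitments $\vecy_1,\dots,\vecy_{n+1}$: conditioned on $\Ver$ accepting, measuring the $\cZ_i$ (i.e. $\cS_j$) registers in the standard basis is undetectable, so applying $\CNOT_{\mathsf{copy},\out}U_\Out$ is within $O(\sqrt{\delta})$ of first collapsing and then copying, and the post-collapse state is the same regardless of which $U_i$ was used (up to the $V_i$ discrepancy $\epsilon$). For a standard opening ($b=0$): the symmetric argument uses collapsing of $\vecy_0$ (the first Mahadev commitment) — after $\Ver$ runs the Mahadev-Hadamard decoding on each block and checks $\Check_\TCF(\pk_0,(m_1,\dots,m_{n+1}),\vecy_0)=1$, the string $(m_1,\dots,m_{n+1})$ is a verified preimage of $\vecy_0$, so collapsing w.r.t.\ $\vecy_0$ lets us treat it as measured, and again the two opening algorithms agree. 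The two ``moreover'' clauses about which $\sk_i$'s the adversary may hold are handled by noting that the collapsing reduction for the $b=0$ case only needs $\sk_0$ hidden (the decoding of $\vecy_1,\dots,\vecy_{n+1}$ uses $\sk_1,\dots,\sk_{n+1}$ which can be given to the adversary), and vice versa for $b=1$; this is why the lemma is stated with that asymmetric leakage allowance, and it is what \Cref{corr:binding-real} will consume.

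The bookkeeping obstacle — and the reason the lemma is phrased in this elaborate purified form rather than as a plain distributional statement — is tracking the $\CNOT$-to-$\mathsf{copy}$ ``pinning'' across the induction without losing coherence, while simultaneously carrying the $V_1$ vs.\ $V_2$ discrepancy and the conditional rejection probabilities. Concretely: after copying the output of qubit $j$, the residual state on the remaining $\ell-j$ output registers must still be close for the two branches, and one has to invoke \Cref{lem:control-distance-z} (or the gentle lemmas) to convert ``$\Ver$ accepts with probability $1-\delta_j$'' into a $2\sqrt{\delta_j}$ perturbation of the coherent state, uniformly over the earlier-qubits-accepted conditioning. I expect the single-qubit collapsing arguments themselves to be fairly clean given \Cref{claim:ext-collapsing}; the main work is the careful triangle-inequality chain that accumulates the per-qubit errors into $\eta+\epsilon+\mu$ and verifies that the $\epsilon$ term really only enters once (via $\|V_1\ket{\psi}-V_2\ket{\psi}\|_2$) rather than once per qubit. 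I would set up the induction hypothesis to already include the $\sum_{k\le j}2\sqrt{\delta_k}$ partial sum and the single $\epsilon$, so that the inductive step is just one application of collapsing plus one gentle-measurement step.
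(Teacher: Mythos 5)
Your proposal is correct and follows essentially the same route as the paper: induction on $\ell$ with the single-qubit base case reduced to the (extended) collapsing property of the $\TCF$ family, splitting on $b=0$ (collapsing w.r.t.\ $\vecy_0$, with $\sk_1,\dots,\sk_{n+1}$ available to the adversary) versus $b=1$ (extended collapsing w.r.t.\ $\vecy_1,\dots,\vecy_{n+1}$, with $\sk_0$ available), and the inductive step peeling off one qubit with a single $2\sqrt{\delta_j}$ gentle-measurement penalty while carrying the $V_1$-vs-$V_2$ discrepancy $\epsilon$ as the base-case input. The one detail you elide is that the base case is phrased contrapositively via a distinguisher for the collapsing game (the helper calculation that a $\|\cdot\|_2$-gap of $\epsilon$ on a coherent superposition gives the adversary advantage $\epsilon^2/4$), but the substance matches.
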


\begin{corollary}\label{corr:binding-real}
For any $\BQP$ algorithm $\cC^*.\Commit$ and quantum state $\bsigma$, any $\BQP$ algorithms $\cC^*_1.\Open$ and $\cC^*_2.\Open$, and any $\vecb\in\{0,1\}^\ell$,
    \[
     \Real^{\cC^{*}_1.\Open}(\secp, \vecb,\bsigma)
 \stackrel{2(\sqrt{\delta_0}+\sqrt{\delta_1})}\approx\Real^{\cC^{*}_2.\Open}(\secp, \vecb,\bsigma)
    \]
    where denoting by $I_b=\{i\in[\ell]:\vecb_i=b\}$,
    \[
    \delta_b=\E_{\substack{(\pk, \sk) \leftarrow \Gen(1^\lambda)\\
    (\vecy, \brho) \leftarrow \cC^*.\Commit(\pk, \bsigma)\\ \vecz_{i}\gets\cC^*_i.\Open(\brho,\vecb)}}\max_{i\in\{1,2\}}\Pr[\Ver\left(\sk,\vecy,(I_b,b^{|I_b|}),\vecz_{i,I_b}\right)=0]
    \]
    where $\vecz_{i,I_b}=(\vecz_{i,j})_{j\in I_b}$.

    Moreover, $\cC^*_1.\Open$ and $\cC^*_2.\Open$ can be $\BQP$ given $\sk_1,\ldots,\sk_{n+1}$ when opening in the standard basis and $\QPT$ when opening in the Hadamard basis.
\end{corollary}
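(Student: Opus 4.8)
The plan is to derive \Cref{corr:binding-real} from \Cref{lemma:soundness-single1} in essentially two moves: first specialize the lemma to the trivial setting $V_1 = V_2 = I$, and second recognize that the quantities appearing in the lemma's conclusion collapse onto the two real experiments $\Real^{\cC^*_i.\Open}$. Concretely, with $V_1 = V_2 = I$ we have $\epsilon = 0$, so the lemma says that the states
\[
\ket{\xi_i} := U_i^\dagger U_\Out^{\dagger} \CNOT_{\mathsf{copy},\out} U_\Out U_i \ket{\psi_\mathsf{ext}}, \qquad i \in \{1,2\},
\]
satisfy $\E \|\ket{\xi_1} - \ket{\xi_2}\|_2 \le \eta + \mu$ where $\eta = \sum_{j=1}^\ell 2\sqrt{\delta_j}$. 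The point of the $\CNOT$-copy construction (a standard device, e.g.\ from Mahadev's analysis as presented in \cite{Vid20-course}) is that measuring the $\mathsf{copy}$ register of $\ket{\xi_i}$ in the computational basis produces exactly the distribution of $\vecm = \Out(\sk, \vecy, \vecb, \cC^*_i.\Open(\brho, \vecb))$, conditioned appropriately on acceptance, while leaving a residual state; the uncomputation by $U_i^\dagger U_\Out^\dagger$ after the copy ensures the residual state is ``back to $\ket{\psi_\mathsf{ext}}$'' so that the norm distance is a meaningful measure of closeness of the induced output distributions. Thus $\| \ket{\xi_1} - \ket{\xi_2} \|_2$ being small (in expectation over key generation and commitment) implies, via the standard fact that trace distance of measurement outcomes is bounded by the Euclidean distance of the underlying pure states, that $\Real^{\cC^*_1.\Open}(\secp,\vecb,\bsigma)$ and $\Real^{\cC^*_2.\Open}(\secp,\vecb,\bsigma)$ are $(\eta + \mu)$-close, and since $\mu$ is negligible this gives statistical closeness with parameter $O(\eta)$.

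The remaining work is to match $\eta = \sum_{j=1}^\ell 2\sqrt{\delta_j}$ against the bound $2(\sqrt{\delta_0} + \sqrt{\delta_1})$ claimed in the corollary, where now $\delta_b$ (corollary version) is the rejection probability when opening \emph{only} the index set $I_b = \{j : \vecb_j = b\}$ in basis $b^{|I_b|}$. The idea is that $\delta_j$ (lemma version) is a conditional per-qubit rejection probability, and for the qubits in $I_0$ the relevant conditioning/rejection event factors through the joint standard-basis opening, while for the qubits in $I_1$ it factors through the joint Hadamard-basis opening — these are precisely the events measured by $\Ver(\sk, \vecy, (I_b, b^{|I_b|}), \cdot)$. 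I would argue that $\sum_{j \in I_b} \sqrt{\delta_j} \le \sqrt{|I_b| \sum_{j \in I_b} \delta_j} \le \dots$, or more simply use a union-bound / telescoping argument: the event that \emph{some} qubit $j \in I_b$ is rejected given all earlier ones in $I_b$ accepted has probability at most $\sum_{j \in I_b}\delta_j$, and conversely a single $\sqrt{\cdot}$ of the total rejection probability dominates $\sum \sqrt{\delta_j}$ only up to a $\sqrt{|I_b|}$ factor. To get the clean statement with no $\sqrt{\ell}$ loss one actually wants the reverse inequality $\sum_j \sqrt{\delta_j} \le \sqrt{\big(\sum_j \delta_j\big)} \cdot \text{(something)}$ — so the precise argument is that the $\delta_j$'s can be chosen/bounded so that $\sum_{j\in I_b} 2\sqrt{\delta_j} \le 2\sqrt{\delta_b}$, which holds when one defines the $\delta_j$ as \emph{increments} of the cumulative rejection probability rather than conditional probabilities. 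I would reconcile the two formulations by taking $\delta_j := p_j - p_{j-1}$ where $p_k$ is the probability that the first $k$ openings (restricted to $I_b$) all accept fails, so that $\sum_{j \in I_b} \delta_j = \delta_b$ and then $\sum \sqrt{\delta_j} \le \sqrt{|I_b|\,\delta_b}$ — but since the corollary wants exactly $\sqrt{\delta_b}$ and not $\sqrt{|I_b|\delta_b}$, the honest route is that \Cref{lemma:soundness-single1}'s $\eta$ is already stated with a per-qubit telescoping that, when specialized to a \emph{single} basis $b$ on the index set $I_b$, telescopes to $2\sqrt{\delta_b}$ by concavity of $\sqrt{\cdot}$ applied once rather than $\ell$ times; I would need to re-examine the exact statement of $\delta_j$ in the lemma to confirm which of these is intended. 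This bookkeeping — converting the product-of-per-qubit-square-roots bound in the lemma into the clean two-term bound in the corollary — is the main (and really the only) technical obstacle; everything else is a direct instantiation.

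For the ``moreover'' clause about $\cC^*_i.\Open$ being only $\BQP$ given subsets of the secret key, I would simply note that \Cref{lemma:soundness-single1} already carries the identical ``moreover'' clause verbatim, so this part of the corollary is immediate from the corresponding part of the lemma — no additional argument is needed beyond observing that the specialization $V_1 = V_2 = I$ does not interact with the key-access assumptions. Finally, I would remark that $\bsigma$ rather than a purification $\ket{\varphi}$ appears in the corollary: since the lemma is stated for an arbitrary purification $\ket{\varphi}$ of $\bsigma$ and the real experiments only depend on $\bsigma$ (the purifying register is never touched by $\Commit$, $\Open$, $\Ver$, $\Out$), we may invoke the lemma with any fixed purification and the conclusion descends to $\bsigma$. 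In summary: instantiate \Cref{lemma:soundness-single1} with trivial $V_i$, interpret the $\CNOT$-copy-and-uncompute state distance as closeness of output distributions via the standard pure-state-distance-bounds-trace-distance inequality, and repackage the telescoped per-qubit error $\eta$ as $2(\sqrt{\delta_0}+\sqrt{\delta_1})$ by grouping qubits according to their basis and applying concavity of the square root once per group.
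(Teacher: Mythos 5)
There is a genuine gap in your proposal, and you actually point at it yourself: you correctly observe that the lemma gives $\eta = \sum_{j=1}^\ell 2\sqrt{\delta_j}$, that a Cauchy--Schwarz or union-bound pass in the wrong direction introduces a $\sqrt{|I_b|}$ factor, and that you therefore "would need to re-examine the exact statement of $\delta_j$" to get the clean $2(\sqrt{\delta_0}+\sqrt{\delta_1})$. Re-examining the definition of $\delta_j$ alone will not close the gap. The lemma's $\delta_j$ is a fixed conditional rejection probability determined by whatever opening algorithm you feed in, and if you invoke \Cref{lemma:soundness-single1} directly on $\cC^*_1.\Open$ and $\cC^*_2.\Open$, the sum $\sum_j 2\sqrt{\delta_j}$ genuinely has $\ell$ nonzero terms and cannot be compressed to two.

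The missing idea is that one must first \emph{replace} $\cC^*_i.\Open$ with a modified opening algorithm $\cC^{**}_i.\Open$ before applying the lemma. On input $(\brho, (j,b))$, $\cC^{**}_i.\Open$ coherently computes the full opening $\vecz \gets \cC^*_i.\Open(\brho, \vecb)$, checks that $\Ver(\sk,\vecy,(I_b,\vecb_{I_b}),\vecz)=1$, and outputs $\vecz_j$ only if that whole-group check passes; otherwise it outputs $\bot$. This change does not alter the distribution of the real experiment --- the verifier's accept event is unchanged --- so $\Real^{\cC^*_i.\Open} \equiv \Real^{\cC^{**}_i.\Open}$. But it makes the per-qubit acceptance events "all-or-nothing" within each basis group: conditioned on qubit $k$ with $b_k = b_j$ having been accepted earlier, qubit $j$ is accepted with probability $1$, so $\delta_j = 0$ for every $j$ that is not the first index in its basis group. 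The sum $\sum_{j=1}^\ell 2\sqrt{\delta_j}$ then has at most two nonzero terms, giving exactly $2\sqrt{\delta_0} + 2\sqrt{\delta_1}$. Your "moreover" clause also requires a bit more than you say: one has to check that $\cC^{**}_i.\Open$ still meets the lemma's efficiency constraint, which holds because the whole-group $\Ver$ check used by $\cC^{**}_i.\Open$ does not need $\sk_0$ when verifying a Hadamard-basis opening and uses only $\sk_1,\ldots,\sk_{n+1}$ for a standard-basis opening. Aside from the missing $\cC^{**}$ modification, your reduction outline (setting $V_1=V_2=I$ so $\epsilon=0$, interpreting the state-distance bound as closeness of output distributions, and descending from a purification to $\bsigma$) matches the paper.
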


\paragraph{Proof of \Cref{corr:binding-real}}
Fix any $\BQP$ algorithm $\cC^*.\Commit$ and quantum state $\bsigma$, any algorithms $\cC^*_1.\Open$ and $\cC^*_2.\Open$ as in the statement of \Cref{corr:binding-real}, and any basis $\vecb$.  For every $i\in\{1,2\}$ we slightly change $\cC^*_i.\Open$ to $\cC^{**}_i.\Open$, as follows: 
$\cC^{**}_i.\Open(\brho,(j,b))$ coherently computes $\vecz\gets \cC^*_i.\Open(\brho,\vecb)$ and outputs $\vecz_j$ if $\Ver(\sk,\vecy,(I_b,\vecb_{I_b}),\vecz)=1$, and otherwise it outputs $\bot$.\footnote{Note that $\vecb_{I_b}=b^{|I_b|}$}
Note that $\cC^{**}_i.\Open$ remains a $\BQP$ algorithm when opening in the Hadamard basis since $\Ver$ does not use $\sk$ when verifying a Hadamard basis opening, whereas it uses $\sk_1,\ldots,\sk_{n+1}$ when opening in the standard basis. Thus $\cC^{**}_1.\Open$ and $\cC^{**}_2.\Open$ satisfy the efficiency conditions of \Cref{lemma:soundness-single1}.  In addition, note that 
for every $i\in\{1,2\}$,
\[
 \Real^{\cC^{*}_i.\Open}(\secp,\vecb,\bsigma)\equiv  \Real^{\cC^{**}_i.\Open}(\secp,\vecb,\bsigma).
\]
where $(\pk,\sk)\gets\Gen(1^\secp)$ and $(\vecy,\brho)\gets \cC^*.\Commit(\pk,\bsigma)$.
By \Cref{lemma:soundness-single1} for any purification $\ket{\varphi}$ of $\bsigma$ there exists a negligible function~$\mu$ such that 
    \begin{align*}
&\E_{\substack{(\pk, \sk) \leftarrow \Gen(1^\lambda)\\
(\vecy, \ket{\psi}) \leftarrow \cC^*.\Commit(\pk, \ket{\varphi})}}\| U_1^\dagger U_\Out^{\dagger} \CNOT_{\mathsf{copy},\out} U_\Out U_1 \ket{\psi_\mathsf{ext}}
 - U_{2}^\dagger U_\Out^{\dagger} \CNOT_{\mathsf{copy},\out} U_\Out U_{2} \ket{\psi_\mathsf{ext}}\|_2\\
 &\leq ~~~~~~~~~~~~~~~~~~~~~~~~~~{\eta+\mu}
\end{align*}
where $U_i$ is the unitary defined by $\cC^{**}_i.\Open$, and $U_\Out$ and $\eta$ are as defined in \Cref{lemma:soundness-single1}.
It remains to observe that $\eta\leq 2\sqrt{\delta_0}+2\sqrt{\delta_1}$, which follows from the the definition of $\cC^{**}_i.\Open$, which asserts that $\delta_j=0$ if there exists $k\in\{1,\ldots,j-1\}$ for which $b_j=b_k$.
\qed

\paragraph{Proof of \Cref{lemma:soundness-single1}}  We prove this lemma for the semi-succinct variant of the multi-qubit commitment scheme described in \Cref{def:SCQ-construction-multi}.
The proof for the non-succinct variant is identical.
The proof is by induction on~$\ell$.

\paragraph{Base case: $\ell=1$.}
Fix  any $\BQP$ algorithm $\cC^*.\Commit$, a quantum state $\bsigma$,  algorithms $\cC^*_1.\Open$ and $\cC^*_2.\Open$, basis $b\in\{0,1\}$, and efficient unitaries $V_1$ and $V_2$, as in the lemma statement. Also fix a purification $\ket{\varphi}$ of $\bsigma$.
Suppose for the sake of contradiction that there exists a non-negligible $\xi=\xi(\secp)$ such that 
 \begin{align*}
&\E_{\substack{(\pk, \sk) \leftarrow \Gen(1^\lambda)\\
(\vecy, \ket{\psi}) \leftarrow \cC^*.\Commit(\pk, \ket{\varphi})}}
\| U_1^\dagger U_\Out^{\dagger} \CNOT_{\mathsf{copy},\out} U_\Out U_1 V_{\mathsf{ext},1}\ket{\psi_\mathsf{ext}}
 - U_{2}^\dagger U_\Out^{\dagger} \CNOT_{\mathsf{copy},\out} U_\Out U_{2} V_{\mathsf{ext},2}\ket{\psi_\mathsf{ext}}\|_2\geq\\
 &~~~~~~~~~~~~~~~~~~~~~~~~~~~~~{\eta+\epsilon+\xi}
\end{align*} 

We construct a  
 $\BQP$ adversary~$\A$ that uses the $\BQP$  committer~$\cC^*.\Commit$, its purified state $\ket{\varphi}$, and the unitaries $U_1,U_2,V_1,V_2,U_\Out$ to break the collapsing property of the underlying $\TCF$ family (\Cref{def:collapsing}). We break the collapsing property as formulated in \Cref{remark:collapsing}.  We distinguish between the case that $b=0$ and the case that $b=1$.

\paragraph{Case 1: $b=0$.}
The adversary $\A$ operates as follows:
\begin{enumerate}
    \item \textbf{Advserary:} Upon receiving a public key $\pk_0$ from the challenger, where $(\pk_0,\sk_0)\leftarrow \Gen_\TCF(1^\secp)$:
    \begin{enumerate}
        \item For every $i\in[n+1]$ generate $(\pk_i,\sk_i)\leftarrow\Gen_\TCF(1^\secp)$.        
        \item Let $\pk=(\pk_0,\pk_1,\ldots,\pk_{n+1})$.
        \item Compute $(\vecy,\ket{\psi})\leftarrow \cC^*.\Commit(\pk,\ket{\varphi})$. 
        \item Parse $\vecy=(\vecy_0,\vecy_1,\ldots,\vecy_{n+1})$ 
%

\item Let $\ket{\psi'}=U\left(\ket{+}_\mathsf{coin}\tensor \ket{\psi_\mathsf{ext}}\right)$, where 
        $$U=\ket{0}\bra{0}_{\mathsf{coin}} \otimes U_\Out U_{1}V_{\mathsf{ext},1}+ \ket{1}\bra{1}_{\mathsf{coin}} \otimes  U_\Out U_{2}V_{\mathsf{ext},2}$$ 
        Recall that $U_\Out$ first computes $\Ver$ which in \Cref{item:comp:m} computes $\vecm\in\{0,1\}^{n+1}$. $U_\Out$ stores in register $\out$ the output, which is the first bit of $\vecm$. We denote by $\mathsf{preimage}$ the registers that store the last $n$ bits of $\vecm$.

\item 
Send to the challenger the string $\vecy_0$ and the registers $\out$ and $\mathsf{preimage}$  of $\ket{\psi'}$.  
 
Notice that since $b=0$, $U_\Out$ (as possibly $U_1$ and $U_2$) use only the secret keys $(\sk_1,\ldots,\sk_{n+1})$, which $\A$ knows, and thus $\A$ can efficiently apply the unitary~$U$ to the state $\ket{+}_\mathsf{coin}\tensor \ket{\psi_{\mathsf{ext}}}$. 
        
\end{enumerate}
    \item \textbf{Challenger:} Recall that 
   the challenger applies in superposition the algorithm $\Check_\TCF$ to the state it receives w.r.t.\ public key $\pk_0$ and the image string $\vecy_0$, and measures the bit indicating whether the output of $\Check_\TCF$ is~$1$. If this is not the case it sends $\bot$.  Otherwise, it chooses a random bit $u\gets\{0,1\}$ and measures this state if and only if $u=1$.  It then sends the resulting state to the adversary.  

   Note that by the two-to-one nature of the underlying $\TCF$ family, measuring the entire state is equivalent to measuring only the first qubit of the state, i.e., register $\mathsf{out}$.  Thus, we can assume that the challenger measures only register $\mathsf{out}$ if and only if $u=1$.

In addition, note that conditioned on the challenger not outputting $\bot$, the state is projected to  $\Pi_{\Ver}\ket{\psi'}$ (up to normalization), where $\Pi_{\Ver}\ket{\psi'}$ is the state $\ket{\psi'}$ projected to the challenger accepting the state.  Consider the state $\CNOT^u_{\mathsf{copy},\mathsf{out}}\Pi_{\Ver}\ket{\psi'}$. Note that this state, with the $\mathsf{copy}$ register excluded, is indistinguishable from the state returned from the challenger conditioned on choosing the random bit~$u$.  Thus we think of the adversary as receiving this state.
    
    \item 
    \textbf{Adversary:}
    If the adversary receives $\bot$ from the challenger, then it outputs a uniformly random $u'$.
    Note that this occurs with probability at most $\delta$.
    
    Otherwise, the adversary $\A$ receives the registers $\out$ and $\mathsf{preimage}$ from the challenger (either measured or not, depending on $u$). The joint state of the adversary and challenger at this point is  $\CNOT^u_{\mathsf{copy},\mathsf{out}}\Pi_{\Ver}\ket{\psi'}$, where all registers except $\mathsf{copy}$ are held by the adversary.  The adversary does the following:
    \begin{enumerate}
\item Let 
\[
U' =\ket{0}\bra{0}_{\mathsf{Coin}} \otimes U_{1}^\dagger U_\out^\dagger + \ket{1}\bra{1}_{\mathsf{Coin}} \otimes U_{2}^\dagger U_\out^\dagger .
\]

\item Apply $U'$ to the adversary's system, resulting in the joint state 
\begin{align*}
&U' \CNOT^u_{\mathsf{copy},\mathsf{out}}\Pi_{\Ver}\ket{\psi'}=\\
&U'\CNOT^u_{\mathsf{copy},\mathsf{out}}\Pi_{\Ver}U\ket{\psi_{\mathsf{ext}}} =\\
&\ket{0}\bra{0}_{\mathsf{Coin}}\tensor U^\dagger_{1}  U_\out^\dagger \CNOT^u_{\mathsf{copy},\mathsf{out}}\Pi_\Ver U_\out U_1 V_{\mathsf{ext},1}(\ket{+}_{\mathsf{Coin}}\tensor\ket{\psi_{\mathsf{ext}}}) + \\
&\ket{1}\bra{1}_{\mathsf{Coin}}\tensor 
 U^\dagger_{2}  U_\out^\dagger \CNOT^u_{\mathsf{copy},\mathsf{out}}\Pi_\Ver U_\out U_2 V_{\mathsf{ext},2}(\ket{+}_{\mathsf{Coin}}\tensor\ket{\psi_{\mathsf{ext}}})
\end{align*}

\item Output the measurement of the $\mathsf{Coin}$ register in the Hadamard basis, denoted by~$u'$ (i.e., $u'=0$ if the measurement is $\ket{+}$  and is $u'=1$ if the measurement is $\ket{-}$).
    \end{enumerate}
\end{enumerate}
Consider the states 
\[
U^\dagger_{1}  U_\out^\dagger \CNOT^u_{\mathsf{copy},\mathsf{out}}\Pi_\Ver U_\out U_1 V_{\mathsf{ext},1}\ket{\psi_{\mathsf{ext}}}~~\mbox{ and }~~  
 U^\dagger_{2}  U_\out^\dagger \CNOT^u_{\mathsf{copy},\mathsf{out}}\Pi_\Ver U_\out U_2 V_{\mathsf{ext},2}\ket{\psi_{\mathsf{ext}}}
\]
Note that for $u=0$, these states are $(2\sqrt{\delta}+\epsilon)$-close in $\|\cdot \|_2$ distance.  This follows from the fact that by \Cref{lem:gentle-infant}, together with the assumption that the probability that $\ket{\psi_{\mathsf{ext}}}$ opens successfully is $\geq 1 - \delta$, it holds that for every $i\in\{1,2\}$:
\begin{align*}
    \E_{\substack{(\pk, \sk) \leftarrow \Gen(1^\lambda)\\
(\vecy, \ket{\psi}) \leftarrow \cC^*.\Commit(\pk, \ket{\varphi})}}\| U^\dagger_{i}  U_\out^\dagger \Pi_\Ver U_\out U_i V_{\mathsf{ext},i}\ket{\psi_{\mathsf{ext}}}
 - U^\dagger_{i}  U_\out^\dagger  U_\out U_i V_{\mathsf{ext},i}\ket{\psi_{\mathsf{ext}}}\|_2 \leq \sqrt{\delta},
\end{align*}
and from our assumption that 
\begin{align*}
   \epsilon= \E_{\substack{(\pk, \sk) \leftarrow \Gen(1^\lambda)\\
(\vecy, \ket{\psi}) \leftarrow \cC^*.\Commit(\pk, \ket{\varphi})}}\| V_{\mathsf{ext},1}\ket{\psi_{\mathsf{ext}}}
 - V_{\mathsf{ext},2}\ket{\psi_{\mathsf{ext}}}\|_2.
\end{align*}
This implies that there exists a negligible function~$\mu$ such that 

 \begin{align*}
&\E_{\substack{(\pk, \sk) \leftarrow \Gen(1^\lambda)\\
(\vecy, \ket{\psi}) \leftarrow \cC^*.\Commit(\pk, \ket{\varphi})}}
\| U_1^\dagger U_\Out^{\dagger} \CNOT_{\mathsf{copy},\out} U_\Out U_1 V_{\mathsf{ext},1}\ket{\psi_\mathsf{ext}}
 - U_{2}^\dagger U_\Out^{\dagger} \CNOT_{\mathsf{copy},\out} U_\Out U_{2} V_{\mathsf{ext},2}\ket{\psi_\mathsf{ext}}\|_2\leq\\
 &~~~~~~~~~~~~~~~~~~~~~~~~~~~~~{2\sqrt{\delta}+\epsilon+\mu}
\end{align*}

On the other hand, by our contradiction assumption, for $u=1$, these two states are $(2\sqrt{\delta}+\epsilon)$-far.  This, together with \Cref{claim:aux} below, implies that $\A$ indeed breaks the collapsing property of the underlying $\TCF$ family.
\begin{proposition}\label{claim:aux}
    For any two states $\ket{\psi_0}$ and $\ket{\psi_1}$ such that $\left\| \ket{\psi_0} - \ket{\psi_1} \right\|=\epsilon$, and for $\ket{\varphi}=\frac{1}{\sqrt{2}}\ket{0}\ket{\psi_0}+\frac{1}{\sqrt{2}}\ket{1}\ket{\psi_1}$, it holds that
    $$\Pr[H[\varphi]\rightarrow 1] = \frac{\epsilon^2}{4}.$$
\end{proposition}
\begin{proof}
We calculate
\begin{align*}
    \Pr[H[\varphi] \mapsto 1] &= \norm{(\bra{1} \otimes I) H \ket{\varphi}}^2 \\
    &= \norm{(\bra{1} \otimes I) \left(\frac{1}{\sqrt{2}} \ket{+} \ket{\psi_0} + \frac{1}{\sqrt{2}} \ket{-} \ket{\psi_1} \right)}^2 \\
    &= \left\| \frac{1}{2} \ket{\psi_0} - \frac{1}{2} \ket{\psi_1} \right\|^2 \\
    &= \frac{1}{4} \epsilon^2. \qedhere
\end{align*}
\end{proof}

\paragraph{Case 2:  $b=1$.}  We show how to use the adversary $\A$ to break the extended collapsing game (see \Cref{claim:ext-collapsing}). 
The adversary $\A$ operates as follows:
\begin{enumerate}
    \item Upon receiving public keys $(\pk_1,\ldots,\pk_{n+1})$ from the challenger, where $(\pk_i,\sk_i)\leftarrow \Gen_\TCF(1^\secp)$ for every $i\in[n+1]$, do the following:
    \begin{enumerate}
        \item Generate $(\pk_0,\sk_0)\leftarrow\Gen_\TCF(1^\secp)$.
        \item Let $\pk=(\pk_0,\pk_1,\ldots,\pk_{n+1})$.
       \item Compute $(\vecy,\ket{\psi})\leftarrow \cC^*.\Commit(\pk,\ket{\varphi})$.
      \item Parse $\vecy=(\vecy_{0},\vecy_{1},\ldots,\vecy_{n+1})$.
        \item Compute $((0,\vecx_0),(1,\vecx_1))=\Invert_{\TCF}(\sk_0,\vecy_{0})$.
        \item Let $J= \{j\in\{2,\ldots,n+1\}: x_{0,j-1}\oplus x_{1,j-1}=1\}\cup \{1\}$.

\item As in the $b=0$ case, define
          $$U=\ket{0}\bra{0}_{\mathsf{coin}} \otimes U_\Out U_{1}V_{\mathsf{ext},1}+ \ket{1}\bra{1}_{\mathsf{coin}} \otimes  U_\Out U_{2}V_{\mathsf{ext},2}$$ 
        and prepare the state $\ket{\psi'}=U(\ket{+}_\mathsf{coin}\tensor\ket{\psi_\mathsf{ext}})$.

        Note that since $b=1$ it holds that $\ket{\psi'}$ can be computed efficiently given $\sk_0$

         \item For every $j\in [J]$, denote by $\cX_j$ and $\cZ_j$ the registers in $\brho'$  corresponding to $d_j$ and $x'_j$, respectively.
     
         \item Send  $J$, $\{\vecy_{j}\}_{j\in J}$ and the  registers $\{\cX_j,\cZ_j\}_{j\in J}$ of $\ket{\psi'}$.  

         \end{enumerate}

          \item Recall that the challenger applies in superposition the algorithm $\Check$ to the state it received w.r.t.\ the image strings $\{\vecy_j\}_{j\in J}$, where the $j$'th check is w.r.t~$\pk_j$, and measures the bit indicating whether the output of $\Check$ is~$1$. If any of the outputs of $\Check$ are $0$, the challenger immediately halts and sends $\bot$ to the adversary. Otherwise, it chooses a random bit $u\gets\{0,1\}$ and applies $Z^u$ to every $\cX_j$ register.  It then sends the resulting state to the adversary.

          \item 
          If the adversary receives $\bot$, it returns a uniformly random $u'$.
          Otherwise, observe that once the adversary receives the state from the challenger, it is in possession of all the quantum registers.  At this point, they are, up to normalization, in the state $Z^u_J \Pi_{\Ver,J}\ket{\psi'}$, where $Z_J=\prod_{j\in J}Z_{\cX_{j}}$ and $\Pi_{\Ver,J}\ket{\psi'}$ is the state $\ket{\psi'}$ projected to an accepting state. 
    
      It then does the following:
\begin{enumerate}
\item 
Let 
\[
U' =\ket{0}\bra{0}_{\mathsf{Coin}} \otimes U_{1}^\dagger U_\out^\dagger + \ket{1}\bra{1}_{\mathsf{Coin}} \otimes U_{2}^\dagger U_\out^\dagger .
\]

 \item Apply $U'$ to its registers, resulting in the state
\begin{align*}
& U'Z^u_J \Pi_{\Ver,J}\ket{\psi'}=\\
&\left(\ket{0}\bra{0}_\coin\tensor  
U^\dagger_{1}  U_\out^\dagger Z^u_{J}\Pi_\Ver U_\out U_{1}V_\mathsf{ext,1} + \ket{1}\bra{1}_\coin\tensor 
U^\dagger_{2}  U_\out^\dagger Z^u_{J}\Pi_\Ver U_\out U_{2}V_\mathsf{ext,2} \right)(\ket{+}_\coin\tensor\ket{\psi_{\mathsf{ext}}})
\end{align*}

\item Output the measurement of  the first register of this state in the Hadamard basis, denoted by~$u'$.
\end{enumerate}

        \end{enumerate}

        Consider the states 
\[
U^\dagger_{1}  U_\out^\dagger Z^u_J\Pi_\Ver U_\out U_1 V_{\mathsf{ext},1}\ket{\psi_{\mathsf{ext}}}~~\mbox{ and }~~  
 U^\dagger_{2}  U_\out^\dagger Z^u_J\Pi_\Ver U_\out U_2 V_{\mathsf{ext},2}\ket{\psi_{\mathsf{ext}}}
\]
Note that similarly to the $b=0$ case, for $u=0$ these states are $(2\sqrt{\delta}+\epsilon)$-close in $\|\cdot\|_2$ distance.  
On the other hand, by our contradiction assumption, together with \Cref{lem:control-distance-z}, for $u=1$, these two states are $(2\sqrt{\delta}+\epsilon)$-far in $\|\cdot\|_2$ distance. This together with \Cref{claim:aux}, implies that indeed $\A$ breaks the collapsing property of the underlying $\TCF$ family.

\paragraph{Induction step:}      Suppose that the multi-qubit commitment scheme is sound for $\ell-1$ and we prove that it is sound for $\ell$. 
We need to prove that there exists a negligible function $\mu=\mu(\secp)$ such that 
    \begin{align*}
&\E_{\substack{(\pk, \sk) \leftarrow \Gen(1^\lambda)\\
(\vecy, \ket{\psi}) \leftarrow \cC^*.\Commit(\pk, \ket{\varphi})}}
\| U_1^\dagger U_\Out^{\dagger} \CNOT_{\mathsf{copy},\out} U_\Out U_1 V_{\mathsf{ext},1}\ket{\psi_\mathsf{ext}}
 - U_{2}^\dagger U_\Out^{\dagger} \CNOT_{\mathsf{copy},\out} U_\Out U_{2} V_{\mathsf{ext},2}\ket{\psi_\mathsf{ext}}\|_2\leq\\
 &~~~~~~~~~~~~~~~~~~~~~~~~~~~~~{\eta+\epsilon+\mu}
\end{align*}
    for $\eta=\sum_{j=1}^\ell 2\sqrt{\delta_j}$ and $\epsilon= \E\limits_{\substack{(\pk, \sk) \leftarrow \Gen(1^\lambda)\\
(\vecy, \ket{\psi}) \leftarrow \cC^*.\Commit(\pk, \ket{\varphi})}} \|V_1\ket{\psi}-V_2\ket{\psi}\|_2$.\\

 To this end, note that for every $i\in\{1,2\}$
    \begin{align*}
 &U_i^\dagger U_\Out^{\dagger} \CNOT_{\mathsf{copy},\out} U_\Out U_i V_{\mathsf{ext},i}\ket{\psi_\mathsf{ext}}=\\
&U_i^\dagger U_{\Out}^{\dagger} \CNOT_{\mathsf{copy}_{\ell},\out_{\ell}}\CNOT_{\mathsf{copy}_{[1,\ell-1]},\out_{[1,\ell-1]}} U_{\Out} U_i V_{\mathsf{ext},i}\ket{\psi_\mathsf{ext}}=\\
&U_i^\dagger U_{\Out_\ell}^{\dagger} \CNOT_{\mathsf{copy}_{\ell},\out_{\ell}}U_{\Out_\ell} U_i \underbrace{U_i^\dagger U_{\Out_{[1,\ell-1]}}^{\dagger} \CNOT_{\mathsf{copy}_{[1,\ell-1]},\out_{[1,\ell-1]}} U_{\Out_{[1,\ell-1]}} U_i V_{\mathsf{ext},i}}_{V_i'} \ket{\psi_\mathsf{ext}} 
\end{align*}
For every $i\in\{1,2\}$, denote by 
\[
\ket{\psi'_i}= V_i' \ket{\psi_\mathsf{ext}}
\]
By the induction hypothesis, there exists a negligible function $\mu=\mu(\secp)$ such that 
\begin{equation*}
\E_{\substack{(\pk, \sk) \leftarrow \Gen(1^\lambda)\\
(\vecy, \ket{\psi}) \leftarrow \cC^*.\Commit(\pk, \ket{\varphi})}}\|\ket{\psi'_1}-\ket{\psi'_2}\|_2\leq \eta'+\mu
\end{equation*}
where $\eta'=\sum_{j=1}^{\ell-1} 2\sqrt{\delta_{j}}+\epsilon$.
Denoting by $\epsilon'=\eta'$, our base case implies that there exists a negligible function $\nu=\nu(\secp)$ such that 
\begin{align*}
&\E_{\substack{(\pk, \sk) \leftarrow \Gen(1^\lambda)\\
(\vecy, \ket{\psi}) \leftarrow \cC^*.\Commit(\pk, \ket{\varphi})}}\|U_1^\dagger U_{\Out_\ell}^{\dagger} \CNOT_{\mathsf{copy}_{\ell},\out_{\ell}}U_{\out_\ell} U_1V'_1\ket{\psi_\mathsf{ext}}- U_1^\dagger U_{\Out_\ell}^{\dagger} \CNOT_{\mathsf{copy}_{\ell},\out_{\ell}}U_{\out_\ell} U_1 V_2'\ket{\psi_\mathsf{ext}}\|_2\leq\\
&~~~~~~~~~~~~~~~~~~~~~~~~~ 2\sqrt{\delta_\ell}+\eta'+\nu
\end{align*}
as desired.\\

\qed

We next prove that both the non-succinct and the semi-succinct commitment schemes from \Cref{def:SCQ-construction-multi} satisfy \Cref{eqn:binding1}.

\begin{lemma}\label{lemma:non-succinct-real-to-ideal}
  The non-succinct commitment scheme described in \Cref{def:SCQ-construction-multi} satisfies \Cref{eqn:binding1}  from \Cref{def:binding} assuming the underlying $\TCF$ family has the adaptive hardcore bit property.   
\end{lemma}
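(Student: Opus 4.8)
The plan is to establish \Cref{eqn:binding1} by constructing an extractor $\Ext$ and showing, via a ``swap isometry'' in the style of \cite{Vid20-course}, that the state it produces reproduces the opening statistics in both bases; this is the ``part~(1)'' of binding, the companion ``part~(2)'' being \Cref{eqn:binding2}, already proven in \Cref{lemma:soundness-single1,corr:binding-real}. Since the scheme of \Cref{construction:multi} commits qubit-by-qubit with \emph{independent} single-qubit keys, and honest opening and verification act locally on a single logical qubit's registers, I would first reduce to the single-qubit scheme of \Cref{construction:single}: the multi-qubit extractor is a (commuting) tensor product of single-qubit extractors, and the fact that a cheating $\cC^*.\Open$ need neither be local nor open in basis $\mathbf{0}$/$\mathbf{1}$ is absorbed using \Cref{corr:binding-real}, which lets us swap $\cC^*_i.\Open$ for a canonical, locally-acting opening at the cost of $O(\sqrt{\delta})$.

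For one logical qubit, $\Ext$ (with black-box access to $\cC^*.\Open$, and given $\sk=(\sk_0,\dots,\sk_{n+1})$) runs the two opening directions coherently to obtain binary observables on the workspace of $(\brho,\cC^*.\Open)$: $\tilde Z$ Hadamard-measures $\brho$, decodes through $\sk_1,\dots,\sk_{n+1}$ to $(m_1,\dots,m_{n+1})$, and on the subspace where $\Check_\TCF(\pk_0,m_1,(m_2,\dots,m_{n+1}),\vecy_0)=1$ takes the value $(-1)^{m_1}$; $\tilde X$ standard-measures $\brho$ to $(d_1,\vecx'_1,\dots,d_{n+1},\vecx'_{n+1})$, and on the subspace where all $(d_i,\vecx'_i)$ are preimages of the $\vecy_i$ and $\vecd\in\Good_{\vecx_0,\vecx_1}$ takes the value $(-1)^{\vecd\cdot(1,\vecx_0\oplus\vecx_1)}$. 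On these subspaces $\tilde Z$ and $\tilde X$ are exactly the real opening outcomes by construction. The extractor then applies the standard swap isometry built from the $\Pi_{\acc}$-restricted $\tilde Z,\tilde X$ to peel off one logical qubit into register $\cA$, leaving the rest in $\cB$, and outputs $\btau_{\cA,\cB}$.

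Correctness of the extraction rests on two ingredients. First, $\tilde Z$ (resp.\ $\tilde X$) approximately commutes with its own acceptance predicate and succeeds with probability $1-O(\delta)$ --- immediate since $\Check_\TCF$ is computed from the measured/decoded classical data, combined with the $\delta$-bound on rejection and \Cref{lem:gentle-infant,lem:gentle-mixed}. Second, and this is the crux, one shows the approximate anticommutation $\|(\tilde Z\tilde X+\tilde X\tilde Z)\Pi_{\acc}\ket{\psi}\|_2\le\negl(\secp)$ on the jointly-accepting subspace. Here the adaptive hardcore bit property of the $\TCF$ with key $\pk_0$ enters: a violation of anticommutation yields, on the same commitment $\vecy_0$, both a valid preimage of $\vecy_0$ (extracted from the $\tilde Z$-branch, which needs only $\sk_1,\dots,\sk_{n+1}$ and the \emph{public} $\Check_\TCF(\pk_0,\cdot)$) and a string $\vecd\in\Good$ together with enough bias on the bit $\vecd\cdot(1,\vecx_0\oplus\vecx_1)$ --- which packages into an adversary for item~\ref{item:HCB} of \Cref{def:TCF}. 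That adversary generates $(\pk_i,\sk_i)_{i\in[n+1]}$ itself and receives only $\pk_0$, so it can run the full $\tilde Z$-branch and the measurement part of the $\tilde X$-branch; the delicate point --- that it lacks $\sk_0$ and so cannot itself evaluate $\vecd\cdot(1,\vecx_0\oplus\vecx_1)$ --- is handled by a ``bit-prediction from a distinguisher'' argument, whereby the adversary outputs the measured $\vecd$ together with a guess for the value read off from the non-anticommutation structure and wins the hardcore-bit game with probability noticeably above $1/2$.

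With both ingredients in hand, the swap-isometry analysis of \cite{Vid20-course} gives that measuring $\btau_\cA$ in the standard (resp.\ Hadamard) basis is $O(\sqrt{\delta}+\negl(\secp))$-close to $\tilde Z$ (resp.\ $\tilde X$), hence to the real outcome of $\cC^*.\Open$ in that basis; tensoring over the $\ell$ logical qubits and invoking \Cref{corr:binding-real} to pass to a general basis $\vecb$ and a general $\cC^*_i.\Open$ yields \Cref{eqn:binding1} with $\eta=O(\sqrt{\delta})$. The main obstacle is exactly the anticommutation step --- and within it, making the adaptive-hardcore-bit reduction go through even though the reduction does not hold $\sk_0$ (needed to decode the Hadamard opening) --- which forces a careful choice of which measurements are performed coherently versus destructively and a quantitative link between the anticommutation defect and the hardcore-bit advantage.
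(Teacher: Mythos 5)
Your overall architecture is right and closely tracks the paper's: the paper does build its extractor from a swap-isometry circuit (cf.\ \cite{Vid20-course}) applied to operational observables $P_{Z_i},P_{X_i}$, and the role of the adaptive hardcore bit is exactly to certify that the residual ``twist'' between the Hadamard outcomes and the extracted qubits is computationally undetectable (the paper's \Cref{eqn:final} / \Cref{eqn:hyb:succ}). But there are some differences worth flagging.

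First, the paper does \emph{not} prove an operator-norm anticommutation bound of the form $\|(\tilde Z\tilde X+\tilde X\tilde Z)\Pi_{\acc}\ket{\psi}\|_2\le\negl(\secp)$ and then invoke a rounding-to-qubit theorem. Instead it computes the output distribution of the swap isometry \emph{exactly}: the standard-basis measurement of $\btau_\cA$ reproduces the $P_{Z_i}$-measurement outcomes with no error whatsoever (\Cref{eq:prai}), and the Hadamard-basis measurement exactly equals a \emph{$P_Z$-twirled} version of the $P_{X_j}$-measurement (\Cref{eq:praj}). The computational step is then confined to showing that this twirl is invisible (the real and twirled Hadamard distributions are close), which is the content of \Cref{eqn:final}. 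Your $O(\sqrt{\delta}+\negl)$-closeness claim for the standard basis is looser than what the paper actually achieves (exact equality), and your anticommutation bound should carry a $\sqrt{\delta}$ term: restricting to $\Pi_{\acc}\ket{\psi}$ does not make the projected state efficiently preparable, so the reduction to AHB only controls the defect up to the $\delta$-failure probability. The two routes are morally equivalent --- anticommutation is what makes the twirl invisible --- but the explicit-distribution route avoids needing the abstract rounding lemma and is quantitatively tighter.

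Second, the ``(commuting) tensor product of single-qubit extractors'' framing is misleading. The observables $P_{Z_i},P_{X_i}$ all act on the \emph{shared} workspace of the prover, not on disjoint tensor factors, and while all $P_{Z_i}$ commute among themselves (same underlying $U_0$) and all $P_{X_i}$ commute (same $U_1$), the cross pairs $P_{Z_i},P_{X_j}$ for $i\ne j$ need not commute. The paper therefore does a single \emph{joint} swap isometry over all $\ell$ logical qubits and computes its output distribution directly; there is no literal ``peel off one logical qubit at a time and tensor.'' Your invocation of \Cref{corr:binding-real} is correctly placed: it is used (as $\cC^*.\Open_{[\ell]}$ in the paper) to replace the given opening algorithm with one that opens the standard-basis indices $I$ first and then the Hadamard indices $J$, at cost $O(\sqrt{\delta})$, which is what makes \Cref{eq:prai} and \Cref{eq:praj} applicable. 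Finally, your identification of where the AHB reduction must live --- the adversary can run the $\tilde Z$-branch using self-generated $\sk_1,\dots,\sk_{n+1}$ and the public $\Check_\TCF(\pk_0,\cdot)$, and can measure $\vecd$, but cannot decode $\vecd\cdot(1,\vecx_0\oplus\vecx_1)$ --- matches the paper, and the ``bit-prediction from a distinguisher'' closes it exactly as the paper's hybrid argument with $p_0,p_1,p_1'$ around \Cref{eqn:m*} does.
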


\begin{lemma}\label{lemma:semi-succinct-real-to-ideal}
  The semi-succinct commitment scheme described in \Cref{def:SCQ-construction-multi}  (\Cref{remark:semi-succinct}) satisfies \Cref{eqn:binding1} from \Cref{def:binding} assuming the underlying $\TCF$ family is the one from \cite{BCMVV18} and assuming it has the distributional strong adaptive hardcore bit property (which is the case under $\LWE$).   
\end{lemma}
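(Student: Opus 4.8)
The plan is to reuse as much of the non-succinct analysis (\Cref{lemma:non-succinct-real-to-ideal}) as possible. The extractor $\Ext$ will be the same one built there: run the opening algorithm coherently on the all-standard and all-Hadamard bases and apply the swap isometry of \cite{Vid20-course} to the state surviving the $\Ver$ check, producing an $\ell$-qubit register $\btau_\cA$. The ``extraction'' half of \Cref{def:binding} — that a \emph{single} state $\btau$ is jointly consistent with the reported openings in both bases — uses only the collapsing property of the $\TCF$ family (which holds per key, hence is unaffected by key reuse) together with \Cref{eqn:binding2}, which we have already established in the semi-succinct regime via \Cref{corr:binding-real}; that half transfers essentially verbatim. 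The genuinely new point is the indistinguishability half: showing that the output $\vecm$ read off in $\Real^{\cC^*_i.\Open}(\secp,\vecb,\bsigma)$ is computationally indistinguishable from measuring $\btau_\cA$ in basis $\vecb$. In the non-succinct proof this reduces to the adaptive hardcore bit property of the single key $\pk_0$; but once the same $\pk_0$ (hence the same $\sk_0$) is used for all $\ell$ qubits, any reduction must simulate the decodings of the other Hadamard-opened qubits, all functions of $\sk_0$, and it no longer goes through.

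To fix this I would first split the target basis into $I=\{i:b_i=0\}$ and $J=\{i:b_i=1\}$. The $I$-coordinates need no new idea: the $\Out$ procedure of \Cref{construction:single} in the standard-basis case never uses $\sk_0$, and the non-succinct argument already shows that the $I$-parts of $\Real$ and of the ideal distribution agree up to $O(\sqrt\delta)$ even when $\sk_1,\dots,\sk_{n+1}$ are fully revealed; so we may hand $\{m_i\}_{i\in I}$ together with $\sk_1,\dots,\sk_{n+1}$ to the distinguisher as side information. The work is on the $J$-coordinates, handled by a hybrid over the elements of $J$: in hybrid $H_t$ the first $t$ of them are answered by measuring $\btau_\cA$ and the rest by the real decoding. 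To move from $H_{t-1}$ to $H_t$ at the $t$-th index $j_0\in J$, I would apply the distributional strong adaptive hardcore bit property of the $\TCF$ of \cite{BCMVV18} (\Cref{def:dist-stat-HCB}, valid under $\LWE$ by \Cref{claim:Bra-is-stat}) with: the challenge image $\vecy$ instantiated as $\vecy_{j_0,0}$; the algorithm $\A$ a coherent run of $\cC^*.\Commit$ followed by $\cC^*_i.\Open$ that in addition extracts (via the swap isometry) a preimage $\vecx$ of $\vecy_{j_0,0}$ with $\Check_\TCF(\pk_0,\cdot,\vecx,\vecy_{j_0,0})=1$ and leaves $\vecd_{j_0}$ in the $\cO_1$ registers; and the leakage circuit $C$ the procedure that, given $\sk_{\mathsf{pre}}=\vecs$ and the $\cO_2$ registers, reconstructs $\{m_j\}_{j\in J\setminus\{j_0\}}$ — which it can do without the trapdoor by the ``moreover'' clause of \Cref{def:dist-stat-HCB}, since each Hadamard opening already supplies a valid preimage to combine with $\vecs$. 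Invoking \Cref{eqn:dist-hcb} replaces $m_{j_0}=\vecd_{j_0}\cdot(1,\vecx_0\oplus\vecx_1)$ by a uniform bit; the same conclusion applied in the ideal world (where, as in the non-succinct analysis, $\btau_\cA$ is built so as not to depend on the trapdoor) shows the $\btau$-measurement of qubit $j_0$ is likewise indistinguishable from a uniform bit given all the side information, so $H_{t-1}\approx H_t$. Summing the $|J|$ steps and the $I$-error gives the required $\eta=O(\sqrt\delta)$ bound.

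The main obstacle — and where I expect most of the proof to live — is discharging the \emph{hypothesis} of \Cref{def:dist-stat-HCB}: that with overwhelming probability over the leakage $\aux$ produced by $C$, the ``important bits'' $\vecd'=\Good(\vecx_0,\vecx_1,\vecd_{j_0})$ of an accepted Hadamard opening have min-entropy $\omega(\log\secp)$ conditioned on $\aux$. This is exactly where the extra tests built into our $\Ver$ procedure (rejecting openings that are not valid preimages, and the standard-basis consistency test against $\vecy_0$ on the other side of the composition) must be combined with binding: if a malicious $\cC^*_i.\Open$ could force $\vecd'$ into a polynomial-size set, then by \Cref{eqn:binding2} — in the strengthened, $\sk$-dependent form proved in \Cref{lemma:soundness-single1} — an honest-like opener of the \emph{same} commitment would be forced into that set too, which would make the adaptive hardcore bit $\vecd_{j_0}\cdot(1,\vecx_0\oplus\vecx_1)$ predictable and contradict the adaptive hardcore bit property of \Cref{def:TCF}. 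Turning this intuition into a quantitative min-entropy bound that holds \emph{after conditioning on} the auxiliary leakage $\aux$, and uniformly over the hybrid index $t$, together with checking that $\A$ can be made to output the required preimage $\vecx$ and that $C$ indeed only needs $\sk_{\mathsf{pre}}$, is the technical core of the argument.
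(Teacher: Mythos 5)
Your high-level plan matches the paper's: reuse the non-succinct extractor, split the target basis into $I$ (standard) and $J$ (Hadamard), push the $I$-coordinates through the non-succinct analysis since $\sk_0$ is never touched there, and hybridize over the $J$-coordinates, using the distributional strong adaptive hardcore bit property (\Cref{def:dist-stat-HCB}) once per hybrid step. The paper's hybrid $\vecm^{(\beta)}_J$ (randomizing the first $\beta$ pad bits $\vecs_J$, cf.~\Cref{eqn:induction}) is a reformulation of your ``first $t$ answers from $\btau$, rest from the real decoding,'' and your identification of the algorithms $\A$ and $C$ for \Cref{def:dist-stat-HCB} is essentially that of \Cref{claim:blah} (with the minor caveat that $\A$ gets a preimage by directly measuring the observable $P_{Z_{j_\alpha}}$, not by running the swap isometry --- the isometry appears only in $\Ext$).

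Where your proposal has a genuine gap is exactly in the part you flag as ``the technical core'': discharging the min-entropy hypothesis of \Cref{def:dist-stat-HCB} (this is \Cref{claim:min-entropy} in the paper). Your sketch --- ``if $\vecd'$ could be forced into a polynomial-size set, then by the strengthened \Cref{eqn:binding2} an honest-like opener would be forced into that set too, making $\vecd_{j_0}\cdot(1,\vecx_0\oplus\vecx_1)$ predictable, contradicting the adaptive hardcore bit property'' --- does not close. The inner product is $\vecd_{j_0}\cdot(1,\vecx_0\oplus\vecx_1)=d_0\oplus\vecd'\cdot\vecs$ with $\vecs=\sk_{\mathsf{pre}}$, so knowing (or concentrating) $\vecd'$ alone does not let a $\QPT$ adversary compute that bit; it still needs $\vecs$, which is precisely what the adaptive hardcore bit of $\pk_0$ protects. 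So the contradiction you are aiming for with the $\pk_0$ hardcore bit is circular. Moreover, \Cref{eqn:binding2} is an \emph{indistinguishability} statement about decoded outputs and does not transfer a min-entropy lower bound on $\vecd'$ from one opener to another. The paper's argument is in fact a reduction to the adaptive hardcore bit property of the \emph{second-layer} keys $\pk_1,\ldots,\pk_{n+1}$, not of $\pk_0$: if $\vecd'=\Good(\vecx_0,\vecx_1,\vecd^*)$ were concentrated (conditioned on $\aux$), then many individual coordinates of $\vecd^*$ would be conditionally predictable (the paper extracts a set $S$ of size $n^{0.1}$ of such coordinates), and since $\vecd^*$ is the Hadamard-opened layer, each predicted coordinate $d^*_i$ together with its measured preimage $\vecx^*_i$ under $\pk_{i+1}$ and the standard-basis opening's equation $\vecz_{i+1}$ yields a $(\vecy,\mathrm{preimage},\vecd,\text{equation bit})$ tuple for $\pk_{i+1}$, breaking the adaptive hardcore bit of those keys simultaneously. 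This is the missing idea; without it, the hypothesis of \Cref{def:dist-stat-HCB} cannot be established and the hybrid argument does not go through.
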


\paragraph{Proof of \Cref{lemma:non-succinct-real-to-ideal,lemma:semi-succinct-real-to-ideal}}
We prove these two lemmas jointly since much of the proof is identical.  In both cases we think of the public and secret keys as being \[\pk=(\pk_1,\ldots,\pk_\ell)~~\mbox{ and }~~\sk=(\sk_1,\ldots,\sk_\ell)
\]
where in the non-succinct commitment each $(\pk_i,\sk_i)\gets\Gen_\TCF(1^\secp)$ and in the semi-succinct commitment $(\pk,\sk)\gets\Gen_\TCF(1^\secp)$, and for every $i\in[\ell]$
\[
\sk_i=\sk~~\mbox{ and }~~ \pk_i=\pk
\]
Fix any $\BQP$ cheating committer $\cC^*.\Commit$ with auxiliary quantum state~$\bsigma$ that commits to an $\ell$-qubit state. Denote by 
\[(\vecy,\brho)\gets \cC^*.\Commit(\pk,\bsigma),
\]
where $\vecy=(\vecy_1,\ldots,\vecy_\ell)$, each $\vecy_i=(\vecy_{i,0},\vecy_1,\ldots,\vecy_{i,n+1})$ and each $\vecy_{i,j}$ is in $\mathsf{R}_{\pk_j}$ which is the range of the $\TCF$ function $\Eval(\pk_j,\cdot)$. 
Fix any $\BQP$ algorithm $\cC^*.\Open$.
We start by defining the $\QPT$ extractor $\Ext^{\cC^*.\Open}(\sk, \vecy, \brho)$.  We do so 
 in two steps:
\begin{enumerate}
\item First, we define $2\ell$ ``operational observables'' $\{P_{X_i}, P_{Z_i}\}_{i\in \ell]}$  such that for every $i\in[\ell]$ and $b\in\{0,1\}$,
\[
(\pk,\vecy,\vecm_{\mathsf{ideal},i,b})\equiv(\pk,\vecy,\vecm_{i,b})
\]
where $(\pk,\sk)\gets \Gen(1^\secp,1^\ell)$,\footnote{In the semi-succinct setting $(\pk,\sk)\gets \Gen(1^\secp)$.} $(\vecy,\brho)\gets \cC^*.\Commit(\pk,\bsigma)$, $\vecm_{\mathsf{ideal},i,b}$ is obtained by measuring $\brho$ in the $P_{X_i}$ basis if $b=1$ and measuring it in the $P_{Z_i}$ basis if $b=0 $, and $\vecm_{i,b}$ is obtained by computing $\vecz\gets\cC^*.\Open(\brho,b^\ell)$  and setting $\vecm_{i,b}=\Out(\sk,\vecy,(i,b),\vecz_i)$. 
\item We then use these operational observables to extract a state $\btau$. This is done following the approach of \cite{Mah18a,Vid20-course,Bartusek22},
\end{enumerate}

\paragraph{Defining the operational observables $\{P_{X_i},P_{Z_i}\}_{i\in[\ell]}$.}
To define these operational observables formally, we add $L=\ell\cdot\left((n+1)^2+1\right)$ ancilla registers to $\brho$, which we initialize to~$0$.  We denote by
\begin{equation*}
\brho_\Ext=\brho\tensor\ket{0^{L}}\bra{0^{L}},
\end{equation*}
where the first $\ell\cdot(n+1)^2$ ancilla registers are denoted by $\mathsf{open}=(\mathsf{open}_1,\ldots,\mathsf{open}_\ell)$, and these registers store the output $(\vecz_1,\ldots,\vecz_\ell)$ generated by $\Open$, where $\vecz_i\in\{0,1\}^{(n+1)^2}$ is stored in $\mathsf{open}_i$. The last $\ell$ ancilla registers are denoted by $\out=(\out_1,\ldots,\out_\ell)$, and these registers store the output $(v_1,\ldots,v_\ell)$ generated by $\Out$, where $v_i\in\{0,1\}$ is stored in register $\out_i$. 

\begin{definition}\label{def:obs-multi}
 For any $(\sk,\vecy)$ and any $\BQP$ algorithm $\cC^*.\Open$ we define the  operational observables $(P_{X_i},P_{Z_i})_{i\in[\ell]}$ to be
\[
 P_{X_i}=U_{1}^\dagger  \Out_{i,1}^{\dagger} Z_{\out_i} \Out_{i,1} U_{1}
 \]
 and 
 \[
 P_{Z_i}=U_{0}^\dagger \Out_{i,0}^{\dagger} Z_{\out_i} \Out_{i,0} U_{0}
 \]
 where for every $i\in[\ell]$ and every $b\in\{0,1\}$,
 \begin{itemize}
 \item $U_{b}$ is the unitary corresponding to $\cC^*.\Open(\cdot,(b,\ldots,b))$. The output is recorded in registers $\mathsf{open}$.
     \item $\Out_{i,b}$ computes $\Out(\sk,\vecy,(i,b),\cdot)$ and records the output in the ancilla register $\out_i$. 
     


     \item $Z_{\out,i}$ is the Pauli $Z$ operator applied only on the  register $\out_i$.   
     \end{itemize}
\end{definition}

We next define the extractor $\Ext$ which uses the operational observables $\{P_{X_i},P_{Z_i}\}_{i\in[\ell]}$, defined above.  
For the sake of simplicity, we define $\Ext$ to operate on pure states.  The definition easily generalizes to mixed states by linearity.  
\paragraph{$\Ext^{\cC^*.\Open}(\sk, \vecy, {\ket\varphi})$ operates as follows:}
\begin{enumerate}
\item Consider the operational observables $\{P_{X_i},P_{Z_i}\}_{i\in[\ell]}$ corresponding to $(\sk,\vecy)$.  
\item  Prepare the state 
$$
\frac{1}{2^\ell}\sum_{\vecr,\vecs\in\{0 1\}^\ell}\ket{\vecr,\vecs}_\coin\tensor\ket{0^\ell}_\cA\tensor\ket\varphi_\cB.
$$
\item Denote by 
\[\vecX^\vecr=X_\ell^{r_\ell}\ldots X_1^{r_1}~\mbox{ and }\vecZ^\vecs=Z_\ell^{s_\ell}\ldots Z_1^{s_1}.
\]
Similarly, denote by
\[
P_{\vecX}^\vecr=P_{X_\ell}^{r_\ell}\ldots P_{X_1}^{r_1}~\mbox{ and }P_\vecZ^\vecs=P_{Z_\ell}^{s_\ell}\ldots P_{Z_1}^{s_1}.
\]

\item Controlled on the values $\vecr, \vecs$ of the $\coin$ register, apply $\vecZ^\vecs \vecX^\vecr $ to the $\cA$ register and apply $P_\vecX^\vecr P_\vecZ^\vecs$ to the $\cB$ register to obtain the state
$$
\frac{1}{2^\ell}
\sum_{\vecr,\vecs\in\{0,1\}^\ell}\ket{\vecr,\vecs}_\coin\tensor \vecZ^\vecs \vecX^\vecr\ket{0^\ell}_\cA\tensor P_\vecX^\vecr P_\vecZ^\vecs\ket{\varphi}_\cB
$$
\item Apply Hadamard gates $H^{\otimes 2\ell}$ to the $\coin$ register in the  to obtain the state 
$$
\frac{1}{4^\ell}
\sum_{\vecr,\vecs,\vecr',\vecs'\in\{0,1\}^\ell}(-1)^{\vecr\cdot\vecr'+\vecs\cdot\vecs'}\ket{\vecr',\vecs'}_\coin\tensor \vecZ^\vecs \vecX^r\ket{0^\ell}_\cA\tensor P_\vecX^\vecr P_\vecZ^\vecs\ket{\varphi}_\cB 
$$
where 
\[\vecr\cdot\vecr'=\sum_{i=1}^\ell r_i\cdot r'_i~\mathsf{mod}~2~~\mbox{ and }~~\vecs\cdot\vecs'=\sum_{i=1}^\ell s_i\cdot s'_i~\mathsf{mod}~2.
\]
\item Apply $\vecX^{\vecs'}\vecZ^{\vecr'}$ to the $\cA$ register.  Note that 
\begin{align*}
&\vecX^{\vecs'}\vecZ^{\vecr'} \vecZ^\vecs \vecX^\vecr\ket{0^\ell}=\\
&\vecX^{\vecs'}\vecZ^{\vecs} \vecZ^{\vecr'} \vecX^\vecr\ket{0^\ell}=\\
&(-1)^{\vecr\cdot\vecr'}\vecX^{\vecs'}\vecZ^{\vecs} \vecX^\vecr \vecZ^{\vecr'} \ket{0^\ell}=\\
&(-1)^{\vecr\cdot\vecr'}\vecX^{\vecs'}\vecZ^{\vecs} \vecX^\vecr \ket{0^\ell}=\\
&(-1)^{\vecr\cdot \vecr'+\vecs\cdot \vecs'}\vecZ^{\vecs} \vecX^{\vecs'} \vecX^\vecr \ket{0^\ell}=\\
&(-1)^{\vecr\cdot \vecr'+\vecs\cdot \vecs'}\vecZ^{\vecs} \vecX^\vecr  \vecX^{\vecs'}\ket{0^\ell}=\\
&(-1)^{\vecr\cdot \vecr'+\vecs\cdot \vecs'}\vecZ^{\vecs} \vecX^\vecr  \ket{\vecs'}.
\end{align*}
Therefore the state obtained is 
$$
\frac{1}{4^\ell}
\sum_{\vecr,\vecs,\vecr',\vecs'\in\{0,1\}^\ell}\ket{\vecr',\vecs'}_\coin\tensor \vecZ^\vecs \vecX^\vecr\ket{\vecs'}_\cA\tensor P_\vecX^\vecr P_\vecZ^\vecs\ket{\varphi}_\cB 
$$
which is equal to the state
$$
\left(\frac{1}{\sqrt{2}^\ell}(\ket{0}+\ket{1})^{\tensor\ell}\right)\tensor \frac{1}{(2\sqrt{2})^\ell}\sum_{\vecr,\vecs,\vecs'\in\{0,1\}^\ell}\ket{\vecs'}\tensor \vecZ^\vecs \vecX^\vecr\ket{\vecs'}_\cA\tensor P_\vecX^\vecr P_\vecZ^\vecs\ket{\varphi}_\cB. 
$$
\item Discard the first $\ell$ registers to obtain the state
$$
\frac{1}{({2\sqrt{2}})^\ell}\sum_{\vecr,\vecs,\vecs'\in\{0,1\}^\ell}\ket{\vecs'}\tensor \vecZ^\vecs \vecX^\vecr\ket{\vecs'}_\cA \tensor P_\vecX^\vecr P_\vecZ^\vecs\ket{\varphi}_\cB.
$$
\item Output the state $\btau_{\cA, \cB}$ that is the reduced state of the above on registers $\cA, \cB$.

\end{enumerate}

We next prove that 
\begin{equation}\label{eqn:real-ideal-3sqrt}
 \Real^{\cC^*.\Open}(\secp, \vecb,\bsigma) \stackrel{10\sqrt{\delta}}\approx \Ideal^{\Ext,\cC^*.\Open}(\secp, \vecb,\bsigma).
\end{equation}
To this end, for a given $\vecb \in\{0,1\}^\ell$, denote by
\[I=\{i\in [\ell]:\vecb_i=0\}~\mbox{ and }~J=\{j\in[\ell]:\vecb_j=1\},
\]
so that $I$ and $J$ partition $[\ell]$.

Next we define a new opening algorithm $\cC^*.\Open_{[\ell]}$.  We first give a  ``buggy'' definition of $\cC^*.\Open_{[\ell]}$, and then show how to fix it in \Cref{remark:C*ell}. $\cC^*.\Open_{[\ell]}$ on input $(\brho,\vecb)$ does the following: 
\begin{enumerate}
\item 
 Compute $\brho_1=U_{0}^\dagger \CNOT_{\mathsf{copy}_{I},\mathsf{open}_{I}} U_0[\brho]$, 
    where $\mathsf{open}_{I}$ is the register that contains the openings $\{\vecz_i\}_{i\in I}$, and $\CNOT_{\mathsf{copy}_{I},\mathsf{open}_{I}}$ copies the content of this register to a fresh register denoted by $\mathsf{copy}_{I}$.  
    \item Measure the registers $\mathsf{copy}_{I}$ of $\brho_1$ to obtain $\{\vecz_i\}_{i\in I}$ and denote the resulting state by $\brho_2$. 
    \item 
 Compute $\brho_3=U_{1}^\dagger \CNOT_{\mathsf{copy}_{J},\mathsf{open}_{J}} U_1[\brho_2]$, 
    where $\mathsf{open}_{J}$ is the register that contains the openings $\{\vecz_j\}_{j\in J}$, and $\CNOT_{\mathsf{copy}_{J},\mathsf{open}_{J}}$ copies the content of this register to a fresh register denoted by $\mathsf{copy}_{J}$.  
     \item Measure the registers $\mathsf{copy}_J$ of $\brho_3$ to obtain $\{\vecz_j\}_{j\in J}$ and denote the resulting state by $\brho_4$. 
    \item  Output $((\vecz_1,\ldots,\vecz_\ell),\brho_4)$. 
\end{enumerate} 
\begin{remark}\label{remark:C*ell}
    We remark that as written, $\cC^*.\Open_{[\ell]}(\brho,\vecb)$ may be rejected with high probability.  The reason is that, while the standard basis openings of $\cC^*.\Open_{[\ell]}$ and $\cC^*.\Open$ are identical, $\cC^*.\Open_{[\ell]}$ can completely fail to open in the Hadamard basis, since after computing the standard basis openings its state becomes $U_{0}^\dagger \CNOT_{\mathsf{copy}_{I},\mathsf{open}_{I}} U_0[\brho]$, with the $\mathsf{copy}_I$ registers measured. This is a disturbed state and it is no longer clear that computing the Hardamard basis opening on it will give an accepting opening.  
    
    To ensure that $\cC^*.\Open_{[\ell]}(\brho,\vecb)$ is accepted with the same probability as $\cC^*.\Open(\brho,\vecb)$, up to negligible factors, we need to ensure that the state after computing the standard basis openings remains undisturbed, or at least that this disturbance is undetected by the algorithms that compute the Hadamard basis opening and verify whether this opening is valid.     To achieve this we slightly modify $\cC^*.\Open_{[\ell]}$ and allow it to compute the standard basis opening using $(\sk_1,\ldots,\sk_{n+1})$.  We note that such opening algorithms are allowed in \Cref{corr:binding-real} (which we will later use in our analysis).  
    
    Specifically, $\cC^*.\Open_{[\ell]}$, rather than placing the output of $U_0$ in the $\mathsf{open}_I$ registers, which when measured may disturb the state, we use $(\sk_1,\ldots,\sk_{n+1})$ to apply the following post-opening unitary to each $\mathsf{open}_i$ register, to ensure that when measured the disturbance will not be noticed. Recall that $\mathsf{open}_i$ contains a vector $\vecz=(\vecz_1,\ldots,\vecz_{n+1})\in\{0,1\}^{(n+1)^2}$ where each $\vecz_j\in\{0,1\}^{n+1}$. The post-opening unitary does the following:
        \begin{enumerate}
           \item Coherently compute for every $j\in[n+1]$ the bit $m_j=\vecz_j\cdot (1,\vecx'_{j,0}\oplus \vecx'_{j,1})$, where $\vecx'_{j,0}$ and $\vecx'_{j,1}$ are the two preimages of $\vecy_{i,j}$ that are computed using $\sk_j$.
            \item Let $\vecm=(m_1,\ldots,m_{n+1})\in\{0,1\}^{n+1}$.
            
            Note that if $\vecz$ is a successful opening (i.e., it is accepted) then $\vecm$ is a preimage of $\vecy_{i,0}$, and whether a preimage is measured or not is undetectable without knowing $\sk_0$, due to the collapsing property of the underlying $\TCF$ family.
            \item On an ancila register, compute a super-position over all $\vecz'=(\vecz'_1,\ldots,\vecz'_{n+1})\in\{0,1\}^{(n+1)^2}$ such that for every $j\in[n+1]$ $m_j=\vecz'_j\cdot (1,\vecx'_{j,0}\oplus \vecx'_{j,1})$.  
            \item Swap register $\mathsf{open}_i$ with the ancila register above, so that now $\vecz'=(\vecz'_1,\ldots,\vecz'_{n+1})$ is in register $\mathsf{open}_i$.
        \end{enumerate}
  Now we can argue that the residual state after computing the standard basis opening seems undisturbed for anyone who does not know $\sk_0$ due to the collapsing property of the underlying $\TCF$ family, and computing the Hadamard opening and verification of it does not use $\sk_0$ (and is done publicly given only $\pk)$. 
\end{remark}

Note that since
\[
    \delta= \E_{\substack{(\pk,\sk) \leftarrow \Gen(1^\lambda, 1^\ell) \\ (\vecy, \brho) \leftarrow \cC^*.\Commit(\pk, \bsigma)}}\max_{\vecb'\in\{\vecb,{\bf 0},{\bf 1\}}}\Pr[\Ver(\sk,\vecy,\vecb',\cC^*.\Open(\brho,\vecb'))=0].
\]
it holds that \begin{equation}\label{eqn:delta[ell]}
    \E_{\substack{(\pk,\sk) \leftarrow \Gen(1^\lambda, 1^\ell) \\ (\vecy, \brho) \leftarrow \cC^*.\Commit(\pk, \bsigma)}}\max_{\vecb'\in\{\vecb,{\bf 0},{\bf 1\}}}\Pr[\Ver(\sk,\vecy,\vecb',\cC^*.\Open_{[\ell]}(\brho,\vecb'))=0]\leq 2\delta.
    \end{equation}
This is the case since the probability that $\cC^*.\Open_{[\ell]}(\brho,\vecb)$ is rejected is bounded by the sum of the probabilities that $\cC^*.\Open(\brho,0^\ell)$ is rejected and  $\cC^*.\Open(\brho,1^\ell)$ is rejected. 

By \Cref{corr:binding-real}, we conclude that for every $\vecb\in\{0,1\}^\ell$,
\begin{equation}\label{eqn:real-to-real}
\Real^{\cC^*.\Open_{[\ell]}}(\secp, \vecb,\bsigma) \stackrel{6\sqrt{\delta}}\approx \Real^{\cC^*.\Open}(\secp, \vecb,\bsigma)
\end{equation}
This implies that to prover \Cref{eqn:real-ideal-3sqrt} it suffices to prove
\begin{equation}\label{eqn:real-ind-ideal-semi}
 \Real^{\cC^*.\Open_{[\ell]}}(\secp, \vecb,\bsigma) \stackrel{4\sqrt{\delta}}\approx \Ideal^{\Ext,\cC^*.\Open}(\secp, \vecb,\bsigma)
\end{equation}

To this end, we first  compute the distribution of measurement outcomes on the extracted state. While in general the input to the extractor is a mixed state $\brho$, we will perform the calculations for a general pure state $\ket{\varphi}$ instead. The results we obtain will hold for any pure state $\ket{\varphi}$. Thus, they will extend by convexity to the post-commitment state $\brho$ as well, since we can always write $\brho = \sum_k p_k \ket{\varphi_k}\bra{\varphi_k}$ for some collection of pure states $\{\ket{\varphi_k}\}$.

As a first step in the computation, we remark that for every $i,j\in[\ell]$, it holds that  $P_{Z_i}$ and $P_{Z_j}$ commute and $P_{X_i}$ and $P_{X_j}$ commute. This follows from the fact that we defined all the $P_{Z_i}$ with respect to the same unitary $U_0$ and defined all the $P_{X_i}$ with respect to the same unitary $U_1$.  
Thus, for an input state $\ket{\varphi}$, the output of the extractor can be written as
\[
\frac{1}{{(2\sqrt{2})^\ell}}
\sum_{\vecr,\vecs,\vecs'\in\{0,1\}^\ell} \ket{\vecs'}_\coin \tensor \vecZ^\vecs \vecX^{\vecr}\ket{\vecs'}_\cA\tensor  P_{\vecX_{I}}^{\vecr_I} P_{\vecX_{J}}^{\vecr_{J}} P_{\vecZ_{J}}^{\vecs_{J}} P_{\vecZ_I}^{\vecs_I}\ket{\varphi}_\cB. 
\]
\paragraph{Measuring the $I$ registers of~$\cA$ in the standard basis.} 
Now, we imagine measuring the $I$ registers of $\cA$ in the standard basis; we denote these registers by $\cA_I$. When we measure them we obtain an outcome which we will denote $\veca_I$. The unnormalized post-measurement state is obtained by applying the projector $I \otimes \ket{\veca_I}\bra{\veca_I}_{\cA_I} $ to the state, where the factor of identity acts on all registers other than $\cA_I$. To calculate what happens, let us examine what happens when we apply the projector $\ket{\veca_I}\bra{\veca_I}$ to $\vecZ^{\vecs_I}_I \vecX^{\vecr_I}_I \ket{\vecs'_I}$.
Note that 
\begin{align*}
&\bra{\veca_I} \vecZ_I^{\vecs_I} \vecX_I^{\vecr_I}\ket{\vecs'_I}= \\
&\prod_{i\in I} \bra{a_i} Z_i^{s_i} X_i^{r_i}\ket{s'_i} =\\
&\prod_{i\in I} \bra{a_i} Z_i^{s_i}\ket{s'_i\oplus r_i} =\\
&\prod_{i\in I} \bra{a_i} (-1)^{s_i\cdot a_i}\ket{s'_i\oplus r_i}=\\
&\prod_{i\in I} (-1)^{s_i\cdot a_i}\bra{a_i} \ket{s'_i\oplus r_i},\\
\end{align*}
where for every $i\in I$, $\bra{a_i}\ket{s'_i\oplus r_i}$ is $1$ if $a_i\oplus r_i = s'_i$, and $0$ otherwise. This means that if we obtain an outcome $\veca_I$, then we force the $\vecs'_I$ register to be $\veca_I \oplus \vecr_I$. This means that the sum over $\vecs'$ collapses to a sum over $\vecs'_J$, since $J$ is the complement of $I$.
Thus, we obtain the unnormalized post-measurement state
$$
\frac{1}{({2\sqrt{2}})^\ell}\sum_{\vecr,\vecs\in\{0,1\}^\ell,\vecs'_J\in\{0,1\}^{|J|}}(-1)^{\vecs_I\cdot \veca_I}\ket{\veca_I\oplus \vecr_I}_{\coin_I}\ket {\vecs'_J}_{\coin_J}\tensor \ket{\veca_I}_{\cA_I}\tensor Z_J^{\vecs_J}X_J^{\vecr_J}\ket{\vecs'_J}_{\cA_J}\tensor  P_{\vecX_{I}}^{\vecr_I} P_{\vecX_{J}}^{\vecr_{J}} P_{\vecZ_{J}}^{\vecs_{J}} P_{\vecZ_I}^{\vecs_I}\ket{\varphi}_\cB
$$
Note that 
$$\frac{1}{2^{|I|}}\sum_{\vecs_I\in\{0,1\}^{|I|}}(-1)^{\vecs_I\cdot \veca_I}P_{Z_I}^{\vecs_I}=
\frac{1}{2^{|I|}}\prod_{i\in I}\sum_{s_i\in\{0,1\}}(-1)^{s_i\cdot a_i}P_{Z_i}^{s_i}=
\prod_{i\in I}\frac{I+(-1)^{a_i}P_{Z_i}}{2}
$$
and thus the state we obtain after the projection is equal to 
$$
 \frac{1}{2^{|J|}\cdot {2}^{\ell/2}}\sum_{\vecr\in\{0,1\}^\ell,\vecs_J,\vecs'_J\in\{0,1\}^{|J|}}\ket{\veca_I\oplus \vecr_I}\tensor \ket{\vecs'_J}\tensor \ket{\veca_I}_{\cA_I}\tensor  Z_J^{\vecs_J}X_J^{\vecr_J}\ket{\vecs'_J}_{\cA_J}\tensor
 P_{\vecX_I}^{\vecr_I}P_{\vecX_J}^{\vecr_J} P_{\vecZ_J}^{\vecs_J}\prod_{i\in I}\left(\frac{I+(-1)^{a_i}P_{Z_i}}{2}\right)
\ket{\varphi}_\cB. 
$$
Denoting by $\Pi_{P_{\vecZ_I,\veca_I}}=\prod_{i\in I}\frac{I+(-1)^{a_i}P_{Z_i}}{2}$, the above projected state is equal to 
\begin{align}
\ket{\Psi_{\veca_I}} = 
  \frac{1}{2^{|J|}\cdot {2}^{\ell/2}}\sum_{\vecr\in\{0,1\}^\ell,\vecs_J,\vecs'_J\in\{0,1\}^{|J|}} &\ket{\veca_I\oplus \vecr_I}_{\coin_I} \tensor \ket{\vecs'_J}_{\coin_J}  \nonumber \\
&\tensor \ket{\veca_I}_{\cA_I}\tensor  Z_J^{\vecs_J}X_J^{\vecr_J}\ket{\vecs'_J}_{\cA_J}\tensor
 P_{\vecX_I}^{\vecr_I}P_{\vecX_J}^{\vecr_J} P_{\vecZ_J}^{\vecs_J}\Pi_{P_{\vecZ_I,\veca_I}}
\ket{\varphi}_\cB. \label{eq:ext-post-standard}
\end{align}


The square norm of this unnormalized state is the probability that the measurement returns outcome $\veca_I$. We now calculate this:

\begin{align}
    \Pr[\veca_I] &= \frac{1}{2^{2|J|} \cdot 2^\ell} \sum_{\vecr \in \{0,1\}^\ell, \vecs'_J \in \{0,1\}^{|J|}} \| \sum_{\vecs_J \in \{0,1\}^{|J|}} Z^{\vecs_J}_J X^{\vecr_J}_J \ket{\vecs'_J}_{{\cal{A}}_J} \otimes P^{\vecr_I}_{\vecX_I} P^{\vecr_J}_{\vecX_J} P^{\vecs_J}_{\vecZ_J}  \Pi_{P_{\vecZ_I, \veca_I}} \ket{\varphi}_{\cB}\|^2 \nonumber \\
    &=\frac{1}{2^{2|J|} \cdot 2^\ell} \sum_{\vecr \in \{0,1\}^\ell, \vecs'_J \in \{0,1\}^{|J|}} \| \sum_{\vecs_J \in \{0,1\}^{|J|}} (-1)^{\vecs_J \cdot (\vecs'_J + \vecr_J)} \ket{\vecs'_J + \vecr_J}_{{\cal{A}}_J} \otimes P^{\vecr_I}_{\vecX_I} P^{\vecr_J}_{\vecX_J} P^{\vecs_J}_{\vecZ_J}  \Pi_{P_{\vecZ_I, \veca_I}} \ket{\varphi}_{\cB}\|^2 \nonumber \\
    &= \frac{1}{2^{2|J|} \cdot 2^\ell} \sum_{ \vecr \in \{0,1\}^\ell, \vecs'_J \in \{0,1\}^{|J|}} \sum_{\vecs_J, \vecs''_J \in \{0,1\}^{|J|}} (-1)^{(\vecs_J + \vecs''_J) \cdot (\vecs'_J + \vecr_J)} \bra{\varphi} \Pi_{P_{\vecZ_I, \veca_I}} P^{\vecs''_J}_{\vecZ_J} P^{\vecs_J}_{\vecZ_J} \Pi_{P_{{\vecZ_I, \veca_I}}} \ket{\varphi}_{\cB} \nonumber \\
    &= \frac{1}{2^{2|J|} } \sum_{ \vecs'_J \in \{0,1\}^{|J|}} \sum_{\vecs_J \in \{0,1\}^{|J|}}  \bra{\varphi} \Pi_{P_{\vecZ_I, \veca_I}}  \Pi_{P_{{\vecZ_I, \veca_I}}} \ket{\varphi}_{\cB} \nonumber \\
    &= \bra{\varphi} \Pi_{P_{\vecZ_I, \veca_I}}  \Pi_{P_{{\vecZ_I, \veca_I}}} \ket{\varphi}_{\cB}. \label{eq:prai}
\end{align}
Thus, we have shown that the outcome distribution from the extracted state is identical to the outcome distribution from measuring the original state $\ket{\varphi}$.
\paragraph{Measuring the $J$ registers of~$\cA$ in the Hadamard basis.} 

Now, we imagine taking the standard basis post-measurement state $\ket{\Psi_{\veca_I}}$, and then further measuring the $J$ registers of $\cA$ in the Hadamard basis. We denote these registers by $\cA_J$ and the outcome by $\veca_J$. To obtain the unnormalized post-measurement state after this measurement, we apply the projector $H^{\tensor |J|}\ket{\veca_J}\bra{\veca_J}H^{\tensor |J|}$ to the $J$ registers of $\cA$.
Note that 
\begin{align*}
   &\bra{\veca_J}H^{\tensor |J|} Z_J^{\vecs_J}X_J^{\vecr_J}\ket{\vecs'_J}=\\
   &\prod_{j\in J} \bra{a_j}H Z_j^{s_j}X_j^{r_j}\ket{s'_j}=\\
    &\prod_{j\in J} \bra{a_j}H Z_j^{s_j}\ket{s'_j\oplus r_j}=\\
     &\prod_{j\in J} \bra{a_j}H (-1)^{s_j\cdot (s'_j\oplus r_j)}\ket{s'_j\oplus r_j}=\\
     &\prod_{j\in J}\frac{(-1)^{s_j\cdot(s'_j\oplus r_j)}}{\sqrt{2}}\bra{a_j}(\ket{0}+(-1)^{s'_j\oplus r_j}
\ket{1})=\\
&\frac{1}{2^{|J|/2}} \prod_{j\in J}(-1)^{(s_j\oplus a_j)\cdot(s'_j\oplus r_j)}\triangleq \beta_J^{\vecs'_J}
\end{align*}
Thus we obtain the state
$$
\frac{1}{2^{|J|}\cdot {2}^{\ell/2}}\sum_{\vecr\in\{0,1\}^\ell,\vecs_J,\vecs'_J\in\{0,1\}^{|J|}}\ket{\veca_I\oplus \vecr_I}\tensor \beta_J^{\vecs'_J} \ket{\vecs'_J}\tensor \ket{\veca_I}_{\cA_I}\tensor  H^{\otimes |J|}\ket{\veca_J}_{\cA_J}\tensor 
 P_{\vecX_I}^{\vecr_I}P_{\vecX_J}^{\vecr_J} P_{\vecZ_J}^{\vecs_J}\Pi_{P_{\vecZ_I,\veca_I}}
\ket{\varphi}_\cB. 
$$
Next, we observe that
\begin{align}
    \sum_{\vecs'_J} \beta_J^{\vecs'_J} \ket{\vecs'_J} &= \frac{1}{\sqrt{2^{|J|}}} \sum_{\vecs'_J} (-1)^{(\vecs_J \oplus \veca_J)\cdot (\vecs'_J \oplus \vecr_J)} \ket{\vecs'_J} \\
    &= (-1)^{(\vecs_J \oplus \veca_J) \cdot \vecr_J} H^{\otimes |J|} \ket{\vecs_J \oplus \veca_J} \label{eq:beta-sprime-sum}
\end{align}
Thus, applying \Cref{eq:beta-sprime-sum} to simplify the sum over $\vecs'_J$, we can write this as
\begin{align*} 
\ket{\Psi_{\veca_I, \veca_J}} = \frac{1}{2^{\ell/2}2^{|J|}} \sum_{\vecr \in \{0,1\}^\ell, \vecs_J \in \{0,1\}^{|J|}} (-1)^{(\vecs_J \oplus \veca_J)\cdot \vecr_J} &\ket{\veca_I \oplus \vecr_I}_{\coin_I} \otimes H^{\otimes |J|} \ket{\vecs_J \oplus \veca_J}_{\coin_J} \\
&\otimes \ket{\veca_I}_{\cA_I} \otimes H^{\otimes |J|} \ket{\veca_J}_{\cA_J} \otimes P^{\vecr_I}_{\vecX_I} P^{\vecr_J}_{\vecX_J} P^{\vecs_J}_{\vecZ_j} \Pi_{P_{\vecZ_I, \veca_I}} \ket{\varphi}_\cB. \end{align*}
Note that 
\[
\frac{1}{2^{|J|}}\sum_{\vecr_J\in\{0,1\}^{|J|}}(-1)^{(\vecs_J\oplus \veca_J)\cdot \vecr_J}P_{\vecX_J}^{\vecr_J}=\prod_{j\in J}\frac{I+(-1)^{s_j\oplus a_j}P_{X_j}}{2}.
\]
Let us define \[\Pi_{P_{\vecX_j},\vecs_J\oplus \veca_J}= \prod_{j \in J}
\frac{I+(-1)^{s_j\oplus a_j}P_{X_j}}{2}.
\]
Then we can rewrite $\ket{\Psi_{\veca_I, \veca_J}}$ as
\begin{align*}
\ket{\Psi_{\veca_I, \veca_J}} = \frac{1}{{2}^{\ell/2}}\sum_{\vecr_I \in\{0,1\}^{|I|},\vecs_J \in\{0,1\}^{|J|}} \ket{\veca_I\oplus \vecr_I}_{\coin_I} &\tensor H^{\otimes |J|} \ket{\vecs_J\oplus \veca_J}_{\coin_J}
\tensor \ket{\veca_I}_{\cA_I}\tensor  H^{\otimes |J|}\ket{\veca_J}_{\cA_J} \\
&\tensor 
 P_{\vecX_I}^{\vecr_I}\Pi_{P_{\vecX_J},\veca_J\oplus \vecs_J}P_{\vecZ_J}^{\vecs_J}\Pi_{P_{\vecZ_I,\veca_I}}
\ket{\varphi}_\cB. 
\end{align*}
The square norm of this unnormalized state is the probability that the measurement returns outcome $\veca_J$. We now calculate this:
\begin{align}
    \Pr[\veca_I, \veca_J] &= \frac{2^{|I|}}{2^\ell} \left\|  \sum_{\vecs_J \in \{0,1\}^{|J|}}   H^{\otimes |J|}\ket{\vecs_J\oplus \veca_J}_{\coin_J}\otimes  \Pi_{P_{\vecX_J},\veca_J\oplus \vecs_J}P_{\vecZ_J}^{\vecs_J}\Pi_{P_{\vecZ_I,\veca_I}}
\ket{\varphi}_\cB\right\|^2 \nonumber \\
&=\frac{1}{2^{|J|}} \sum_{\vecs_J \in \{0,1\}^{|J|}} \| \Pi_{P_{\vecX_j, \veca_J \oplus \vecs_J}} P_{\vecZ_J}^{\vecs_J}\Pi_{P_{\vecZ_I,\veca_I}}
\ket{\varphi}_\cB \|^2 \nonumber \\
&= \E_{\vecs_J \in \{0,1\}^{|J|}} \| \Pi_{P_{\vecX_j, \veca_J \oplus \vecs_J}} P_{\vecZ_J}^{\vecs_J}\Pi_{P_{\vecZ_I,\veca_I}}
\ket{\varphi}_\cB \|^2.  \label{eq:praj}
\end{align}
This equation can be interpreted operationally as follows: the probability of obtaining an outcome $(\veca_I, \veca_J)$ by measuring the extracted state is \emph{equal} to the probability of obtaining this outcome by the following procedure acting on $\ket{\varphi}$:
\begin{enumerate}
    \item First, measure the observables $P_{Z_i}$ for every $i \in I$ on $\ket{\varphi}$, obtaining an outcome $\veca_I$.
    \item Next, sample a string $\vecs_J \in \{0,1\}^{|J|}$ uniformly at random and apply $P_{\vecZ_J}^{\vecs_J}$ to the state.
    \item Next, measure the observables $P_{X_j}$ for every $j \in J$ on the state, obtaining an outcome $\vecu_{J}$. 
    \item Set $\veca_J = \vecu_J \oplus \vecs_J$ and return $(\veca_I, \veca_J)$. 
\end{enumerate}

\paragraph{The proof of \Cref{eqn:real-ind-ideal-semi}} 
We first define a new distribution, which we denote by $\widehat{\Real}^{\cC^*.\Open_{[\ell]}}(\secp, \vecb,\bsigma)$. This distribution is identical to $\Real^{\cC^*.\Open_{[\ell]}}(\secp, \vecb,\bsigma)$ except that it does not run the $\Ver$ algorithm (i.e., it does not run \Cref{item:real:ver} of the definition of $\Real$ in \Cref{def:binding}), and simply sets $\vecm=\Out(\sk,\vecy, \vecb,\vecz)$.  We note that by \Cref{lem:gentle-mixed},
\begin{equation}\label{eqn:real-ind-hat-real}
 \widehat{\Real}^{\cC^*.\Open_{[\ell]}}(\secp, \vecb,\bsigma)\stackrel{\sqrt{2\delta}}\approx 
 \Real^{\cC^*.\Open_{[\ell]}}(\secp, \vecb,\bsigma) 
\end{equation}
where recall $2\delta$ is the probability that $\Ver$ rejects $\cC^*.\Open_{[\ell]}(\secp, \vecb,\bsigma)$ (see \Cref{eqn:delta[ell]}).
Therefore to prove \Cref{eqn:real-ind-ideal-semi} it suffices to prove that
\begin{equation}\label{eqn:hat-real-ind-ideal}
\widehat{\Real}^{\cC^*.\Open_{[\ell]}}(\secp, \vecb,\bsigma) \stackrel{2\sqrt{\delta}}\approx \Ideal^{\Ext,\cC^*.\Open}(\secp, \vecb,\bsigma)
\end{equation}
To this end, we first claim that
\[
(\pk,\vecy,\vecb,\vecm_{\Sim,I})\equiv (\pk,\vecy,\vecb,\vecm_{\widehat{\Real},I})
\]
where $(\pk,\vecy,\vecb,\vecm_{\Sim,I})$ is distributed by generating 
\[(\pk,\vecy,\vecb,\vecm)\gets {\Ideal}^{\Ext,\cC^*.\Open}(\secp, \vecb,\bsigma)\]
and outputting $(\pk,\vecy,\vecb,\vecm_I)$, and 
$(\pk,\vecy,\vecb,\vecm_{\widehat{\Real},I})$ is distributed by generating 
\[(\pk,\vecy,\vecb,\vecm)\gets\widehat{\Real}^{\cC^*.\Open_{[\ell]}}(\secp, \vecb,\bsigma)
\] and outputting $(\pk,\vecy,\vecb,\vecm_I)$.
To see why this is true, recall that \Cref{eq:prai} implies that for a given $\pk, \vecy, \vecb$, the outcome $\vecm_{\Sim, I}$, which is equal to $\veca_I$ in the notation used in that equation, is distributed according to the outcome of measuring $P_{Z_i}$ on the qubits $i\in I$ qubits of the post-commitment state. Moreover, $P_{Z_i}$ is defined so that it exactly matches the action of $\widehat{\Real}$ since both do not run $\Ver$. 

\begin{remark}\label{remark:P-and-P'} We note that the observable $P_{Z_i}$ was defined with respect to the opening algorithm $\cC^*.\Open$ and we are considering $\widehat{\Real}$ with respect to the opening algorithm $\cC^*.\Open_{[\ell]}$.  The observable $P_{Z_i}$ corresponding to $\cC^*.\Open$, when viewed as a unitary, is different from observable corresponding to $\cC^*.\Open_{[\ell]}$, denoted by $P'_{Z_i}$, when viewed as a unitary.
In particular, recall that 
\[
 P_{Z_i}=U_{0}^\dagger \Out_{i,0}^{\dagger} Z_{\out_i} \Out_{i,0} U_{0}
 \]
 whereas 
 \[
 P'_{Z_i}=U_{0}^\dagger U_{\mathsf{post}}^\dagger \Out_{i,0}^{\dagger} Z_{\out_i} \Out_{i,0} U_{\mathsf{post}} U_{0},
 \]
 where $U_{\mathsf{post}}$ is the unitary that does some post-processing to the $\mathsf{open}_i$ register to ensure that measuring it will not disturb the state in a detectable way. Despite the fact that $P_{Z_i}$ and $P'_{Z_i}$ are different unitaries, on the subspace where the ancila registers are initialized to $\ket{0}$, they are identical operators.  In particular, $P'_{Z_i}$ preserves the subspace where the ancila registers are initialized to $\ket{0}$.
 \end{remark}


To avoid cluttering of notation, from now on we denote $\vecm_{\widehat{\Real},I}$ and $\vecm_{\Sim,I}$ by $\vecm_I$.  Denote by 
\[\brho'_{I}= \Pi_{P_{\vecZ_{I}},\vecm_{I}}[\brho]\]
where $\brho$ is post-commitment state and $\vecm_{I}$ is distributed as $\vecm_{\Sim,I}$.
We note that in the experiment $\widehat{\Real}^{\cC^*.\Open_{[\ell]}}(\secp,\vecb,\bsigma)$, the post state after measuring $\vecm_I$ is  $\brho'_{I}$.  This follows from \Cref{remark:P-and-P'}.
We prove that 
\begin{equation}\label{eqn:final}
(\pk,\vecy,\vecb,\vecm_{I},\vecm_{\widehat{\Real},J})\stackrel{2\sqrt{\delta}}\approx (\pk,\vecy,\vecb, \vecm_{I},\vecm_{\Sim,J})
\end{equation}
where $\vecm_{\widehat{\Real},J}$ is obtained as follows:
\begin{enumerate}
\item Compute $\vecz_J\gets \cC^*.\Open_{[\ell]}(\brho'_I,(J,\vecb_J))$.
\item For every $j\in J$ let $\vecm_{\widehat{\Real},j}=\Out(\sk,\vecy,(j,1),\vecz_j)$
\item Output  $\vecm_{\widehat{\Real},J}=\{\vecm_{\widehat{\Real},j}\}_{j\in J}$. 
\end{enumerate}
To describe how $\vecm_{\Sim,J}$ is obtained, we take the procedure obtained immediately below \Cref{eq:praj}, and apply the definitions of the operational observable $P_X$, to obtain the following:
\begin{enumerate}
    \item Sample at random $\vecs_J\gets \{0,1\}^{|J|}$.
    \item Compute $\vecz_J\gets \cC^*.\Open_{[\ell]}(P_{Z_J}^{\vecs_J}[\brho'_I],(J,\vecb_J))$.

    \item For every $j\in J$ let  $u_j =\Out(\sk,\vecy,(j,1),\vecz_j)$.
    \item  For every $j\in J$ let $\vecm_{\Sim,j}=u_j\oplus \vecs_j$.
    \item Output $\vecm_{\Sim,J}=(\vecm_{\Sim,j})_{j\in J}$.
\end{enumerate}
We prove \Cref{eqn:final} separately for the non-succinct and the semi-succinct versions. For the non-succinct version we rely on the adaptive hardcore bit property and for the semi-succinct version we rely on the distributional strong adaptive hardcore bit property.  In both cases, we assume without loss of generality that $\Open_{[\ell]}$ opens in the Hadamard basis honestly, by measuring the relevant qubits in the standard basis. For every $j\in J$, we denote by $\cO_{j}$ the $n+1$ registers that are measured to obtain the opening of the $j$'th committed qubit in the Hadamard basis.

\paragraph{Proof of \Cref{eqn:final} in the non-succinct setting.}

Let $\Pi_\Ver[\brho'_I]$ denote the state $\brho'_I$ projected to 
\[\Ver(\sk,\vecy,(J,1^{|J|}),\Open(\brho'_I,(J,1^{|J|})))=1.
\]
By \Cref{lem:gentle-mixed},
\[
\Pi_\Ver[\brho'_I]\stackrel{\sqrt\delta}\equiv\brho'_I.
\]
Therefore to prove \Cref{eqn:final} it suffices to prove that 
\begin{equation}\label{eqn:final'}
(\pk,\vecy,\vecb,\vecm_{I},\vecm^*_{{\Real},J})\approx (\pk,\vecy,\vecb, \vecm_{I},\vecm^*_{\Sim,J})
\end{equation}
 where $\vecm^*_{{\Real},J}$ is distributed as $\vecm_{\widehat{\Real},J}$ except that $\brho'_I$ is replaced with $\Pi_\Ver[\brho'_I]$.  Similarly, $\vecm^*_{\Sim,J}$ is distributed as $\vecm_{\Sim,J}$  except that $\brho'_I$  is replaced with $\Pi_\Ver[\brho'_I]$.\\


To prove \Cref{eqn:final'} it suffices to prove that
\begin{equation}\label{eqn:ideal-real-H-wo-about}
\left(\pk,\vecy,\vecb,\vecm_{I},\{\vecm_{j,0}\}_{j\in J}\right)\approx\left(\pk,\vecy,\vecb,\vecm_{I},\{\vecm_{j,1}\oplus 1\}_{j\in J}\right)
\end{equation}
where for every $j\in J$ and $u\in\{0,1\}$, \[(\vecz_{j,u},\brho'_{j,u})=\cC^*.\Open(P^u_{Z_j}\Pi_\Ver[\brho'_I],(j,1))~~\mbox{ and }~~\vecm_{j,u}=\Out(\sk,\vecy,(j,1),\vecz_{j,u}).
\]

We prove that for every $j\in J$,
\begin{align}\label{eqn:hyb:succ}
\left(\pk,\sk_{-(j,0)},\vecy,\vecb,\vecm_I,\vecm_{j,0},\brho'_{j,0}\right)\approx \left(\pk,\sk_{-(j,0)},\vecy,\vecb,\vecm_I,\vecm_{j,1}\oplus 1,\brho'_{j,1}\right),
\end{align} 
where $\sk_{-(j,0)}$ denotes all the secret keys except $\sk_{j,0}$.  Namely,
\[
\sk_{-(j,0)}\triangleq \left(\sk_{[\ell],1},\ldots,\sk_{[\ell],n+1},\sk_{[\ell]\setminus\{j\},0}\right).
\]
We next argue that \Cref{eqn:hyb:succ} implies \Cref{eqn:ideal-real-H-wo-about}.
To this end, we first notice that $P_{Z_j}$ and $\cC^*.\Open(\cdot,(j,1))$ only touch the registers corresponding to the $j$'th committed qubit. This follows from our assumption that $\cC^*.\Open$ behaves honestly when opening in the Hadamard basis.  This in turn implies that  for every $u\in\{0,1\}$ it holds that $\brho'_{j,u}$ and $\Pi_\Ver[\brho'_I]$ are distributed identically on the registers that do not correspond to the $j$'th committed qubit. 
Thus, \Cref{eqn:hyb:succ} implies that
\begin{align}\label{eqn:hyb:succ2}
\left(\pk,\sk_{-(j,0)},\vecy,\vecb,\vecm_I,\vecm_{j,0},\Pi_\Ver[\brho'_I]_{\{\cO_{j}\}_{j\in J\setminus\{j\}}}\right)\approx \left(\pk,\sk_{-(j,0)},\vecy,\vecb,\vecm_I,\vecm_{j,1}\oplus 1,\Pi_\Ver[\brho'_I]_{\{\cO_{j}\}_{j\in J\setminus\{j\}}}\right).
\end{align} 
We next note that $\vecm_{j,u}$ is a $\QPT$ function of $\Pi_\Ver[{\brho'_I}]_{\cO_j}$ and $\sk_{j}$.
This, together with a hybrid argument implies that indeed
\Cref{eqn:hyb:succ2} implies \Cref{eqn:ideal-real-H-wo-about}, as desired.

Thus, we focus on proving \Cref{eqn:hyb:succ}.
Fix $j\in J$ and consider the mixed state
$$
\brho_{\mathsf{mix,j}}=\frac12 \Pi_{\Ver}[\brho'_I] +\frac12 P_{Z_{j}}\Pi_{\Ver}[\brho'_I] 
$$
Note that this state can be generated efficiently, with probability $1-\delta$, from $\brho$ given $(\sk_{[\ell],1},\ldots,\sk_{[\ell],n+1})$. In addition, note that $\brho_{\mathsf{mix,j}}$
is identical to the state $\Pi_{\Ver}[\brho'_I]$ after measuring it in the $P_{Z_{j}}$ basis. 
Recall that we assume that $\cC^*.\Open$ behaves honestly on the Hadamard basis.  Thus, the $(n+1)^2$ qubits of this projected state $\brho_{\mathsf{mix,j}}$ corresponding to the $j$'th committed qubit are in superposition over $\ket{d_1,\vecx'_1}\ldots\ket{d_{n+1},\vecx'_{n+1}}$ such that for every $i\in[n+1]$, $\vecy_{j,i}=\Eval(\pk_{j,i},d_i,\vecx'_i)$.

Let
$$(\vecz^*,\brho^*)\gets \cC^*.\Open\left(\brho_{\mathsf{mix,j}},(j,1)\right).
$$
By the adaptive hardcore bit property (w.r.t.~$\pk_{j,0}$), letting $m^*=\Out(\sk,\vecy,(j,1),\vecz^*)$, 
\begin{equation}\label{eqn:m*}
(\pk,\sk_{-(j,0)},\vecy,\vecb,\vecm_I,m^*,\brho^*)\approx (\pk,\sk_{-(j,0)},\vecy,\vecb,\vecm_I,U,\brho^*)
\end{equation}
where $U$ is the uniform distribution over $\{0,1\}$.
Note that~$m^*$ is a random variable that 
with probability~$\frac12$ is distributed identically to $\vecm_{j,0}$ and with probability~$\frac12$ is distributed identically to $\vecm_{j,1}$.
We next argue that \Cref{eqn:m*} implies that 
\begin{align*}
&(\pk,\sk_{-(j,0)},\vecy,\vecb,\vecm_I,\vecm_{j,0},\brho'_{j,0})\approx \\&(\pk,\sk_{-(j,0)},\vecy,\vecb,\vecm_I,\vecm_{j,1}\oplus 1,\brho'_{j,1}),
\end{align*}
as desired. 
To this end, suppose for contradiction that there exists a $\BQP$ algorithm $\A$ and a non-negligible $\epsilon>0$ such that 
\begin{align*}
&\Pr[\A(\pk,\sk_{-(j,0)},\vecy,\vecb,\vecm_I,\vecm_{j,1}\oplus 1,\brho'_{j,1})=1]-\\
&\Pr[\A(\pk,\sk_{-(j,0)},\vecy,\vecb,\vecm_I,\vecm_{j,0},\brho'_{j,0})=1]\geq \epsilon.
\end{align*}
Denote by 
\[
p_u=\Pr[\A(\pk,\sk_{-(j,0)},\vecy,\vecb,\vecm_I,\vecm_{j,u},\brho'_{j,u})=1]
\]
and denote by 
$$
p'_1=\Pr[\A(\pk,\sk_{-(j,0)},\vecy,\vecb,\vecm_I,\vecm_{j,1}\oplus 1,\brho'_{j,1})=1]
$$
Note that  
$$
\Pr[\A(\pk,\sk_{-(j,0)},\vecy,\vecb,\vecm_I,\vecm^*,\brho^*)=1]=\frac12 p_0+\frac12 p_1.
$$
On the other hand 
$$
\Pr[\A(\pk,\sk_{-(j,0)},\vecy,\vecb,\vecm_I,U,\brho_{j,1}')=1]=\frac12 p'_1+\frac12 p_1,
$$
which by the collapsing property implies that there exists a negligible function~$\mu$ such that
$$
\Pr[\A(\pk,\sk_{-(j,0)},\vecy,\vecb, \vecm_I,U, \brho^*)=1]=\frac12 p'_1+\frac12 p_1 \pm \mu
$$
This contradicts \Cref{eqn:m*} since
$$
\left(\frac12 p'_1+\frac12 p_1\pm\mu\right)-\left(\frac12 p_0+\frac12 p_1\right)=\frac12 (p'_1-p_0)-\mu\geq \frac{\epsilon}{2} \pm\mu,
$$
which is non-negligible.

\paragraph{The proof of \Cref{eqn:final}  for the semi-succinct version.}

We prove \Cref{eqn:final} holds if the underlying $\TCF $ family satisfies the strong adaptive hardcore bit property (see \Cref{def:TCF}) and the distributional strong adaptive hardcore bit property (\Cref{def:dist-stat-HCB}).
Let $\Pi_\Ver[\brho'_I]$ denote the state $\brho'_I$ projected to 
\[\Ver(\sk,\vecy,(J,0^{|J|}),\cC^*.\Open(\brho'_I,(J,0^{|J|})))=1.\footnote{Note that this is different from the non-succinct case where we used $\Pi_\Ver[\brho'_I]$ to denote the state $\brho'_I$ projected to 
$\Ver(\sk,\vecy,(J,1^{|J|}),\cC^*\Open(\brho'_I,(J,1^{|J|})))=1$.}
\]
By \Cref{lem:gentle-mixed},  
\[
\Pi_\Ver[\brho'_I]\stackrel{\sqrt{\delta}}\equiv\brho'_I.
\]
Therefore to prove \Cref{eqn:final} it suffices to prove that 
\begin{equation}\label{eqn:final''}
(\pk,\vecy,\vecb,\vecm_{I},\vecm^*_{{\Real},J})\approx (\pk,\vecy,\vecb, \vecm_{I},\vecm^*_{\Sim,J})
\end{equation}
 where $\vecm^*_{{\Real},J}$ is distributed as $\vecm_{\widehat{\Real},J}$ except that $\brho'_I$ is replaced with $\Pi_\Ver[\brho'_I]$.  Similarly, $\vecm^*_{\Sim,J}$ is distributed as $\vecm_{\Sim,J}$  except that $\brho'_I$  is replaced with $\Pi_\Ver[\brho'_I]$.\\
We prove \Cref{eqn:final''} via a hybrid argument. Namely, denoting by $k=|J|$, we prove that for every $\alpha\in[k]$,
\begin{align}\label{eqn:induction}
\left(\pk,\vecy,\vecb,\vecm_I,\vecm^{(\alpha-1)}_J\right)\approx
\left(\pk,\vecy,\vecb,\vecm_I,\vecm^{(\alpha)}_J\right)
\end{align}
where $\vecm^{(\beta)}_J$ is distributed exactly as $\vecm_{\Sim,J}$ accept that rather than choosing $\vecs_J\gets \{0,1\}^{|J|}$ at random, we only choose the first $\beta$ coordinates randomly and the rest we set to zero.  Namely, we choose $\vecs_1,\ldots,\vecs_{\beta}\gets\{0,1\}$ and set $\vecs_{\beta+1}=\ldots=\vecs_k=0$.
\Cref{eqn:induction} implies that 
\begin{align*}
\left(\pk,\vecy,\vecb,\vecm_I,\vecm^{(0)}_J\right)\approx
\left(\pk,\vecy,\vecb,\vecm_I,\vecm^{(k)}_J\right)
\end{align*}
thus proving \Cref{eqn:ideal-real-H-wo-about} since 
\[
\left(\pk,\vecy,\vecb,\vecm_I,\vecm^{(0)}_J\right)\equiv \left(\pk,\vecy,\vecb,\vecm_I,\vecm^*_{\widehat{\Real},J}\right)~~\mbox{ and }~~
\left(\pk,\vecy,\vecb,\vecm_I,\vecm^{(k)}_J\right)\equiv \left(\pk,\vecy,\vecb,\vecm_I,\vecm^*_{\Sim,J}\right).
\]
Fix any $\alpha\in [k]$ and we prove \Cref{eqn:induction} for this $\alpha$.   Denote by 
$J=\{j_1,\ldots,j_k\}\subseteq[\ell]$. For every $\vecs_{[\alpha-1]}=(s_1,\ldots,s_{\alpha-1})\in\{0,1\}^{\alpha-1}$ consider the states
\begin{equation}\label{eqn:rho*}\brho_{\vecs_{[\alpha-1]}}=\prod_{i=1}^{\alpha-1} P^{s_i}_{Z_{j_i}}\Pi_\Ver[\brho'_I]~~\mbox{ and }~~\brho^{*}_{\vecs_{[\alpha-1]}}=\frac12\brho_{\vecs_{[\alpha-1]}}+\frac12 P_{Z_{j_\alpha}}[\brho_{\vecs_{[\alpha-1]}}].
\end{equation}

The main ingredient in the proof of \Cref{eqn:induction} is following proposition.
\begin{proposition}\label{claim:blah}
\begin{align}\label{eqn:blah}
&(\pk, \vecy,\vecb,\vecm_I,(\vecx_{j_i,0}\oplus\vecx_{j_i,1})_{i\in[k]},\vecd^*_{J\setminus\{j_{\alpha}\}}, \vecd^*_{j_{\alpha}}\cdot(1, \vecx_{j_\alpha,0}\oplus\vecx_{j_\alpha,1}))\approx\\
&(\pk, \vecy,\vecb,\vecm_I,(\vecx_{j_i,0}\oplus\vecx_{j_i,1})_{i\in[k]},\vecd^*_{J\setminus\{j_{\alpha}\}},U)\nonumber
\end{align}
where $\vecd^*_J$ is distributed by  measuring the registers $\cO_{j_1},\ldots,\cO_{j_k}$ of $\brho^*_{\vecs_{[\alpha-1]}}$ in the standard basis.
\end{proposition}

In what follows, we  use \Cref{claim:blah} to prove \Cref{eqn:induction} and then we prove  \Cref{claim:blah}.  To this end, for every  $u\in\{0,1\}$, denote by 
 \[\vecd^{(u)}_J=\left(\vecd^{(u)}_{j_1},\ldots,\vecd^{(u)}_{j_k}\right)
 \] 
 the values obtained by measuring registers $\cO_{j_1},\ldots,\cO_{j_k}$ of $P^{(u)}_{Z_{j_{\alpha}}}[\brho_{\vecs_{[\alpha-1]}}]$ in the standard basis.  For every $i\in[k]$, denote by 
\[m^{(u)}_{j_i}=\vecd^{(u)}_{j_i}\cdot(1, \vecx_{j_i,0}\oplus \vecx_{j_i,0})~~\mbox{ and }~~m^{*}_{j_i}=\vecd^{*}_{j_i}\cdot(1, \vecx_{j_i,0}\oplus \vecx_{j_i,0})
\]
and let 
\[
\vecm^{(u)}=\left(m^{(u)}_{j_1},\ldots,m^{(u)}_{j_k}\right)~~\mbox{ and }~~\vecm^{*}=\left(m^*_{j_1},\ldots,m^*_{j_k}\right).
\]
Fix any $\BQP$ adversary $\A$.  For every $u\in\{0,1\}$ denote by 
$$
p_u=\Pr[\A\left(\pk, \vecy,\vecb,\vecm_I, \vecm^{(u)}_J\right) =1],$$
and denote by
$$
p'_1=\Pr[\A\left(\pk, \vecy,\vecb,\vecm_I, \left\{\vecm^{(1)}_{J\setminus\{\alpha\}},m^{(1)}_{\alpha}\oplus 1\right\} \right)=1].
$$
To prove \Cref{eqn:induction} we need to prove that
\begin{equation}\label{eqn:ind:negl}
|p'_1-p_0|\leq \negl(\secp).
\end{equation}
Note that 
$$
\Pr\left[\A\left(\pk, \vecy,\vecb,\vecm_I, \vecm^*_J\right)=1\right]=\frac12 p_0+\frac12 p_1
$$
which by \Cref{claim:blah} implies that 
\[
\Pr[\A\left(\pk,\vecy,\vecb,\vecm_I, \left\{\vecm^*_{J\setminus\{\alpha\}},U\right\}\right)=1]=\frac12 p_0+\frac12 p_1\pm\negl(\secp)
\]
In addition note that
$$
\Pr[\A\left(\pk, \vecy,\vecb,\vecm_I, \left\{\vecm^{(1)}_{J\setminus \{\alpha\}}, U\right\}\right) =1] =\frac12 p'_1+\frac12 p_1.
$$
It remains to note that 
\[
\left(\pk,\vecy,\vecb,\vecm_I, \vecm^*_{J\setminus\{\alpha\}}\right)\equiv \left(\pk, \vecy,\vecb,\vecm_I, \vecm^{(1)}_{J\setminus \{\alpha\}}\right),
\]
which together with the equation above, implies that \Cref{eqn:ind:negl} indeed holds, 
thus proving \Cref{eqn:induction}, as desired.\\

The rest of the proof is dedicated to proving \Cref{claim:blah}, which we prove assuming the underlying $\TCF$ family has the strong adaptive hardcore bit property (see \Cref{def:TCF}) and the distributional strong adaptive hardcore bit property (\Cref{def:dist-stat-HCB}).

\paragraph{Proof of \Cref{claim:blah}.}

We start by defining $\QPT$ algorithms $\A$  and $C$ for the distributional strong adaptive hardcore bit property.
\begin{description}
    \item {\bf Algorithm $\A$.}
    It takes as on input $\pk_{0}$ generated by $\Gen_\TCF(1^\secp)$, and does the following:
\begin{enumerate}
    \item For every $i\in[n+1]$ generate $(\pk_{i},\sk_i)\gets \Gen_\TCF(1^\secp)$.
    \item Set $\pk=(\pk_{0},\pk_{1},\ldots,\pk_{n+1})$.
    \item Compute $(\vecy,\brho)\gets \cC^*.\Commit(\pk,\bsigma)$.
    
    \item Use $\sk_1$ to generate 
        $\brho'_I=\Pi_{P_{Z_I},\vecm_I}(\brho)$.
        \item Use $\sk_1,\ldots,\sk_{n+1}$ to compute $\Pi_\Ver[\brho'_I]$, which is the state $\brho'_I$ projected to \[\Ver\left(\left(\sk_1,\ldots,\sk_{n+1}\right),\vecy,\left(J,0^{|J|}\right),\cC^*.\Open\left(\brho'_I,\left(J,0^{|J|}\right)\right)\right)=1.\]
        This step fails with probability $\delta$, in which case $\A$ outputs~$\bot$.
        \item Use $\sk_1$ to compute  $\brho_{\vecs_{[\alpha-1]}}=\prod_{i=1}^{\alpha-1} P^{s_i}_{Z_{j_i}}\Pi_\Ver[\brho'_I]$.
        \item Use $\sk_1$ to compute $\brho^{*}_{\vecs_{[\alpha-1]}}=\frac12\brho_{\vecs_{[\alpha-1]}}+\frac12 P_{Z_{\alpha}}[\brho_{\vecs_{[\alpha-1]}}]$.
        
    \item Parse $\vecy=(\vecy_1,\ldots,\vecy_\ell)$ and parse $\vecy_{j_\alpha}=(\vecy_{j_\alpha,0},\vecy_{j_\alpha,1},\ldots,\vecy_{j_\alpha,n+1})$
    \item Measure the $P_{Z_{j_\alpha}}$ observable of the state $\brho^{*}_{\vecs_{[\alpha-1]}}$ to obtain a preimage $(b,\vecx_{j_\alpha,b})$ of $\vecy_{j_\alpha,0}$.  
    
    Denote the post-measurement state by $\brho^{**}$.  We assume without loss of generality that this state includes $(\pk,(\sk_1,\ldots,\sk_{n+1}),\vecy,\vecm_I)$.
    \item Output $(\vecy_{j_\alpha,0},b,\vecx_{j_\alpha,b},\brho^{**})$.
    
    We rename the register $\cO_{j_\alpha}$ of $\brho^{**}$ by $\cO_1$ and rename all the other registers by $\cO_2$.
    
\end{enumerate} 
\item {\bf Algorithm~$C$.}
It takes as input the secret vector $\sk_{0,\mathsf{pre}}$, corresponding to $\sk_0$, and the $\cO_2$ registers  of $\brho^{**}$, denoted by $\brho_{\cO_2}^{**}$, and does the following:
\begin{enumerate}
    \item For every $i\in[k]\setminus\{\alpha\}$ use $\sk_1,\ldots,\sk_{n+1}$ to coherently compute a preimage of $\vecy_{i,0}$ as follows:
    
    \begin{itemize}
        \item Coherently compute $\cC^*.\Open(\brho^{**},(j_i,0))$.
        \item Coherently run $\Ver((\sk_1,\ldots,\sk_{n+1}),\vecy_{j_i},(j_i,0),\cdot)$ to to coherently generate a preimage. 
    \end{itemize} 
    Note that since the state $\Pi_\Ver[\brho'_I]$ is the state $\brho'_I$ projected to the state where $\cC^*.\Open$ is always accepted when the opening is in the standard basis, we indeed get a coherent preimage of $\vecy_{j_i,0}$ with probability~$1$. 
    
    \item Use $\sk_{0,\mathsf{pre}}$ to compute (and measure) $\vecx_{j_i,0}\oplus \vecx_{j_i,1}$.
    
    Note that this is a deterministic quantity and hence does not disturb the state.

\item Measure all the registers in $\cO_2$ corresponding to $\{\cO_{j_i}\}_{i\in [k]\setminus\{\alpha\}}$ to obtain $\vecd^*_{J\setminus \{j_\alpha\}}$.
\item Output $\aux=\left(\pk,\vecy, \vecm_I,(x_{j_i,0}\oplus x_{j_i,1})_{i\in[k]\setminus\{\alpha\}}, \vecd^*_{J\setminus \{j_\alpha\}}\right)$.

Denote the residual state by ${\brho_\aux^{**}}$.
\end{enumerate}
\end{description}
To finish the proof of \Cref{claim:blah} it remains to prove that $\vecd^*_{j_\alpha}$ satisfies the desired min-entropy requirement.
This is captured in the proposition below.   

\begin{proposition}\label{claim:min-entropy}
With overwhelming probability over  $\aux\gets C(\sk_{0,\mathsf{pre}},{\brho_{\cO_2}^{**}})$ it holds that for every $\vecd'\in\{0,1\}^k$, 
\begin{equation*}  
\Pr[\Good(\vecx_{j_\alpha,0},\vecx_{j_\alpha,1},\vecd^*_{j_\alpha})=\vecd']=\negl(\secp).
\end{equation*}
where $\vecd^*_{j_\alpha}$ is obtained by measuring the $\cO_1$ registers of ${\brho_\aux^{**}}$ in the standard basis.
\end{proposition}
The proof of \Cref{claim:min-entropy} relies on the adaptive hardcore bit property of the underlying $\TCF$ family.  It also makes use the following fact, which was used previously, and which follows immediately from the definition of $P_{Z_j}$.\footnote{Recall that $P_{Z_j}$ does not check if $\Ver(\sk,\vecy,(j,0),\cdot)=1$.}
\begin{fact}\label{claim:rho*}
    For every $\alpha\in[k]$ and every $\vecs_{[\alpha-1]}\in\{0,1\}^{\alpha-1}$, the state $\brho_{\vecs_{[\alpha-1]}}$ and the state $\brho^*_{\vecs_{[\alpha-1]}}$ can be efficiently constructed from $(\pk,\vecy,\brho)$ and $\sk_1$.  
\end{fact}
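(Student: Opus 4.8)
The plan is to prove \Cref{claim:rho*} by simply unwinding the definitions of $\brho_{\vecs_{[\alpha-1]}}$ and $\brho^*_{\vecs_{[\alpha-1]}}$ and checking that every operator occurring in them is implementable by a $\QPT$ circuit that reads only $(\pk,\vecy)$ together with the second-layer secret keys $\sk_1,\dots,\sk_{n+1}$ of the (reused) single-qubit scheme --- and, crucially, never the first-layer secret key $\sk_0$. (This is exactly the key bookkeeping already used in steps 4--7 of algorithm $\A$ in the proof of \Cref{claim:blah}; the component written $\sk_1$ in the statement is shorthand for this collection.)

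First I would record the relevant key-usage facts. By \Cref{def:obs-multi}, $P_{Z_i} = U_0^\dagger\,\Out_{i,0}^\dagger\,Z_{\out_i}\,\Out_{i,0}\,U_0$ is a conjugate of the Pauli $Z_{\out_i}$, hence a Hermitian unitary (a $\pm 1$ observable); here $U_0$ is the unitary dilation of $\cC^*.\Open(\cdot,0^\ell)$, which by the model of \Cref{lemma:soundness-single1,corr:binding-real} is $\BQP$ given $\sk_1,\dots,\sk_{n+1}$, and $\Out_{i,0}$ is the reversible implementation of $\Out(\sk,\vecy,(i,0),\cdot)$, which by \Cref{construction:single} only inverts $\vecy_1$ under $\sk_1$. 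Likewise $\Ver(\sk,\vecy,(i,0),\cdot)$ of \Cref{construction:single} uses only $\sk_1,\dots,\sk_{n+1}$ (for the $\Invert_\TCF$ calls) and $\pk_0$ (for the concluding $\Check_\TCF$ test), never $\sk_0$. Consequently each $P_{Z_i}$ (for all $i$), as well as the coherent projector $\Pi_\Ver$ onto the event that $\Ver(\sk,\vecy,(J,0^{|J|}),\cC^*.\Open(\cdot,(J,0^{|J|})))$ accepts, is a $\QPT$ circuit in $(\pk,\vecy,\sk_1,\dots,\sk_{n+1})$ alone.

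Next I would assemble the two states. Starting from $\brho$ with $L$ fresh $\ket{0}$ ancillas appended (as in the definition of $\brho_\Ext$): for each $i\in I$ coherently apply $P_{Z_i}$ and measure the designated $\out_i$-register, collecting the outcomes as $\vecm_I$ and keeping the residual state $\brho'_I = \Pi_{P_{\vecZ_I},\vecm_I}[\brho]$; then apply the coherent projector $\Pi_\Ver$ and post-select on acceptance (which fails only with probability at most the $\delta$ already accounted for in the surrounding argument), obtaining $\Pi_\Ver[\brho'_I]$; then apply $\prod_{i=1}^{\alpha-1} P_{Z_{j_i}}^{s_i}$, which is by definition $\brho_{\vecs_{[\alpha-1]}}$. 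Every step is a $\QPT$ operation on $\brho$ using only $(\pk,\vecy,\sk_1,\dots,\sk_{n+1})$, which establishes the claim for $\brho_{\vecs_{[\alpha-1]}}$. For $\brho^*_{\vecs_{[\alpha-1]}}$ I would measure the $\pm 1$ observable $P_{Z_{j_\alpha}}$ on $\brho_{\vecs_{[\alpha-1]}}$ and discard the outcome: writing $\Pi_\pm = \tfrac12(I \pm P_{Z_{j_\alpha}})$ for the two measurement projectors and using that $P_{Z_{j_\alpha}}$ is Hermitian and unitary, the ensemble average $\Pi_+\brho_{\vecs_{[\alpha-1]}}\Pi_+ + \Pi_-\brho_{\vecs_{[\alpha-1]}}\Pi_-$ equals $\tfrac12\brho_{\vecs_{[\alpha-1]}} + \tfrac12 P_{Z_{j_\alpha}}[\brho_{\vecs_{[\alpha-1]}}]$, which is exactly $\brho^*_{\vecs_{[\alpha-1]}}$ by \Cref{eqn:rho*} (equivalently, sample $b\gets\{0,1\}$ and apply $P_{Z_{j_\alpha}}^b$).

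I do not expect a genuine obstacle --- the paper flags the statement as immediate --- but the one point that must be gotten right is precisely the key-usage accounting for $\cC^*.\Open$, $\Out$, and $\Ver$: the entire reason this Fact is isolated is to certify that $\sk_0$ is never touched, which is what lets the reduction in \Cref{claim:blah} hand $\pk_0$ to the challenger of the (distributional strong) adaptive hardcore bit game while simulating $\brho_{\vecs_{[\alpha-1]}}$, $\brho^*_{\vecs_{[\alpha-1]}}$, and the subsequent $P_{Z_{j_\alpha}}$-measurement itself.
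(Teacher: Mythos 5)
Your accounting is right about one half of the statement but resolves the crucial ambiguity in the wrong direction, and the claim you end up proving is strictly weaker than what the Fact asserts and what its application requires. You build $\brho_{\vecs_{[\alpha-1]}}$ including the projector $\Pi_\Ver$, conclude that you need all of $\sk_1,\ldots,\sk_{n+1}$, and then reinterpret the ``$\sk_1$'' in the statement as shorthand for that whole collection. That reading conflicts with the paper's own usage: in the proof of \Cref{claim:blah} the authors carefully distinguish ``use $\sk_1$'' from ``use $\sk_1,\ldots,\sk_{n+1}$'', and the footnote attached to \Cref{claim:rho*} (``$P_{Z_j}$ does not check if $\Ver(\sk,\vecy,(j,0),\cdot)=1$'') is precisely the point of the Fact: the operational observables are built from $U_0$, which is just the adversary's $\BQP$ opening unitary and uses no secret keys at all (the ``$\BQP$ given $\sk_1,\ldots,\sk_{n+1}$'' allowance in \Cref{lemma:soundness-single1,corr:binding-real} concerns derived openers such as $\cC^*.\Open_{[\ell]}$, not the $\cC^*.\Open$ entering \Cref{def:obs-multi}), and from $\Out(\sk,\vecy,(i,0),\cdot)$, which --- as you yourself note --- only inverts $\vecy_{i,1}$ with $\sk_1$. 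The verification check is the only ingredient that needs $\sk_2,\ldots,\sk_{n+1}$, and it is deliberately not part of the observables; correspondingly, the reduction in the proof of \Cref{claim:min-entropy} constructs $\brho_{\vecs_{[\alpha-1]}}=\prod_{i=1}^{\alpha-1}P^{s_i}_{Z_{j_i}}[\brho'_I]$ without any $\Pi_\Ver$ step.

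This matters because you also misplace where the Fact does its work. You present its purpose as certifying that $\sk_0$ is untouched so that the reduction of \Cref{claim:blah} can take $\pk_0$ from the challenger; but the Fact is invoked for the proof of \Cref{claim:min-entropy}, whose reduction to the plain adaptive hardcore bit property receives $\pk_2,\ldots,\pk_{n+1}$ from the challenger (and in fact knows $\sk_0$), so what must be certified there is that $\sk_2,\ldots,\sk_{n+1}$ are never used. Under your construction, which implements $\Pi_\Ver$ and hence needs exactly those trapdoors, that reduction would not go through. To be fair, the tension you tripped over is genuinely present in the paper: \Cref{eqn:rho*} as displayed does include $\Pi_\Ver$, which cannot be implemented from $\sk_1$ alone, whereas the Fact, its footnote, and the states actually built in the \Cref{claim:min-entropy} reduction all correspond to the $\Pi_\Ver$-free construction (with the verification projection handled separately where the extra keys are available). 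The correct fix is to prove the Fact for the $P_Z$-only construction using just $\sk_1$; enlarging the key set instead proves a statement too weak to support the reduction the Fact exists to enable.
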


 \paragraph{Proof of \Cref{claim:min-entropy}.}

Suppose for the sake of contradiction that there is a non-negligible~$\epsilon=\epsilon(\secp)$ such that with probability $\epsilon$ over  $(\aux,{\brho_\aux^{**}})\gets C(\sk_{0,\mathsf{pre}},{\brho_{\cO_2}^{**}})$ it holds that there exists a vector $\vecd'=\vecd'_\aux\in\{0,1\}^k$ such that 
\begin{equation}\label{eqn:claim-ent-cont}
\Pr[\Good(\vecx_{j_\alpha,0},\vecx_{j_\alpha,1},\vecd^*_{j_\alpha})=\vecd']\geq\epsilon
\end{equation}   
where $\vecd^*_{j_\alpha}$ is obtained by measuring the $\cO_1$ registers of ${\brho_\aux^{**}}$ in the standard basis.
Denote the set of all $\aux$ that satisfy \Cref{eqn:claim-ent-cont} by $\mathsf{BAD}$.
In the rest of this proof, for the sake of ease of notation, we denote by 
\[\vecd^*\triangleq\vecd^*_{j_\alpha}~~\mbox {and }~~(\vecx_0,\vecx_1)\triangleq (\vecx_{j_\alpha,0},\vecx_{j_\alpha,1}).
\]
We next denote by 
\[\vecd^*=(d^*_0,d^*_1,\ldots,d^*_n)
\]
and we argue that there exists a subset 
\begin{equation}\label{eqn:S}
S=\left\{\beta_1,\ldots,\beta_m\right\}\subseteq[n]
\end{equation}
of size $m=k^{0.1}$ (where $k=|\vecd'|$) for which there exists an (all powerful) algorithm $\B$ such that for every $\aux\in \mathsf{BAD}$ and for every $i\in [m]$
\begin{equation}\label{eqn:disturb}
\Pr[\B\left(\aux,\left(d^*_{1},\ldots,d^*_{j'_{\beta_{i}}-1}\right)\right)=d^*_{j'_{\beta_i}}] \geq 1-\frac{1}{\sqrt{k}}
\end{equation}
where the probability is over $\vecd^*$ obtained by measuring the $\cO_1$ registers of ${\brho_\aux^{**}}$ 
 in the standard basis.
The existence of such a set~$S$ follows from
\Cref{eqn:claim-ent-cont}, since otherwise, by the fact that each coordinate of $\vecd'$ is of the form $\veca[i]\cdot\vecd[i]$, where $\veca_i$ is a non-zero vector and $\vecd[1],\ldots,\vecd[k]$ forms a partition of the coordinates of $\vecd$,  it holds that for every $\vecd'$, 
\[
\Pr[\Good(\vecx_0,\vecx_1,\vecd^*)=\vecd']\leq \left(1-\frac{1}{\sqrt{k}}\right)^{k-m}=\negl(\secp),\]
contradicting \Cref{eqn:claim-ent-cont}.


We construct $\BQP$ algorithm $\A$ that on input $(\pk_2,\ldots,\pk_{n+1})$ generates with non-negligible probability 
images $\left\{\vecy^*_i\right\}_{i\in S}$ corresponding to $\{\pk_{i+1}\}_{i\in S}$, preimages $\{(d^*_i,\vecx^*_{i})\}_{i\in S}$, and equations $\{\vecz_i,\vecz_i\cdot (1, \vecx_{i,0}\oplus\vecx_{i,1})\}_{i\in S}$, such that each $\vecz_i\in \Good_{\Invert(\sk_i,\vecy^*_{i})}$ and $(\vecx_{i,0},\vecx_{i,1})=\Invert(\sk_i,\vecy_i)$, thus breaking the adaptive hardcore bit property.  Algorithm $\A(\pk_2,\ldots,\pk_{n+1})$ does the following: 
    \begin{enumerate}
        \item Sample $(\pk_0,\sk_0),(\pk_1,\sk_1)\gets \Gen(1^\secp)$.
        \item Let $\pk=(\pk_0,\pk_1,\pk_2,\ldots,\pk_{n+1})$.
        \item Generate $(\vecy,\brho)\gets\cC^*.\Commit(\pk,\bsigma)$.   
         \item Parse $\vecy=\left(\vecy_1,\ldots,\vecy_\ell\right)$ and parse $\vecy_{j_{\alpha}}=\left(\vecy^*_{0},\vecy^*_{1},\ldots,\vecy^*_{n+1}\right)$.
        \item Compute $((0,\vecx_0),(1,\vecx_1))\gets \Invert\left(\sk_0,\vecy^*_{0}\right)$.
        
        \item Use $\sk_1$ to generate 
        $\brho'_I=\Pi_{P_{Z_I},\vecm_I}(\brho)$.
        \item Use $\sk_1$ to compute         $\brho_{\vecs_{[\alpha-1]}}=\prod_{i=1}^{\alpha-1} P^{s_i}_{Z_{j_i}}[\brho'_I]$.
        \item Use $\sk_1$ to compute $\brho^{*}_{\vecs_{[\alpha-1]}}=\frac12\brho_{\vecs_{[\alpha-1]}}+\frac12 P_{Z_{\alpha}}[\brho_{\vecs_{[\alpha-1]}}]$.
        
        \item Measure registers $\cO_{j_\alpha}$ of $\brho^{*}_{\vecs_{[\alpha-1]}}$ in the standard basis to obtain  $\vecd^*=(d^*_0,d^*_1,\ldots,d^*_n)$.
        
      \item For every $i\in S$ measure the registers corresponding to $\vecx^*_i$.  We note that with probability $1-\delta$ it holds that for every $i\in S$, 
        \[\vecy^*_{i+1}=\Eval(\pk_{i+1},d^*_i,\vecx^*_{i}).
        \]
        
        Denote the resulting state by $\brho'$.

         
         \item Compute $(\vecz,\brho'')\gets \cC^*.\Open(\pk,(j_{\alpha},0),\brho')$.

        \item Parse $\vecz=(\vecz_1,\ldots,\vecz_{n+1})\in\{0,1\}^{(n+1)^2}$.
        \item Compute $(0,\vecx'_{1,0}),(1,\vecx'_{1,1}))\gets\Invert\left(\sk_1,\vecy^*_{1}\right)$.
        
        \item Let $\gamma=\vecz_1\cdot(1,\vecx'_{1,0}\oplus \vecx'_{1,1})$.
        \item Output $\left\{\pk_{i+1},\vecy^*_{i+1},d^*_{i},\vecx^*_{i},  \vecz_{i+1}, \vecx_{\gamma,i}\right\}_{i\in S}$.
    \end{enumerate}
    We next argue that with non-negligible probability it holds that for every $i\in S$:
    \[
  \vecy^*_{i+1}=\Eval(\pk_{i+1},(d^*_{i},\vecx^*_{i}))~~\wedge~~ \vecz_{i+1}\cdot(1,\vecx'_{i,0}\oplus \vecx'_{i,1})=\vecx_{\gamma,i}~~\wedge~~\vecz_{i+1}\in \Good_{\vecx'_{i,0},\vecx'_{i,1}},
    \]
    where $((0,\vecx'_{i,0}),(1,\vecx'_{i,1}))=\Invert(\sk_{i+1},\vecy^*_{i+1})$,
    contradicting the adaptive hardcore bit property.
    
The fact that the underlying $\cC^*.\Open$ algorithm is accepted by $\Ver$ with probability $\geq 1-\delta$ implies that 
\[
\Pr[\forall i\in S:~\vecy^*_{i+1}=\Eval(\pk_{i+1},(d^*_{i},\vecx^*_{i}))]\geq 1-\delta.
\]
Moreover, if we did not measure $\{(d^*_i,\vecx^*_i)\}_{i\in[n]}$ it would also imply that,
\[
\Pr[\forall i\in S:~\vecz_{i+1}\cdot(1,\vecx'_{i,0}\oplus \vecx'_{i,1})=\vecx_{\gamma,i}~~\wedge~~\vecz_{i+1}\in \Good_{\vecx'_{i,0},\vecx'_{i,1}}]\geq 1-\delta
\]
By the collapsing property applied to $\{\pk_i\}_{i\in\{2,\ldots,n+1\}\setminus \{i+1:~ i\in S\}}$, even if we measured  $\{(d^*_i,\vecx^*_i)\}_{i\in[n]\setminus S}$,
\[
\Pr[\forall i\in S:~\vecz_{i+1}\cdot(1,\vecx'_{i,0}\oplus \vecx'_{i,1})=\vecx_{\gamma,i}~~\wedge~~\vecz_i\in \Good_{\vecx'_{i,0},\vecx'_{i,1}}]\geq 1-2\delta-\negl(\secp)
\]
We note however, that by \Cref{eqn:disturb}, for every $i\in S$ measuring $(d^*_i,\vecx^*_i)$ disturbs the state by at most $k^{-1/4}$.
Hence 
\begin{align*}
&\Pr[\forall i\in S:~\vecy^*_{i+1}=\Eval(\pk_{i+1},(\vecd^*_{i},\vecx^*_{i}))~~\wedge~~\vecz_{i+1}\cdot(1,\vecx'_{i,0}\oplus \vecx'_{i,1})=\vecx_{\gamma, i}~~\wedge~~\vecz_{i+1}\in \Good_{\vecx'_{i,0},\vecx'_{i,1}}]\geq\\
&1-2\delta -\frac{k^{0.1}}{k^{1/4}}-\negl(\secp),
\end{align*}
contradicting the adaptive hardcore bit property.

 \qed

\qed

\begin{remark}
We next argue that relying on some form of the adaptive hardcore bit is necessary.  Specifically, if there exists a $\BQP$ algorithm $\A$ and a non-negligible function $\epsilon=\epsilon(\secp)$ such that 
\[
\Pr[\A(\pk_1,\ldots,\pk_{n+1})=(b_i, \vecx_i,\vecd_i,m_i)_{i=1}^{n+1}~:~\forall i\in[n+1]~~ \vecd_i\cdot(1,\vecx_{i,0}\oplus \vecx_{i,1})=m_i]\geq \epsilon(\secp)
\]
 where $(\vecx_{i,0},\vecx_{i,1})=\Invert(\sk_i,\Eval(\pk_i,(b_i,\vecx_i))$,
then one can use this adversary~$\A$ to attack the scheme, as follows: 
\begin{enumerate}
    \item Given $(\pk_0,\pk_1,\ldots,\pk_{n+1})$, compute 
    \[(b_i, \vecx_i,\vecd_i,m_i)_{i=1}^{n+1}=\A(\pk_1,\ldots,\pk_{n+1}).
    \]
    \item Let $\vecm=(m_1,\ldots,m_{n+1})$ and set $\vecy_0=\Eval(\pk_0,\vecm)$.
    \item For every $i\in[n+1]$, set $\vecy_i=\Eval(\pk_i,(b_i,\vecx_i))$.
    \item Output $(\vecy_0,\vecy_1,\ldots,\vecy_{n+1})$ as the commitment.
    \item If asked to open in the Hadamard basis output $((b_1,\vecx_1),\ldots,(b_{n+1},\vecx_{n+1}))$.
    \item If asked to open in the Standard basis output $(\vecd_1,\ldots,\vecd_{n+1})$.
\end{enumerate} 
The adversary is accepted with probability $\epsilon$ and the openings are distinguishable from a qubit, since the standard basis opening is deterministic and the Hadamard opening is biased, both in a detectable way. 
\end{remark}

\subsection{Binding for the Succinct Commitment Scheme}

In this section we prove the following theorem.
\begin{theorem}\label{def:SCQ-succinct-soundness-theorem}
   The succinct multi-qubit commitment scheme described in \Cref{sec:succinct-multi-qubit-com} satisfies the binding condition defined in \Cref{def:binding:succ}.
\end{theorem}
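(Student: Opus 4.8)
The plan is to reduce the binding of the succinct scheme (\Cref{def:SCQ-succinct-construction-multi}) to the binding of the underlying semi-succinct scheme (\Cref{thm:main:sound-semi-succinct}), following the template of \cite{Bartusek22}. The key point is that the interactive commitment phase and the $\Test$ phase allow us to ``open up'' the succinct commitment into its non-succinct counterpart $\vecy$, thereby inheriting all the binding guarantees we have already established. First I would construct the extractor $\Ext^{P^*,P^*_\Test}(\sk,\rt,\brho,1^{\lceil 1/\epsilon\rceil})$ as follows: run the state-preserving extractor $\cE$ of \Cref{def:state-preserving-aok} on the cheating prover $P^*$ of the $\Ver.\Commit$ protocol (for the language $\cL^*$) to obtain a candidate witness $\vecy$ together with a residual state $\brho_\Sim$ that is $\epsilon$-close to the real post-$\Ver.\Commit$ state $\brho_\mathsf{post}$; by \Cref{cor:LMS}, conditioned on the $\Ver.\Commit$ transcript being accepting, with probability $1-\delta_0-2\epsilon-\negl(\secp)$ we have $\Eval_\mathsf{H}(\hk,\vecy)=\rt$. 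Then invoke the semi-succinct extractor $\Ext_\ss$ from the proof of \Cref{thm:main:sound-semi-succinct} on the opening algorithm $\cC^*.\Open$ (equivalently, on $P^*_\Test$, which supplies openings in the all-$0$ and all-$1$ bases) applied to $\brho_\Sim$ and the commitment string $\vecy$, to extract the $\ell$-qubit state $\btau$.

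The key steps, in order, are: (1) Show that $\vecy$ extracted from $P^*$ can be computed \emph{without} $\sk_1$ — this is why the $\Test$ protocol runs the argument of knowledge twice (see the remark following the $\Test$ description): the first invocation for $\cL^*$ extracts $\vecz$ using only the collision-resistance of the hash to pin it down, and then collision-resistance forces the $\vecy$ (and $\vecz$) extracted from the $\cL^{**}$ invocation — which does involve $\sk_1$ — to agree with the earlier ones. This is exactly what lets us apply the \emph{stronger} form of semi-succinct binding in \Cref{lemma:soundness-single1} / \Cref{corr:binding-real}, which permits $\cC^*.\Open$ to depend on part of the secret key. (2) Bound the probability that the extracted $\vecz$ fails to be a valid opening in the $b^\ell$ basis by $\delta'_0 + 2\epsilon + \negl(\secp)$, using \Cref{cor:LMS} applied to the $\cL^{**}$ argument of knowledge inside $\Test$ together with the local-opening binding of the hash family; this, together with the collapsing property invoked in \Cref{lemma:soundness-single1}, feeds the accept probability $\delta$ of \Cref{eqn:delta:succinct} into the semi-succinct soundness bound. (3) Assemble the distances: \Cref{eqn:binding2:succinct} follows from \Cref{corr:binding-real} (applied on the subset $J$) with the $\delta_0$ loss coming from the $\Ver.\Commit$ rejection probability, giving $\eta = O(\sqrt{\delta_0+\delta})$; \Cref{eqn:binding1:succinct} follows by composing the state-preserving $\epsilon$-indistinguishability of $\cE$ (contributing the additive $\epsilon$ and a $\sqrt{\delta_0}$ term via gentle measurement), the $\sqrt{\delta'_0}$ term from the $\Test$-phase extraction, and the $O(\sqrt\delta)$ term from \Cref{eqn:real-ideal-3sqrt} in the proof of \Cref{thm:main:sound-semi-succinct}, yielding $\zeta = O(\sqrt{\delta_0+\delta'_0+\delta})+\epsilon$ as required.

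The main obstacle I expect is the bookkeeping around the two different residual states: the semi-succinct extractor is designed to act on the genuine post-commitment state $\brho$, but here it must act on the \emph{simulated} residual state $\brho_\Sim$ produced by $\cE$, which is only $\epsilon$-close to the real post-$\Ver.\Commit$ state, and moreover the real opening algorithm in $\Real^{\cC^*.\Commit,P^*,\cC^*.\Open}$ acts on $\brho_\mathsf{post}$, not $\brho$. Handling this requires carefully threading the state-preserving guarantee through every place where the semi-succinct proof measures or applies a unitary, and checking that the $\epsilon$ error accumulates only additively (not multiplicatively with $\ell$). A secondary subtlety is ensuring that the ``extract $\vecy$ without $\sk_1$'' argument is airtight: one must verify that the extractor for $\cL^*$ genuinely does not see $\sk_1$ (true, since the instance $(\hk,\rt)$ and witness $\vecy$ don't involve it) and that the hash-binding step correctly identifies the $\vecz$ from the two $\Test$ sub-protocols — this is where the collapsing hash family with local opening (\Cref{def:collapsing:local}) is used. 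Once these are in place, the remainder is a routine composition of the already-established lemmas.
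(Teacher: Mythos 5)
Your proposal matches the paper's proof: extract $\vecy$ from $P^*$ via the state-preserving extractor to define a semi-succinct cheating committer, build a semi-succinct opening algorithm from $P^*_\Test$ (invoking the state-preserving extractor for both $\cL^*$ and $\cL^{**}$ and applying the post-processing trick of \Cref{remark:C*ell} for standard-basis openings), and then invoke the semi-succinct binding theorem with the error accounting $\zeta = O(\sqrt{\delta_0+\delta'_0+\delta})+\epsilon$ and $\eta = O(\sqrt{\delta_0+\delta})$. One small misattribution: the double invocation of the argument of knowledge inside $\Test$ is there to ensure that $\vecz$ (not $\vecy$, which is recovered during $\Ver.\Commit$ before $\sk_1$ is ever revealed) can be extracted by a circuit that does not depend on $\sk_1$ — though your parenthetical explanation of collision resistance pinning down $\vecz$ across the two sub-protocols has this exactly right.
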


\begin{proof}
To prove soundness we need to prove that \Cref{eqn:binding1:succinct,eqn:binding2:succinct} hold. To this end, fix any $\BQP$ algorithm $\cC^*.\Commit$, a quantum state $\bsigma$, a polynomial $\ell=\ell(\secp)$, a $\BQP$ prover $P^*$ for $\Ver.\Commit$.
We start by defining a $\QPT$ algorithm $\cC^{*}_\mathsf{ss}.\Commit$ for the underlying semi-succinct commitment scheme.  This algorithm is associated with a parameter $\epsilon_0$, and sometime to be explicit, we denote it by $\cC^{*}_\mathsf{ss}.\Commit_{\epsilon_0}$. It  takes as input $(\pk_1,\bsigma)$ and commits to an  $\ell$-qubit state, as follows:
   \begin{enumerate}
   \item Sample $\hk\gets\Gen_{\mathsf{H}}(1^\secp)$.
   \item Set $\pk=(\pk_1,\hk)$.
       \item Compute $(\rt,\brho)\gets\cC^*.\Commit(\pk,\bsigma)$.
       \item Use the state-preservation extractor~$\cE$ (from \Cref{def:state-preserving-aok}) for the $\NP$ language $\cL^*$ (defined in \Cref{eqn:def:L*}) to generate
    \[(\mathbb{T}_\Sim,\vecy,\brho_{\mathsf{post},\Sim})\gets \cE^{P^*,\brho}\left((\hk,\rt),1^\secp,\epsilon_0\right)\]
       \item If $\Eval(\hk,\vecy)\neq\rt$ then output $\bot$. 
       \item Else, output $(\vecy,\brho_{\mathsf{post},\Sim})$.
   \end{enumerate}
    By \Cref{def:state-preserving-aok,cor:LMS},
    \begin{equation}\label{eqn:vecy-success}
        [\Pr[\Eval_\mathsf{H}(\hk,\vecy)=\rt]\geq 1-\delta_0-2\epsilon_0-\negl(\secp).
    \end{equation}

We use this $\QPT$ algorithm $\cC^*_\mathsf{ss}.\Commit_{\epsilon_0}$  to prove the soundness of the succinct scheme.  We start with proving \Cref{eqn:binding2:succinct}.
To this end, fix a subset $J\subseteq[\ell]$ and a basis $\vecb_J\in\{0,1\}^{|J|}$, and two $\BQP$ algorithms $\cC^*_1.\Open$ and $\cC^*_2.\Open$.  We need to prove that  
\begin{equation}\label{eqn:cont:succ:real}
\Real^{\cC^*.\Commit,P^*,\cC^{*}_1.\Open}\left(\secp,(J,\vecb_J),\bsigma\right)
 \stackrel{\eta}{\approx}\Real^{\cC^*.\Commit,P^*,\cC^{*}_2.\Open}\left(\secp,(J,\vecb_J),\bsigma\right)
\end{equation}
where 
$\eta=O\left(\sqrt{\delta_0+\delta}\right)$, where $\delta_0$ is defined in \Cref{eqn:delta0} and $\delta$ is defined in \Cref{eqn:delta:succinct}.

 
We next define two $\BQP$ opening algorithms $\cC^{*}_{\mathsf{ss},1}.\Open$ and $\cC^{*}_{\mathsf{ss},2}.\Open$ for the underlying semi-succinct commitment scheme, corresponding to   $\cC^{*}_1.\Open$ and $\cC^{*}_2.\Open$, respectively.
For every $i\in\{1,2\}$, the opening algorithm $\cC^{*}_{\mathsf{ss},i}.\Open(\brho_{\mathsf{post},\Sim}, \vecb_J)$ does the following:\footnote{We assume without loss of generality that the state $\brho_{\mathsf{post},\Sim}$ includes $\pk$ and $\vecy$.}
\begin{enumerate}
    \item Run $(\vecy_{i,J},\veco_{i},\vecz_{i,\Sim},\brho'_\mathsf{i,Sim})\gets \cC^*_i.\Open(\brho_{\mathsf{post},\Sim},(J,\vecb_J))$.
    \item If $\Ver_\mathsf{H}(\hk,\rt,J,\vecy_{i,J},\veco_{i})=0$ then output $\bot$.

    \item Else, output $(\vecz_{i,\Sim},\brho'_\mathsf{i,Sim})$.
\end{enumerate}
By \Cref{def:state-preserving-aok},
  \begin{equation}\label{eqn:T-vs-T_Sim}
        (\mathbb{T}_\Sim,\brho_{\mathsf{post},\Sim},\sk)\stackrel{\epsilon_0}\approx (\mathbb{T},\brho_{\mathsf{post}},\sk)
    \end{equation}
which implies that for every $i\in\{1,2\}$,
\begin{equation}\label{eqn:z-vs-z_sim}
\Pr[\Ver(\sk,\rt,(J,\vecb_J),\vecy_{i,J},\veco_{i},\vecz_{i,\Sim})=0]\leq \delta+\epsilon_0.
\end{equation}
For every $j\in J$ let  
\[\vecm_{i,\Sim,j}=\Out_1(\sk_1,\vecy_j,b_j,\vecz_{i,\Sim,j})~~\mbox{ and }~~\vecm_{i,j}=\Out_1(\sk_1,\vecy_j,b_j,\vecz_{i,j}),
\]
where $(\vecz_i,\brho'_\mathsf{i})\gets \cC^*_i.\Open(\brho_\mathsf{post},(J,\vecb_J))$.
Let $\vecm_{i,\Sim,J}=(\vecm_{i,\Sim,j})_{j\in J}$ and $\vecm_{i,J}=(\vecm_{i,j})_{j\in J}$.  \Cref{eqn:T-vs-T_Sim} implies that for every $i\in\{1,2\}$,
\begin{equation}\label{eqn:z-vs-z_sim'}
(\pk,\rt,(J,\vecb_J),\vecm_{i,J})\stackrel{\epsilon_0}\approx (\pk,\rt,(J,\vecb_J),\vecm_{i,\Sim,J}).
\end{equation}
By the binding of the underlying semi-succinct commitment scheme,
\begin{equation}\label{eqn:real**1-real**2}
\Real^{\cC^{*}_\mathsf{ss}.\Commit,\cC^{*}_{\mathsf{ss},1}.\Open}(\secp, \vecb_J,\sigma)
 \stackrel{\eta^*}{{\approx}}\Real^{\cC^{*}_\mathsf{ss}.\Commit,\cC^{*}_{\mathsf{ss},2}.\Open}(\secp, \vecb_J,\sigma).
 \end{equation}
 where $\eta^*=O(\sqrt{\delta^*})$  is defined in \Cref{def:binding} and 
  \begin{equation}\label{delta'}
    \delta^*= \E_{\substack{(\pk_1,\sk_1) \leftarrow \Gen_\mathsf{1}(1^\lambda) \\ (\vecy, \brho) \leftarrow \cC^{*}_\mathsf{ss}.\Commit(\pk_1, \bsigma)}}\max_{\substack{i\in\{1,2\},\\ \vecb'\in\{\vecb_{|J|},{\bf 0},{\bf 1\}}}}\Pr[\Ver_\mathsf{ss}(\sk,\vecy,\vecb',\cC^{*}_{\mathsf{ss},i}.\Open(\brho,\vecb')=0].
    \end{equation}
 We note that $\delta^*\leq \delta_0+3\epsilon_0+\delta+\negl(\secp)$.  This follows from \Cref{eqn:vecy-success,{eqn:z-vs-z_sim}}, together with the collision resistance property of the underlying hash family.

We thus conclude that 
\begin{align*}
&\Real^{\cC^*.\Commit,P^*,\cC^{*}_1.\Open}(\secp,(J,\vecb_J),\bsigma)=\\
&(\pk,\rt,(J,\vecb_J),\vecm_{1,J})\stackrel{\epsilon_0}\approx \\
&(\pk,\rt,(J,\vecb_J),\vecm_{1,\Sim,J})\stackrel{\eta^*}\approx \\
&(\pk,\rt,(J,\vecb_J),\vecm_{2,\Sim,J})\stackrel{\epsilon_0}\approx \\ 
&(\pk,\rt,(J,\vecb_J),\vecm_{2,J})=\\
&\Real^{\cC^*.\Commit,P^*,\cC^{*}_2.\Open}(\secp,(J,\vecb_J),\bsigma),
\end{align*}
where the second and forth equations follow from \Cref{eqn:z-vs-z_sim'} and the third equation follows \Cref{{eqn:real**1-real**2}}. Setting $\epsilon_0=\delta_0$ we conclude that \Cref{eqn:binding2:succinct} holds.\\

 It remains to prove \Cref{eqn:binding1:succinct}.  To this end, we use the $\BQP$ extractor $\Ext_\mathsf{ss}$ corresponding to the underlying semi-succinct commitment scheme, as well as the extractor $\cE$ corresponding to the underlying state-preserving argument-of-knowledge system, to construct the extractor $\Ext$ for the succinct commitment scheme.  $\Ext^{P^*, P^*_\Test}(\sk,\rt,\brho,1^{\lceil1/\epsilon\rceil})$   does the following:
 \begin{enumerate}
 \item\label{item:def-C} Let $C\in \mathbb{N}$ be a constant that is larger than the constant from the definition of $\eta$ in \Cref{eqn:binding1} and in \Cref{eqn:binding2:succinct}.  Namely, $C$ is chosen so that in \Cref{eqn:binding1} $\eta\leq C\cdot \sqrt{\delta}$, and in \Cref{eqn:binding2:succinct} $\eta\leq C\cdot \sqrt{\delta_0+\delta}$.
 \item Set $\epsilon_0=\left(\frac{\epsilon}{8C}\right)^2$.
   \item Use the state-preservation extractor~$\cE$ (from \Cref{def:state-preserving-aok}) for the $\NP$ language $\cL^*$ (defined in \Cref{eqn:def:L*}) to generate
    \begin{equation}\label{eqn:dist-y}
        (\mathbb{T}_\Sim,\vecy,\brho_{\mathsf{post},\Sim})\gets \cE^{P^*,\brho}\left((\hk,\rt),1^\secp,\epsilon_0\right)
    \end{equation}
    
    \item If $\Eval_{\mathsf{H}}(\hk,\vecy)\neq \rt$ then set $\vecy=\bot$.
   \item \label{item:C**} Use $P^*_\Test$ to define $\cC^{*}_\mathsf{ss}.\Open$, which is associated with a parameter $\epsilon_0$, and on input $(\brho_{\mathsf{post},\Sim},(j,b))$,\footnote{We assume without loss of generality that the state $\brho_{\mathsf{post},\Sim}$ includes $(\pk,\vecy,\rt)$.} operates as follows:\footnote{ $\cC^{*}_\mathsf{ss}.\Open$ is defined somewhat analogously to $\cC^*.\Open_{[\ell]}$ as defined in the proof of \Cref{lemma:non-succinct-real-to-ideal,lemma:semi-succinct-real-to-ideal}.}

    \begin{enumerate}

  \item 
   Denote by $U_{b}$ the unitary that does the following computation (coherently, using  ancilla registers): 
    \begin{enumerate}
        \item Compute the first message of  $P^*_\Test(\pk,\rt,\brho_{\mathsf{post},\Sim})$ upon receiving the bit $b\in\{0,1\}$ from $V_\Test$, to obtain $\rt'$ and a post state $\brho'_\Sim$.
    
   \item  Use the state-preservation extractor~$\cE$ (from \Cref{def:state-preserving-aok}) for the $\NP$ language $\cL^*$ (defined in \Cref{eqn:def:L*}) to generate
    \begin{equation}\label{eqn:ext-z}
    (\mathbb{T}_\Sim,\vecz,\brho'_{\mathsf{post},\Sim})\gets \cE^{P^*_\Test,\brho'_{\Sim}}\left((\hk,\rt'),1^\secp,\epsilon_0\right)
    \end{equation}
    \item Denote the ancila registers where $\vecz$ is stored by $(\mathsf{open}_1,\ldots,\mathsf{open}_\ell)$.
    \item If $b=0$ (corresponding to a standard basis measurement), apply a post-processing unitary to each $\mathsf{open}_i$ register, to ensure that measuring this register would not disturb the state in a detectable way. This is done as in \Cref{remark:C*ell}. 
    
    Specifically, Denoting $\sk_1=(\sk_{1,0},\sk_{1,1},\ldots,\sk_{1,n+1})$, the unitary $U_0$ uses $(\sk_{1,1},\ldots,\sk_{1,n+1})$ to apply the following post-processing unitary to each $\mathsf{open}_i$ register, to ensure that when measured the disturbance will not be noticed to a $\QPT$ algorithm which is not given $(\sk_{1,1},\ldots,\sk_{1,n+1})$. Recall that $\mathsf{open}_i$ contains a vector $\vecz=(\vecz_1,\ldots,\vecz_{n+1})\in\{0,1\}^{(n+1)^2}$ where each $\vecz_j\in\{0,1\}^{n+1}$. The post-processing unitary does the following:
        \begin{enumerate}
           \item Coherently compute for every $j\in[n+1]$ the bit $m_j=\vecz_j\cdot (1,\vecx'_{j,0}\oplus \vecx'_{j,1})$, where $\vecx'_{j,0}$ and $\vecx'_{j,1}$ are the two preimages of $\vecy_{i,j}$ that are computed using $\sk_j$.
            \item Let $\vecm=(m_1,\ldots,m_{n+1})\in\{0,1\}^{n+1}$.
                        Note that if $\vecz$ is a successful opening (i.e., it is accepted) then $\vecm$ is a preimage of $\vecy_{i,0}$, and whether a preimage is measured or not is undetectable without knowing $\sk_0$, due to the collapsing property of the underlying $\TCF$ family.
            \item On an ancila register, compute a super-position over all $\vecz'=(\vecz'_1,\ldots,\vecz'_{n+1})\in\{0,1\}^{(n+1)^2}$ such that for every $j\in[n+1]$ $m_j=\vecz'_j\cdot (1,\vecx'_{j,0}\oplus \vecx'_{j,1})$.  
            \item Swap register $\mathsf{open}_i$ with the ancila register above, so that now $\vecz'=(\vecz'_1,\ldots,\vecz'_{n+1})$ is in register $\mathsf{open}_i$.
        \end{enumerate}
    \end{enumerate}
    \item Compute $ \brho'=U^\dagger_{b} \CNOT_{\mathsf{open}_j,\mathsf{copy}_j} U_{b}[\brho_{\mathsf{post},\Sim}]$
    \item Measure register $\mathsf{copy}_j$ in the standard basis to obtain $\vecz_j$.
    
   \item Output $\vecz_j$.
    \end{enumerate}
   So far, we defined $\cC^*_\mathsf{ss}.\Open$ on a single coordinate $(j,b)$.  We define $\cC^*_\mathsf{ss}.\Open$ on a set of coordinates $(J,\vecb_J)$ to first apply $\cC^*_\mathsf{ss}.\Open$ on all the coordinates $j\in J$ such that $\vecb_j=0$ (in order) and then apply it on all the coordinates $j\in J$ such that $\vecb_j=1$ (in order).\footnote{This ordering is done for simplicity, as it allows us to rely on the analysis of $\cC^*.\Open_{[\ell]}$ in the proof of \Cref{lemma:non-succinct-real-to-ideal,lemma:semi-succinct-real-to-ideal}.  In particular, we do not need to rely on the fact that measuring the Hadamard basis opening is not detectable when opening, and verifying the opening, in the standard basis.} 
    \item Output $\btau_{\cA,\cB}\gets \Ext_\mathsf{ss}^{\cC^{*}_\mathsf{ss}.\Open }(\sk,\vecy,\brho_{\mathsf{post},\Sim})$.
      \end{enumerate}

     We need to argue that for every $\BQP$ algorithm $\cC^{*}.\Open$,
      \begin{equation} \label{eqn:Real-ideal-succ}
          \Real^{\cC^*.\Commit,P^*,\cC^{*}.\Open}(\secp,(J,\vecb_J),\bsigma)\stackrel{\zeta}\approx\Ideal^{\Ext,\cC^*.\Commit,P^*,P^*_\Test}(\secp, (J,\vecb_J),\bsigma,\epsilon),
     \end{equation}
     for 
     \[\zeta=O\left(\sqrt{\delta_0+\delta'_0+\delta}\right)+\epsilon.
     \]
To this end, we rely on the binding property of the underlying semi-succinct scheme (and in particular \Cref{eqn:binding1}), which implies that
\begin{equation}\label{eqn:ss-real-ideal}
\Real^{\cC^{*}_\mathsf{ss}.\Commit,\cC^{*}_\mathsf{ss}.\Open}(\secp,(J,\vecb_J),\bsigma)\stackrel{\eta^*}\approx\Ideal^{\Ext_{\mathsf{ss}},\cC^{*}_\mathsf{ss}.\Commit,\cC^*_\mathsf{ss}.\Open}(\secp, (J,\vecb_J),\bsigma)
\end{equation}
where $\eta^*\leq C\cdot \sqrt{\delta^*}$ 
    and
    \begin{equation}\label{eqn:delta-ss}
    \delta^*=\E_{\substack{(\pk,\sk) \leftarrow \Gen(1^\lambda) \\ (\vecy, \brho) \leftarrow \cC^*_\mathsf{ss}.\Commit(\pk, \bsigma)}}\max_{\vecb'\in\{\vecb_J,{\bf 0}^{|J|},{\bf 1}^{|J|}\}}\Pr[\Ver_\mathsf{ss}(\sk,\vecy,(J,\vecb'),\cC^*_\mathsf{ss}.\Open(\brho,\vecb'))=0].
    \end{equation}
By the definition of $\cC^*_\mathsf{ss}.\Open$, and as explained in \Cref{remark:C*ell} (and similarly to \Cref{eqn:delta[ell]}), 
\begin{equation}\label{eqn:delta*}
\delta^*\leq \epsilon^*_0+\epsilon^*_1+\negl(\secp)
\end{equation}
where
\[
\epsilon^*_b=\Pr[\Ver_\mathsf{ss}(\sk_1,\vecy,(J,b^{|J|}),\vecz_b)=0],
\]
and where $\vecy$ is distributed as in \Cref{eqn:dist-y}, and  $\vecz_b$ is distributed as in \Cref{eqn:ext-z} when computed coherently by $U_b$. 
We next argue that 
\begin{equation}\label{eqn:bound-eps}
    \epsilon^*_0+\epsilon^*_1\leq \Pr[\vecy=\bot]+\Pr[\vecz_0=\bot]+\Pr[\vecz_1=\bot]+2\delta'_{0}+8\epsilon_0+\negl(\secp).
\end{equation}
The reason \Cref{eqn:bound-eps} holds is that after extracting $\vecy$ the residual state is $\epsilon_0$-indistinguishable from the state obtained without extraction.  After further extracting $\vecz_b$ the residual state is $2\epsilon_0$-indistinguishable from the state obtained without extraction. By \Cref{def:state-preserving-aok,cor:LMS}, this implies that the probability that in the third argument-of-knowledge, the extractor outputs a valid witness $(\vecy,\vecz_b)$, corresponding to the instance $(\sk_1,\hk,\rt,\rt',b)$, is at most $2\epsilon_0+\delta'_{0,b}+2\epsilon_0=\delta'_{0,b}+4\epsilon_0$, up to negligible factors, where $\delta'_{0,b}$ is the probability that $P^*_\Test$ is rejected given that the fist message sent by $V_\Test$ is~$b\in\{0,1\}$. This, together with the collision resistance property of the underlying hash family, and with the fact that $\delta'_{0,0}+\delta'_{0,1}=2\delta'_0$, implies that  \Cref{eqn:bound-eps} indeed holds.

Note that
\begin{equation}\label{eqn:ybot}
\Pr[\vecy=\bot]\leq \delta_0+2\epsilon_0+\negl(\secp)
\end{equation}
This follows from the following calculation:
\begin{align*}
&\Pr[\vecy=\bot]= \\
&\Pr[\vecy=\bot~\wedge~\mathbb{T}_\Sim~\mbox{ is rejecting}]+\Pr[\vecy=\bot~\wedge~\mathbb{T}_\Sim~\mbox{ is accepting}]\leq \\
&\Pr[\mathbb{T}_\Sim~\mbox{ is rejecting}]+\Pr[\vecy=\bot~\wedge~\mathbb{T}_\Sim~\wedge~\mbox{ is accepting}]\leq \\
&\delta_0+\epsilon_0+\epsilon_0+\negl(\secp)
\end{align*}
where the latter equation follows from the definition of $\delta_0$ and from \Cref{def:state-preserving-aok}.  Similarly, \begin{equation}\label{eqn:zbot}
\Pr[\vecz_b=\bot]\leq \delta'_{0,b}+3\epsilon_0+\negl(\secp)
\end{equation}
This follows from the following calculation:
\begin{align*}
&\Pr[\vecz_b=\bot]= \\
&\Pr[\vecz_b=\bot~\wedge~\mathbb{T}_\Sim~\mbox{ is rejecting}]+\Pr[\vecz_b=\bot~\wedge~\mathbb{T}_\Sim~\mbox{ is accepting}]\leq \\
&\Pr[\mathbb{T}_\Sim~\mbox{ is rejecting}]+\Pr[\vecz_b=\bot~\wedge~\mathbb{T}_\Sim\mbox{ is accepting}]\leq \\
&\delta'_{0,b}+2\epsilon_0+\epsilon_0+\negl(\secp)
\end{align*}
This, together with \Cref{eqn:delta*,eqn:bound-eps}, implies that 
\[
\delta^*\leq  (\delta_0+2\epsilon_0)+ (2\delta'_0+6\epsilon_0)+ 2\delta'_0+8\epsilon_0=\delta_0+ 4\delta'_0+16\epsilon_0.
\]
We conclude that
\[
\eta^*\leq O(\sqrt{\delta_0+\delta'_0})+ C\cdot \sqrt{16\epsilon_0}\leq O(\sqrt{\delta_0+\delta'_0})+\frac{\epsilon}{2}.
\]

%
In order to use \Cref{eqn:ss-real-ideal}, with $\eta^*$ as above, we define 
    \[
\mathsf{Succ}\mbox{-}\Ideal^{\Ext_{\mathsf{ss}},\cC^{*}_\mathsf{ss}.\Commit_{\epsilon_0},\cC^*_\mathsf{ss}.\Open}(\secp, (J,\vecb_J),\bsigma)
    \]
    to be the distribution obtained by sampling
  \[
(\pk_1,\vecy,(J,\vecb_J),\vecm)\gets \Ideal^{\Ext_{\mathsf{ss}},\cC^{*}_\mathsf{ss}.\Commit_{\epsilon_0},\cC^*_\mathsf{ss}.\Open}(\secp, (J,\vecb_J),\bsigma),
     \]
    sampling $\hk\gets \Gen_\mathsf{H}(1^\secp)$, computing $\rt=\Eval_\mathsf{H}(\hk,\vecy)$, and outputing
    \[
    ((\pk_1,\hk),\rt,(J,\vecb_J),\vecm_J).
    \]
    Similarly, we define
    \[
\mathsf{Succ}\mbox{-}\Real^{\cC^{*}_\mathsf{ss}.\Commit,\cC^{*}_\mathsf{ss}.\Open}(\secp,(J,\vecb_J),\bsigma)
    \]
  to be the distribution obtained by sampling  \[
(\pk_1,\vecy,(J,\vecb_J),\vecm)\gets \Real^{\Ext_{\mathsf{ss}},\cC^{*}_\mathsf{ss}.\Commit_{\epsilon_0},\cC^*_\mathsf{ss}.\Open}(\secp, (J,\vecb_J),\bsigma),
     \]
     sampling $\hk\gets \Gen_\mathsf{H}(1^\secp)$, computing $\rt=\Eval_\mathsf{H}(\hk,\vecy)$, and outputting
     \[
      ((\pk_1,\hk),\rt,(J,\vecb_J),\vecm_J).
     \]
Note that by the definition of the extractor $\Ext$ it holds that
\[
\Ideal^{\Ext,\cC^*.\Commit,P^*,P^*_\Test}(\secp, (J,\vecb_J),\bsigma,\epsilon)\equiv\mathsf{Succ}\mbox{-}\Ideal^{\Ext_{\mathsf{ss}},\cC^{*}_\mathsf{ss}.\Commit_{\epsilon_0},\cC^*_\mathsf{ss}.\Open}(\secp, (J,\vecb_J),\bsigma).
\]
This is the case since 
\[\Ideal^{\Ext,\cC^*.\Commit,P^*,P^*_\Test}(\secp, (J,\vecb_J),\bsigma,\epsilon)
\]
extracts the state
\[\btau_{\cA,\cB}\gets \Ext_\mathsf{ss}^{\cC^{*}_\mathsf{ss}.\Open }(\sk,\vecy,\brho_{\mathsf{post},\Sim}),
\]
where $(\vecy,\brho_{\mathsf{post},\Sim})\gets \cC^*_\mathsf{ss}.\Commit_{\epsilon_0}(\pk_1,\bsigma)$.\\

Therefore, to prove \Cref{eqn:Real-ideal-succ} it suffices to prove that 
 \begin{equation} \label{eqn:Real-ideal-succ2}
          \Real^{\cC^*.\Commit,P^*,\cC^{*}.\Open}(\secp,(J,\vecb_J),\bsigma)\stackrel{\zeta^*}\approx\mathsf{Succ}\mbox{-}\Real^{\cC^{*}_\mathsf{ss}.\Commit,\cC^{*}_\mathsf{ss}.\Open}(\secp,(J,\vecb_J),\bsigma),
     \end{equation}
where $\zeta^*=O(\sqrt{\delta_0+\delta'_o+\delta})+\frac{\epsilon}{2}$.\\

To this end,  we use $P^*_\Test$ to define a $\QPT$ algorithm  $\cC^{**}.\Open$.  We mention that $\cC^{**}.\Open$ bears similarity to $\cC^*_\mathsf{ss}.\Open$ (defined in \Cref{item:C**} of the definition of $\Ext$), with the difference being that the latter was defined for the semi-succinct commitment, whereas $\cC^{**}.\Open$  is defined for the succinct commitment. In particular,  recall that for the succinct commitment, an opening to the $j$'th qubit consists of a tuple $(\vecy_j,\veco_j,\vecz_j)$.  $\cC^{**}.\Open$ uses $P^*_\Test$ to generate this opening as follows:
      \begin{enumerate}
      \item Use the state-preservation extractor~$\cE$ (from \Cref{def:state-preserving-aok}) for the $\NP$ language $\cL^*$ (defined in \Cref{eqn:def:L*}) to generate
     \[(\mathbb{T}_\Sim,\vecy,\brho_{\mathsf{post},\Sim})\gets \cE^{P^*_\Test,\brho}\left((\hk,\rt),1^\secp,\epsilon_0\right).\]
     Let $(\vecy_j,\veco_j)=\Open_{\mathsf{H}}(\hk,\vecy,j)$.
      \item   Let $U_{b}$ be the unitary as defined in the definition of the extractor $\Ext$ above.  
   
      \item Compute $ \brho'=U^\dagger_{b} \CNOT_{\mathsf{open}_j,\mathsf{copy}_j} U_{b}[\brho_{\mathsf{post}\Sim}]$.


      \item Measure register $\mathsf{copy}_j$ in the standard basis to obtain $\vecz_j$.
       \item Output $(\vecy_j,\veco_j,\vecz_j)$.
\end{enumerate}
      
    \begin{proposition}\label{claim:Real-ideal-succ}
      \[\Real^{\cC^*.\Commit,P^*,\cC^{*}.\Open}(\secp,(J,\vecb_J),\bsigma)\stackrel{\zeta_1}\approx \Real^{\cC^*.\Commit,P^*,\cC^{**}.\Open}(\secp,(J,\vecb_J),\bsigma)
      \]
      where $\zeta_1=O\left( \sqrt{\delta_0+\delta'_0+\delta}\right)+\frac{\epsilon}{2}$.
      \end{proposition}
We note that \Cref{claim:Real-ideal-succ} completes the proof of \Cref{eqn:Real-ideal-succ2} since by the definition of $\cC^{**}.\Open$ and $\cC^{*}_\mathsf{ss}.\Open$
\[
\Real^{\cC^*.\Commit,P^*,\cC^{**}.\Open}(\secp,(J,\vecb_J),\bsigma)\equiv\mathsf{Succ}\mbox{-}\Real^{\cC^{*}_\mathsf{ss}.\Commit_{\epsilon_0},\cC^{*}_\mathsf{ss}.\Open}(\secp,(J,\vecb_J),\bsigma).
\]

\paragraph{Proof of \Cref{claim:Real-ideal-succ}.}
      \Cref{eqn:binding2:succinct} (which we proved above) implies that it suffices to prove the following:
\begin{equation}\label{eqn:lem:succ:pf}
\Pr[\Ver(\sk,\rt,(J,b_J),(\vecy_J,\veco_J,\vecz_J))=0]\leq O({\delta_0+\delta'_0})+\left(\frac{\epsilon}{2C}\right)^2+\negl(\secp)
\end{equation}
where $(\vecy_J,\veco_J,\vecz_J)=\cC^{**}.\Open(\brho_\mathsf{post},(J,\vecb_J))$ and where $C\in\mathbb{N}$ is defined in \Cref{item:def-C} of the definition of $\Ext$. 
We first note that by \Cref{def:state-preserving-aok,cor:LMS},
\begin{equation}\label{eqn:lem:succ:pf1}
\Pr[\Ver_\mathsf{H}(\hk,\rt,J,\vecy_J,\veco_J)=0]\leq \delta_0+2\epsilon_0+\negl(\secp).\end{equation}
Moreover, the residual state, denoted by $\brho_{\mathsf{post},\Sim}$ satisfies that 
\begin{equation}\label{eqn:rho-sim}
(\brho_{\mathsf{post},\Sim},\sk)\stackrel{\epsilon_0}\approx(\brho_{\mathsf{post}},\sk)
\end{equation}
which implies that $P^*_\Test(\pk,\rt,\brho_{\mathsf{post},\Sim})$, upon receiving $b\in\{0,1\}$ from $V_\Test$  is rejected with probability at most $\delta'_{0,b}+\epsilon_0+\negl(\secp)$. By \Cref{def:state-preserving-aok,cor:LMS}, this implies that the tuple
$(\mathbb{T}_\Sim,\vecz_b,\brho'_{\mathsf{post},\Sim})$
generated in \Cref{eqn:ext-z} satisfies that 
\[
\Pr[\Eval_\mathsf{H}(\hk,\vecz_b)\neq \rt'_b]\leq \delta'_{0,b}+3\epsilon_0+\negl(\secp).
\]
By the union bound, we conclude that for every $b\in\{0,1\}$,
\begin{equation}\label{eqn:yveez}
\Pr[\Eval_{\mathsf{H}}(\hk,\vecy)\neq \rt~~\vee~~\Eval_\mathsf{H}(\hk,\vecz_b)\neq \rt'_b]\leq \delta_0+\delta'_{0,b}+5\epsilon_0+\negl(\secp).
\end{equation}
By \Cref{def:state-preserving-aok}, for every $b\in\{0,1\}$ it holds that the state $\brho'_{\mathsf{post},\Sim}$, generated in \Cref{eqn:ext-z} as part of $U_b$, is $\epsilon_0$-indistinguishable from the state of $P^*_\Test(\pk,\rt,\brho_{\mathsf{post},\Sim},b)$ after executing the first state-preserving argument-of-knowledge.  This, together with \Cref{eqn:rho-sim}, implies that the state $\brho'_{\mathsf{post},\Sim}$ is $2\epsilon_0$-indistinguishable from the state of $P^*_\Test(\pk,\rt,\brho_{\mathsf{post}})$  after executing the first state-preserving argument-of-knowledge.  Since $P^*_\Test(\pk,\rt,\brho_{\mathsf{post}},b)$ is accepted in both its state-preserving argument-of-knowledge protocols with probability at least $1-\delta'_{0,b}$, it holds that it is accepted in the second state-preserving argument-of-knowledge protocol (w.r.t.\ the language $\cL^{**}$) when it starts with the state $\brho'_{\mathsf{post},\Sim}$ with probability at least $1-\delta'_{0,b}-2\epsilon_0$.
This, together with \Cref{def:state-preserving-aok,,cor:LMS}, implies that 
\[
\Pr[((\sk_1,\hk,\rt,\rt',b),(\vecy,\vecz))\in \cR_{\cL^{**}}]
\geq  1-\delta'_{0,b}-4\epsilon_0,
\]
which together with \Cref{eqn:yveez} and the collision resistant property of the underlying hash family implies that 
\[\Pr[\Ver(\sk,\rt,(J,b_J),(\vecy_J,\veco_J,\vecz_J))=0]\leq O(\delta_0+\delta'_0)+9\epsilon_0 +\negl(\secp)
\]
Thus it remains to note that 
$9\epsilon_0\leq \left(\frac{\epsilon}{2C}\right)^2$, as desired.

\end{proof}




\section{Applications} \label{sec:applications}

\subsection{Succinct Interactive Arguments for $\QMA$}\label{sec:succint-QMA} 

In this section we construct a succinct interactive argument for $\QMA$. To this end, we construct a {\em semi-succinct} interactive argument for $\QMA$, where only the verifier's messages are short but the messages from the prover may be long.
We then rely on a black-box transformation from \cite{Bartusek22} which shows a generic transformation for converting any post-quantum computationally sound semi-succinct interactive argument for $\QMA$ into a fully succinct one.

We refer to \cite[Definition 3.1]{Bartusek22} for a formal definition of a succinct argument system. That definition requires the soundness error to be negligible; when this is not the case for our protocols, we say so explicitly. 

\paragraph{Ingredients}

Our semi-succinct interactive argument consists of the following three ingredients:
\begin{itemize}
 \item A pseudorandom generator $\mathsf{PRG}:\{0,1\}^\secp\rightarrow \{0,1\}^\ell$, where $\ell=\ell(\secp)$ is a polynomial specified in \Cref{lmm:FHM} below.    \item A semi-succinct (qubit-by-qubit) commitment scheme $(\Gen, \Commit, \Open, \Ver, \Out)$, as defined in \Cref{sec:XZ-commitments} and constructed in \Cref{sec:constructions}.
  
    \item The information-theoretic QMA verification protocol of Fitzsimons, Hajdušek, and  Morimae~\cite{FHM18}. As in \cite{Bartusek22}, we use an ``instance-independent'' version due to~\cite{TCC:ACGH20} and assume the soundness gap is $1-\negl(\secp)$, where the latter can can be achieved by standard $\QMA$ amplification.
\begin{lemma}[\cite{FHM18,TCC:ACGH20,Bartusek22}]\label{lmm:FHM}
For all languages $\cL = (\cL_{\mathsf{yes}}, \cL_{\mathsf{no}}) \in \QMA$ there exists a polynomial $k(\secp)$, a function $\ell(\secp)$ that is polynomial in the time $T(\secp)$ required to verify instances of size $\secp$, a $\QPT$ algorithm $P_\mathsf{FHM}$, and a $\PPT$ algorithm $V_\mathsf{FHM}$ such that the following holds.
\begin{itemize}
    \item $P_\mathsf{FHM}(\vecx,\ket{\psi}) \to \ket{\pi}$: on input an instance $\vecx \in \{0,1\}^\secp$ and a quantum state $\ket{\psi}$, $P_\mathsf{FHM}$ outputs an $\ell(\secp)$-qubit state $\ket{\pi}$.
    \item \textbf{Completeness.} For all $\vecx\in\cL_{\mathsf{yes}}$ and $\ket{\phi}\in\cR_\cL(x)$ it holds that for a random $\vech \gets \{0,1\}^{\ell(\secp)}$
    $$
    \Pr[V_\mathsf{FHM}(\vecx, \vecv) = \acc : \ket{\pi} \gets P_\mathsf{FHM}\left(\vecx, \ket{\phi}^{\otimes k(\secp)}\right)] \geq 1 - \negl(\secp)
    $$
    where $\vecv$ is the result of measuring $\ket{\pi}$ in basis $\vech$.
    \item \textbf{Soundness.} For all $\vecx\in\cL_{\mathsf{no}}$ and all $\ell$-qubit states $\ket{\pi^*}$ it holds that for a  random $\vech \gets \{0,1\}^{\ell(\secp)}$,
    $$
    \Pr[V_\mathsf{FHM}(\vecx,\vecv^*) = \acc] \leq \negl(\secp)
    $$
    where $\vecv^*$ is the result of measuring $\ket{\pi^*}$ in basis $\vech$.
   
\end{itemize}

\end{lemma}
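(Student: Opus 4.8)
The plan is to recognize that Lemma~\ref{lmm:FHM} is not proved here from scratch but assembled from three known ingredients: the post-hoc / $X/Z$-local Hamiltonian verification protocol of Fitzsimons, Hajdu\v{s}ek and Morimae~\cite{FHM18}, the instance-independent reformulation of Alagic et~al.~\cite{TCC:ACGH20}, and standard $\QMA$ gap amplification. Concretely, I would proceed as follows.

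First, recall that via Kitaev's circuit-to-Hamiltonian construction together with the reduction to Pauli-$X/Z$ terms used in~\cite{FHM18}, every $\QMA$ language $\cL$ reduces to estimating the ground energy of a Hamiltonian $H_\vecx = \sum_k a_k P_k$ on $\ell(\secp)$ qubits, where each $P_k$ is a tensor product of single-qubit $X$ and $Z$ operators (and identities), both $\ell(\secp)$ and the number of terms are polynomial in the verification time $T(\secp)$, and there is a constant promise gap between the yes-case and no-case ground energies. The honest prover $P_\mathsf{FHM}$ takes $k(\secp)$ copies of a witness $\ket{\phi}$, runs the $\QMA$ verifier coherently, and outputs (a purification of) the associated history state $\ket{\pi}$; the verifier $V_\mathsf{FHM}$, given (the description of) a term $P_k$, measures each qubit of $\ket{\pi}$ in the $X$ or $Z$ basis dictated by $P_k$ and computes an accept/reject bit whose expectation reproduces $\langle\pi|P_k|\pi\rangle$ up to normalization, thereby estimating the energy. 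Completeness and a constant soundness gap for this protocol are exactly the content of~\cite{FHM18}.

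Second, I would apply the instance-independent packaging of~\cite{TCC:ACGH20} (in the form used by~\cite{Bartusek22}): by padding the qubit register and randomizing which term is tested, one arranges that the verifier's query is simply a uniformly random string $\vech \gets \{0,1\}^{\ell(\secp)}$, interpreted as ``measure qubit $i$ in $Z$ if $h_i = 0$ and in $X$ if $h_i = 1$'', with the accept/reject decision $V_\mathsf{FHM}(\vecx,\vecv)$ computed classically from $\vecx$ and the outcome string $\vecv$. Finally, to boost the soundness gap from constant to $1 - \negl(\secp)$, I would amplify at the $\QMA$ level \emph{before} invoking the construction --- i.e.\ replace $\cL$ by an error-reduced version via in-place amplification of the $\QMA$ verifier --- so that in the no-case every $\ell$-qubit state is rejected except with negligible probability, while completeness in the yes-case is preserved up to negligible error. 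Collecting these steps yields the statement with the claimed parameters, in particular a polynomial $\ell(\secp)$ and a polynomial witness-copy count $k(\secp)$.

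The main obstacle, and the reason the lemma is cited rather than reproved, is the combination of (i) forcing the Hamiltonian into a purely $X/Z$-local form while keeping the promise gap bounded below by a constant --- naive perturbation-gadget reductions degrade the gap polynomially, so one must use the direct constructions of~\cite{FHM18} or re-amplify --- and (ii) converting the structured, instance-dependent choice of Hamiltonian term into a genuinely uniform query string $\vech$ without breaking soundness, which is the delicate bookkeeping carried out in~\cite{TCC:ACGH20,Bartusek22}. I would treat all three cited sources as black boxes and merely verify that their parameter regimes compose to give polynomial $\ell(\secp)$, polynomial $k(\secp)$, and negligible two-sided error.
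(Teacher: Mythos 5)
Your proposal matches the paper's treatment: the paper does not reprove this lemma but imports it, citing~\cite{FHM18} for the $X/Z$-basis verification protocol, \cite{TCC:ACGH20} (as used in~\cite{Bartusek22}) for the instance-independent uniform-basis reformulation, and standard $\QMA$ amplification to push the soundness gap to $1-\negl(\secp)$. Your decomposition into these three ingredients is exactly the one the paper relies on.
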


\end{itemize}

\paragraph{The semi-succinct interactive argument for $\QMA$}  In the following protocol $P$ and $V$ are given an instance $\vecx$ and $P$ is given $k$ copies of the $\QMA$ witness $\ket{\psi}$.

\begin{description}
    \item $V\rightarrow P$:  Generate $(\pk,\sk)\gets \Gen(1^\secp)$, and send $\pk$.
    \item $P\rightarrow V$: 
    \begin{enumerate}
        \item Compute $\ket{\bpi}=P_{\mathsf{FHM}}\left(\vecx,\ket{\bpsi}^{\tensor{k}}\right)$.
        \item Compute $(\vecy,\brho)\gets \Commit(\pk, \ket{\bpi})$.

        Denote by $\ell$ the number of qubits in $\ket{\bpi}$, and denote by $\vecy=(\vecy_1,\ldots,\vecy_\ell)$, where $\vecy_i$ is a commitment to the $i$'th qubit of $\bsigma$.
        \item Send $\vecy$.
    \end{enumerate}
    \item $V\rightarrow P$: 
   Send a random bit $b\in\{0,1\}$.
   \item If $b=0$:\footnote{This should be thought of as a ``test round.''}  
   \begin{enumerate}
       \item $V\rightarrow P$: Send a random bit $h\gets\{0,1\}$.  
       \item $P\rightarrow V$: Send $\vecz\gets \Open(\brho,h^\ell)$.
       \item $V\rightarrow P$: Compute $v=\Ver(\sk,\vecy,h^\ell,\vecz)$ and accept if $v=1$ and otherwise, reject. 
       
   \end{enumerate} 

    \item If $b=1$:  
    \begin{enumerate}
    \item $V\rightarrow P$: Send a random seed $\vecs\gets \{0,1\}^\secp$. 
    \item $P\rightarrow V$: Compute $\vecb=\mathsf{PRG}(\vecs)\in\{0,1\}^\ell$ and send the openings $(\vecz_1,\ldots,\vecz_\ell)\gets \Open(\brho,\vecb)$.
    \item $V$ does the following:  
    \begin{enumerate}
    \item Compute $\vecb=\mathsf{PRG}(\vecs)$.
        \item For every $i\in [\ell]$ compute $u_i=\Ver(\sk,\vecy_i,b_i,\vecz_i)$ and $v_i=\Out(\sk,\vecy_i,b_i,\vecz_i)$.
        \item If there exists $i\in[\ell]$ such that $u_i=0$ then reject.
        \item Else, accept if and only if $V_{\mathsf{FHM}}$ would accept $(\vecx,(b_1,\ldots,b_\ell),(v_1,\ldots,v_\ell))$.
    \end{enumerate}
    
    \end{enumerate}
\end{description}

\begin{theorem}\label{them:semi-succinct-QMA}
    The above scheme is a computationally sound semi-succinct interactive argument for $\QMA$, with completeness $1- \negl(\lambda)$ and soundness $1 - 1/\lambda^2$.
\end{theorem}

\begin{proof}
The completeness property  is straightforward and hence we focus on proving the soundness property, which will follow from the binding property of the commitment. Fix a $\QMA$ promise problem $\cL=(\cL_\mathsf{yes},\cL_\mathsf{no})$. Fix $P^*$, an input~$\vecx^*$ and an auxiliary state $\bsigma$, such that $P^*(\vecx^*,\bsigma)$ is accepted with probability $1-\delta$, for $\delta\leq \frac{1}{\secp^2}$.  We argue that it must be the case that $\vecx^*\notin \cL_\mathsf{no}$. To this end, we use $P^*$ to construct $P^*_\mathsf{FHM}$ that is accepted with high probability in the protocol $(P_\mathsf{FHM},V_\mathsf{FHM})$ on input $\vecx^*$.  The algorithm $P^*_\mathsf{FHM}(\vecx^*,\bsigma)$ proceeds as follows:
\begin{enumerate}
    \item Generate $(\pk,\sk)\gets \Gen(1^\secp)$.
    \item Generate $(\vecy,\brho)\gets P^*(\pk,\vecx^*,\bsigma)$.
    \item Use the extractor $\Ext$ from the binding property of the commitment scheme to extract a state $\btau\gets\Ext^{P^*}(\sk,\vecy,\brho)$
\item Send $\btau$.
\end{enumerate}
We next argue that $V_\mathsf{FHM}$ accepts $\btau$ with high probability on a random basis. To this end, it suffices to argue that it accepts $\btau$ with high probability on a  pseudorandom basis, since otherwise one can distinguish a pseudorandom string from a truly random one, thus breaking the underlying $\mathsf{PRG}$.  
Denote by 
\[
\mathsf{Good}=\{\vecs\in \{0,1\}^\secp:~P^* \mbox{ is accepted w.p.}\geq 1-\secp\delta \mbox{  when $V$ sends $\vecs$}\}
\]
Note that 
\begin{equation}\label{eqn:sgood}
    p\triangleq \Pr[\vecs\in \mathsf{Good}]\geq  1-\frac{2}{\secp}
\end{equation}
which follows from the following Markov argument:
\begin{align*}
&1-2\delta\leq\Pr[P^* \mbox{ is accepted}~|~b=1] = \\
&\Pr[P^* \mbox{ is accepted}~|~b=1~\wedge~\vecs\in \mathsf{Good}]\cdot \Pr[\vecs\in \mathsf{Good}~|~b=1]+\\
&\Pr[P^* \mbox{ is accepted}~|~b=1~\wedge~\vecs\notin \mathsf{Good}]\cdot \Pr[\vecs\notin \mathsf{Good}~|~b=1]\leq\\
&p+(1-\secp \delta)(1-p)=\\
&1-\secp\delta(1-p)
\end{align*}
which implies that $-2\delta\leq -\secp\delta(1-p)$ and in turn that $\secp(1-p)\leq 2$, thus implying \Cref{eqn:sgood}.
By the binding property of the underlying commitment scheme, for any basis $\vecb=\mathsf{PRG}(\vecs)$ such that $\vecs\in\mathsf{Good}$, it holds that 
\[
(\pk,\vecy,\vecb,\vecm_\mathsf{Real})\stackrel{O(\sqrt{\secp\delta})}\approx
(\pk,\vecy,\vecb,\vecm_\mathsf{Ideal})
\]
where $\vecm_\mathsf{Ideal}$ is the result of measuring $\btau\gets\Ext^{P^*}(\sk,\vecy,\brho)$ in basis $\vecb$, and $\vecm_\mathsf{Real}$ is the output corresponding to the opening of $P^*$.  The fact that the measurements 
$\vecm_\mathsf{Real}$ are accepted by $V_\mathsf{FHM}$ with probability $\geq 1-\secp\delta$ (for any basis $\mathsf{PRG}(\vecs)$ such that $\vecs\in\mathsf{Good}$) implies that $\vecm_\mathsf{Ideal}$ is accepted by $V_\mathsf{FHM}$ with probability $\geq 1-\secp\delta-O(\sqrt{\secp\delta})$  (for any such basis).  This, together with \Cref{eqn:sgood} implies that $\btau$ is accepted by $V_\mathsf{FHM}$ on a pseudorandom basis with probability
\begin{align*}
&\Pr[V_\mathsf{FHM} \mbox{ accepts $\btau$ on basis }\mathsf{PRG}(\vecs)]\geq \\
&\Pr[V_\mathsf{FHM} \mbox{ accepts $\btau$ on basis }\mathsf{PRG}(\vecs)~|~\vecs\in\mathsf{Good}]\cdot \Pr[\vecs\in\mathsf{Good}]\geq\\
&\Pr[V_\mathsf{FHM} \mbox{ accepts $\btau$ on basis }\mathsf{PRG}(\vecs)~|~\vecs\in\mathsf{Good}]\cdot \left(1-\frac{2}{\secp}\right)\geq\\
&\left(1-\secp\delta -O(\sqrt{\secp\delta})\right)\cdot \left(1-\frac{2}{\secp}\right)
\end{align*}
This, together with \Cref{lmm:FHM} and our assumption that $\delta\leq \frac{1}{\secp^2}$, implies that $\vecx^*\notin \cL_{\mathsf{No}}$, as desired. 

\end{proof}

\begin{theorem}\label{thm:succinct-qma}
    There exists a computationally sound succinct interactive argument for $\QMA$, with completeness $1 - \negl(\lambda)$ and soundness $\negl(\lambda)$.
\end{theorem}

\begin{proof}
   To achieve full succinctness, we apply \cite[Theorem 9.3]{Bartusek22}, which transforms a semi-succinct protocol into a fully succinct one. As input, this theorem expects a semi-succinct argument with soundness $\negl(\lambda)$, whereas the argument system given by \Cref{them:semi-succinct-QMA} only has soundness $1 - 1/\lambda^2$. However, it is easy to see that this can be made $\negl(\lambda)$ by sequentially repeating the protocol $\poly(\lambda)$ times. This does not harm succinctness since it blows up the communication by a factor of $\poly(\lambda)$, which is permitted. (Alternatively, following the techniques of \cite{Alagic_2020} as used in \cite{Bartusek22}, parallel repetition can be employed to avoid an increase in round complexity.)
\end{proof}

\subsection{Succinct Interactive Arguments from $X/Z$ Quantum PCPs}\label{sec:app:QPCP}

In this section we show how to convert any $X/Z$ quantum PCP for a language $\cL$ into an succinct interactive argument $(P,V)$ for $\cL$.  As in \Cref{sec:succint-QMA} we construct a semi-succinct interactive argument, and then use the black-box transformation from \cite{Bartusek22} to convert it into a fully succinct one. 

\paragraph{Ingredients}
Our semi-succinct interactive argument consists of the following ingredients.

\begin{itemize}
   \item A semi-succinct (qubit-by-qubit) commitment scheme $(\Gen, \Commit, \Open, \Ver, \Out)$, as defined in \Cref{sec:XZ-commitments} and constructed in \Cref{sec:constructions}.
   \item An $X/Z$ quantum PCP for the language $\cL$, with verifier $V_\mathsf{QPCP}$.
\end{itemize}

\paragraph{The semi-succinct interactive argument for $\cL$}  In the following protocol $(P,V)$ are given an instance $\vecx$ and $P$ is also given an $X/Z$ quantum PCP $\ket{\pi}$.

\begin{description}
    \item $V\rightarrow P$:  Generate $(\pk,\sk)\gets \Gen(1^\secp)$, and send $\pk$.
    \item $P\rightarrow V$: Compute $(\vecy,\brho)\gets \Commit(\pk, \ket{\bpi})$ and send $\vecy$

        Denote by $\ell$ the number of qubits in $\ket{\bpi}$, and denote by $\vecy=(\vecy_1,\ldots,\vecy_\ell)$, where $\vecy_i$ is a commitment to the $i$'th qubit of $\bsigma$.

    \item $V\rightarrow P$: 
   Send a random bit $b\in\{0,1\}$.
   \item If $b=0$:\footnote{This should be thought of as a ``test round.''}  
   \begin{enumerate}
       \item $V\rightarrow P$: Send a random bit $h\gets\{0,1\}$.  
       \item $P\rightarrow V$: Send $\vecz\gets \Open(\brho,h^\ell)$.
       \item $V\rightarrow P$: Compute $v=\Ver(\sk,\vecy,h^\ell,\vecz)$ and accept if $v=1$ and otherwise, reject.

   \end{enumerate} 
   
    \item If $b=1$:  
    \begin{enumerate}
    \item $V\rightarrow P$: Send a sample $(i_1,\ldots,i_c,b_1,\ldots,b_c)\gets V_{\mathsf{QPCP}}(\vecx,1^\secp)$.
    \item $P\rightarrow V$: Send the openings $(\vecz_1,\ldots,\vecz_c)\gets \Open(\brho,(i_1,b_1),\ldots,(i_c,b_c))$.
    \item $V$ does the following:  
    \begin{enumerate}
    
        \item For every $j\in [c]$ compute $v_j=\Ver(\sk,\vecy_{i_j},b_j,\vecz_j)$ and $u_j=\Out(\sk,\vecy_{i_j},b_j,\vecz_j)$.
        \item If there exists $j\in[c]$ such that $v_j=0$ then reject.
        \item Else, accept if and only if $V_{\mathsf{QPCP}}$ would accept $(\vecx,(i_1,\ldots,i_c), (b_1,\ldots,b_\ell),(u_1,\ldots,u_c))$.
    \end{enumerate}
    
    \end{enumerate}
\end{description}

\begin{theorem}\label{them:semi-succinct-QPCP}
    The above scheme is a semi-succinct interactive argument for $\cL$ with completeness $1 -\negl(\lambda)$ and soundness $1 - 1/\lambda^2$.
\end{theorem}

\paragraph{Proof of \Cref{them:semi-succinct-QPCP}.}
The completeness property is straightforward and hence we focus on proving the binding property. Fix a $\BQP$ cheating prover~$P^*$, an input~$\vecx^*$ and an auxiliary state $\bsigma$, such that $P^*(\vecx^*,\bsigma)$ is accepted with probability $1-\delta$, for $\delta\leq \frac{1}{\secp^2}$.  We use $P^*$ to extract an $X/Z$ quantum PCP~$\bpi$ for $x^*\in \cL$ that is accepted with high probability, thus implying that indeed $x^*\in \cL$ as desired.  This is done as follows:
\begin{enumerate}
    \item Generate $(\pk,\sk)\gets \Gen(1^\secp)$.
    \item Generate $(\vecy,\brho)\gets P^*(\pk,\vecx^*,\bsigma)$.
    \item Use the extractor $\Ext$ from the binding property of the commitment scheme to extract a state $\bpi\gets\Ext^{P^*}(\sk,\vecy,\brho)$
\item Output $\bpi$.
\end{enumerate}
The fact that $P^*$ is accepted with probability $1-\delta$ implies that for every $h\in\{0,1\}$ it opens in an accepted way on $h^\ell$ with probability at least $1-4\delta$.  
Denote by $\mathsf{Good}$ the event that $V_{\mathsf{QPCP}}$ samples $(i_1,\ldots,i_c,b_1,\ldots,b_c)$ such that $P^*$ is accepted with probability $\geq 1-\secp\delta$ when $V$ sends $(i_1,\ldots,i_c,b_1,\ldots,b_c)$.
Note that 
\begin{equation}\label{eqn:sgood:PCP}
    p\triangleq \Pr[\mathsf{Good}]\geq  1-\frac{2}{\secp}
\end{equation}
which follows from the following Markov argument:
\begin{align*}
&1-2\delta\leq \Pr[P^* \mbox{ is accepted}~|~b=1] = \\
&\Pr[P^* \mbox{ is accepted}~|~b=1~\wedge~\mathsf{Good}]\cdot \Pr[\mathsf{Good}~|~b=1]+\\
&\Pr[P^* \mbox{ is accepted}~|~b=1~\wedge~\neg \mathsf{Good}]\cdot \Pr[\neg\mathsf{Good}~|~b=1]\leq\\
&p+(1-\secp \delta)(1-p)=\\
&1-\secp\delta(1-p)
\end{align*}
which implies that $-2\delta\leq -\secp\delta(1-p)$ and in turn that $\secp(1-p)\leq 2$, thus implying \Cref{eqn:sgood:PCP}. In what follows we say that $(i_1,\ldots,i_c,b_1,\ldots,b_c)\in\mathsf{Good}$ if $P^*$ is accepted when $V$ sends $(i_1,\ldots,i_c,b_1,\ldots,b_c)$ with probability $\geq 1-\secp \delta$.
By the binding property of the underlying commitment scheme, for any $(i_1,\ldots,i_c,b_1,\ldots,b_c)\in\mathsf{Good}$, it holds that 
\[
(\pk,\vecy,(i_1,\ldots,i_c,b_1,\ldots,b_c),\vecm_\mathsf{Real})\stackrel{O(\sqrt{\secp\delta})}\approx
(\pk,\vecy,(i_1,\ldots,i_c,b_1,\ldots,b_c),\vecm_\mathsf{Ideal})
\]
where $\vecm_\mathsf{Ideal}$ is the result of measuring $\bpi\gets\Ext^{P^*} (\sk,\vecy,\brho)$ in locations $(i_1,\ldots,i_c)$ and basis $(b_1,\ldots,b_c)$, and $\vecm_\mathsf{Real}$ is the output corresponding to the opening of $P^*$.  The fact that the measurements 
$\vecm_\mathsf{Real}$ are accepted by $V_\mathsf{QPCP}$ with probability $\geq 1-\secp\delta$ (for any $(i_1,\ldots,i_c,b_1,\ldots,b_c)\in\mathsf{Good}$) implies that $\vecm_\mathsf{Ideal}$ is accepted by $V_\mathsf{QPCP}$ with probability $\geq 1-\secp\delta-O(\sqrt{\secp\delta})$  (for any such basis).  This, together with \Cref{eqn:sgood:PCP} implies that 
\begin{align*}
&\Pr[V_\mathsf{QPCP} \mbox{ accepts $\bpi$}]\geq \\
&\Pr[V_\mathsf{QPCP} \mbox{ accepts $\bpi$}~|~\mathsf{Good}]\cdot \Pr[\mathsf{Good}]\geq\\
&\Pr[V_\mathsf{QPCP} \mbox{ accepts $\bpi$ }~|~\mathsf{Good}]\cdot \left(1-\frac{2}{\secp}\right)\geq\\
&\left(1-\secp\delta -O(\sqrt{\secp\delta})\right)\cdot \left(1-\frac{2}{\secp}\right)
\end{align*}
This, together with our assumption that $\delta\leq \frac{1}{\secp^2}$, implies that indeed $\bpi$ is an $X/Z$ PCP that is accepted with high probability, and thus $\vecx\in \cL$, as desired.

\begin{theorem}\label{thm:succinct-QPCP}
    There exists a succinct interactive argument for $\cL$ with completeness $1 -\negl(\lambda)$ and soundness $\negl(\lambda)$.
\end{theorem}
\begin{proof}
    We apply \cite[Theorem 9.3]{Bartusek22} to the sequential repetition of the argument system from \Cref{them:semi-succinct-QPCP}, exactly as in the proof of \Cref{thm:succinct-qma}.
\end{proof}

\qed

\ifsubmission
\else
\ifanon
\else 

\section*{Acknowledgements}
This work was done in part while SG, AN, and AV were participants at the Simons Institute 2023 Summer Cluster on Quantum Computing, and we thank the Simons Institute and the organizers for the opportunity. Yael Kalai is supported by DARPA under Agreement No. HR00112020023.  Any opinions, findings and conclusions or recommendations expressed in this material are those of the author(s) and do not necessarily reflect the views of the United States Government or DARPA. Anand Natarajan is supported by NSF CAREER Grant CCF-2339948. Agi Villanyi acknowledges support by the Doc Bedard fellowship from the Laboratory for Physical Sciences through the Center for Quantum Engineering and the National Science Foundation Graduate Research Fellowship under Grant No. 2141064. Sam Gunn is supported by a Google PhD Fellowship and the U.S. Department of Energy, Office of Science, National Quantum Information Science Research Centers, Quantum Systems Accelerator. 

\fi
\fi

\newpage

\bibliographystyle{alpha}
\bibliography{crypto.bib,quantum_bib}


\section{Weak commitments to Quantum States (WCQ)}\label{def:WCQ-section}

In this section, we recall the Measurement Protocol from Mahadev \cite{Mah18a}, which was formalized by \cite{Bartusek22}.  We refer to such a protocol as a  \textit{weak commitment to quantum states} (WCQ) protocol, and define it formally below.

\begin{definition}[Weak Commitment to Quantum States (WCQ)]\label{def:WCQ-syntax} An $\ell$-qubit WCQ protocol is specified by the five algorithms $(\GenM,\CommitM,\OpenM,\TestM,\OutM)$:
\begin{enumerate}
    \item $\GenM$ is a $\ppt$ algorithm that takes as input the security parameter $\secp$ (in unary) and a string  $h \in \{0,1\}^\ell$, and outputs  a pair $(\pk,\sk) \gets  \GenM(1^\secp,h)$, where $\pk$ is referred to as the {\em public key} and $\sk$ is referred to as the {\em secret key}.
    \item $\CommitM$ is a $\BQP$ algorithm that takes as input a public key $\pk$ and a quantum state~$\bsigma$ and outputs a pair $(\vecy,\brho) \gets \CommitM(\pk,\bsigma)$, where $\vecy$ is a classical string, referred to as the {\em commitment string},  and $\brho$ is a quantum state.
    \item $\OpenM$ is a $\BQP$ algorithm that takes as input a bit $c\in\{0,1\}$ and a quantum state $\brho$ and outputs a classical string  $z \gets \OpenM(\brho,c)$, referred to as the  {\em opening string}.
    \item $\TestM$ is a polynomial time algorithm that takes as input a public key $\pk$ and a pair $(\vecy,z)$, where $\vecy$ is a commitment string and $z$ is an opening string, and it outputs $\{\acc,\rej\} \gets \TestM(\pk,(\vecy,z))$.
    \item $\OutM$ is a polynomial time algorithm that takes as input a secret key $\sk$ and a pair $(\vecy,z)$, where $\vecy$ is a commitment string and $z$ is an opening string, and it outputs a classical string $m\in\{0,1\}^\ell$.
\end{enumerate}

The commitment protocol associated with the tuple $(\GenM,\CommitM,\OpenM,\TestM,\OutM)$ is a two party protocol between a $\BQP$ committer $\cC$ which takes as input a quantum state $\bsigma$, and a $\BPP$ verifier $\cV$ which takes as input a classical string $h\in\{0,1\}^\ell$. Both parties also take as input the unary security parameter~$\secp$. The protocol consists of two phases $\mathsf{COMMIT}$ and $\mathsf{OPEN}$, proceeding as follows:
\begin{itemize}
    \item $\mathsf{COMMIT}$ phase:
    \begin{enumerate} 
    \item $[\cC \leftarrow \cV]$: $\cV$ samples $(\pk,\sk) \gets  \GenM(1^\secp,h)$ and sends the public key $\pk$ to $\cC$.
    \item $[\cC \rightarrow \cV]$: $\cC$ computes $(\vecy,\brho) \gets \CommitM(\pk,\bsigma)$ and sends the commitment string~$\vecy$ to the verifier.
    \end{enumerate}
    \item $\mathsf{OPEN}$ phase:
    \begin{enumerate}
    \item $[\cC \leftarrow \cV]$: $\cV$ samples a random challenge bit $c \gets \{0,1\}$ and sends $\ct$ to $\cC$. 
    \item $[\cC \rightarrow \cV]$: $\cC$ sends $z \gets \OpenM(\brho,c)$ to $\cV$. 
    \item If $c = 0$, $\cV$ outputs $\{\acc,\rej\} \gets \TestM(\pk,y,z)$. If $c = 1$, $\cV$ outputs $m \gets \OutM(\sk,y,z)$. 
    \end{enumerate}
\end{itemize}
\end{definition}

A WCQ protocol acts over registers $\P, \Y, \Z, \W$ where $\P$ contains the public component of the output of $\GenM$, $\Y$ contains the output of $\CommitM$, $\Z$ contains the output of $\OpenM$, and  $\W$ are additional work registers. Additionally, the commitment protocol satisfies the following properties for \textit{correctness} and \textit{binding}. 
\begin{definition}[WCQ correctness]\label{def:WCQ-correctness}
Let $\RealW(1^\lambda, \bsigma, h)$ be the distribution resulting from running $(\pk, \sk) \leftarrow \GenM(1^\lambda, h)$, $(\vecy, \brho) \leftarrow \CommitM(\pk, \bsigma)$, $z \leftarrow \OpenM(\brho, 1)$, and outputting $m \leftarrow \OutM(\sk, y, z)$. Let $\bsigma(h)$ denote the distribution resulting from measuring each qubit $i$ of a quantum state $\bsigma$ in the basis specified by $h_i$ for $i \in [\ell]$. A WCQ protocol is correct if, for all $\ell$-qubit quantum states $\bsigma$ and for every $h \in \{0,1\}^\ell$, the following two properties are satisfied:  


\begin{enumerate}
    \item (Test Round Completeness):
       \begin{equation}
        \mathrm{Pr}\left[\!\begin{aligned}
        & (\pk, \sk) \leftarrow \GenM(1^\lambda, h); \\
        \acc \leftarrow \TestM(\pk, y, z): \quad & (\vecy, \brho) \leftarrow \CommitM(\pk, \bsigma); \\
        & z \leftarrow \OpenM(\brho, 0)] \\
        \end{aligned}\right]
        = 1 - \negl({\lambda})
        \end{equation}
     \item (Measurement Round Completeness):
        \begin{equation}
        \left\{\!
        \begin{aligned}
        & (\pk, \sk) \leftarrow \GenM(1^\lambda, h); \\
        m \leftarrow \OutM(\sk, y, z): \quad & (\vecy, \brho) \leftarrow \CommitM(\pk, \bsigma); \\
        & z \leftarrow \OpenM(\brho, 1)\\
        \end{aligned}\right\}
        \approx_c \bsigma(h)
        \end{equation}
        \end{enumerate}
\end{definition}


\begin{definition}[WCQ Binding]\cite{Bartusek22}\label{def:WCQ-binding} A WCQ protocol is \text{\em{binding}} if there exists a $\PPT$ classical algorithm $\SimGen$ and a $\QPT$ oracle machine $\MExt$ such that, for any cheating $\BQP$ committer $\cC^*$ with quantum state $\bsigma$ that satisfies that for every $h\in\{0,1\}^\ell$:  
 \begin{equation}
    \mathrm{Pr}\left[\!\begin{aligned}
    & (\pk, \sk) \leftarrow \GenM(1^\lambda, h); \\
    \acc \leftarrow \TestM(\pk, y, z): \quad & (\vecy, \brho) \leftarrow \cC^*.\CommitM(\pk, \bsigma); \\
    & z \leftarrow \cC^*.\OpenM(\brho, 0)] \\
    \end{aligned}\right]
    = 1 - \negl({\lambda}),
 \end{equation}
 it holds that for every $h\in\{0,1\}^\ell$,
$$\Sim^{\cC^*}(1^\lambda,h)\approx_c \RealW^{\cC^*}(1^\lambda, h)$$ 
where

\begin{itemize}
    \item $\Sim^{\cC^*}(1^\lambda,h)$  is the output distribution of the following procedure: 
    \begin{enumerate}
    \item Sample $(\pk,\sk)\gets \SimGen(1^\secp)$.
    \item Execute the commitment round to obtain $(\vecy, \brho) \leftarrow \cC^*.\CommitM(\bsigma)$.
       \item Execute $\btau\gets \MExt^{\cC^*}(\pk,\sk,y,\brho)$.
        \item Measure $\btau$ in the basis specified by $h$, where $h_i = 0$ corresponds to the standard basis and $h_i = 1$ corresponds to the Hadamard, and output these measurement values.
    \end{enumerate}
    \item $\RealW^{\cC^*}(1^\lambda, h)$ 
    is the output distribution of the following procedure:
    \begin{enumerate}
        \item Sample $(\pk, \sk) \leftarrow  \GenM(1^\lambda, h)$.
        \item Emulate the commitment round to obtain $(\vecy, \brho) \leftarrow \cC^*.\CommitM(\bsigma)$.
        \item Emulate the opening phase round corresponding to $c=1$ to obtain  $z \leftarrow \cC^*.\OpenM(\brho, 1)$.
        \item Compute $m \leftarrow \OutM(\sk, y, z)$ and output $m$.
    \end{enumerate}
\end{itemize}
\end{definition}
\section{A $\TCF$ construction with the distributional strong adaptive hardcore bit property}\label{app:zvik}

We argue that the $\TCF$ from \cite{BCMVV18}, defined  below, satisfies the distributional strong adaptive hardcore bit property (under $\LWE$).  

\paragraph{The $\TCF$ family from  \cite{BCMVV18}}\label{sec:TCF-construction}

The $\TCF$ family from \cite{BCMVV18} is a lattice based construction and makes use of the following theorem from \cite{MicciancioP11}.

\begin{theorem}[Theorem 5.1 in \cite{MicciancioP11}]\label{thm:MP11}
Let $n, m \geq 1$ and $q \geq 2$ be such that $m = \Omega(n \log q)$. There is
an efficient randomized algorithm $\mathsf{TrapGen}_\mathsf{MP}(1^n,1^m,q)$ that returns a matrix $\vecA\in \mathbb{Z}_q^{m\times n}$ together with a trapdoor $\vect_\vecA\in\mathbb{Z}_q^m$ such that the distribution of $\vecA$ is negligibly (in $n$) close to the uniform distribution. Moreover, there is
an efficient algorithm $\Invert_\mathsf{MP}$ that, on input $(\vecA,\vect, \vecA\cdot \vecs + \vece)$, where $\norm{\vece} \leq  \frac{q}{C\sqrt{n\log q}}$ and where $C$ is a universal constant, returns $\vecs$ and $\vece$ with overwhelming probability over $(\vecA, \vect_\vecA)\gets\mathsf{TrapGen}_\mathsf{MP}(1^n,1^m,q)$. 
\end{theorem}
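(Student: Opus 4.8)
The plan is to reconstruct the Micciancio--Peikert gadget-trapdoor construction; note that the statement is elementary and assumption-free. Fix $k=\lceil\log_2 q\rceil$, let $\mathbf{g}=(1,2,4,\dots,2^{k-1})^{T}\in\mathbb{Z}_q^{k}$, and let $\mathbf{G}=I_n\otimes\mathbf{g}\in\mathbb{Z}_q^{nk\times n}$ be the (transposed) gadget matrix. The first step is the elementary sub-lemma that $\mathbf{G}$ admits an efficient \emph{gadget-inversion} algorithm $\Invert_{\mathbf{G}}$ which, on input $\mathbf{G}\vecs+\vece'\bmod q$ with $\norm{\vece'}_\infty$ below a fixed constant fraction of $q$ (say $q/4$), recovers $\vecs\in\mathbb{Z}_q^n$ and $\vece'\in\mathbb{Z}^{nk}$; this is coset decoding, blockwise over the $n$ coordinates of $\vecs$, against the explicit short basis of $\Lambda^{\perp}(\mathbf{g})$, and is unique because $\mathbf{g}$ is injective over $\mathbb{Z}_q$ (its first coordinate is $1$).

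Next I would describe $\mathsf{TrapGen}_\mathsf{MP}$: put $\bar m=m-nk$ (which, assuming the implicit constant in the hypothesis $m=\Omega(n\log q)$ is large enough and using $k=\Theta(\log q)$, can be taken to be $\Omega(n\log q)$), sample $\bar\vecA\leftarrow\mathbb{Z}_q^{\bar m\times n}$ uniformly and a \emph{short} matrix $\mathbf{R}\in\{0,\pm 1\}^{\bar m\times nk}$ with i.i.d.\ uniform entries, and output
\[
\vecA=\begin{bmatrix}\bar\vecA\\ \mathbf{G}-\mathbf{R}^{T}\bar\vecA\end{bmatrix}\in\mathbb{Z}_q^{m\times n},\qquad \vect_\vecA=\mathbf{R},
\]
noting $[\mathbf{R}^{T}\ \ I_{nk}]\,\vecA=\mathbf{G}$ by construction. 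Then I would prove two claims. \textbf{(i) Near-uniformity of $\vecA$:} since the top block $\bar\vecA$ is already uniform and the bottom block is $\mathbf{G}-\mathbf{R}^{T}\bar\vecA$, it suffices to show $(\bar\vecA,\mathbf{R}^{T}\bar\vecA)\stackrel{\negl(n)}\equiv(\bar\vecA,\vecU)$ for uniform $\vecU\in\mathbb{Z}_q^{nk\times n}$; this is the leftover hash lemma (regularity lemma) applied to the universal family $\{\vecr\mapsto\bar\vecA^{T}\vecr\bmod q\}$ indexed by uniform $\bar\vecA$, with range $\mathbb{Z}_q^{n}$: each column of $\mathbf{R}$ has min-entropy $\Omega(\bar m)\ge n\log q+\omega(\log n)$, so it is mapped $2^{-\omega(\log n)}$-close to uniform, and a hybrid over the $nk$ columns (losing a factor $nk=\poly(n)$) gives the claim, after which shifting by the constant $\mathbf{G}$ is free. \textbf{(ii) Inversion:} given $(\vecA,\mathbf{R},\vecb=\vecA\vecs+\vece)$ with $\norm{\vece}\le q/(C\sqrt{n\log q})$, compute $[\mathbf{R}^{T}\ \ I_{nk}]\,\vecb=\mathbf{G}\vecs+\vece'$ with $\vece'=\mathbf{R}^{T}\vece_1+\vece_2$, where $\vece=(\vece_1,\vece_2)$ is split along the block boundary; since a random $\{0,\pm1\}$ matrix of these dimensions has operator norm $O(\sqrt m)$ except with probability $2^{-\Omega(m)}$, one gets $\norm{\vece'}_\infty\le\norm{\vece'}_2\le (1+O(\sqrt m))\,\norm{\vece}\le q/4$ once $C$ is taken large enough; then $\Invert_{\mathbf{G}}$ returns $\vecs$, and $\vece=\vecb-\vecA\vecs$ is read off directly (unambiguously, as $\norm{\vece}_\infty<q/2$). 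The low-operator-norm event is overwhelmingly likely over the choice of $\mathbf{R}$ inside $\mathsf{TrapGen}_\mathsf{MP}$, so conditioned on it the (otherwise deterministic) $\Invert_\mathsf{MP}$ always succeeds, giving the claimed guarantee with overwhelming probability over $(\vecA,\vect_\vecA)$.

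I expect the only non-routine step to be claim (i): one must pin down the exact min-entropy of a column of $\mathbf{R}$, the number of $\mathbb{Z}_q$-coordinates in the output ($n$), the number of columns ($nk$), and the resulting statistical distance, and verify that $m=\Omega(n\log q)$ — with a constant chosen jointly with the distribution of $\mathbf{R}$ — still leaves $\bar m=m-nk$ large enough for the leftover hash lemma; for non-prime $q$ one substitutes the generalized regularity lemma. A point worth flagging for the write-up is that, for the noise bound to be governed by a \emph{universal} constant $C$ as stated, one should read the hypothesis as also imposing $m=\Theta(n\log q)$ (an upper bound on $m$ is implicitly used so that $\norm{\mathbf{R}}_{\mathrm{op}}$ does not overwhelm the gadget's error tolerance); with that convention $C$ can be extracted explicitly from the operator-norm bound together with the $q/4$ decoding threshold of $\Invert_{\mathbf{G}}$. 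Finally, I would remark that the construction generalizes verbatim to a tagged version in which an invertible tag matrix $\mathbf{H}$ replaces the identity in $[\mathbf{R}^{T}\ \ I_{nk}]\,\vecA=\mathbf{H}\mathbf{G}$, but the untagged case $\mathbf{H}=I_{nk}$ is all that the theorem asserts.
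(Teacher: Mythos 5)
This statement is an imported theorem: the paper does not prove it, but cites it directly as Theorem 5.1 of Micciancio--Peikert, so there is no in-paper argument to compare against. Your reconstruction is essentially the genuine MP gadget-trapdoor proof (transposed to the $m\times n$ convention used here): the gadget $\mathbf{G}=I_n\otimes\mathbf{g}$ with blockwise coset decoding, $\vecA=[\bar\vecA;\ \mathbf{G}-\mathbf{R}^{T}\bar\vecA]$ with $[\mathbf{R}^{T}\ I]\vecA=\mathbf{G}$, near-uniformity via the leftover-hash/regularity lemma over the columns of $\mathbf{R}$, and inversion by bounding $\|\mathbf{R}^{T}\vece_1+\vece_2\|_\infty$ against the gadget's decoding threshold. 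The details you give are correct (and your caveat about non-prime $q$ is moot in this paper, where $q$ is prime). One refinement on your flagged point: you do not actually need to read the hypothesis as $m=\Theta(n\log q)$ to keep $C$ universal. Instead, embed the trapdoor in a fixed block of $m_0=O(n\log q)$ rows (choose $\bar m=\Theta(n\log q)$ just large enough for the leftover hash lemma) and pad the remaining $m-m_0$ rows with fresh uniform entries that $\Invert_\mathsf{MP}$ simply ignores; the error restricted to the used rows has norm at most $\norm{\vece}$, so the operator-norm bound on $\mathbf{R}$, and hence the constant $C$, is independent of $m$. Also note the statement's typing of the trapdoor as a vector $\vect_\vecA\in\mathbb{Z}_q^m$ is a loose transcription of the source; your matrix trapdoor $\mathbf{R}$ is the correct object.
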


The $\TCF$ family $(\Gen,\Eval,\Invert,\Check,\Good)$ from \cite{BCMVV18} is defined as follows:

    \begin{itemize}
        \item $\Gen(1^\secp)$ is associated with the following:
        \begin{itemize}
        \item Prime $q\leq 2^\secp$ of size super-polynomial in $\secp$.
        \item $n=n(\secp)$ and $m=m(\secp)$, both polynomially bounded functions of $\secp$, such that $m\geq n\log q$ and $n\geq \secp$.
        \item Two error distributions $\chi,\chi'$ over $\mathbb{Z}_q$, that are associated with bounds $B, B'\in \mathbb{N}$ such that:
        \begin{enumerate}
        \item  $\frac{B}{B'}=\negl(\secp)$.
        \item $B'\leq \frac{q}{2C\sqrt{n\cdot m\cdot \log q}}$, where $C$ is the  universal constant from \Cref{thm:MP11}.
        \item  $\Pr_{\vece\gets \chi^m}[\norm{\vece}> B]=\negl(\secp)$.
        \item $\Pr_{\vece'\gets (\chi')^m}[\norm{\vece'}> B']=\negl(\secp)$. 
            \item  $\vece'\equiv \vece'+\vece$, for $\vece\gets\chi^m$ and $\vece'\gets(\chi')^m$.
          
        \end{enumerate}
        \end{itemize}
     It does the following:
            \begin{enumerate}
            \item Generate $(\vecA,\vect_\vecA)\gets \mathsf{TrapGen}_\MP(1^n,1^m,q)$.
            
            \item Choose a random bit string $\vecs\gets \{0,1\}^n$ and a random error vector $\vece\gets\chi^m$.
            \item Let $\vecu=\vecA\cdot \vecs+\vece$.
            \item Output $\pk=(\vecA,\vecu)$ and $\sk=(\vecA,\vecu,\vect_\vecA)$.
        \end{enumerate}
       
        \item  $\Eval(\pk,b,\cdot)$ is a function with domain $\mathbb{Z}_q^n$ and range $\mathbb{Z}_q^m$.
$\Eval(\pk,b,\vecx)$ parses $\pk=(\vecA,\vecu)$, samples $\vece'\gets(\chi')^m$, and outputs $\vecy=\vecA\vecx+b\vecu+\vece'$ (where all the operations are done modulo~$q$).

Equivalently, we think of $\Eval(\pk,b,\cdot)$ as a function with domain $\{0,1\}^{w}$ for $w\triangleq n\cdot\lceil{\log q}\rceil$, where each element $\vecx\in\mathbb{Z}_q^n$ is matched to its bit decomposition.
Namely, denote by \[J:\mathbb{Z}^n_q\rightarrow\{0,1\}^{w}\] 
the bit decomposition function where each element in $\mathbb{Z}_q$ is converted to its bit decomposition in $\{0,1\}^{\lceil{\log q}\rceil}$. We think $\Eval$ as taking as input an element $\vecz\in\{0,1\}^w$, computing $\vecx=J^{-1}(\vecz)\in\mathbb{Z}_q^n$,\footnote{ $J^{-1}:\{0,1\}^w\rightarrow \mathbb{Z}_q^n$ is the function that breaks its input into $n$ blocks of length $\lceil{\log q}\rceil$ each, and replaces each such block $(b_1,\ldots,b_{\lceil{\log q}\rceil})$ with the element $(\sum_{i=1}^{\lceil{\log q}\rceil}b_i\cdot 2^{i-1})\mod q$, which is an element in $\mathbb{Z}_q$.} and then applying $\Eval(\pk,b,\vecx)$.

\begin{remark}
  We note the change in notation:  In the definition of a $\TCF$ family, we denoted the input length by $n$, and here we denote it by~$w$.  
\end{remark}
\item $\Invert(\sk,\vecy)$ does the following:
\begin{enumerate}
\item Parse $\sk=(\vecA,\vecu,\vect_\vecA)$.
    \item Compute $\Invert_\mathsf{MP}(\vecA,\vect_\vecA,\vecy)=\vecx$
    \item If $\norm{\vecy-\vecA\cdot\vecx}\leq 2\sqrt{m}\cdot B'$ then output $((0,J(\vecx)),(1,J(\vecx-\vecs)))$.
\end{enumerate}
\item $\Check(\pk,b,J(\vecx),\vecy)$ outputs $1$ if and only if $\norm{\vecy-\vecA\vecx-b\vecu}\leq 2\sqrt{m}\cdot B'$.

\item $\Good(J(\vecx_0),J(\vecx_1),\vecd)$ outputs $\vecd'\in\{0,1\}^n$ such that 
\[\vecd\cdot (1,J(\vecx_0)\oplus J(\vecx_1))=\vecd'\cdot \vecs
\]
where the inner product in both sides of the equation above is done modulo~$2$.
The vector $\vecd'$ is computed as follows, using the fact that 
$\vecx_1=\vecx_0-\vecs$ (where subtraction is modulo~$q$) and the fact that $\vecs\in\{0,1\}^n$.
\begin{enumerate}
    \item Partition $\vecd\in\{0,1\}^{w+1}$ into its first bit, denoted by $d_0$, and the following $n$ blocks, each of size $\lceil{\log q\rceil}$, denoted by $\vecd[1],\ldots,\vecd[n]\in \{0,1\}^{\lceil{\log q\rceil}}$.
    \item  For every $b\in\{0,1\}$,  partition $J(\vecx_b)$ into blocks $J(\vecx_{b,1}),\ldots, J(\vecx_{b,n})$, each of size $\lceil{\log q\rceil}$. 
    \item For every $i\in [n]$ let 
    \[\vecd_i'=\vecd[i]\cdot(J(\vecx_{0,i}) \oplus J(\vecx_{0,i}-1)),
    \]
    where $\cdot$ denotes inner product mod~$2$ and where $\vecx_{0,i}-1$ is done mod~$q$
    \item Output the string $\vecd'\in\{0,1\}^n$.
\end{enumerate} 
Note that 
\[
\vecd\cdot (1,J(\vecx_0)\oplus J(\vecx_1))= d_0+\vecd'\cdot \vecs\mod 2. 
\]

 \end{itemize} 

\begin{proposition}\label{claim:Bra-is-stat}
 The $\TCF$ family from \cite{BCMVV18} satisfies the distributional strong adaptive hardcore bit property assuming the post-quantum  hardness of $\LWE$.  
\end{proposition}

The proof of  \Cref{claim:Bra-is-stat} makes use of the following lemma from \cite{BCMVV18}.

\begin{lemma}\label{lemma:zviketal}\cite{BCMVV18}
Let $q$ be a prime, $k,n \geq 1$ integers, and $\vecC \gets \mathbb{Z}^{k\times n}_q$
a uniformly random matrix. With probability
at least $1 - q^k\cdot 2^{-\frac{n}{8}}$
over the choice of $\vecC$ the following holds for the fixed  $\vecC$. For all $\vecv \in \mathbb{Z}^k_q$ and any distinct vectors $\vecd'_1,\vecd'_2\in \{0,1\}^n\setminus\{0^n\}$, the distribution of $(\vecd'_1\cdot \vecs,\vecd'_2\cdot \vecs)$, where both inner produces are done$\mod 2$ and where $\vecs\gets \{0,1\}^n$  is uniform 
conditioned on $\vecC\vecs = \vecv$, is within
statistical distance $O(q^{\frac{3k}{2}}\cdot 2^{\frac{-n}{40}})$ of the uniform distribution over $\{0,1\}^2$.
\end{lemma}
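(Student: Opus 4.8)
The plan is a Fourier / leftover-hash analysis over $\mathbb{Z}_q^k$, carried out for a fixed ``good'' matrix $\vecC$ after first checking that a random $\vecC$ is good with the stated probability. First I would reduce the claim to bounding character sums. Fix $\vecC$, $\vecv$, and distinct $\vecd'_1,\vecd'_2\in\{0,1\}^n\setminus\{0^n\}$, and set $S_\vecv=\{\vecs\in\{0,1\}^n:\vecC\vecs\equiv\vecv\bmod q\}$, so that we must understand the distribution of $(\vecd'_1\cdot\vecs,\vecd'_2\cdot\vecs)\bmod 2$ for $\vecs$ uniform on $S_\vecv$. A distribution on $\{0,1\}^2$ is within statistical distance $\tfrac{\sqrt 3}{2}\max_\vece|\hat\mu(\vece)|$ of uniform, where $\vece$ ranges over the three nonzero characters, $\hat\mu(\vece)=\E_{\vecs\sim U(S_\vecv)}[(-1)^{\vece\cdot\vecs}]$, and the three relevant $\vece$ are exactly $\vecd'_1$, $\vecd'_2$, $\vecd'_1\oplus\vecd'_2\in\{0,1\}^n$; since $\vecd'_1\ne\vecd'_2$ and both are nonzero, all three lie in $\{0,1\}^n\setminus\{0^n\}$ --- this is the only place the distinctness hypothesis enters. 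Expanding the indicator of $\vecC\vecs\equiv\vecv$ into additive characters of $\mathbb{Z}_q^k$, with $\omega_q=e^{2\pi i/q}$,
\[
\sum_{\vecs\in S_\vecv}(-1)^{\vece\cdot\vecs}=\frac{1}{q^k}\sum_{\vecw\in\mathbb{Z}_q^k}\omega_q^{-\langle\vecw,\vecv\rangle}\prod_{j=1}^n\Bigl(1+(-1)^{e_j}\,\omega_q^{(\vecC^\top\vecw)_j}\Bigr),
\]
and the $\vecw=0$ term is $\prod_j(1+(-1)^{e_j})=0$ because $\vece\ne 0^n$. So it remains to show, uniformly over $\vecv$ (and, for the second point, over $\vece$), that (i) $|S_\vecv|$ is close to its mean $2^n/q^k$, and (ii) the remaining sum over $\vecw\ne 0$ is small.

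For (i) I would use a second-moment argument: applying the same character expansion to $|S_\vecv|=\sum_\vecs\mathbf{1}[\vecC\vecs\equiv\vecv]$ and the fact that, for $q$ prime and the columns of $\vecC$ i.i.d.\ uniform, $(\vecC^\top\vecw)_j$ is uniform on $\mathbb{Z}_q$ and independent across $j$ whenever $\vecw\ne 0$, a direct computation gives $\E_\vecC[(|S_\vecv|-2^n/q^k)^2]=O(2^n/q^k)$ (the cross terms between distinct nonzero $\vecw,\vecw'$ vanish in expectation). Chebyshev plus a union bound over the $q^k$ values of $\vecv$ then shows that, except with probability $O(q^{2k}2^{-n})$, we have $|S_\vecv|\ge 2^n/(2q^k)$ for every $\vecv$ simultaneously; since the regime of interest has $q^k\ll 2^{n/8}$, this is absorbed into the claimed $q^k 2^{-n/8}$. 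On this event $|\hat\mu(\vece)|\le \frac{2q^k}{2^n}\bigl|q^{-k}\sum_{\vecw\ne 0}(\cdots)\bigr|$ for all $\vecv$, and after taking absolute values the right-hand side no longer depends on $\vecv$.

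For (ii), by Plancherel in the variable $\vecv$ the object to control is $\sum_\vecv\bigl|\sum_{\vecs\in S_\vecv}(-1)^{\vece\cdot\vecs}\bigr|^2=\sum_{\vecz\in\{-1,0,1\}^n}(-1)^{\langle\vece,\vecz\rangle}\,2^{\,n-\|\vecz\|_0}\,\mathbf{1}[\vecC\vecz\equiv 0]$ (the inner product taken mod $2$), whose expectation over $\vecC$ is $2^n(1-q^{-k})$: the coordinatewise cancellation $\sum_{z\in\{-1,0,1\}}(-1)^{z}2^{\mathbf{1}[z=0]}=0$ when the coordinate of $\vece$ is $1$ is exactly what pins this at $\approx 2^n$ rather than the much larger $\approx 2^{2n}/q^k$ one would get otherwise. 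The hard part will be the union bound over the $\approx 2^n$ possible vectors $\vece\in\{0,1\}^n\setminus\{0^n\}$: a first- or second-moment bound followed by Markov loses a factor $2^n$ that swamps the $2^{-\Omega(n)}$ gain one is after, and (as a direct computation shows) the plain variance of the displayed random variable itself already blows up by an exponential factor, so one genuinely cannot stop at a low moment. The way I would push this through is to estimate a constant-order higher moment of this random variable over $\vecC$, using $\Pr_\vecC[\vecC\vecz_1\equiv 0,\dots,\vecC\vecz_p\equiv 0]=q^{-k\cdot\mathrm{rank}_q(\vecz_1,\dots,\vecz_p)}$ together with the same sign cancellations to keep under control the contributions of tuples $\vecz_i$ lying in a common low-rank subspace; this yields a deviation probability that decays exponentially in $n$, enough to survive the union bound over $\vece$ and over $\vecv$, with the exponents $2^{-n/8}$ and $2^{-n/40}$ in the statement being the (deliberately generous) slack left over after this trade-off.

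Finally I would combine the two good events: on a set of $\vecC$ of measure $\ge 1-q^k 2^{-n/8}$, for every $\vecv$ and every distinct $\vecd'_1,\vecd'_2$ each of the three relevant $|\hat\mu(\vece)|$ is $O(q^{3k/2}2^{-n/40})$ --- the extra $q^k$ over the ``typical'' size $\approx q^{k/2}2^{-n/2}$ coming from dividing by $|S_\vecv|\ge 2^n/(2q^k)$ --- and, since there are only three nonzero Fourier coefficients, the two-bit distribution is within $O(q^{3k/2}2^{-n/40})$ of uniform, as claimed. I expect the main obstacle to be precisely the union bound over the exponentially many $\vece$ together with the moment estimate that makes it go through; everything else is elementary linear algebra and Fourier analysis over $\mathbb{F}_q$ with $q$ prime, and in particular no property of the Micciancio--Peikert trapdoor from \Cref{thm:MP11} is needed here.
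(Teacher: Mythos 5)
Your plan is correct in outline, but it is a genuinely different proof from the paper's. The paper never conditions on $\vecv$ and never does a union bound over characters: it works with the \emph{joint} density $g(\vecv,z_1,z_2)=\Pr_\vecs[\vecC\vecs=\vecv,\ \vecd'_1\cdot\vecs=z_1,\ \vecd'_2\cdot\vecs=z_2]$ on $\mathbb{Z}_q^k\times\mathbb{Z}_2^2$, and for a fixed \emph{moderate} $\vecC$ (every nonzero vector in the row span has a constant fraction of entries with centered representative in $(q/8,3q/8]$ in absolute value, an event of probability $1-q^k2^{-n/8}$ quoted from Lemma~4.8 of \cite{BCMVV18}) it bounds every Fourier coefficient pointwise: for $\vecx\neq 0$ the moderateness of $\vecx^T\vecC$ gives $|\E_\vecs\,\omega_{2q}^{(2\vecx^T\vecC+q(y_1\vecd'_1+y_2\vecd'_2))\cdot\vecs}|\le|\cos(\pi/8)|^{n/4}$ \emph{uniformly over all binary shifts} $y_1\vecd'_1+y_2\vecd'_2$, which is precisely what makes a union bound over the $2^n$ possible $\vece$ unnecessary, while for $\vecx=0$ the coefficient vanishes exactly because $y_1\vecd'_1\oplus y_2\vecd'_2\neq 0$ (the same single use of distinctness as in your reduction); Plancherel then gives joint distance $q^{k/2}2^{-n/40}$, and the passage to the conditional statement, costing the extra $q^k$, is quoted from \cite{BCMVV18}. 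You instead condition on $\vecv$ directly, pay the $q^k$ through $|S_\vecv|\ge 2^n/(2q^k)$, and are then forced to control $\max_\vecv|\sum_{\vecs\in S_\vecv}(-1)^{\vece\cdot\vecs}|$ uniformly over exponentially many $\vece$ --- exactly the difficulty the moderate-matrix trick sidesteps. That step of yours is only sketched, but it does close, and more easily than you suggest: one does not need concentration around the mean, only an upper tail of $X_\vece=\sum_\vecv|\sum_{\vecs\in S_\vecv}(-1)^{\vece\cdot\vecs}|^2$ at the threshold $T\approx q^k2^{2n-n/20}$, which sits far above $\E[X_\vece]\approx 2^n$; Markov applied to $\E[X_\vece^2]$ already survives the union bound over $\vece$, provided the sign cancellation is used not only for the full-rank pairs $(\vecz_1,\vecz_2)$ but also inside the rank-one class (the pairs with exactly one $\vecz_i=0$ contribute $-2^{2n+1}q^{-k}$ when summed with signs, versus roughly $2^{3n}q^{-k}$ in absolute value, and the crude bound would in fact ruin the union bound), giving $\E[X_\vece^2]\lesssim 2^{2n}+6^n q^{-k}$ and a total failure probability well within the $q^k2^{-n/8}$ budget (small $n$ is vacuous since the $O(q^{3k/2}2^{-n/40})$ bound then exceeds $1$). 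So your assertion that ``one genuinely cannot stop at a low moment'' is an overstatement --- it is the first moment and Chebyshev-about-the-mean that fail, not second-moment Markov at the actual threshold --- but your proposed constant-order moment computation with rank-based collision probabilities subsumes this and is sound. In trade: the paper's route is shorter and leans on the off-the-shelf moderateness lemma of \cite{BCMVV18}, whereas yours is self-contained Fourier analysis over $\mathbb{F}_q$ using only uniformity of $\vecC$, at the price of carrying out the moment estimate you left as the acknowledged hard step.
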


\begin{remark}
    We note that \cite{BCMVV18} proved this lemma for a single vector $\vecd'$ (as opposed to two distinct ones).  Their proof carries over to this setting as well, and we include it in \Cref{app:zvik} for completeness.
\end{remark}

\paragraph{Proof of \Cref{claim:Bra-is-stat}.}
We define $\sk_\mathsf{pre}$, corresponding to $\sk=(\vecA,\vecu,\vect_\vecA)$, to be $\sk_\mathsf{pre}=\Invert_\mathsf{MP}(\vecA,\vect_\vecA,\vecu)$. Thus $\sk_\mathsf{pre}=\vecs$, where $\vecu=\vecA\cdot\vecs+\vece$ and $\vece$ is a low norm vector.

 Fix any $\QPT$ circuit $C$ and any $\QPT$ algorithm $\A$ that takes as input~$\pk=(\vecA,\vecA\cdot\vecs+\vece)$ and a quantum state $\bsigma$, and outputs a tuple $(\vecy, b,\vecx, \brho)$, such that $\Check(\pk,b,\vecx,\vecy)=1$  and $\brho$ is a state that has registers $\cO_1$ and $\cO_2$, where $\cO_1$ contains $w+1$ qubits, and with overwhelming probability over $\aux\gets C(\brho_{\cO_2},\vecs)$ it holds that the residual state $\brho_\aux$ satisfies that for every $\vecd'\in\{0,1\}^{n}$ the probability of measuring registers $\cO_1$ of $\brho_\aux$ in the standard basis and obtaining $\vecd\in\{0,1\}^{w+1}$ such that $\Good(\vecd,\vecx_{0},\vecx_{1})=\vecd'$ is negligible in $\secp$.
 
We need to prove that 
\begin{equation}\label{eqn:need-to-prove}
(\pk, \vecy, J(\vecx_{0})\oplus J(\vecx_{1}), \vecd\cdot (1,J(\vecx_{0})\oplus J(\vecx_{1})),\aux) \approx (\pk, \vecy,J(\vecx_{0})\oplus J(\vecx_{1}), U,\aux)
\end{equation}
where $(\pk,\sk)\gets\Gen(1^\secp)$, $(\vecy, b,\vecx, \brho, C)\gets \A(\pk,\bsigma)$, $((0,\vecx_{0}),(1,\vecx_{1}))=\Invert(\sk,\vecy)$, $(\vecd,\aux)$ is obtained by letting $\aux\gets C(\brho_{\cO_2},\vecs)$ and $\vecd$ is the outcome of measuring the $\cO_1$ registers of the residual state $\brho_\aux$ in the standard basis, and $U$ is uniformly distributed in $\{0,1\}$.

To this end, we define an alternative algorithm $\widehat{\Gen}(1^\secp)$, which is the same as $\Gen(1^\secp)$ with the only difference being that rather than choosing $(\vecA,\vect_\vecA)$ via the $\mathsf{TrapGen}$ algorithm, it chooses $\vecA$ to be close to a low rank matrix.  Specifically, $\widehat{\Gen}(1^\secp)$ does the following:
\begin{enumerate}
    \item Let $\delta=\epsilon/2$ and let $k=n^\delta$.
    \item Sample $(\vecB,\vect_\vecB)\gets\mathsf{TrapGen}_\mathsf{MP}(1^k,1^m,q)$.
    \item Sample $\vecC\gets \{0,1\}^{k\times n}$ and $\vecN\gets \chi^{m\times n}$.
    \item Let $\widehat{\vecA}=\vecB\cdot\vecC+\vecN$.
    \item Sample $\vecs\gets \{0,1\}^n$ and $\vece\gets\chi^m$.
    \item Let $\widehat{\vecu}=\widehat{\vecA}\cdot\vecs+\vece$.
    \item Output $\widehat{\pk}=(\widehat{\vecA},\widehat{\vecu})$ and $\widehat{\sk}=(\widehat{\vecA},\vecB,\vect_\vecB,\vecs)$.
\end{enumerate}
The $\LWE$ implies that $\vecA\approx\widehat{\vecA}$ which in turn implies that 
\begin{equation}\label{eqn:AapproxA'}
  (\vecA,\vecu, \vecs, \vecy, b, \vecx, \brho)\approx  (\widehat{\vecA},\widehat{\vecu}, \vecs, \widehat{\vecy}, \widehat{b}, \widehat{\vecx}, {\widehat{\brho}})  
\end{equation}
where 
\begin{itemize}
    \item $\vecs\gets\{0,1\}^n$ and $\vece\gets\chi^m$.
    \item $\vecu=\vecA\cdot\vecs+\vece$ and $\widehat{\vecu}=\widehat{\vecA}\cdot\vecs+\vece$.
    \item $(\vecy,b,\vecx,\brho)=\A(\pk,\bsigma)$ for 
$\pk=(\vecA,\vecu)$.
\item $(\widehat{\vecy},\widehat{b},\widehat{\vecx},{\widehat{\brho}})=\A(\widehat{\pk},\bsigma)$ for 
$\widehat{\pk}=(\widehat{\vecA},\widehat{\vecu})$.
\end{itemize} 
We next argue that 
\begin{equation}\label{eqn:pk'-equiv}
(\widehat{\vecA}, \vecu, \vecy, J(\vecx_{0})\oplus J(\vecx_{1}), \vecd\cdot(1,J(\vecx_{0})\oplus J(\vecx_{1})),\aux) \approx (\widehat{\vecA}, \vecu,\vecy,J(\vecx_{0})\oplus J(\vecx_{1}), U,\aux)
\end{equation}
where in the above equation, to avoid cluttering of notation, we omit some of the ``hat'' notation, and denote by $\widehat{\pk}=(\widehat{\vecA},\vecu)$ distributed according to $\widehat{\Gen}(1^\secp)$,    $(\vecy,b,\vecx,\brho)\gets \A(\widehat{\pk},\bsigma)$ for $\widehat{\pk}=(\widehat{\vecA},\vecu)$, $\vecx_{b}=\vecx$,  $\vecx_{1-b}=\vecx_{b}-(-1)^{b}\vecs$, and $(\vecd,\aux)$ is obtained by measuring computing $\aux\gets C(\brho_{\cO_2},\vecs)$ and $\vecd$ is the outcome of measuring the $\cO_1$ registers of the residual state $\brho_\aux$ in the standard basis.
\Cref{eqn:need-to-prove} follows immediately from \Cref{eqn:pk'-equiv}, together with the fact that $\vecA\approx \widehat{\vecA}$ and the fact that the distributions in \Cref{eqn:pk'-equiv,eqn:need-to-prove} can be generated efficiently from $\vecA$ and $\widehat{\vecA}$, respectively.

Let
\[\vecd'=\Good(\vecx_{0},\vecx_{1},\vecd)\in\{0,1\}^n,
\]
then by the definition of $\Good$,
\[
\vecd\cdot (1,J(\vecx_{0})\oplus J(\vecx_{1}))= d_{0}\oplus\vecd'\cdot \vecs. 
\]
Therefore, to prove \Cref{eqn:pk'-equiv} it suffices to prove that
\[
(\widehat{\vecA}, \vecu, \vecs, \vecy, b, \vecx, d_{0}, \vecd'\cdot\vecs,\aux) \approx (\widehat{\vecA}, \vecu, \vecs, \vecy, b, \vecx, d_{0}, U, \aux).
\]
To prove the above equation it suffices to prove that with overwhelming probability over $\vecC\gets \mathbb{Z}_q^{k\times n}$
it holds that for every $\vecv\in\mathbb{Z}_q^k$, for every distribution $\D_\vecv$ (that depends on $\vecv$) that outputs $(\vecd',\brho_{\cO_2})$ such that with overwhelming probability $\vecd'$ has min-entropy $\omega(\log \secp)$ given $\aux\gets C(\brho_{\cO_2},\vecs)$,
\begin{equation}\label{eqn:dist-prelim-final}
(\vecs, \aux,\vecd'\cdot \vecs)\equiv (\vecs,\aux, U),
\end{equation}
where $\vecs$ is sampled randomly from $\{0,1\}^n$ conditioned on $\vecC\vecs=\vecv$.
 We note that the distribution of $\vecd'$ may not have min-entropy $\omega(\log \secp)$ (conditioned on $\aux$) since it is generated w.r.t.\ $\widehat{A}$ and not w.r.t.\ $A$, and while $A\approx \widehat{A}$, checking if a distribution has min-entropy cannot be done efficiently. 
 Nevertheless, the distribution $\D_\vecv$ is indistinguishable from having min-entropy  $\omega(\log \secp)$, and thus it suffices to prove \Cref{eqn:dist-prelim-final}.
To this end, for every $\vecC\in\mathbb{Z}_q^{k\times n}$ and $\vecv\in\mathbb{Z}_q^v$ consider the sets \[
\cS=\{\vecx\in\{0,1\}^n~:~ \vecC(\vecx)=\vecv\}~~\mbox{ and }~~\cX=\{0,1\}^n\setminus\{0^n\}
\]
and the hash function 
\[h:\cS\times \cX\rightarrow \{0,1\},
\]
defined by 
\[h(\vecx,\vecd')=\vecx\cdot \vecd' \mod 2.
\]
\Cref{lemma:zviketal} implies that for all but negligible fraction of $\vecC$ and $\vecv$ it holds that $h$ is 2-universal, which together with the leftover hash lemma, implies that \Cref{eqn:dist-prelim-final} holds, as desired.

\qed

\newcommand{\vhatd}{\vecd'}
\newcommand{\fhat}{\hat{f}}
To show our modified version of Lemma~4.6 of~\cite{BCMVV18}, it suffices to show a version of their Lemma~4.9 modified to handle two arbitrary binary strings $\vecd'_1, \vecd'_2$. Once this has been shown, the remaining argument proceeds unchanged. In this section will only present our modified version of Lemma~4.9 and its proof.

First, let us recall some basic properties of the discrete Fourier transform. Define the $q$th root of unity
\[ \omega_q = e^{2\pi i / q}.\]
The ``standard Fourier identity'' is that
\[ \sum_{x \in \mathbb{Z}_q} \omega_q^x = 0.\]
For a function $f: \mathbb{Z}_q^\ell \times \mathbb{Z}_2 \to \mathbb{C}$, the Fourier transform $\hat{f}$ is defined by

\[ \fhat(\vecx, y) = \sum_{\vecv,z} \omega_q^{\vecv \cdot x} (-1)^{y\cdot  z} f(\vecv,z). \]
With this normalization, we have the following version of Plancherel's theorem: \yael{?}
\[ \sqrt{2 q^\ell}  \| f \|_2 = \| \fhat \|_2. \]

Now, in the context of Lemma~4.9 of \cite{BCMVV18}, we are given a random matrix $\vecC \in \mathbb{Z}_{q}^{\ell \times n}$, and arbitrary distinct nonzero binary vectors $\vecd'_1, \vecd'_2 \in \{0,1\}^n$.
Define 
\[ g(\vecv,z_1, z_2) = \Pr_{\vecs \in \{0,1\}^n} [ \vecv = \vecC \vecs, z_1 = \vecd'_1 \cdot \vecs, z_2 = \vecd'_2 \cdot \vecs]. \]
Then, to prove the Lemma, it suffices to show that with high probability over the choice of the matrix $\vecC$, the probability distribution whose density is $g$ is close to the uniform distribution over the space $\mathbb{Z}_q^\ell \times \mathbb{Z}_2 \times \mathbb{Z}_2$. Specifically, denoting by $U$ the uniform distribution  and denoting by $D$ the Total Variation Distance, we wish to show that
\[ D(g, U) \leq q^{\ell/2} \cdot 2^{-n/40}.\]

To do so, we relate the TVD distance to the $L_2$ norm of the difference $g - U$: 
\begin{align*}
D(g, U) &= \frac{1}{2} \|  g - U \|_1  \\
&\leq \frac{1}{2} \sqrt{2q^{\ell} }  \| g - U\|_2 \\
&=\frac{1}{2} \| \hat{g} - \hat{U} \|_2,
\end{align*}
where the second line follows from Cauchy-Schwarz.
Note that for \emph{any} probability density $g$ over $\mathbb{Z}_q^\ell \times \mathbb{Z}_2 \times \mathbb{Z}_2$, we have that $\hat{g}(0^\ell, 0,0) = 1$. This is because
\[ \hat{f}(0^\ell, 0,0) = \sum_{\vecv, z_1, z_2} g(\vecv, z_1, z_2) = 1.\]
Moreover, for the uniform density $U$, we further have $\hat{U}(\vecx, y_1, y_2) = 0$ for all $(\vecx, y_1, y_2) \neq (0^\ell, 0, 0)$. This follows by the standard Fourier identity.

Thus, we get 
\begin{align*}
\frac{1}{2} \| \hat{g} - \hat{U} \|_2 &= \frac{1}{2} \sqrt{ \sum_{\vecx,y_1, y_2} | \hat{g}(\vecx,y_1,y_2) - \hat{U}(\vecx,y_1, y_2) | ^2 }\\ 
&= \frac{1}{2}  \sqrt{\sum_{(\vecx,y_1, y_2) \neq (0^\ell,0)} |\hat{g}(\vecx,y_1, y_2) |^2}.
\end{align*}

To bound this sum, we will now calculate $\hat{g}$, using the identities $(-1)^{y z} = (e^{\pi i})^{yz} = e^{(2\pi i / 2) \cdot  yz}$ to simplify the resulting sums.
\begin{align*}
\hat{g}(\vecx,y_1, y_2) &= \sum_{\vecv,z_1,z_2} \omega_q^{\vecv \cdot \vecx} (-1)^{y_1 z_1 + y_2 z_2} g(\vecv,z_1,z_2) \\
&= \sum_{\vecv,z_1, z_2} e^{2\pi i \cdot (\vecv \cdot \vecx / q + (y_1 z_1 + y_2 z_2) / 2 ) } g(\vecv,y_1, y_2) \\
&= \sum_{\vecv,z_1, z_2} e^{2\pi i \cdot(2\vecv \cdot \vecx  + q(y_1 z_1 + y_2 z_2))/ 2q } g(\vecv,z_1, z_2) \\
&= \sum_{\vecv,z_1, z_2} \omega_{2q} ^{2 \vecv \cdot \vecx + q(y_1 z_1 + y_2 z_2)} g(\vecv,z_1, z_2) \\
&= \E_{\vecs \in \{0,1\}^n} \sum_{\vecv,z_1, z_2} \omega_{2q}^{2 \vecv \cdot \vecx + q (y_1 z_1 + y_2 z_2)} \mathbf{1}[ \vecv = \vecC \vecs, z_1 = \vecd'_1  \cdot \vecs, z_2 = \vecd'_2] \\
&= \E_{\vecs \in \{0,1\}^n} \omega_{2q}^{ 2 \vecx \cdot (\vecC \vecs) + q  ((y_1 \vecd'_1 + y_2 \vecd'_2) \cdot \vecs)}.
\end{align*}

Define $\vecw = 2 \vecC^T \vecx + q(y_1  \vecd'_1 + y_2 \vecd'_2)$
so that $\vecw^T \vecs$ is equal to the exponent in the last line above. Then

\[ \hat{g}(\vecx,y_1, y_2) = \E_{\vecs\in \{0,1\}^n} \omega_{2q}^{\vecw^T \vecs}. \]

Our goal is to show that $\hat{g}$, with the $(0^\ell, 0, 0)$ entry deleted, is small in 2-norm. We are going to do this by bounding the entries individually.

\paragraph{Case 1: $(\vecx,y) = (0^\ell, 1)$.}
In this case, we have
\begin{align*}
    \hat{g}(0^\ell, y_1,y_2) &= \E_{\vecs \in \{0,1\}^n}  \omega_{2q}^{q (y_1 (\vhatd_1)^T + y_2 (\vhatd_2)^T) \vecs} \\
    &= \E_{\vecs \in \{0,1\}^n} (-1)^{ (y_1 (\vhatd_1)^T + y_2 (\vhatd_2)^T) \vecs} \\
    &= 0,
\end{align*}
where in the last line we used the fact that at least one of $y_1, y_2$ is nonzero, and that $\vhatd_1 \neq \vhatd_2$, to say that $y_1 \vhatd_1 + y_2 \vhatd_2$ is a nonzero binary vector.


\paragraph{Case 2: $\vecx \neq 0^\ell$.} 
In this case, we will use the fact that $\vecC$ is a random matrix. Specifically, in Lemma 4.8 of~\cite{BCMVV18} it is shown that with probability $1 - q^{\ell} \cdot 2^{-n/8}$, $\vecC$ is \emph{moderate}. To define this, we start by defining a moderate \emph{scalar}: for $x \in \mathbb{Z}_q$, let its \emph{centered representative} be its unique representative in $(-q/2, q/2]$. Then we say $x$ is \emph{moderate} if its centered representative lies in the range $[-3q/8, -q/8) \cup (q/8, 3q/8]$. This is true for a uniformly random $x \in \mathbb{Z}_q$ with probability $1/2$. A \emph{moderate vector} is one for which at least $1/4$ of the entries are moderate: for a uniformly random vector in $\mathbb{Z}_q^n$, the chance that it is moderate is exponentially close to $1$, by a Chernoff bound. A \emph{moderate matrix} is one for which every nonzero vector in the row span is moderate. 

Now, observe:
\begin{align*}
    \left|\E_{s \in \{0,1\}} \omega_q^{s x}\right| &= \left|\frac{1}{2} ( \omega_q^0 + \omega_q^{x})\right| \\
    &= \left| \frac{1}{2} (\omega_q^{-x/2} + \omega_q^{x/2}) \right| \\
    &= | \cos( \pi x / q) |.
\end{align*}
Thus, if $x$ is moderate, then $|x/q| \in (1/8, 3/8]$, and $|\cos(\pi x/q)| \leq |\cos( \pi /8)|$. So for any moderate vector $\vecr \in \mathbb{Z}_q^n$, it holds that
\[ \left| \E_{\vecs \in \{0,1\}^{n}} \omega_{q}^{\vecr \cdot \vecs} \right| \leq |\cos(\pi/8)|^{n/4}. \]

We will need a slightly refined version of this. Let $r$ be a moderate scalar $\in \mathbb{Z}_q$, and $e_1, e_2 \in \{0,1\}$ be arbitrary. Then
\[ 2r + q(e_1 + e_2) \in [-3q/4, -q/4) \cup [q/4, 3q/4) \cup [q + q/4, q + 3/q4) \cup( q/4, 3q/4] \cup (q + q/4, q + 3q/4] \cup (2q + q/4, 2q + 3/q4].\]
Thus,
\[ |(2r + q(e_1 + e_2))/q| \in [1/4, 3/4] \pmod{1}. \]
Therefore,
\[ \left| \E_{s \in \{0,1\}} \omega_{2q}^{(2r + qe)s} \right| = \left|\cos(\frac{\pi}{2q} (2r + qe))\right| \leq |\cos (\pi/8) |. \]

Thus, if $\vecr$ is moderate and $\vece_1, \vece_2$ are arbitrary binary vectors, then by the same reasoning
\begin{align*}
    \left| \E_{\vecs \in \{0,1\}^n} \omega_{2q}^{(2\vecr  + q(\vece_1 + \vece_2) )\cdot \vecs} \right| \leq |\cos(\pi/8)|^{n/4}.
\end{align*}

Now, to finish the argument, let's return to $\hat{g}$. For any $\vecx \neq 0^\ell$, since $\vecC$ is moderate, we know $\vecx^T \vecC$ is a moderate vector. Thus, we have
\begin{align*}
\hat{g}(\vecx, y_1, y_2) = \E_{\vecs \in \{0,1\}^n} \omega_{2q}^{2 (\vecx^T \vecC) \vecs + q (y_1(\vecd'_1 \cdot \vecs) + y_2(\vecd'_2 \cdot \vecs))}
\end{align*}
Now setting $\vecr = \vecx^T \vecC$ and $\vece_{1,2} = y_{1,2} \vecd'_{1,2}$, we conclude that
\[ |\hat{g}(\vecx, y_1, y_2)| \leq |\cos(\pi/8)|^{n/4}. \]

So in the end we get
\begin{align*}
    D(g, U) &\leq \frac{1}{2} \sqrt{\sum_{(\vecx, y_1, y_2) \neq (0^\ell, 0,0)} | \hat{g}(\vecx, y_1, y_2) |^2} \\
    &\leq \frac{1}{2} \sqrt{\sum_{\vecx \neq 0^\ell} \sum_{(y_1, y_2) \in \{0,1\}^2} |\cos(\pi/8)|^{n/2}} \\
    &= \frac{1}{2} \sqrt{4(q^\ell - 1) |\cos(\pi/8)|^{n/2} }\\
    &\leq q^{\ell/2} \cdot 2^{-n/40}
\end{align*}
This is the desired bound.

\end{document}